\newtheorem{lem}{Lemma}
\newtheorem{thm}{Theorem}
\newtheorem{prop}{Proposition}
\newtheorem{defn}{Definition}
\newtheorem{cor}{Corollary}
\begin{document}
%

\title{Is SP BP?}
%
%
%

\author{{\bf \large Ronghui Tu},
 {\bf \large Yongyi Mao}
        and
{\bf \large Jiying Zhao}\\
School of Information Technology and Engineering\\
University of Ottawa\\
800 King Edward Avenue\\
 Ottawa, Ontario, K1N 6N5, Canada\\
Email: \{rtu, yymao, jyzhao\}@site.uottawa.ca
}

%


\maketitle

\begin{abstract}
The Survey Propagation (SP) 
algorithm for solving $k$-SAT problems has been shown recently
 as an instance of the Belief Propagation (BP) algorithm. In this paper, we
show that for general constraint-satisfaction problems, SP may not be reducible from BP. We also establish the conditions under which such a reduction is 
possible. Along our development, we present a unification of the existing SP algorithms in terms of a probabilistically interpretable iterative procedure --- 
weighted Probabilistic Token Passing.  
\end{abstract}

\begin{IEEEkeywords}
Survey Propagation, Belief Propagation, constraint satisfaction,
Markov random field, factor graph, message-passing algorithm,
$k$-SAT, $q$-COL
\end{IEEEkeywords}

\vfill

\begin{center}
\scalebox{0.3}{\begin{picture}(0,0)%
\includegraphics{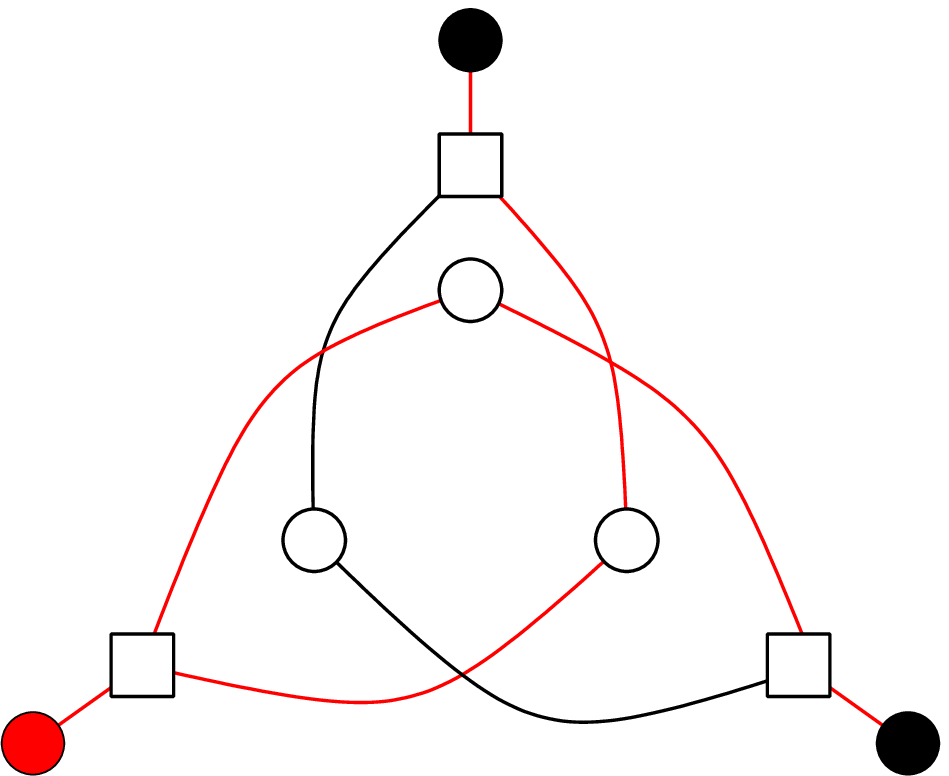}%
\end{picture}%
\setlength{\unitlength}{3947sp}%
\begingroup\makeatletter\ifx\SetFigFont\undefined%
\gdef\SetFigFont#1#2#3#4#5{%
  \reset@font\fontsize{#1}{#2pt}%
  \fontfamily{#3}\fontseries{#4}\fontshape{#5}%
  \selectfont}%
\fi\endgroup%
\begin{picture}(4516,3689)(3893,-4043)
\end{picture}%
}
\end{center}

\vfill
\vfill

\mbox{}

\clearpage

\section{Introduction}
\label{sec:intro}

Survey Propagation (SP) \cite{Mezard} is a recent algorithmic
breakthrough in solving certain hard families of constraint
satisfaction problems (CSPs). Derived from statistical physics, SP
first demonstrated its power in solving classic prototypical
NP-complete problems, the $k$-SAT problems \cite{cook}.
--- For random instances of these problems in the hard regime,
SP is shown to be the first efficient solver \cite{Mezard}. Recently, SP has also been applied
to other CSPs, including other NP-complete problem families
such
as graph coloring (or $q$-COL) problems \cite{Braunstein:3col}, as
well as problems arising in communications and data compressions,
some examples being coding for Blackwell channels
\cite{Yu:SP_Blackwell} and quantization of Bernoulli sequences
\cite{Wainwright:SP_compression}.  In all these cases, great
successes have been demonstrated.

Powerful as it appears, SP however largely remains as a heuristic
algorithm to date, where analytic understanding of its algorithmic
nature and rigorous characterization of its performance are widely
open and of great curiosity and research importance.

Similar to the well-known Belief Propagation (BP) algorithm used in
iterative decoding \cite{Richardson:98} and statistical inference
\cite{Pearl:BP}, SP operates by iteratively passing ``messages'' in
a factor graph representation \cite{Frank:factor} of the problem
instance, where each variable vertex corresponds to a variable whose
value is to be decided and each function vertex corresponds to a
local constraint imposed on the variables. This observation has
inspired a recent research effort in understanding whether SP may be
viewed as a special case of BP.  --- The significance of questions
of such a kind has been witnessed repeatedly in the history of
communication research, for example, in understanding the Viterbi
algorithm as a dynamic programming algorithm \cite{Forney:viterbi},
in understanding the turbo decoding algorithm \cite{Turbo} as an
instance of Belief Propagation \cite{turbo_as_BP},  and in unifying
the BCJR algorithm \cite{BCJR} and the Viterbi algorithm under the
umbrella of the generalized distributive law \cite{GDL}, etc.  These
unified frameworks have on one hand provided additional insights
into the nature of the algorithms, and on the other hand allowed an
easier access of the algorithm by much wider research communities.
Specific to the question ``is SP BP'', if SP may be understood as an
instance of BP, then the existing analytic techniques of BP are
readily applicable to analyzing SP; if SP can not be characterized
as a special case of BP, one is then motivated to seek a different
algorithmic framework to which SP belongs or to discover the unique
algorithmic nature of SP.

The first result reporting that SP is an instance of BP is the work
of \cite{Braunstein:SP_BP} in the context of $k$-SAT problems.  This
result is generalized in \cite{SP:newlook} to an extended version of
SP for solving $k$-SAT
problems. Briefly, the authors of \cite{SP:newlook} present a Markov
Random Field (MRF) \cite{MRF} formalism for $k$-SAT problems; a
parameter, denoted by $\gamma$ in this paper, is used to parametrize
the MRF. When the BP algorithm is derived on such an MRF, the BP
message-update equations result in a {\em family} of SP algorithms,
referred to as {\em weighted SP} or $\mbox{SP}(\gamma)$ in this paper,
parametrized by $\gamma\in [0,1]$; and when $\gamma=1$,
$\mbox{SP}(\gamma)$ is the original (non-weighted) SP. In addition
to extending SP --- in the context of $k$-SAT problems --- to a
family of SP algorithms with tunable performance, another
significance of this result is a conclusive answer to the titular
question in that context, namely that SP is BP for the $k$-SAT problem
family. This result was re-developed in our earlier work
\cite{rtu:sp_isit_06} where a simpler MRF formalism using Forney
graphs \cite{Forney:normalGraph} is presented and a more transparent
reduction of BP messages to weighted SP messages is given.

The objective of this paper is to answer the question whether SP and
more generally weighted SP are special cases of  BP for
arbitrary CSPs beyond $k$-SAT problems. It is
worth noting that weighted SP has only been presented for $k$-SAT
problems, although its principle may be extended to designing other
CSPs involving {\em binary} variables (see, e.g., \cite{Wainwright:SP_compression}).
Furthermore, resulting from BP on a properly defined MRF,
weighted SP, unlike the original (non-weighted) SP, does not have a
probabilistic interpretation that {\em does not} rely on the MRF
constructed in the style of \cite{SP:newlook} or \cite{rtu:sp_isit_06}
and the derived BP algorithm
thereby. Thus to answer the question whether weighted SP is BP for
general CSPs, it is necessary to formulate weighted SP for arbitrary
CSPs that generalizes non-weighted SP without relying on any MRF and
BP formalism. For this reason, this research and hence the structure
of this paper roughly split into two parts. The first part answers
the question what SP and weighted SP exactly are by presenting a
probabilistically interpretable formulation of both non-weighted and
weighted SP for arbitrary CSPs. The second part presents a MRF
formalism for general CSPs in the style of \cite{SP:newlook} or
\cite{rtu:sp_isit_06},
derives the BP update equations, and answers the question whether and how
BP under such MRF formalism may be reduced to SP, if at all.

Although this paper focuses on the second part, namely, on answering whether SP algorithms are instances of BP on a properly defined MRF, our effort in
establishing what SP algorithms are and how to formulate these algorithms for
general CSPs is noteworthy.

First, the notion of
weighted SP, as noted earlier, has only been presented for $k$-SAT
problems as in \cite{SP:newlook} and in sporadic example applications
involving only {\em binary} variables such as in \cite{Wainwright:SP_compression}.  As will become
clear in this paper, the design philosophy of weighted SP for CSPs
involving binary variables (such as in \cite{SP:newlook} and
\cite{Wainwright:SP_compression}) is not
readily extendable to arbitrary CSPs with arbitrary variable
alphabets, since an important notion underlying SP,  namely, an
{\em appropriate} extension of variable alphabets, is blurred in the binary special
cases.

Second, for non-weighted SP, we note that its formulation in the context of
 general CSPs primarily exists
in the literature of statistical physics (see, e.g.,
\cite{Braunstein}). Although its design recipe has been laid out for
arbitrary CSPs, its exposition in statistical physics language has
made it rather difficult for readers with primarily engineering or
computer science background.

Thus, in addition to serving as the basis for
the investigation of BP-to-SP reduction, the first part of the paper
also aims at providing a clean, transparent and easily accessible
formulation of SP algorithms
in its most general form for arbitrary CSPs,
without resorting to statistical physics
concepts.

\section{Main Results and Paper Organization}
\label{sec:intro_summary}

The main results of this paper are summarized as follows.

In the first part, we formulate SP and weighted SP for general CSPs
as what we call ``probabilistic token passing'' (PTP) and ``weighted
probabilistic token passing'' (weighted PTP) respectively, where a
message is a distribution (or non-negative function) on the set of
``tokens'' associated with a variable. Here a ``token'' is a
non-empty {\em subset} of the variable's alphabet\footnote{In fact more rigorously,
a token is a non-empty {\em subset} of all possible {\em assignments}
of a variable -- In this paper, for more mathematical rigor and 
clarity, we make a distinction between the alphabet of a variable and the set of all assignments to the variable, where an assignment to variable $x_v$
is treated as a function mapping the singleton set $\{v\}$ to the alphabet of $x_v$.  Nevertheless, one may always identify the set of all assignments to $x_v$ with the alphabet of $x_v$ via a one-to-one correspondence and loosely refer to the set of all assignments of 
a variable as the alphabet of the variable.}. It has been previously
observed in SP applied to various problems that a ``joker'' symbol
is added to the original variable alphabet. Here we point out that
extending the alphabet by simply adding a joker symbol is not
sufficient for general CSPs, particularly for those involving
non-binary variables. We stress that the {\em right} extension of
the variable alphabet is to replace it with the set of all non-empty
subsets of the original alphabet. Although an equivalent treatment
has been described in some previous literature for non-weighted SP
\cite{Braunstein}, this perspective is for the first time made
explicit beyond statistical physics context and for both
non-weighted and weighted SP. Based on this notion of alphabet
extension, we generalize 
weighted SP for arbitrary CSPs in the form
of weighted PTP. In other words, the weighted PTP formulation
presented in this paper serves as a recipe for designing weighted SP
algorithm for arbitrary CSPs.

In the second part, we present an MRF formalism --- which we refer to as
``normally realized MRF'' ---  for arbitrary CSPs using Forney
graphs, generalizing the MRF construction in the style of \cite{SP:newlook}
and \cite{rtu:sp_isit_06} presented for $k$-SAT problems. States, each consisting of a left
state and a right state, are introduced in the MRF, where the left
state corresponds to the token passed from the variable and the
right state corresponds to the token passed from the constraint. For
any given CSP, the MRF is parametrized by a collection of weighting
functions, each corresponding to a variable in the CSP; in the
$k$-SAT special case, these weighting functions may reduce to a
single parameter, $\gamma$. Noting the combinatorial importance of
such MRF in the context of $k$-SAT problems \cite{SP:newlook}, one expects
that this general formulation of MRF for arbitrary CSP may serve a
similar role, namely
providing a combinatorial framework describing the topology of the
solution space \cite{SP:newlook}. This direction, clearly deserving further
investigation, is however out of the scope of this paper.

On the normally realized MRF formalism, we then proceed to derive the BP
update equations and investigate the reduction of BP to weighted PTP
(noting that weighted PTP {\em is} weighted SP and that non-weighted
SP is a special case of weighted SP). Primarily re-developing the
results of \cite{SP:newlook} and \cite{rtu:sp_isit_06} on BP-to-SP
reduction, we show that for $k$-SAT problems, BP is readily
reducible to weighted PTP as long as a condition --- which we refer
to as the {\em state-decoupling condition} --- is imposed on the BP
messages in initialization. An interesting
fact about this condition in the context of $k$-SAT problems is that
as long as the condition is satisfied in the first BP iteration, it
will continue to be satisfied in all iterations after. This forms
the basis on which BP messages may be simplified to the form of
weighted PTP messages. This condition, also arising in
\cite{SP:newlook} and \cite{rtu:sp_isit_06} as a peculiar and
curious construction, had not been explained prior to this work. In
this paper, we argue that the state-decoupling condition serves a
critical role in the reduction of the weighted PTP messages from the
BP messages derived from the MRF formalism in the style of
\cite{SP:newlook} and \cite{rtu:sp_isit_06}, or from the normally
realized MRF presented in this paper. Using the example of $3$-COL
problems, we show that such a condition is also needed in all BP
iterations so as for BP to reduce to PTP. However, in that case, we
show that this condition can not be made satisfied in every BP
iteration (except for the trivial cases in which the BP messages contain no
useful information) and one must manually impose this condition by
manipulating the BP messages in each iteration. This result on one
hand justifies the important role of the state-decoupling condition
in the reduction of BP to PTP and on the other hand asserts that BP
is {\em not} PTP and hence {\em not} SP for $3$-COL problems!

At that point, one is ready to conclude that weighted PTP or
weighted SP is not a special case of BP for general CSPs. 
The manual manipulation of BP messages in $3$-COL problems, which results in
what we call {\em state-decoupled BP} brings up a further question, namely, for 
general CSPs, whether PTP and weighted PTP are readily expressed as state-decoupled BP. We proceed to show that for general CSPs, the reduction of weighted PTP
from BP requires yet another condition pertaining to the structure
of the CSP. Briefly, this additional condition demands that the
constraints in the CSP be ``locally compatible'' with each other in
some sense. We show that the local compatibility condition of the CSP is the
necessary and sufficient condition for state-decoupled 
BP to reduce to weighted PTP or weighted SP. At that end, we complete
the answer to the titular question ``is SP BP?''.

As mentioned earlier, in addition to answering whether SP is BP,
another objective of this paper is to explain SP as simply as
possible. For this purpose, we have made an effort in presenting
this paper in a pedagogical manner and carrying along the examples of
$k$-SAT and $3$-COL problems throughout the paper.

The remainder of this paper is organized as follows. In Section
\ref{sec:csp}, we present a generic formulation of CSPs while also
introducing various notations that will be used in later parts of
the paper. In Section \ref{sec:sp}, we introduce the existing SP 
algorithms using the
examples of $k$-SAT problems and $3$-COL problems, where we
purposefully avoid SP formulations in statistical physics languages.
We then proceed in Section \ref{sec:token} to present a general
formulation of SP algorithms in terms of PTP and weighted PTP. In
Section \ref{sec:bp}, we present the normally realized MRF formalism
and present results concerning the reduction of BP messages to SP
messages. At this time, how SP algorithms behave over iterations and
how they solve a CSP are important open problems. Although such
questions are not of particular importance for the purpose of this
paper, completely ignoring them appears not satisfactory to us and
perhaps also to some readers. For this reason, we present some
preliminary results along those lines for understanding the dynamics of
 PTP. ---
These results are included in the Appendix so as to maintain the
focus of this paper.  The paper is briefly concluded in Section
\ref{sec:conclude}.


\section{A Generic Formulation of Constraint Satisfaction Problems}
\label{sec:csp}

Let $V$ be a finite set, in which each element will be referred to
as a {\em coordinate}.
Associated with each
$v\in V$, there is a
finite {\em alphabet} $\chi_v$.  For each $v\in V$, we will assume throughout
of this paper that
every $\chi_v$ is identical to each other, and is therefore denoted by $\chi$.
We note that this slight loss of generality is made only for lightening the
upcoming notations, and that there is no difficulty to extend the results of
this paper to more general cases where $\chi_v$'s are different from each other.
For any subset $U\subseteq V$, a {\em $\chi$-assignment}
 $x_U$ on $U$
is a function mapping $U$ into the
set $\chi$.  That is, a  $\chi$-assignment $x_U$ specifies a way to assign
each coordinate $u\in U$ a value in $\chi$. 
The set of all $\chi$-assignments on $U$
will be denoted by $\chi^U$.
When $U$ is a singleton set
$\{u\}$, which contains a single coordinate $u$,
 we will call $\chi$-assignment $x_{\{u\}}$ on $\{u\}$ an 
{\em elementary ($\chi$-)assignment} and
write it
as $x_u$ for simplicity. Clearly, any given elementary $\chi$-assignment
$x_u$ is uniquely specified by
a value $r\in \chi$, which is the assigned value in $\chi$ to coordinate $u$.
In this case, this assignment is denoted by $r_{u}$, for example,
if $\chi:=\{0, 1\}$, then the only possible $\chi$-assignments
on $\{u\}$ are $0_u$ and $1_u$, which are
the elementary
assignments assigning $0$ and $1$ to coordinate $u$, respectively.

Suppose that $U\subset W\subseteq V$ and that $x_W$ is a $\chi$-assignment
on $W$. We
will use $x_{W:U}$ to denote the (function) restriction of $x_W$ on $U$.
For any subset of $\chi$-assignments $\Omega\subseteq \chi^W$ on $W$,
we denote the projection of $\Omega$ on $U$ by $\Omega_{:U}$. That is,
\begin{equation*}
\Omega_{:U}:=\{x_{W:U}:x_W\in \Omega\}.
\end{equation*}

If coordinate set
$U$ can be partitioned into disjoint subsets $A$ and $B$,
then it is obvious that
assignment $x_U$ decomposes into assignments $x_{U:A}$ and $x_{U:B}$,
and $x_U$ may be
written as $(x_{U:A}, x_{U:B})$ (in any order).  Evidently, $x_U$ may be
decomposed according to any partition of $U$, not necessarily two-fold partitions. In particular, if a collection of sets $\{U_i:i\in {\cal I}\}$,
for some ${\cal I}$, form a partition of $U$,
then we may assignment $x_U$ as $\langle x_{U:U_i} \rangle_{i\in {\cal I}}$.



For simplicity, we
will write
$(x_A, x_B)$ and $\langle x_{U_i}\rangle_{i\in {\cal I}}$ in place of
$(x_{U:A}, x_{U:B})$ and $\langle x_{U:U_i} \rangle_{i\in {\cal I}}$
respectively. In fact, unless
some particular clarity is needed, we will always write $x_{W:U}$ simply as
$x_U$, making the underlying $x_W$ implicit. Furthermore, when $U$ is a
singleton set $\{u\}$, as mentioned earlier, we will simply denote it
by $x_u$, which reduces to the conventional ``variable'' notation 
standard literatures of graphical models.

Given $\chi$ and $V$, the objective of a constraint satisfaction
problem (CSP) is to find a global $\chi$-assignment $x_V$ that
satisfies a given set of constraints or to conclude that no such
assignment exists.  Formally, we will use set $C$ to index the set
of constraints $\{\Gamma_c:c\in C\}$. Each constraint $\Gamma_c$,
$c\in C$, applies to a subset of the coordinates $V$, which will be
denoted by $V(c)$. Specifically, each constraint $\Gamma_c$ is
identified with a subset of $\chi^{V(c)}$, and the constraint is
satisfied by global $\chi$-assignment $x_V$ if $x_{V:V(c)}\in
\Gamma_c$. Then any CSP may be formulated via specifying $V$, $C$,
$\chi$, $\{V(c):c\in C\}$ and $\{\Gamma_c:c\in C\}$, where the
objective of the CSP is to find a $\chi$-assignment $x_V$ such that
\begin{equation}
\label{equ:csp} \prod_{c\in C}[x_{V:V(c)}\in \Gamma_c] = 1,
\end{equation}
or to conclude that no such assignment exists.
Here the notation $[P]$, for any Boolean proposition $P$, is the
Iverson's convention \cite{Frank:factor}, namely, evaluating to 1 if
$P$, and to 0 otherwise.

Now it is easy to verify that the factorization structure of
(\ref{equ:csp}) can be represented by a factor graph
\cite{Frank:factor}: in the factor graph,
``variable vertices'' are
indexed by $V$, where the ``variable'' indexed by $v\in V$ represents an
elementary assignment $x_{V:\{v\}}$
on $\{v\}$, or simply $x_v$; ``function vertices'' are indexed by $C$,
where the function indexed by
$c\in C$ is $[x_{V:V(c)}\in \Gamma_c]$, which, with a slight overloading of notation,
will also be denoted by $\Gamma_c(x_{V(c)})$;
there is an edge connecting variable vertex
$x_v$ with function vertex $\Gamma_c$ if and only if $v\in V(c)$.
Inspired by its correspondence (to an edge) in the factor graph,
we will use $(v-c)$ to denote a coordinate-constraint pair $(v,c)$
where coordinate $v$ is involved in constraint $\Gamma_c$ in the CSP.



For notational symmetry, we denote the set $\{c:v\in
V(c)\}$ by $C(v)$, namely, $C(v)$ indexes the set of all constraints
involving  coordinate $v$, or the set of all function vertices connecting
to variable vertex $x_v$. We will assume
that $|C(v)|\ge 2$ for all $v\in V$.  It is clear that such an assumption is
without loss of generality, since if a variable $x_v$ is
involved in only one constraint, one may always modify the  constraint
and remove the variable from the problem. Similarly, we will assume that
$|V(c)|\ge 2$ for every $c\in C$. This is also without loss of generality
since if a constraint $\Gamma_c$ only involves a single variable $x_v$, it is
always possible to ``absorb'' this constraint in other constraints
involving $x_v$ (noting that $x_v$ must have another constraint since $|C(v)|
\ge 2|$).

\subsection{$k$-SAT}
The $k$-SAT problems are a classic family of CSPs, known to be
NP-complete for $k\ge 3$ \cite{cook}. An instance of $k$-SAT
problems consists of a set of variables $\{x_v:v\in V\}$, each of
which takes on values from the set $\chi:=\{0,1\}$, and a set of
constraints $\{\Gamma_c:c\in C\}$, each of which involves exactly
$k$ variables. For each constraint $\Gamma_c$ and every $v\in V(c)$,
there is a value $L_{v, c}\in \{0, 1\}$ which we will refer to as
the {\em preferred value} on $v$ in constraint $\Gamma_c$. The
$k$-SAT problem is then to decide on an assignment $x_V$ such that
for each constraint $\Gamma_c$, at least one of its involved
coordinate is assigned its preferred value in $\Gamma_c$. To map
back to the afore-mentioned set-theoretic formulation of
constraints, in a $k$-SAT problem, for each $c\in C$, let $l^c$
denote the $\chi$-assignment on $V(c)$ in which every coordinate
$v\in V(c)$ is assigned the negated value $\bar{L}_{v, c}$ of its
preferred value $L_{v, c}$ in $\Gamma_c$, namely that
$l^c_{:\{v\}}=\bar{L}_{v,c}$ for every $(v-c)$,  then constraint
$\Gamma_c$ is defined as $\Gamma_c:=\chi^{V(c)}\setminus \{l^c\}$.


The factor-graph representation of a toy 3-SAT problem is shown in
Fig. \ref{fig:ksat_fg}. For $k$-SAT problems, it is convenient to
treat each preferred value $L_{v,c}$ as the label for edge
$(x_v,\Gamma_c)$ on the factor graph, and use dashed edge to
represent label 0 and solid edge to represent label 1.

We note that it is customary in this paper that variable vertices in
a factor graph are listed on the left side and function (constraint) vertices
listed on the right side.

\begin{figure}[htb]
\begin{center}
  \scalebox{0.5}{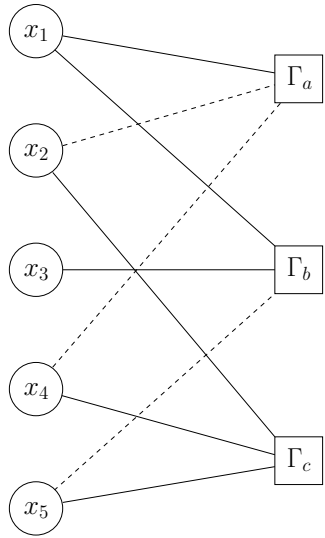}
  \caption{
A factor graph for $3$-SAT problem specified by
formula
$(x_{1}\vee\overline{x}_{2}\vee\overline{x}_{4})\wedge(x_{1}\vee
x_{3}\vee\overline{x}_{5})\wedge(x_{2}\vee x_{4}\vee x_{5})$. Logic
operation notations  are used here to define the problem, where
$\vee$ denotes logic {\tt OR}, $\wedge$ denotes logic {\tt AND}, and
the horizontal bar on a variable denotes the negation of the
variable. The function represented by the factor graph is
$[(x_1,x_2,x_4)\in\Gamma_a]\cdot[(x_1,x_3,x_5)\in\Gamma_b]\cdot[(x_2,x_4,x_5)\in\Gamma_c]$,
where $\Gamma_a=\chi^{\{1, 2, 4\}}\setminus \{(0_1,1_2,1_4)\}$,
$\Gamma_b=\chi^{\{1, 3, 5\}}\setminus\{(0_1,0_3,1_5)\}$, and
$\Gamma_c=\chi^{\{2, 4, 5\}}\setminus\{(0_2,0_4,0_5)\}$.} \label{fig:ksat_fg}
\end{center}
\end{figure}

\subsection{Graph Coloring}

 Graph coloring  or $q$-COL problems are another family of
NP-complete problems.  Given an undirected
graph $(\Delta, \Xi)$ with vertex set $\Delta$ and edge set $\Xi$, the objective of the $q$-COL
problem on $(\Delta, \Xi)$ is
to assign each vertex in $\Delta$ a color from $q$ different colors
such that every pair of adjacent vertices have different colors. To
use the above generic formulation of CSPs, we will denote the set of
all $q$ colors by set $\chi:=\{1, 2, \ldots, q\}$. We will denote
every undirected edge in $\Xi$, say the edge connecting vertices $u$
and $v$, by set $\{u,v\}$. The set $V$ of all coordinates is then
identified with set $\Delta$, and the set $C$ indexing all constraints is
identified with $\Xi$. Specifically note that every $c\in C$ is then
identified with some $\{u, v\}\in \Xi$, and $V(c)$ is identified
with $c$, or the corresponding set $\{u, v\}$. Suppose that
$c=\{u,v\}\in \Xi$, then constraint $\Gamma_c$ is identified with
$\chi^{\{u, v\}}\setminus \{(1_u, 1_v), (2_u, 2_v), \ldots, (q_u,
q_v)\}$.
Fig. \ref{fig:qcol_fg}(b) shows
the factor-graph representation of a $q$-COL problem on the
undirected graph shown in Fig. \ref{fig:qcol_fg}(a).

\begin{figure}[htb]
\begin{center}
    \begin{minipage}[t]{4.0cm}
    \begin{center}
        \scalebox{0.5}{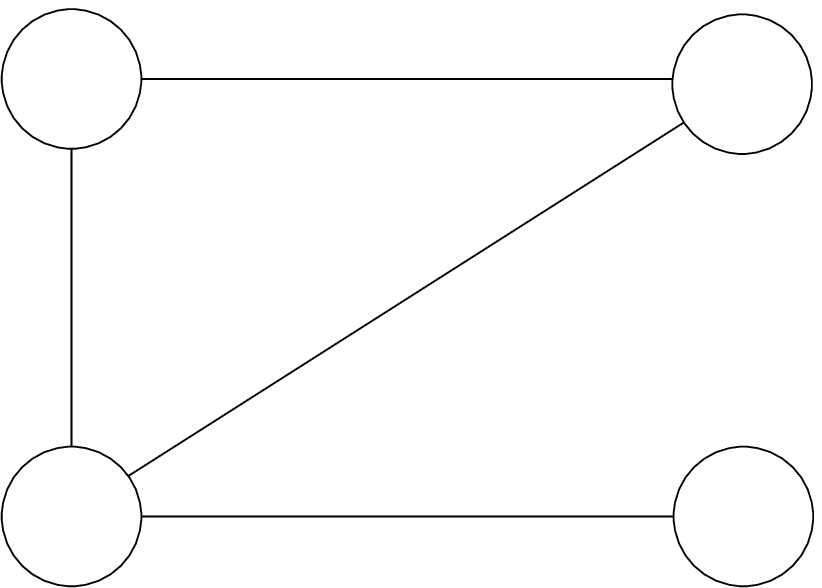}
        \scriptsize{(a)}
    \end{center}
    \end{minipage}
    \hspace{1.5cm}
    \begin{minipage}[t]{4.0cm}
    \begin{center}
        \scalebox{0.5}{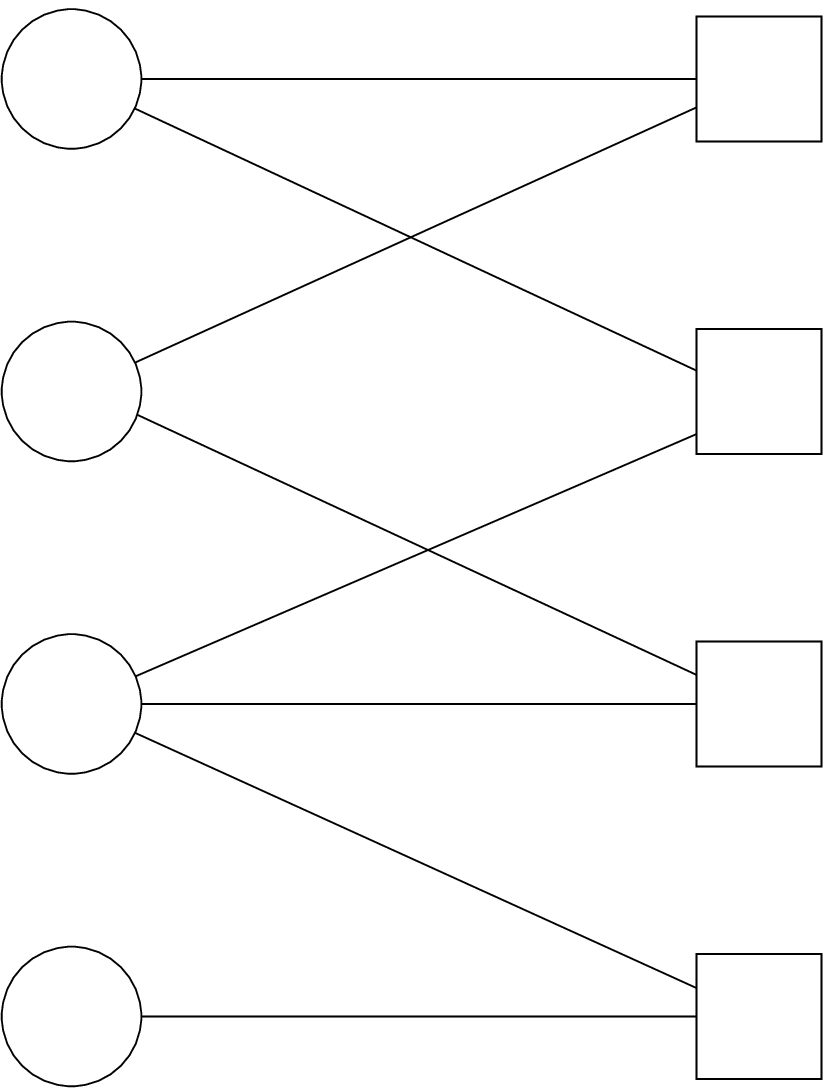}
        \scriptsize{(b)}
    \end{center}
    \end{minipage}
  \caption{(a) An undirected graph. (b) The factor graph for a $q$-COL problem on graph (a). The global function represented by the factor graph
is $[(x_1,x_2)\in \Gamma_{\{1, 2\}}]\cdot
[(x_1, x_3)\in \Gamma_{\{1,3\}}]\cdot
[(x_2, x_3)\in \Gamma_{\{2,3\}}]\cdot
[(x_3, x_4)\in \Gamma_{\{3,4\}}]$, where $\Gamma_{\{u,v\}}:=
\chi^{\{u,v\}}\setminus \{(1_u, 1_v), (2_u, 2_v), \ldots, (q_u, q_v)\}$.
  \label{fig:qcol_fg}}
\end{center}
\end{figure}

\section{Survey Propagation Algorithms}
\label{sec:sp}

\subsection{Survey Propagation for $k$-SAT Problems}

Extensive study has been carried out to understand the hardness of
$k$-SAT problems (for $k\ge 3$) and to develop efficient solvers. A
parameter $\alpha:=|C|/|V|$ is observed to be critically related to
the hardness of random $k$-SAT problems. There appear two thresholds
of $\alpha$, denoted by $\alpha_d$ and $\alpha_c$,
($\alpha_d<\alpha_c$), marking two ``phase transitions''
\cite{Mezard}. When $\alpha>\alpha_c$, random $k$-SAT problems are
unsatisfiable (i.e., having no satisfying assignment) with high
probability; when $\alpha_d<\alpha<\alpha_c$, the satisfying
assignments form exponentially many disjoint ``clusters'', making
the problem extremely difficult; when $\alpha<\alpha_d$, the
satisfying assignments merge into one huge cluster and problems are
easier. In the regime of $\alpha<\alpha_d$, local search algorithms,
such as BP, may find a satisfying assignment. In the regime of
$\alpha_d<\alpha<\alpha_c$, local search algorithms usually fail.

The discovery and first application of survey propagation (SP) are
in solving the $k$-SAT problems in the hard regime, where messages
are passed on the above-defined factor graphs \cite{Mezard}. In SP,
a ``joker'' symbol ``$*$'' is introduced to variable alphabet $\chi$
of the $k$-SAT problem, where $x_v$ equal to the ``joker'' indicates
that it is free to take any value from its original alphabet, and
that $x_v$ equals a non-joker symbol indicates that it is
constrained to taking the designated value. Briefly, SP on $k$-SAT
problems may be viewed as an iterative method for estimating the
``biases'' of each variable $x_v$ on $0, 1$ and $*$ respectively and
a variable that is highly biased on $0$ or $1$ can be fixed to that
value whereby simplifying the problem. It is shown that in the hard
regime of random $k$-SAT problems, the ``joker'' symbol connects the
disconnected clusters, making SP remain very effective
 even for $\alpha$ very close to
$\alpha_c$ \cite{SP:newlook}. For $k$-SAT problems, the original
version of SP \cite{Mezard} is generalized in \cite{SP:newlook} to
what we call the {\em weighted SP} \footnote{In \cite{SP:newlook},
weighted SP is referred to as generalized SP. In this paper, we
would like to reserve the term ``generalized SP'' to refer to SP
algorithms generalized for arbitrary CSPs beyond $k$-SAT problems. }
or $\mbox{SP}(\gamma)$ in this paper. $\mbox{SP}(\gamma)$ is a
family of algorithms parametrized by a real number $\gamma\in [0,
1]$, where $\mbox{SP}(1)$ is the original SP and for some judicious
choice of $\gamma\in (0, 1)$, $\mbox{SP}(\gamma)$ may have further
improved performance.

We note that generalizing SP to the family of weighted SP algorithms
has only been reported for $k$-SAT problems to date, and one of the
objectives of this paper is to extend such a generalization to
arbitrary CSPs.

Similar to BP, in the SP algorithms, messages are passed between
variable vertices and function vertices. For the purpose of
describing the SP message-update rule for $k$-SAT problems, we
introduce the following notations. For any $(v-c)$, $C_c^{\tt u}(v)$
denotes the set $\{b\in C(v)\setminus \{c\}:L_{v,b}\neq L_{v, c}\}$,
and $C_c^{\tt s}(v)$ denotes the set $\{b\in C(v)\setminus
\{c\}:L_{v,b} = L_{v, c}\}$.

Following \cite{SP:newlook}, the message-update rule of
$\mbox{SP}(\gamma)$ is described as follows.

The message passed from variable vertex $x_v$ to function vertex
$\Gamma_c$ --- also referred as a {\em left message} ---
is a triplet of real numbers $(\Pi^{\tt u}_{v\rightarrow
c},\Pi^{\tt s}_{v\rightarrow c},\Pi^{*}_{v\rightarrow c}$), and the
message passed from function vertex $\Gamma_c$ to variable vertex
$x_v$ --- also referred to as a {\em right message} ---
is a real number $\eta_{c\rightarrow v}\in [0,1]$. These
messages are updated respectively according to the following
equations.
\begin{eqnarray}
\label{equ:spia_1}
    \Pi^{\tt u}_{v\rightarrow c}& := & \left(1-\gamma\prod_{b\in C^{\tt u}_{c}(v)}(1-\eta_{b\rightarrow v})\right)
\prod_{b\in C^{\tt s}_{c}(v)}(1-\eta_{b\rightarrow v})\\
\label{equ:spia_2}
    \Pi^{\tt s}_{v\rightarrow c}& := &
\left(1-\prod_{b\in C^{\tt s}_{c}(v)}(1-\eta_{b\rightarrow
v})\right)
\prod_{b\in C^{\tt u}_{c}(v)}(1-\eta_{b\rightarrow v})\\
\label{equ:spia_3}
    \Pi^{*}_{v\rightarrow c}& :=& \prod_{b\in C^{\tt s}_{c}(v)}
(1-\eta_{b\rightarrow v})\prod_{b\in C^{\tt
u}_{c}(v)}(1-\eta_{b\rightarrow v})\\
\label{equ:spai}
    \eta_{c\rightarrow v}& := & \prod_{u\in V(c)\setminus \{v\}}
\frac{\Pi^{\tt u}_{u\rightarrow c}}{\Pi^{\tt u}_{u\rightarrow
c}+\Pi^{\tt s}_{u\rightarrow c}+\Pi^{*}_{u\rightarrow c}}.
\end{eqnarray}

The initialization of SP messages is usually random, and
message-passing schedule is typically similar to the {\em flooding
schedule} \cite{Frank:factor} in BP message passing, namely, that
each iteration may be defined by all variable vertices passing
messages followed by all function vertices passing messages. We note
that throughout this paper all message-passing schedules are
restricted to the flooding schedule for convenience, where each
iteration is defined as first updating all ``left messages'' and
then updating all ``right messages'' \footnote{An iteration may also
include updating all summary messages after updating the right
messages; see the description of summary messages.}

Similar to BP, at the end of an iteration, SP may compute a
``summary message'' at each variable vertex. For any $v\in V$,
define $C^{\rm 1}(v):=\{b\in C(v):L_{v, b}=1\}$ and $C^{\rm
0}(v):=\{b\in C(v):L_{v, b}=0\}$, then the ``summary message'' at
$x_v$ is a triplet $(\zeta_{v}^1,\zeta_{v}^0,\zeta_{v}^*)$ of real
numbers, computed by

\begin{eqnarray}
\label{eq:sum_mssg_1}
\zeta_{v}^1& := & \left( 1-\gamma\prod_{b\in
C^{\rm 1}(v)}(1-\eta_{b\rightarrow v}) \right)
\prod_{b\in C^{\rm 0}(v)}(1-\eta_{b\rightarrow v})\\
\label{eq:sum_mssg_0}
\zeta_{v}^0& := & \left(1-\gamma\prod_{b\in
C^{\rm 0}(v)}(1-\eta_{b\rightarrow v}) \right)
\prod_{b\in C^{\rm 1}(v)}(1-\eta_{b\rightarrow v})\\
\label{eq:sum_mssg_*}
 \zeta_{v}^*& := &
\gamma\prod_{b\in C^{\rm 1}(v)}(1-\eta_{b\rightarrow v}) \prod_{b\in
C^{\rm 0}(v)}(1-\eta_{b\rightarrow v})
\end{eqnarray}
where summary message $(\zeta^1, \zeta^0, \zeta^*)$ is typically normalized
to a scaled version $({\zeta^1}^{\rm norm}, {\zeta^0}^{\rm norm}, {\zeta^*}^{\rm norm})$ such that
\[{\zeta^1}^{\rm norm}+ {\zeta^0}^{\rm norm}+ {\zeta^*}^{\rm norm}=1.\]

Equations (\ref{equ:spia_1}) to (\ref{eq:sum_mssg_*}) and the
normalization procedure after completely specify the message-update
rule of $\mbox{SP}(\gamma)$.

Usually, SP is applied in conjunction with a heuristic
``decimation'' procedure, which is carried out after SP converges or
after a certain number of SP iterations. In the
decimation procedure, the ``polarity'' $B(v):={\zeta_v^0}^{\rm
norm}-{\zeta_v^1}^{\rm norm}$ at each $v\in V$ is calculated, and
the most polarized variable (namely, one
 having the highest $|B(v)|$)
is fixed to $0$ or $1$ according to
the sign of $B(v)$: $x_v$ is set to $0$ if
$B(v)>0$, and to $1$ otherwise.
The $k$-SAT problem is
then simplified and SP is applied again. This process iterates until
the reduced problem is simple enough for a local search algorithm.

When $\gamma=1$, it is shown in \cite{Braunstein} and \cite{SP:newlook}
that
the passed messages as in (\ref{equ:spia_1}) through (\ref{equ:spai})
can be interpreted probabilistically, namely,
$\eta_{c\rightarrow v}$ may be interpreted as
the probability that a ``warning'' symbol is sent
from $\Gamma_c$ to $x_v$, and $\Pi_{v\rightarrow c}^{\tt u}$,
$\Pi_{v\rightarrow c}^{\tt s}$ and $\Pi_{v\rightarrow c}^*$ are respectively
the probabilities that $x_v$ sends to $\Gamma_c$ symbol $\bar{L}_{v,c}$,
symbol $L_{v,c}$ and symbol $*$.

When $\gamma<1$, $\mbox{SP}(\gamma)$ 
however can no longer be interpreted
probabilistically. We now present a slightly modified formulation of
$\mbox{SP}(\gamma)$, referred to as $\mbox{SP}^*(\gamma)$, which is
completely equivalent to $\mbox{SP}(\gamma)$ defined in
\cite{SP:newlook}, and which will be shown in a later section to
have a natural probabilistic interpretation.


In $\mbox{SP}^*(\gamma)$, the left message $(\Pi^{\tt
u}_{v\rightarrow c},\Pi^{\tt s}_{v\rightarrow c},
\Pi^{*}_{v\rightarrow c})$ passed from variable vertex $x_v$ to
function vertex $\Gamma_c$ is modified to the equations given in
(\ref{equ:spiam_1}) to (\ref{equ:spiam_3}), and the right message
$\eta_{c\rightarrow v}$ passed from function vertex $\Gamma_c$ to
variable vertex $x_v$ and the summary message $(\zeta_v^1,
\zeta_v^0, \zeta_v^*)$ at variable $x_v$ stay unchanged.

\begin{eqnarray}
\label{equ:spiam_1}
    \Pi^{\tt u}_{v\rightarrow c}& := & \left(1-\gamma\prod_{b\in C^{\tt u}_{c}(v)}(1-\eta_{b\rightarrow v})\right)
\prod_{b\in C^{\tt s}_{c}(v)}(1-\eta_{b\rightarrow v})\\
\label{equ:spiam_2}
    \Pi^{\tt s}_{v\rightarrow c}& := &
\left(1-\gamma\prod_{b\in C^{\tt s}_{c}(v)}(1-\eta_{b\rightarrow
v})\right)
\prod_{b\in C^{\tt u}_{c}(v)}(1-\eta_{b\rightarrow v})\\
\label{equ:spiam_3}
    \Pi^{*}_{v\rightarrow c}& :=& \gamma\prod_{b\in C^{\tt s}_{c}(v)}
(1-\eta_{b\rightarrow v})\prod_{b\in C^{\tt
u}_{c}(v)}(1-\eta_{b\rightarrow v})
\end{eqnarray}

The following lemma shows that $\mbox{SP}(\gamma)$ and $\mbox{
SP}^*(\gamma)$ are equivalent.

\begin{lem}
\label{lem:sp*gamma} For the same initialization of
$\{\eta_{c\rightarrow v}: \forall (v-c)\}$, at any given iteration,
$\mbox{SP}^*(\gamma)$ and $\mbox{SP}(\gamma)$ give rise to identical
results in $\eta_{c\rightarrow v}$ for every $(v-c)$, and in
$(\zeta_{v}^1,\zeta_v^0, \zeta_v^*)$ for every $v\in V$.
\end{lem}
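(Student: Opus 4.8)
The plan is to argue by induction on the iteration index, taking as the inductive hypothesis that, at the start of a given iteration, the two algorithms hold identical values of $\eta_{c\rightarrow v}$ for every $(v-c)$. The base case is immediate, since the two algorithms are initialized with the same collection $\{\eta_{c\rightarrow v}:\forall(v-c)\}$. For the inductive step, both algorithms first recompute all left messages from the common incoming $\eta$'s (the flooding schedule updates left messages before right messages), and then recompute each $\eta_{c\rightarrow v}$ from those left messages via the shared update equation (\ref{equ:spai}). The entire argument therefore reduces to showing that, from identical incoming $\eta$'s, the two left-message rules feed identical values into (\ref{equ:spai}).

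To this end, I would fix a pair $(v-c)$ and introduce the shorthands $S:=\prod_{b\in C^{\tt s}_c(v)}(1-\eta_{b\rightarrow v})$ and $U:=\prod_{b\in C^{\tt u}_c(v)}(1-\eta_{b\rightarrow v})$, so that in both versions $\Pi^{\tt u}_{v\rightarrow c}=(1-\gamma U)S$ by inspection of (\ref{equ:spia_1}) and (\ref{equ:spiam_1}); the numerator appearing in (\ref{equ:spai}) is thus already the same in $\mbox{SP}(\gamma)$ and $\mbox{SP}^*(\gamma)$. The only remaining quantity in (\ref{equ:spai}) is the normalizing denominator $\Pi^{\tt u}_{v\rightarrow c}+\Pi^{\tt s}_{v\rightarrow c}+\Pi^{*}_{v\rightarrow c}$, and the crux of the lemma is the algebraic identity that this sum coincides across the two versions even though $\Pi^{\tt s}$ and $\Pi^{*}$ do not. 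For $\mbox{SP}(\gamma)$ the sum is $(1-\gamma U)S+(1-S)U+SU$, and for $\mbox{SP}^*(\gamma)$ it is $(1-\gamma U)S+(1-\gamma S)U+\gamma SU$; expanding both, the $SU$ terms cancel in the first and the $\gamma SU$ terms cancel in the second, and each collapses to the common value $S+U-\gamma SU$. Hence the ratio $\Pi^{\tt u}_{v\rightarrow c}/(\Pi^{\tt u}_{v\rightarrow c}+\Pi^{\tt s}_{v\rightarrow c}+\Pi^{*}_{v\rightarrow c})$ is identical in the two algorithms, and so is every factor of the product in (\ref{equ:spai}). This yields identical updated $\eta_{c\rightarrow v}$ for all $(v-c)$, closing the induction.

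Finally, for the summary messages I would note that the update equations (\ref{eq:sum_mssg_1})--(\ref{eq:sum_mssg_*}) are declared unchanged between the two algorithms and depend only on the $\eta$ values; since the induction has already established that the $\eta_{c\rightarrow v}$ agree at every iteration, the triplets $(\zeta_v^1,\zeta_v^0,\zeta_v^*)$ agree as well, and the (common) normalization preserves this equality.

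The only substantive step is the denominator identity $S+U-\gamma SU$ of the preceding paragraph; I expect this to be the main (though routine) obstacle, since the non-obvious point is that shifting the factor $\gamma$ from the $\Pi^{*}$ slot in $\mbox{SP}(\gamma)$ onto the $\Pi^{\tt s}$ slot in $\mbox{SP}^*(\gamma)$ leaves both the numerator $\Pi^{\tt u}$ and the normalizing total invariant, which is precisely what makes the two families produce the same $\eta$ and summary messages.
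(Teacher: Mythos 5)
Your proposal is correct and follows essentially the same route as the paper: both arguments reduce to checking that the quantity fed into (\ref{equ:spai}) is unchanged, the paper doing so by noting that $\Pi^{\tt s}_{v\rightarrow c}+\Pi^{*}_{v\rightarrow c}$ equals $\prod_{b\in C^{\tt u}_c(v)}(1-\eta_{b\rightarrow v})$ in both versions, which is just a more economical packaging of your denominator identity $S+U-\gamma SU$. The explicit induction scaffolding you add is left implicit in the paper but is the right way to make the statement rigorous.
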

\begin{proof}
The lemma follows from that in the computation of
$\eta_{c\rightarrow v}$ and hence of $(\zeta_v^1, \zeta_v^0,
\zeta_v^*)$, $\Pi_{v\rightarrow c}^{\tt s}$ and $\Pi_{v\rightarrow
c}^*$ always appear together in the form of $\Pi_{v\rightarrow
c}^{\tt s}+\Pi_{v\rightarrow c}^*$. But it is easy to see that in
$\mbox{SP}(\gamma)$ and in $\mbox{SP}^*(\gamma)$, $\Pi_{v\rightarrow
c}^{\tt s}+\Pi_{v\rightarrow c}^*$ has the same parametric form,
both equal to $\prod_{b\in C^{\tt u}_c(v)}(1-\eta_{b\rightarrow
v})$.
\end{proof}


We conclude this subsection by remarking that it is possible to
verify that all results concerning $\mbox{SP}(\gamma)$ in
\cite{SP:newlook} hold for $\mbox{SP}^*(\gamma)$
\footnote{Specifically, we note that BP on the MRF formulated in
\cite{SP:newlook} will also reduce to  $\mbox{SP}^*(\gamma)$. We
leave this for the interested readers to verify.}. As such, in the
rest of this paper, $\mbox{SP}^*(\gamma)$ rather than
$\mbox{SP}(\gamma)$ will be taken as the weighted SP for $k$-SAT
problems.

\subsection{Survey Propagation for $q$-COL Problems}

Similar to SP developed for $k$-SAT problems, in $q$-COL problems,
SP passes messages between the variable
vertices and the function (constraint) vertices in the factor-graph
representation of the problem. Some notable differences however exist.

First, weighted SP has not been developed for $q$-COL problems to
date, and it is not even clear whether such algorithm family, if
existing, can be developed in a similar manner as that for $k$-SAT
in \cite{SP:newlook}, namely, via reducing the BP algorithm derived
from a properly defined MRF.  Answering this question in a later
section, we here therefore only review the original version of SP
applied to $3$-COL problems following the formulation in
\cite{Braunstein:3col}, which is analogous to $\mbox{SP}(1)$, or the
non-weighted SP, in the context of $k$-SAT.

Second, the SP messages for $q$-COL problems can be expressed more compactly,
due to a specific nature of the problem, on which we now elaborate.

For $q$-COL problems, each constraint vertex has degree $2$. This
allows the combination of the message passed from variable $x_u$ to
a neighboring  constraint, say $\Gamma_c$, with the message passed
from constraint $\Gamma_c$ to the other neighbor, say $x_v$, of
$\Gamma_c$. As a consequence, $\Gamma_c$ may be suppressed in the
factor graph, and messages are directly passed between variable
vertices that are distance $2$ apart
\footnote{Still implementing the flooding schedule, the SP message-update
rule for $3$-COL problems however suppresses the passing of one set of messages
(say, for example, the right messages) by including the computation of these
messages in updating the other set of messages.}
(or equivalently, messages are
passed on graph $(\Delta, \Xi)$). Following \cite{Braunstein:3col},
a compact version of SP message-passing rule for $3$-COL problems is
given as follows, where the message passed from variable $x_u$ to
variable $x_v$ is
 a quadruplet of real numbers $(\eta^1_{u \rightarrow v},\eta^2_{u
\rightarrow v},\eta^3_{u\rightarrow v},\eta^*_{u\rightarrow v})$.
For $i=1, 2, 3$,
\begin{equation}
\label{equ:sp_3_col} \eta^i_{u\rightarrow v} :=
\frac{\prod\limits_{w\in N(u)\setminus \{v\}}(1-\eta^i_{w\rightarrow
u})-\sum\limits_{j\neq i}\prod\limits_{w\in
N(u)\setminus\{v\}}(\eta^*_{w\rightarrow u} + \eta^j_{w\rightarrow
u})+\prod\limits_{w\in N(u)\setminus\{v\}}\eta^*_{w\rightarrow
u}}{\sum\limits_{j=1,2,3}\prod\limits_{w\in
N(u)\setminus\{v\}}(1-\eta^j_{w\rightarrow
u})-\sum\limits_{j=1,2,3}\prod\limits_{w\in
N(u)\setminus\{v\}}(\eta^*_{w\rightarrow u}+\eta^j_{w\rightarrow
u})+\prod\limits_{w\in N(u)\setminus\{v\}}\eta^*_{w\rightarrow u}}
\end{equation}
where $N(u)$ is the set $\{v:v\in V,
\{u, v\}\in \Xi\}$, namely, the set of neighboring vertices of
vertex $u$ on graph $\{\Delta,\Xi\}$; and
\begin{equation}
\label{equ:sp_3_col_star} \eta^*_{u\rightarrow v}:=1-\sum_{j=1,2,3}\eta^j_{u\rightarrow v}.
\end{equation}

For $3$-COL problems, the ``summary message'' computed at each
variable vertex $x_v$ is a quadruplet of real numbers, denoted by
$(\zeta^1_v,\zeta^2_v,\zeta^3_v,\zeta^*_v)$, where for $i=1, 2, 3,$
\begin{equation*}
\zeta^i_v:= \frac{\prod\limits_{u\in N(v)}(1-\eta^i_{u\rightarrow
v})-\sum\limits_{j\neq i}\prod\limits_{u\in
N(v)}(\eta^*_{u\rightarrow v} + \eta^j_{u\rightarrow
v})+\prod\limits_{u\in N(v)}\eta^*_{u\rightarrow
v}}{\sum\limits_{j=1,2,3}\prod\limits_{u\in
N(v)}(1-\eta^j_{u\rightarrow
v})-\sum\limits_{j=1,2,3}\prod\limits_{u\in
N(v)}(\eta^*_{u\rightarrow v}+\eta^j_{u\rightarrow
v})+\prod\limits_{u\in N(v)}\eta^*_{u\rightarrow v}}
\end{equation*}
and
\begin{equation*}
\zeta^*_v := 1-\sum\limits_{j=1,2,3}\zeta^j_v.
\end{equation*}

Similar to that for $k$-SAT problems, the summary message for a
$3$-COL problem at variable $x_v$ may indicate the ``bias'' of
variable $x_v$ to each letter in $\{1, 2, 3, *\}$. In the decimation
procedure for $3$-COL problems -- carried out in a similar way to
that for $k$-SAT problems, a variable is fixed to a color
$i\in\{1,2,3\}$ if it is highly biased to that color. The reader is
referred to \cite{Braunstein:3col} for a detailed account of a
heuristic decimation rule used in solving $3$-COL problems using SP.

We note that this paper primarily focuses on SP update equations, where the
decimation aspect of SP is largely ignored.

\section{SP as Probabilistic Token Passing}
\label{sec:token}

To date, SP algorithms have been applied to various other CSPs,
for example,
in coding for Blackwell channels \cite{Yu:SP_Blackwell}, in quantization of
Bernoulli sources \cite{Wainwright:SP_compression},
and in solving graph coloring problems \cite{Braunstein:3col}, etc.. However,
a general formulation of SP, particularly that of
weighted SP, for solving arbitrary non-binary CSPs, has been
largely missing. Specifically, we note the following milestones in
the formulation of SP algorithms.
\begin{itemize}
\item The work of \cite{Braunstein} presents non-weighted version
of SP formulas for general CSPs beyond those involving only binary variables.
However, the exposition of \cite{Braunstein}
uses the language of statistical physics, rather remote
to the engineering community, and a cleaner and more friendly
formulation of SP, and particularly of weighted SP, is desirable
for general problems.
\item The work of \cite{SP:newlook} presents
weighted SP for $k$-SAT problems, in which weighted SP is treated as
a special case of BP in a properly defined MRF. This treatment of SP
and the corresponding principle for developing weighted SP are conceivably
applicable to all binary CSPs.
However, it has remained open, prior to this work, whether such an approach
to understanding and developing weighted SP is applicable to arbitrary
non-binary CSPs.
\end{itemize}

The line of development in this section is summarized below.

We will first present an understanding of non-weighted SP for
arbitrary CSPs (namely, that formulated in \cite{Braunstein}) in
terms of ``probabilistic token passing (PTP)''. Although similar
understanding has been previously reported in various contexts, we
here stress the role of extending the variable alphabet in SP
algorithms, and explicitly point out that the alphabet extension is {\em not}
to simply include an extra joker symbol, but to {\em replace} the
variable alphabet with its {\em power set} (excluding the empty-set
element). To make the PTP procedure more intuitively sensible, prior
to defining PTP, we will introduce a precursor of PTP, which we call
``deterministic token passing'' (DTP).

After introducing PTP, we then show that the probabilistic interpretation of non-weighted
SP in terms of PTP makes it naturally generalizable to a weighted
version, which we call weighted PTP. For a brief preview, the
generalization of PTP to weighted PTP essentially involves
generalizing a {\em functional dependency} in PTP message-update
rule to a {\em probabilistic dependency}. Interestingly as we will
show, it turns out that for $k$-SAT problems, weighted PTP precisely
coincides with weighted SP of \cite{SP:newlook}.  This should
convincingly demonstrate that weighted PTP is a generalization of
weighted SP for arbitrary CSPs.

The outline of this section is given as follows. Subsection
\ref{subsec:alpha} introduces the notion of alphabet extension and related
concepts. Subsection \ref{subsec:dtp} defines DTP as a precursor of PTP.
In Subsection \ref{subsec:ptp}, we introduce PTP. In Subsection \ref{subsec:sp_as_ptp}, we show that PTP is equivalent to SP, using $3$-COL problem as
an example. In Subsection \ref{subsec:w_ptp}, we introduce weighted PTP.
In Subsection \ref{subsec:w_ptp_gen_w_sp}, we show that weighted PTP generalize
weighted SP using $k$-SAT problems as an example.

\subsection{Alphabet Extension}
\label{subsec:alpha}

For a given CSP with variable alphabet $\chi$, we define the {\em
extended alphabet} $\chi^*$ as the power set of $\chi$  excluding
the empty set $\emptyset$. That is, $\chi^*=\{t: t\subseteq \chi,
t\neq \emptyset\}$). The extended alphabet $\chi^*$ of $k$-SAT
problems is then the set $\{ \{0\},\{1\},\{0,1\}\}$. For $3$-COL
problems, $\chi^*$ is the set
$\{\{1\},\{2\},\{3\},\{1,2\},\{1,3\},\{2,3\},\{1,2,3\}\}$. Each
element $t$ of $\chi^*$ will be written as a string -- in bold font
-- containing the elements of $t$. For example, we may write
$\{1,2\}$ as ${\bf 12}$, $\{1, 2, 3\}$ as ${\bf 123}$ and $\{1\}$
simply as ${\bf 1}$.

Given any subset $U\subseteq V$, a $\chi^*$-assignment $y_U$ on $U$
is referred to as a {\em rectangle}
on $U$. The set of all rectangles on $U$ is denoted by
$(\chi^*)^U$.
Given rectangle $y_U\in (\chi^*)^U$, for every $v\in U$, $y_{U:\{v\}}$, or
simply written as $y_v$ --- following an earlier convention of this paper ---
is referred to as the $v$-{\em side} of $y_U$. Apparently, rectangle $y_U$ has
$|U|$ sides, and may also be written as the concatenation of all its sides,
namely, as $\langle y_v\rangle_{v\in U}$.

For any $v\in V$, an elementary $\chi^*$-assignment
$t_v\in (\chi^*)^{\{v\}}$
will be referred to as a {\em token} on $v$. Using
this nomenclature, the $v$-side of any rectangle is a token on $v$.
We note that a token $t_v$ may be interpreted as a set of elementary
$\chi$-assignments on $\{v\}$, which is in fact the set of all
elementary $\chi$-assignments on $\{v\}$ that assign $v$ a value in set
$t_v(v)\subseteq \chi$. For example, suppose that $\chi:=\{1, 2,
3\}$, then token ${\bf 12}_v$ may be identified with the set $\{1_v,
2_v\}$ of elementary $\chi$-assignments on $\{v\}$.

It is worth noting that when a token $t_v$ is identified with a set
of elementary $\chi$-assignments on $v$, a rectangle $\langle t_v
\rangle_{v\in U}$ may be identified with the {\em Cartesian product}
of all its sides. For example, rectangle $({\bf 12}_v, {\bf 23}_u)$
may be interpreted as the following set of $\chi$-assignments on
$\{v, u\}$: $\{(1_v, 2_u), (1_v, 3_u), (2_v, 2_u), (2_v, 3_u)\}$.
Under this interpretation, we will also make frequent uses of the
Cartesian product notation, writing rectangle $({\bf 12}_v, {\bf
23}_u)$ as ${\bf 12}_v \times {\bf 23}_u$, and rectangle $\langle
t_v \rangle_{v\in U}$ as $\prod_{v\in U}t_v$. We note that this
interpretation is in fact the reason for which we choose the
terminologies ``rectangle'' and ``side''.

For simplicity, from here on, we shall reserve the term
``assignment'' to referring to a $\chi$-assignment only, and a
$\chi^*$-assignment will be referred to as a ``rectangle'', ``side''
or ``token''.

We say that an assignment $x_U$ on $U$ is
 {\em contained} in rectangle $y_U$ if $x_{U:\{v\}}(v) \in
y_{U:\{v\}}(v)$ for every $v\in U$. For example, assignment $(1_v, 2_u)$
is contained in rectangle $({\bf 13}_v, {\bf 23}_u)$
We will use $x_U\in y_U$ to
denote this containedness relationship, since this notation is
precise when the rectangle $y_U$ is interpreted as a {\em set} of
assignments on $U$.

Given a CSP and a $(v-c)$ pair, we define function ${\tt F}_c^v:
\left(\chi^*\right)^{V(c)\setminus \{v\}} \rightarrow \left( \chi^*
\right)^{\{v\}}$ as follows: for every rectangle $\prod_{u\in
V(c)\setminus \{v\}} t_u$ on $V(c)\setminus \{v\}$,
\begin{equation*}
{\tt F}^v_c\left(
\prod_{u\in V(c)\setminus \{v\}} t_u
\right):=
\left(
\left(
\chi^{\{v\}}\times \prod\limits_{u\in V(c)\setminus\{v\}}t_u
\right)
\cap \Gamma_c\right)_{:\{v\}}.
\end{equation*}
We often write ${\tt F}_c^v$ in short as ${\tt F}_c$ since the domain
and co-domain of the function may be recovered from the form of its argument.
Given rectangle
$\prod_{u\in V(c)\setminus \{v\}} t_u$ on $V(c)\setminus \{v\}$, we call
${\tt F}_c\left(
\prod_{u\in V(c)\setminus \{v\}} t_u
\right)$ the {\em forced token} by rectangle $\prod_{u\in V(c)\setminus \{v\}} t_u$ via constraint $\Gamma_c$.
It is easy to verify that the forced  token
${\tt F}_c\left(\prod\limits_{u \in V(c)\setminus \{v\}} t_u\right)$
is simply the set of all (elementary) assignments on $\{v\}$ which, when concatenated
with an assignment on
$V(c)\setminus \{v\}$ contained in rectangle
$\prod\limits_{u \in V(c)\setminus \{v\}} t_u$,
make local constraint
$\Gamma_c$ satisfied. We now give some examples
using the toy $3$-SAT problem shown in Fig. \ref{fig:ksat_fg} to
illustrate this
definition. Consider
constraint $\Gamma_a$, if rectangle $t_{\{1,2\}}$ on $\{1, 2\}$
is defined as $({\bf 1}_{1}, {\bf 01}_{2})$,
then forced token ${\tt F}_a(t_{\{1, 2\}})={\bf 01}_4$,
since when assigning variable $x_4$ either value $0$ or $1$, it is possible to
find an assignment
of variables $x_1$ and $x_2$ in rectangle $t_{\{1, 2\}}$
that makes $\Gamma_a$ satisfied;
on the other hand, if $t_{\{1, 2\}}=({\bf 0}_1, {\bf 1}_2)$, then
forced token
${\tt F}_a(t_{\{1, 2\}})={\bf 0}_4$, since rectangle
$t_{\{1,2\}}$ contains a single
assignment of $x_1$ and $x_2$ (namely $(0_1, 1_2)$),
and the only assignment of $x_4$ that
will make constraint $\Gamma_a$ satisfied is the one
assigning $0$ to $x_4$, namely $0_4$.

A ``monotonicity property'' of function ${\tt F}_c$, stated in the following
lemma, follows immediately from the definition of the function.

\begin{lem}\label{lem:forcedToken_monotonic}
Suppose that $x_v$ and $\Gamma_c$ are a pair of neighboring variable and
constraint vertices in the factor graph, and that $y_{V(c)\setminus \{v\}}$
and $y'_{V(c)\setminus \{v\}}$ are two rectangles on $V(c)\setminus \{v\}$.
Then 
$y_{V(c)\setminus \{v\}} \subset y'_{V(c)\setminus \{v\}}$ impliese that
$
{\tt F}_c\left(
y_{V(c)\setminus \{v\}}
\right)
\subseteq
{\tt F}_c\left(
 y'_{V(c)\setminus {\{v\}}}
\right).
$
\end{lem}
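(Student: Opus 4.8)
The plan is to exploit the fact that, once rectangles are identified with the sets of assignments they contain (as established earlier in this section), the map ${\tt F}_c$ is a composition of three elementary set-theoretic operations, each of which is order-preserving with respect to set inclusion. Since a composition of monotone maps is monotone, the claim follows. Throughout I would simply write $\subseteq$, noting that the hypothesis $y_{V(c)\setminus\{v\}}\subset y'_{V(c)\setminus\{v\}}$ entails in particular the non-strict inclusion, which is all the conclusion requires.

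Concretely, writing $S:=\prod_{u\in V(c)\setminus\{v\}} y_u$ and $S':=\prod_{u\in V(c)\setminus\{v\}} y'_u$ for the two rectangles viewed as sets of assignments on $V(c)\setminus\{v\}$, the hypothesis reads $S\subseteq S'$. First I would observe that forming the Cartesian product with $\chi^{\{v\}}$ is monotone: $S\subseteq S'$ gives $\chi^{\{v\}}\times S\subseteq \chi^{\{v\}}\times S'$, since any assignment in the left-hand set restricts to an element of $S$ on $V(c)\setminus\{v\}$ and to an arbitrary value on $\{v\}$. Next, intersecting with the fixed set $\Gamma_c$ preserves inclusion, so $(\chi^{\{v\}}\times S)\cap\Gamma_c\subseteq(\chi^{\{v\}}\times S')\cap\Gamma_c$. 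Finally, the projection $(\cdot)_{:\{v\}}$ is monotone as well: if $A\subseteq B$ then $A_{:\{v\}}\subseteq B_{:\{v\}}$, because every restriction of an assignment in $A$ is also a restriction of an assignment in $B$. Applying the definition of ${\tt F}_c$ to the two ends of this chain yields ${\tt F}_c(y_{V(c)\setminus\{v\}})\subseteq{\tt F}_c(y'_{V(c)\setminus\{v\}})$.

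Alternatively, I could give a one-line element-chasing argument that bypasses naming the intermediate operations, using the characterization of the forced token noted just after its definition: a value $r\in\chi$ lies in ${\tt F}_c(y_{V(c)\setminus\{v\}})$ precisely when there is an assignment of $V(c)\setminus\{v\}$ contained in $y_{V(c)\setminus\{v\}}$ which, concatenated with $r_v$, satisfies $\Gamma_c$. Since any such witnessing assignment is also contained in the larger rectangle $y'_{V(c)\setminus\{v\}}$, the same $r$ belongs to ${\tt F}_c(y'_{V(c)\setminus\{v\}})$, giving the inclusion directly. I do not anticipate a genuine obstacle here: the statement is a routine monotonicity fact, and the only thing requiring care is to keep the three interpretations—rectangle as a set of assignments, the product notation for the rectangle it encodes, and the projection $(\cdot)_{:\{v\}}$—consistent so that each step is manifestly an inclusion of sets. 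The element-chasing version is the cleanest, and is what I would ultimately write.
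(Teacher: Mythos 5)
Your proof is correct and takes the same route the paper intends: the paper offers no written proof, stating only that the lemma ``follows immediately from the definition,'' and your argument simply fills in the routine monotonicity of the product, intersection, and projection steps (or equivalently the element-chasing version). Nothing is missing.
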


\subsection{Deterministic Token Passing (DTP)}
\label{subsec:dtp}

As we will introduce --- for arbitrary CSPs --- a probabilistic
interpretation of non-weighted SP (namely, PTP) and generalize it to
a weighted version (namely, weighted PTP), in this subsection, we
first introduce an algorithmic procedure, which we call {\em
deterministic token passing} or DTP. We note that the purpose of
introducing DTP is to provide an easier access to PTP, a procedure
to be introduced in the next subsection.


In DTP, messages are tokens passed along
the edges of the factor graph representing the CSP of interest.
Specifically, the token passed from and to each
variable $x_v$ is a token on $v$, or equivalently, a set of (elementary)
assignments on $\{v\}$.
For any pair of
neighboring vertices $x_v$ and $\Gamma_c$ on the factor graph, the
token, or left message,
$t_{v\rightarrow c}$ passed from variable $x_v$ to constraint
$\Gamma_c$ depends on all incoming tokens (right messages)
passed to $x_v$ except
that  passed from $\Gamma_c$.
 Similarly, the token, or right message, $t_{c\rightarrow
v}$ passed from  constraint $\Gamma_c$ to variable $x_v$ depends on
all incoming tokens (left messages)
passed to $\Gamma_c$ except that passed from
$x_v$. Each iteration of token passing in DTP is defined by
every variable passing a token on each of its edges followed by every
constraint passing a token on each of its edges.
Within any iteration, the token-passing rule of DTP is given as follows.
\begin{eqnarray}
\label{equ:tokenvtoc} t_{v\rightarrow c} &:= &\bigcap_{b\in
C(v)\setminus\{c\}}t_{b\rightarrow v}\\
\label{equ:tokenctov} t_{c\rightarrow v} &:= &{\tt
F}_c\left(\prod_{u\in V(c)\setminus\{v\}}t_{u\rightarrow c}\right).
\end{eqnarray}
That is, the token passed from a variable is the {\em intersection} of its
incoming tokens from the upstream, whereas the token passed from
a constraint is the forced token via the constraint by the rectangle formed
by the upstream incoming tokens as sides.

It is intuitive to illuminate this message-passing rule using the following
analogy. We may view the token sent from a variable as the ``intention''
of the variable, indicating the possible values that the variable
intends to take. On the other hand, we may
view the token sent from a constraint as the
``command'' from the constraint, indicating the possible values that the constraint allows the destination variable to take.
 If $a$ is an intention and $b$ is a command,
where both are tokens on the same coordinate, then the relationship
$a\subseteq b$ may be viewed as that ``intention $a$ obeys command $b$''.
Under this perspective, the token sent from a variable is the ``maximal''
intention of the variable that obeys all incoming commands from the upstream
constraints; on the other hand, the token sent from a constraint
is the ``maximal'' command that is ``compatible'' with all incoming
intentions from the upstream variables. Here ``maximality'' is in the sense
of maximizing the cardinality of
the subset of assignments, and ``compatibility'' is in the sense of
satisfying the local constraint.

Examples of token passing for a 3-COL problem are illustrated in
Fig. \ref{fig:token}.

\begin{figure}[htb]
  \begin{center}
    \begin{minipage}[h]{3cm}
      \begin{center}
        \scalebox{0.5}{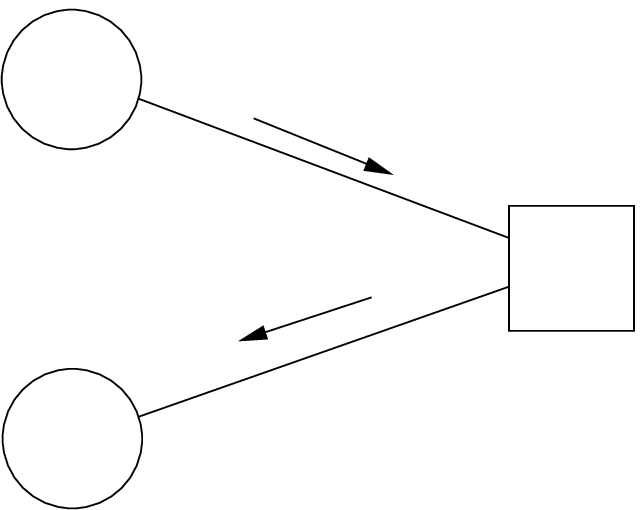}
        \scriptsize{(a)}
      \end{center}
    \end{minipage}
    \hspace{1.5cm}
    \begin{minipage}[h]{3cm}
      \begin{center}
        \scalebox{0.5}{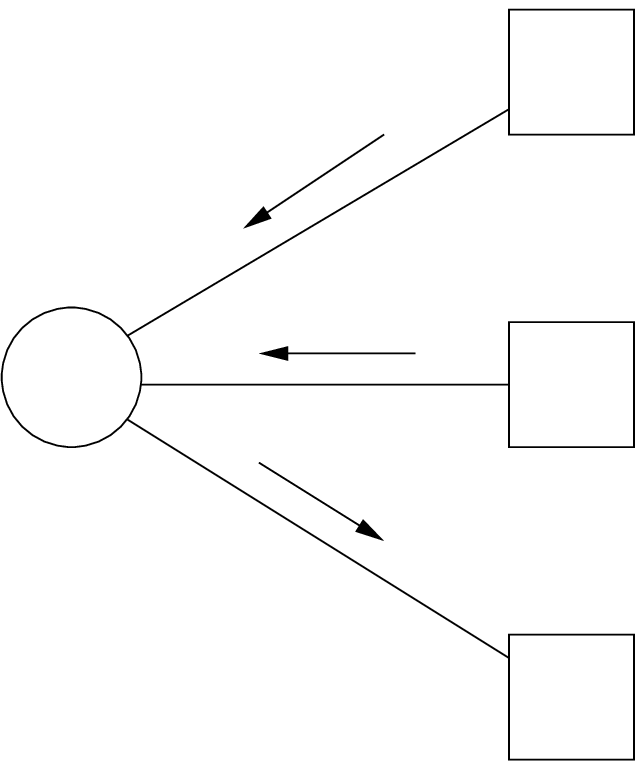}
        \scriptsize{(b)}
      \end{center}
    \end{minipage}
   \caption{Examples of deterministic token passing for a 3-COL problem.
   (a) Token $t_{c\rightarrow v}$ passed from constraint $\Gamma_c$
   to variable $x_v$. (b) Token $t_{v\rightarrow c}$ passed from
   variable $x_v$ to constraint $\Gamma_c$.}
   \label{fig:token}
  \end{center}
\end{figure}

A summary message or ``summary token''
at variable vertex $x_v$ may be computed, according
to the rule in (\ref{equ:sum_token}) for each
$v\in V$ at any iteration after the all constraint vertices have passed
tokens.
\begin{equation}
\label{equ:sum_token} t_v: =\bigcap_{b\in C(v)}t_{b\rightarrow v}.
\end{equation}

Using the ``intention/command'' analogy, the summary token at a variable
is the ``maximal'' intention of the variable that obeys the
incoming commands from {\em all} directions.

Some caution is needed on the well-definedness of the updating rule
of passed tokens and summary tokens.  That is, in
(\ref{equ:tokenvtoc}), (\ref{equ:tokenctov}) and
(\ref{equ:sum_token}) the right-hand side can be equal to the empty
set $\emptyset$, which is not a well-defined token. Whenever in an
iteration a not-well-defined token (i.e., the empty set) arises from
the updating rule, we may
force DTP to terminate. --- As we will see later in the ``random'' version of
DTP (i.e., PTP and weighted PTP), we will eventually condition on the case in 
which these events do not happen.

At any iteration, one may read out the summary tokens at all variable vertices
and form a rectangle on $V$ using these tokens as its sides. It is clear that at any
given iteration, the resulting rectangle formed by the summary tokens depends
on the initialization of DTP. 

Although our primary purpose of introducing DTP is to make smoother
the transition to understanding PTP, in Appendix A, we present some
elementary results concerning the dynamics of DTP. We note that those
results will also be used
to derive some insights on the dynamics of PTP --- an algorithmic
procedure that we introduce next as a simple formulation of SP.

\subsection{Probabilistic Token Passing (PTP)}
\label{subsec:ptp}

We now introduce the ``probabilistic token passing'' (or PTP)
procedure. The key distinction between PTP and DTP is that on each
edge and along each direction, PTP passes a {\em random} token and
the messages being updated in PTP are the {\em distributions} of the
random tokens.

Specifically, PTP message-update rule can be constructed by
considering the following mechanism of passing random tokens.
\begin{enumerate}
\item On each edge connecting variable $x_v$ and constraint $\Gamma_c$ in
the factor graph, the token $t_{v\rightarrow c}$ passed to
constraint $\Gamma_c$ and the token $t_{c\rightarrow v}$ passed to
variable $x_v$ are both {\em random variables}, distributed over
$\left(\chi^*\right)^{\{v\}}$.
\item For any given vertex in the factor graph,
all of its incoming random tokens are assumed to be independent.
\item For any given vertex in the factor graph,
the outgoing random token sent along any edge is a function of all
the incoming random tokens from the upstream, where the functional
dependency is precisely that specified in DTP, namely,
(\ref{equ:tokenvtoc}) or (\ref{equ:tokenctov}), depending on whether
the vertex is a variable vertex or a function (constraint) vertex.
\item The summary (random) token $t_v$ at each variable vertex $x_v$
is a function of all incoming random tokens, where the functional dependency
is precisely that specified in DTP, namely, (\ref{equ:sum_token}).
\end{enumerate}

Building on this mechanism, we will then define each PTP (passed or
summary) message as the distribution of the corresponding random
token {\em conditioned} on that the token is well defined (namely,
not equal to the empty set). We note that such a ``conditioning''
merely involves a normalization (namely, scaling) of each message so
that it sums to $1$ over all valid tokens. We will use
$\lambda_{v\rightarrow c}$ to denote the message sent from $x_v$ to
$\Gamma_c$ --- also referred to as a left message,
$\rho_{c\rightarrow v}$ to denote the message sent from $\Gamma_c$
to $x_v$ --- also referred to as a right message, and $\mu_v$ to
denote the summary message at variable vertex $x_v$. It is then
straight-forward to derive the message-update rule of PTP as
follows, where the superscript ``${\tt norm}$'' on a message
indicates that the message has been normalized.

\begin{center}
\shadowbox{
\begin{minipage}{6.3in}
{\bf \large PTP Message-Update Rule}\\
\begin{eqnarray}
\label{equ:ptpvtoc} \lambda_{v\rightarrow c}(t_{v\rightarrow c})& :=
& \sum_{ \langle t_{b\rightarrow v} \rangle_{b\in
C(v)\setminus\{c\}} } \left[ t_{v\rightarrow c}=\bigcap\limits_{b\in
C(v)\setminus\{c\}}t_{b\rightarrow v} \right] \prod_{b\in
C(v)\setminus\{c\}}\rho^{\rm norm}_{b\rightarrow v}(t_{b\rightarrow v})\\
\label{equ:ptpctov} \rho_{c\rightarrow v}(t_{c\rightarrow
v})&: =&\sum_{ \langle t_{u\rightarrow c} \rangle_{u\in
V(c)\setminus\{v\}} } \left[t_{c\rightarrow v}={\tt F}_c\left(
\prod\limits_{u\in V(c)\setminus \{v\}} t_{u\rightarrow c} \right)
\right] \prod_{u\in
V(c)\setminus\{v\}}\lambda^{\rm norm}_{u\rightarrow c}(t_{u\rightarrow c})\\
\label{equ:ptp_summary} \mu_v(t_v) & := &\sum_{ \langle
t_{c\rightarrow v} \rangle_{c\in C(v)} } \left[
t_v=\bigcap\limits_{c\in C(v)}t_{c\rightarrow v} \right] \prod_{c\in
C(v)}\rho^{\rm norm}_{c\rightarrow v}(t_{c\rightarrow v}),
\end{eqnarray}
\\
and the normalized messages are defined as
\begin{eqnarray}
\lambda_{v\rightarrow c}^{\rm norm}(t_{v\rightarrow c})
&:= &
 \lambda_{v\rightarrow c}(t_{v\rightarrow c})
/\sum\limits_{t\in \left(\chi^*\right)^{\{v\}}}
\lambda_{v\rightarrow c}(t)\\
\rho_{c\rightarrow v}^{\rm norm}(t_{c\rightarrow v})
&:= &
 \rho_{c\rightarrow v}(t_{c\rightarrow v})
/\sum\limits_{t\in \left(\chi^*\right)^{\{v\}}}
\rho_{c\rightarrow v}(t)\\
\mu_{v}^{\rm norm}(t_{v})
&:= &
 \mu_{v}(t_{v})
/\sum\limits_{t\in \left(\chi^*\right)^{\{v\}}} \mu_{v}(t).
\end{eqnarray}
\end{minipage}
}
\end{center}

We note that the update of messages in each PTP iteration
is proceeded by first computing the un-normalized messages and then computing
their normalized version.

\subsection{SP as PTP}
\label{subsec:sp_as_ptp}

We now show that SP is precisely PTP using the example of $3$-COL
problems. Here we note that it is possible (and entails little
additional difficulty) to show the
equivalence between PTP
and the {\em general} formulation of
non-weighted SP \cite{Braunstein} for arbitrary
CSPs. However, as we feel it unnecessary to distract the readers
with the additional statistical physics terminologies presented in
\cite{Braunstein}, we choose not to repeat the exposition of SP in
\cite{Braunstein} and only show that SP is PTP for the special case of
$3$-COL problems.

In the factor graph representing a $3$-COL problem, noting that each
constraint vertex has degree 2, we will make a slight abuse of
notation: for any $(v-c)$ pair, we will use $V(c)\setminus \{v\}$ to
also denote the {\em index} of the unique other variable vertex
(besides $x_v$) connecting to $\Gamma_c$, although $V(c)\setminus
\{v\}$ originally refers to the singleton set containing that index.
Whether $V(c)\setminus\{v\}$ should be treated as the index of a
variable or as the singleton set containing the index should be
clear from the context.

For notational simplicity, from here on, for every element in the
token set $\left(\chi^*\right)^{\{v\}}$, when no ambiguity is
resulted, we will suppress the subscript indicating the coordinate
of the element. For example, we will write 
${\bf 12}_v$ as ${\bf 12}$, when the subscript can be recovered from
the context. Additionally, we will use $i,j,$ and $k$
to denote the three distinct colors $1,2,$ and $3$ in the $3$-COL
problem, so that token ${\bf i}$ can refer to any token that is a
singleton set, token ${\bf ij}$ can refer to any token that contains
a pair of assignments, and token ${\bf ijk}$ refers to the token
containing all three assignments.

Using these notations,
the PTP message-update rule for $3$-COL problems can be easily derived, which is presented in the following lemma.

\begin{lem}
For $3$-COL problems, the PTP message-update rule is:

    \begin{eqnarray}
    \label{equ:ptp_3col_l_i}
        \lambda_{v\rightarrow c}({\bf i})
& := & \prod_{b\in C(v)\setminus\{c\}}
        \left(\rho^{\rm norm}_{b\rightarrow v}({\bf ij})+\rho^{\rm norm}_{b\rightarrow v}({\bf ik})+
        \rho^{\rm norm}_{b\rightarrow v}({\bf ijk})\right)-\prod_{b\in
        C(v)\setminus\{c\}}\left(\rho^{\rm norm}_{b\rightarrow v}({\bf ij})+\rho^{\rm norm}_{b\rightarrow
        v}({\bf ijk})\right)   \nonumber \\
       &  &-\prod_{b\in C(v)\setminus\{c\}}\left(\rho^{\rm norm}_{b\rightarrow v}
        ({\bf ik})+\rho^{\rm norm}_{b\rightarrow v}({\bf ijk})\right)+\prod_{b\in
        C(v)\setminus\{c\}}\rho^{\rm norm}_{b\rightarrow v}({\bf ijk})\\
        \label{equ:ptp_3col_l_ij}
 \lambda_{v\rightarrow c}({\bf ij})
 & := &\prod_{b\in C(v)\setminus\{c\}}\left(\rho^{\rm norm}_{b\rightarrow
        v}({\bf ij})+\rho^{\rm norm}_{b\rightarrow v}({\bf ijk})\right)-\prod_{b\in
        C(v)\setminus\{c\}}\rho^{\rm norm}_{b\rightarrow v}({\bf ijk})\\
        \label{equ:ptp_3col_l_ijk}
\lambda_{v\rightarrow c}({\bf ijk}) & := &
    \prod_{b\in
        C(v)\setminus\{c\}}\rho^{\rm norm}_{b\rightarrow v}({\bf ijk})\\
        \label{equ:ptp_3col_r_ij}
\rho_{c\rightarrow v}({\bf ij}) & := & \lambda^{\rm
norm}_{V(c)\setminus\{v\}\rightarrow c}
({\bf k})\\
        \label{equ:ptp_3col_r_ijk}\rho_{c\rightarrow v}({\bf ijk})
&:=&\lambda^{\rm norm}_{V(c)\setminus\{v\}\rightarrow c}({\bf
ij})+\lambda^{\rm norm}_{V(c)\setminus\{v\}\rightarrow
        c}({\bf ik})+\lambda^{\rm norm}_{V(c)\setminus\{v\}\rightarrow c}({\bf jk})+\lambda^{\rm norm}_{V(c)\setminus\{v\}\rightarrow c}({\bf ijk})\\
        \label{equ:ptp_3col_s_i}\mu_v({\bf i}) & := &\prod_{c\in C(v)}\left(\rho^{\rm norm}_{c\rightarrow v}({\bf
ij})+\rho^{\rm norm}_{c\rightarrow
        v}({\bf ik})+\rho^{\rm norm}_{c\rightarrow v}({\bf ijk})\right)-\prod_{c\in
        C(v)}\left(\rho^{\rm norm}_{c\rightarrow v}({\bf ij})+\rho^{\rm norm}_{c\rightarrow
        v}({\bf ijk})\right)  \nonumber \\
       & &-\prod_{c\in C(v)}\left(\rho^{\rm norm}_{c\rightarrow
        v}({\bf ik})+\rho^{\rm norm}_{c\rightarrow v}({\bf ijk})\right)+\prod_{c\in
        C(v)}\rho^{\rm norm}_{c\rightarrow v}({\bf ijk})\\
        \label{equ:ptp_3col_s_ij}\mu_v({\bf ij})& := & \prod_{c\in C(v)}\left(\rho^{\rm norm}_{c\rightarrow v}({\bf ij})+\rho^{\rm norm}_{c\rightarrow
        v}({\bf ijk})\right)-\prod_{c\in C(v)}\rho^{\rm norm}_{c\rightarrow v}({\bf ijk})\\
        \label{equ:ptp_3col_s_ijk}\mu_v({\bf ijk})& := & \prod_{c\in C(v)}\rho^{\rm norm}_{c\rightarrow v}({\bf ijk}).
    \end{eqnarray}
\end{lem}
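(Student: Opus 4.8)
The plan is to specialize the three general PTP update equations (\ref{equ:ptpvtoc}), (\ref{equ:ptpctov}) and (\ref{equ:ptp_summary}) to the $3$-COL constraint structure, the key preliminary step being to compute the forced-token function ${\tt F}_c$ explicitly and thereby pin down the support of the messages. For a $3$-COL constraint $\Gamma_c$ on $\{u,v\}$ (the ``differ'' constraint), a direct evaluation of the defining formula for ${\tt F}_c$ gives ${\tt F}_c({\bf i})={\bf jk}$, ${\tt F}_c({\bf ij})={\bf ijk}$ and ${\tt F}_c({\bf ijk})={\bf ijk}$, where $\{i,j,k\}=\{1,2,3\}$. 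In particular, a right message $\rho_{c\rightarrow v}$ is supported only on the pair tokens and the full token ${\bf ijk}$; no singleton token ever appears as a right message. This support fact is what makes the subsequent intersection computations collapse to the stated closed forms, so I would record it first.

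The right-message formulas then follow immediately by reading off preimages. Since $|V(c)|=2$, the sum in (\ref{equ:ptpctov}) runs over the single incoming left token $t_{u\rightarrow c}$, writing $u$ for $V(c)\setminus\{v\}$, so $\rho_{c\rightarrow v}(t_{c\rightarrow v})=\sum_{t_{u\rightarrow c}}[t_{c\rightarrow v}={\tt F}_c(t_{u\rightarrow c})]\,\lambda^{\rm norm}_{u\rightarrow c}(t_{u\rightarrow c})$. The unique preimage of ${\bf ij}$ under ${\tt F}_c$ is the singleton ${\bf k}$, giving (\ref{equ:ptp_3col_r_ij}); the preimages of ${\bf ijk}$ are exactly the three pair tokens together with ${\bf ijk}$, giving (\ref{equ:ptp_3col_r_ijk}).

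For the left messages I would evaluate the intersection rule (\ref{equ:ptpvtoc}) through an inclusion--exclusion over three ``survival'' events. Treating the incoming right tokens $\{t_{b\rightarrow v}:b\in C(v)\setminus\{c\}\}$ as independent (item 2 of the PTP mechanism), let $A_i$ be the event that colour $i$ lies in \emph{every} incoming token; then the intersection equals the set of surviving colours, so $\lambda_{v\rightarrow c}({\bf i})$, $\lambda_{v\rightarrow c}({\bf ij})$ and $\lambda_{v\rightarrow c}({\bf ijk})$ are respectively the probabilities of $A_i\cap\overline{A_j}\cap\overline{A_k}$, $A_i\cap A_j\cap\overline{A_k}$ and $A_i\cap A_j\cap A_k$. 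By independence $\Pr[A_i]=\prod_b\Pr[i\in t_{b\rightarrow v}]$, and by the support fact the colour $i$ belongs to exactly the support tokens ${\bf ij},{\bf ik},{\bf ijk}$, so $\Pr[i\in t_{b\rightarrow v}]=\rho^{\rm norm}_{b\rightarrow v}({\bf ij})+\rho^{\rm norm}_{b\rightarrow v}({\bf ik})+\rho^{\rm norm}_{b\rightarrow v}({\bf ijk})$; similarly $\Pr[i,j\in t_{b\rightarrow v}]=\rho^{\rm norm}_{b\rightarrow v}({\bf ij})+\rho^{\rm norm}_{b\rightarrow v}({\bf ijk})$ and $\Pr[i,j,k\in t_{b\rightarrow v}]=\rho^{\rm norm}_{b\rightarrow v}({\bf ijk})$. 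Substituting these into the expansions $\Pr[A_i\cap\overline{A_j}\cap\overline{A_k}]=\Pr[A_i]-\Pr[A_i\cap A_j]-\Pr[A_i\cap A_k]+\Pr[A_i\cap A_j\cap A_k]$, $\Pr[A_i\cap A_j\cap\overline{A_k}]=\Pr[A_i\cap A_j]-\Pr[A_i\cap A_j\cap A_k]$ and $\Pr[A_i\cap A_j\cap A_k]$ yields (\ref{equ:ptp_3col_l_i}), (\ref{equ:ptp_3col_l_ij}) and (\ref{equ:ptp_3col_l_ijk}) verbatim.

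Finally, the summary-message equations (\ref{equ:ptp_3col_s_i})--(\ref{equ:ptp_3col_s_ijk}) follow from the identical inclusion--exclusion argument applied to (\ref{equ:ptp_summary}), the only change being that the products now range over all $c\in C(v)$ rather than $C(v)\setminus\{c\}$. The main obstacle is not any single hard step but the bookkeeping in the left-message derivation: one must ensure the support fact (no singleton right messages) is genuinely used, so that ``colour $i$ survives'' translates into exactly the three $\rho^{\rm norm}$ terms shown, and that the empty-intersection outcome is correctly absorbed by the conditioning/normalization built into the PTP definition rather than surfacing as a spurious extra token.
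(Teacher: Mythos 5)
Your proposal is correct and is exactly the derivation the paper intends but omits (the lemma is stated with only the remark that it "can be easily derived" from the general PTP rule). The key facts you isolate --- that ${\tt F}_c$ maps singletons to their complementary pairs and all non-singleton tokens to ${\bf ijk}$, so right messages are supported off the singletons, and that the intersection rule then reduces by independence and inclusion--exclusion over the three survival events $A_i$ to the stated products --- reproduce equations (\ref{equ:ptp_3col_l_i})--(\ref{equ:ptp_3col_s_ijk}) verbatim, with the empty-intersection mass correctly relegated to the normalization step.
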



It is then possible to relate the PTP messages and the (non-weighted) SP messages for $3$-COL problems, and show their equivalence.

\begin{thm}
\label{thm:PTPisSP_3col}
    For $3$-COL problems, the correspondence between SP and PTP message-update
rules is
    \begin{eqnarray}
\label{equ:eqv_3col}
    \eta_{u\rightarrow v}^i
& \leftrightarrow &
\lambda^{\rm norm}_{u\rightarrow\{u,v\}}({\bf i}) \nonumber \\
        \eta_{u\rightarrow v}^*
& \leftrightarrow &
1-\sum_{{\bf i}={\bf 1, 2, 3}} \lambda^{\rm norm}_{u\rightarrow\{u,v\}}
({\bf i}) \nonumber\\
& &=
\lambda^{\rm norm}_{u\rightarrow
        \{u,v\}}({\bf ij})+\lambda^{\rm norm}_{u\rightarrow\{u,v\}}({\bf ik})+\lambda^{\rm norm}_{u\rightarrow
        \{u,v\}}({\bf jk})+\lambda^{\rm norm}_{u\rightarrow\{u,v\}}({\bf ijk})
\nonumber
\\
\eta_{u}^i & \leftrightarrow &\mu_{u}^{\rm norm}({\bf i}) \nonumber\\
\eta_{u}^* & \leftrightarrow & 1-\sum_{{\bf i}={\bf 1, 2, 3}} \mu^{\rm norm}_{u}({\bf i}) \nonumber\\
    \end{eqnarray}
\end{thm}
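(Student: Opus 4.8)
The plan is to verify the stated correspondence by substituting the right-message expressions (\ref{equ:ptp_3col_r_ij})--(\ref{equ:ptp_3col_r_ijk}) of the preceding lemma into its left- and summary-message equations, rewriting everything in the $\eta$-variables through the claimed identification, and matching the result term-by-term against the SP update equations (\ref{equ:sp_3_col})--(\ref{equ:sp_3_col_star}). Since each SP or PTP message at a given iteration is a fixed function of the incoming messages of the previous iteration, it suffices to check that the one-step update maps agree under the change of variables $\eta^i_{u\rightarrow v}\leftrightarrow\lambda^{\rm norm}_{u\rightarrow\{u,v\}}({\bf i})$ and $\eta^*_{u\rightarrow v}\leftrightarrow 1-\sum_{\bf i}\lambda^{\rm norm}_{u\rightarrow\{u,v\}}({\bf i})$; the correspondence then propagates across all iterations once arranged at initialization.

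First I would record two structural facts peculiar to the degree-$2$ constraints of $3$-COL. By the definition of ${\tt F}_c$ (and Lemma \ref{lem:forcedToken_monotonic}), a $3$-COL constraint forces the singleton other-token ${\bf k}$ to its complement ${\bf ij}$ and forces every non-singleton other-token to ${\bf ijk}$; hence $\rho_{c\rightarrow v}$ is supported only on the two-element and full tokens, as in (\ref{equ:ptp_3col_r_ij})--(\ref{equ:ptp_3col_r_ijk}). Summing $\rho_{c\rightarrow v}$ over this support merely regroups the entire normalized distribution $\lambda^{\rm norm}_{V(c)\setminus\{v\}\rightarrow c}$ over all seven tokens, so it equals $1$; the right messages are therefore automatically normalized, $\rho^{\rm norm}_{c\rightarrow v}=\rho_{c\rightarrow v}$. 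Writing $b=\{w,u\}$ for an upstream edge at $u$, the identification then gives the clean substitutions $\rho_{b\rightarrow u}({\bf ij})=\eta^k_{w\rightarrow u}$ (the complementary singleton) and $\rho_{b\rightarrow u}({\bf ijk})=\eta^*_{w\rightarrow u}$, using $\eta^i+\eta^j+\eta^k+\eta^*=1$.

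Next I would evaluate the PTP left-update for the singleton token: applying (\ref{equ:ptp_3col_l_i}) with sending variable $u$, constraint $c=\{u,v\}$, and upstream edges $\{w,u\}$ for $w\in N(u)\setminus\{v\}$, the three grouped sums collapse via $\rho_{b\rightarrow u}({\bf ij})+\rho_{b\rightarrow u}({\bf ik})+\rho_{b\rightarrow u}({\bf ijk})=1-\eta^i_{w\rightarrow u}$, $\rho_{b\rightarrow u}({\bf ij})+\rho_{b\rightarrow u}({\bf ijk})=\eta^k_{w\rightarrow u}+\eta^*_{w\rightarrow u}$, and $\rho_{b\rightarrow u}({\bf ik})+\rho_{b\rightarrow u}({\bf ijk})=\eta^j_{w\rightarrow u}+\eta^*_{w\rightarrow u}$. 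Substituting turns $\lambda_{u\rightarrow c}({\bf i})$ into exactly the numerator of (\ref{equ:sp_3_col}). It then remains to compute the PTP normalizer $\sum_t\lambda_{u\rightarrow c}(t)$ over all seven tokens, using the analogous reductions of (\ref{equ:ptp_3col_l_ij})--(\ref{equ:ptp_3col_l_ijk}), and to check that after cancellation it equals the SP denominator of (\ref{equ:sp_3_col}). Dividing gives $\lambda^{\rm norm}_{u\rightarrow\{u,v\}}({\bf i})=\eta^i_{u\rightarrow v}$; the $\eta^*$ line follows because $\lambda^{\rm norm}$ sums to $1$ over all seven tokens, and the summary lines ($\mu_u$ versus $\eta^i_u,\eta^*_u$) follow from the identical computation with the upstream set $C(u)\setminus\{c\}$ replaced by the full set $C(u)$ in (\ref{equ:ptp_3col_s_i})--(\ref{equ:ptp_3col_s_ijk}).

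The step I expect to be the main obstacle is the normalization identity, since it is genuinely counterintuitive: the SP denominator in (\ref{equ:sp_3_col}) is \emph{not} the sum of the three SP numerators. The content of the correspondence is precisely that the SP denominator equals the PTP normalizer over \emph{all seven} tokens, i.e., $\sum_t\lambda_{u\rightarrow c}(t)=\sum_j\prod_w(1-\eta^j_{w\rightarrow u})-\sum_j\prod_w(\eta^*_{w\rightarrow u}+\eta^j_{w\rightarrow u})+\prod_w\eta^*_{w\rightarrow u}$. Verifying it requires collecting the three singleton contributions, the three two-element contributions (each of the form $\prod_w(\eta^*_{w\rightarrow u}+\eta^{k}_{w\rightarrow u})-\prod_w\eta^*_{w\rightarrow u}$ with $k$ the complementary color), and the full-token contribution $\prod_w\eta^*_{w\rightarrow u}$, and checking that the over-counting of the $\prod_w(\eta^*+\eta^j)$ and $\prod_w\eta^*$ terms cancels exactly. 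Once this single algebraic identity is confirmed, the remaining matchings are routine.
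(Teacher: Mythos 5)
Your proposal is correct and follows essentially the same route as the paper's proof: substitute the right-message updates into the left-message (and summary) updates, rewrite everything in terms of the normalized upstream left messages (your $\eta$-variables, the paper's $\lambda^{\rm norm}$ and $\alpha$), match the numerators, and verify that the PTP normalizer over all seven tokens equals the SP denominator — which is exactly the paper's computation of $\beta$ and $\beta'$. Your explicit observations that the right messages are automatically normalized and that the normalization identity is the one genuinely non-obvious step are both accurate and are used (the former implicitly) in the paper's argument.
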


\begin{proof}
    First we will identify $c$ in the subscript of
    $\lambda^{\rm norm}_{u\rightarrow c}$ with
    $\{u, v\}$ in which $v$ indexes the destination vertex in the subscript
    of  $\eta_{u\rightarrow v}$.

    For any $c=\{u,v\}$, let $\alpha_{u,v}=\lambda^{\rm norm}_{u\rightarrow
c}({\bf ij})+ \lambda^{\rm norm}_{u\rightarrow
    c}({\bf ik})+\lambda^{\rm norm}_{u\rightarrow c}({\bf jk})+\lambda^{\rm
    norm}_{u\rightarrow c}({\bf ijk})$. When applying PTP update equations (\ref{equ:ptp_3col_r_ij}) and
    (\ref{equ:ptp_3col_r_ijk}) to equations (\ref{equ:ptp_3col_l_i})
    to (\ref{equ:ptp_3col_l_ijk}) and re-writing the
update rule in terms of left messages only, the un-normalized left
messages are updated as follows.
    \begin{eqnarray}
    \lambda_{u\rightarrow c}({\bf i})& = &\prod_{b\in C(u)\setminus
    \{c\}}\left(1-\lambda^{\rm norm}_{V(b)\setminus\{u\}\rightarrow b}({\bf i})\right)-
    \prod_{b\in C(u)\setminus \{c\}}\left(\lambda^{\rm norm}_{V(b)\setminus\{u\}
    \rightarrow b}({\bf j})+\alpha_{V(b)\setminus\{u\},u}\right)\\
    & &-\prod_{b\in C(u)\setminus \{c\}}\left(\lambda^{\rm norm}_{V(b)\setminus\{u\}
    \rightarrow b}({\bf k})+\alpha_{V(b)\setminus\{u\},u}\right)
    +\prod_{b\in C(u)\setminus \{c\}}
    \alpha_{V(b)\setminus\{u\},u}\nonumber\\
    \lambda_{u\rightarrow c}({\bf ij}) & = & \prod_{b\in
    C(u)\setminus\{c\}}\left(\lambda^{\rm norm}_{V(b)\setminus\{u\}\rightarrow
    b}({\bf k})+\alpha_{V(b)\setminus\{u\},u}\right)-\prod_{b\in
    C(u)\setminus\{c\}}\alpha_{V(b)\setminus\{u\},u}\\
    \lambda_{u\rightarrow c}({\bf ijk}) & = & \prod_{b\in
    C(u)\setminus\{c\}}\alpha_{V(b)\setminus\{u\},u}.
    \end{eqnarray}

    After normalization, we have
    \begin{eqnarray}
    \lambda^{\rm norm}_{u\rightarrow c}({\bf i})& = &\frac{1}{\beta}\cdot\left(\prod_{b\in C(u)\setminus
    \{c\}}\left(1-\lambda^{\rm norm}_{V(b)\setminus\{u\}\rightarrow b}({\bf i})\right)-
    \prod_{b\in C(u)\setminus \{c\}}\left(\lambda^{\rm norm}_{V(b)\setminus\{u\}
    \rightarrow b}({\bf j})+\alpha_{V(b)\setminus\{u\},u}\right)\right.\nonumber\\
    & &\left. -\prod_{b\in C(u)\setminus \{c\}}\left(\lambda^{\rm norm}_{V(b)\setminus\{u\}
    \rightarrow b}({\bf k})+\alpha_{V(b)\setminus\{u\},u}\right)
    +\prod_{b\in C(u)\setminus \{c\}}
    \alpha_{V(b)\setminus\{u\},u}\right)\\
    \lambda^{\rm norm}_{u\rightarrow c}({\bf ij}) & = &\frac{1}{\beta}\cdot\left(\prod_{b\in
    C(u)\setminus\{c\}}\left(\lambda^{\rm norm}_{V(b)\setminus\{u\}\rightarrow
    b}({\bf k})+\alpha_{V(b)\setminus\{u\},u}\right)-\prod_{b\in
    C(u)\setminus\{c\}}\alpha_{V(b)\setminus\{u\},u}\right)\\
    \lambda^{\rm norm}_{u\rightarrow c}({\bf ijk}) & = &\frac{1}{\beta}\cdot\prod_{b\in
    C(u)\setminus\{c\}}\alpha_{V(b)\setminus\{u\},u},
    \end{eqnarray}
    where $\beta:=\sum_{t\in \left( \chi^* \right)^{\{u\}}}
\lambda_{u\rightarrow c}(t)$.

It is easy to see that
\begin{equation*}
\begin{split}
\beta & =
\sum_{{\bf i}={\bf 1,2,3}}\prod_{b\in C(u)\setminus\{c\}}\left(1-\lambda^{\rm norm}_{V(b)
    \setminus\{u\}\rightarrow b}({\bf i)}\right)
-\sum_{{\bf i}={\bf 1,2,3}}
    \prod_{b\in C(u)\setminus\{c\}}\left(\lambda^{\rm
    norm}_{V(b)\setminus\{u\}\rightarrow b}({\bf
    i})+\alpha_{V(b)\setminus\{u\},u}\right)\\
&+\prod_{b\in
    C(u)\setminus\{c\}}\alpha_{V(b)\setminus\{u\},u}.
\end{split}
\end{equation*}
    For any $c=\{u,v\}$, it is clear that
when identifying $\lambda^{\rm norm}_{u\rightarrow c}({\bf i})$
with    $\eta^{i}_{u\rightarrow v}$ and identifying
 $\alpha_{\{u,v\}}=
1-\sum_{{\bf i}={\bf 1, 2, 3}} \lambda^{\rm norm}_{u\rightarrow\{u,v\}}
({\bf i})$ with $\eta^*_{u\rightarrow v}$, the update rule for passed message
$(\eta_{u\rightarrow v}^1, \eta_{u\rightarrow v}^2, \eta_{u\rightarrow v}^3,
\eta_{u\rightarrow v}^*)$
in SP is resulted.

To prove the equivalence of PTP and SP summary messages, we can
follow the same procedure as we did for proving the equivalence of
PTP left messages and SP left messages.  When applying message update equations
(\ref{equ:ptp_3col_r_ij}) and (\ref{equ:ptp_3col_r_ijk}) to
equations (\ref{equ:ptp_3col_s_i}) to (\ref{equ:ptp_3col_s_ijk}) and
re-write summary messages in terms of left messages, the PTP summary
messages are updated as follows.
    \begin{eqnarray}
    \mu_{u}({\bf i})& = &\prod_{c\in C(u)}\left(1-\lambda^{\rm norm}_{V(c)\setminus\{u\}\rightarrow c}({\bf i})\right)-
    \prod_{c\in C(u)}\left(\lambda^{\rm norm}_{V(c)\setminus\{u\}
    \rightarrow c}({\bf j})+\alpha_{V(c)\setminus\{u\},u}\right)\\
    & &-\prod_{c\in C(u)}\left(\lambda^{\rm norm}_{V(c)\setminus\{u\}
    \rightarrow c}({\bf k})+\alpha_{V(c)\setminus\{u\},u}\right)
    +\prod_{c\in C(u)}
    \alpha_{V(c)\setminus\{u\},u}\nonumber\\
    \mu_{u}({\bf ij}) & = & \prod_{c\in
    C(u)}\left(\lambda^{\rm norm}_{V(c)\setminus\{u\}\rightarrow
    c}({\bf k})+\alpha_{V(c)\setminus\{u\},u}\right)-\prod_{c\in
    C(u)}\alpha_{V(c)\setminus\{u\},u}\\
    \mu_{u}({\bf ijk}) & = & \prod_{c\in
    C(u)}\alpha_{V(c)\setminus\{u\},u}.
    \end{eqnarray}

 After normalization, we have
    \begin{eqnarray}
    \mu^{\rm norm}_{u}({\bf i})& = &\frac{1}{\beta'}\cdot\left(\prod_{c\in C(u)}\left(1-\lambda^{\rm norm}_{V(c)\setminus\{u\}\rightarrow c}({\bf i})\right)-
    \prod_{c\in C(u)}\left(\lambda^{\rm norm}_{V(c)\setminus\{u\}
    \rightarrow c}({\bf j})+\alpha_{V(c)\setminus\{u\},u}\right)\right.\nonumber\\
    & &\left. -\prod_{c\in C(u)}\left(\lambda^{\rm norm}_{V(c)\setminus\{u\}
    \rightarrow c}({\bf k})+\alpha_{V(c)\setminus\{u\},u}\right)
    +\prod_{c\in C(u)}
    \alpha_{V(c)\setminus\{u\},u}\right)\\
    \mu^{\rm norm}_{u}({\bf ij}) & = &\frac{1}{\beta'}\cdot\left(\prod_{c\in
    C(u)}\left(\lambda^{\rm norm}_{V(c)\setminus\{u\}\rightarrow
    c}({\bf k})+\alpha_{V(c)\setminus\{u\},u}\right)-\prod_{c\in
    C(u)}\alpha_{V(c)\setminus\{u\},u}\right)\\
    \mu^{\rm norm}_{u}({\bf ijk}) & = &\frac{1}{\beta'}\cdot\prod_{c\in
    C(u)}\alpha_{V(c)\setminus\{u\},u},
    \end{eqnarray}
    where $\beta':=\sum_{t\in \left( \chi^* \right)^{\{u\}}}
\mu_{u}(t)$.

It is easy to show that
\begin{equation*}
\begin{split}
\beta' & = \sum_{{\bf i}={\bf 1,2,3}}\prod_{c\in
C(u)}\left(1-\lambda^{\rm norm}_{V(c)
    \setminus\{u\}\rightarrow c}({\bf i})\right)
-\sum_{{\bf i}={\bf 1,2,3}}
    \prod_{c\in C(u)}\left(\lambda^{\rm
    norm}_{V(c)\setminus\{u\}\rightarrow c}({\bf
    i})+\alpha_{V(c)\setminus\{u\},u}\right)\\
&+\prod_{c\in
    C(u)}\alpha_{V(c)\setminus\{u\},u}.
\end{split}
\end{equation*}

For any $u\in V$, it is clear that when identifying $\mu^{\rm
norm}_{u}({\bf i})$ with    $\eta^{i}_{u}$ and identifying
 $1-\sum_{{\bf i}={\bf 1, 2, 3}} \mu^{\rm
norm}_{u} ({\bf i})$ with $\eta^*_{u}$, the update rule for summary
message $(\eta_{u}^1, \eta_{u}^2, \eta_{u}^3, \eta_{u}^*)$ in SP is
resulted.

\end{proof}

This theorem suggests that for $3$-COL problems, SP {\em is} PTP.
Similar results can be shown for $k$-SAT problems --- instead of
showing this result, we will in a later section, show a more general
result, namely that weighted SP is weighted PTP for $k$-SAT
problems. It should be convincing then that the general principle of
designing SP algorithm for arbitrary CSPs is the recipe specified in
the PTP message-update rule.

In the correspondence between SP and PTP for $3$-COL problems
established in this theorem, it is worth noting that symbol $i$ in
the SP messages corresponds to the singleton token ${\bf i}$ that
contains the single element $i$, and symbol $*$ in the SP messages
corresponds to the group of all non-singleton tokens.  We note that
the fact that all non-singleton tokens can be represented by a
single symbol $*$ is rather a coincidence, intrinsically related to
the structure of $3$-COL problems, and should not be understood as a
general principle. Specifically, for $3$-COL problems, each
constraint vertex has degree $2$, and as long as a non-singleton
token is passed to a constraint vertex, the outgoing token from the
constraint vertex will be token ${\bf 123}$. It is precisely due to
this fact that all non-singleton tokens can be represented by the
same symbol --- the joker symbol $*$, as is conventionally termed.
This observation then implies that for general CSPs with non-binary
alphabet, SP, or equivalently PTP, may be expected to contain more
than one ``joker'' symbols, each corresponding to one or several
non-singleton tokens. In other words, this suggests that the notion
of ``joker'' symbol in SP messages is {\em not} a fundamental one,
and that the rather fundamental perspective of SP is the extension
of the variable alphabet to its power set with empty set excluded
--- or equivalently via a one-to-one correspondence, the set of all
tokens associated with the variable.

Finally, we remark that there can be a caveat on whether SP and PTP
are exactly equivalent, when taking into account the decimation
procedure associated with the SP algorithms. Specifically, we note
that decimation is performed based on summary messages in SP. For
$3$-COL problems, each SP summary message contains ``biases'' on
four different symbols, but each PTP summary message contains
``biases'' on seven different tokens. The natural decimation
procedure for PTP is then to fix one ``highly biased'' variable to
one of the seven tokens, rather than to one of the four symbols.
Although it is not clear at this point whether this finer procedure
may provide gains in algorithm performance, it nevertheless suggests
that PTP is slightly more general than SP. Investigation on possible
benefit of this slight generality can be an interesting direction of
research.

\subsection{Weighted PTP}
\label{subsec:w_ptp}

In the mechanism of passing random tokens that underlies the PTP
message passing rule, the outgoing token sent from a variable vertex
is a {\em function} of all incoming tokens from its upstream. A
natural angle to generalize the dependency of these outgoing tokens
on the incoming tokens is to generalize this {\em functional
dependency} to a {\em probabilistic dependency}. Specifically, using
the ``intention-command'' analogy, this probabilistic dependency
will allow the intention of a variable, conditioned on all incoming
commands from the upstream, to take any set of the values --- not
necessarily the maximal set --- that obeys by the commands, and this
probabilistic dependency is specified via the probability of each
allowed intention. This result in what we call {\em weighted PTP}.

In weighted PTP, we assume that the token $t_{v\rightarrow c}$
passed from variable vertex $x_v$ to constraint vertex $\Gamma_c$
may be any subset of the intersection of all incoming tokens passed
to $x_v$ except that passed from $\Gamma_c$, and the probability
that token $t_{v\rightarrow c}$ equals to each subset is specified
via a non-negative function $\omega_v(a|b)$ defined on
$\left(\chi^*\right)^{\{v\}}\times \left(
\left(\chi^*\right)^{\{v\}}\cup \{\emptyset_{v}\} \right) $ for each
$v\in V$. We will restrict $\omega_v(a|b)$ to an {\em obedience
conditional} on $\left(\chi^*\right)^{\{v\}}$, the definition of
which is given as follows.

\begin{defn}[Obedience Conditional]

A non-negative function $h(a|b)$ on $\left(\chi^*\right)^{\{v\}}
\times
\left(\left(\chi^*\right)^{\{v\}}\cup\{\emptyset_{v}\}\right)$ is
said to be an obedience conditional on $\left(\chi^*\right)^{\{v\}}$
if $h(a|\emptyset_{v})=0$ for all $a\in \left(\chi^*\right)^{\{v\}}$
and $h(a|b)=0$ for any $a, b \in \left(\chi^*\right)^{\{v\}}$ with
$a\not\subseteq b$.
\end{defn}

First we note that in the definition, variable $a$ in $h(\cdot)$ is
intended to refer to an ``intention'', variable $b$ is intended to
refer to a ``command'', and function $h$ is evaluated to zero if the
command is null or if the intention does not obey the command. This
is the reason for which we name such a function an ``obedience''
conditional. Second, it is also worth noting that an obedience
conditional $h$ as defined above is not a true conditional
distribution, since it is not the case that
$\sum\limits_{a}h(a|b)=1$ for all $b$. However, it is a minor
technicality to modify the definition of $h$ (without impacting the
development of any result in this paper) so that it is indeed a
conditional distribution \footnote{Given an obedience conditional
$h$, we may define a conditional distribution $\tilde{h}(a|b)$. Let
$Z$ be $\max\limits_{b\in
\left(\chi^*\right)^{\{v\}}}\sum\limits_{a\in
\left(\chi^*\right)^{\{v\}}} h(a|b)$. Let non-negative function
$\tilde{h}(a|b)$ on $\left(
\left(\chi^*\right)^{\{v\}}\cup\{\emptyset_{v}\} \right) \times
\left( \left(\chi^*\right)^{\{v\}}\cup\{\emptyset_{v}\} \right)$ be
defined as follows: $\tilde{h}(a|\emptyset_v):=[a=\emptyset_v]$;
$\tilde{h}(\emptyset_v|b):=1- \sum\limits_{a\in
\left(\chi^*\right)^{\{v\}}} h(a|b)/Z$ for all $b\neq \emptyset_v$;
and for all other $(a, b)$, $\tilde{h}(a|b):=h(a|b)/Z$. It is easy
to see that $\tilde{h}(a|b)$ is a conditional distribution. Since
eventually we will condition on that $a\neq \emptyset$, it is
straight-forward to verify that the role of $h$ is equivalent to
$\tilde{h}$.}. Thus for the purpose of this paper, one may always
regard an obedience conditional as a conditional distribution of an
intention given a command.

Apparently, function $[a=b]$ is a special case of obedience
conditional, characterizing a special {\em functional} dependency of
intention $a$ on command $b$, namely that the intention set
$a$ is exactly the command set $b$.

We now give the precise message-update rule of weighted PTP
where the only difference with PTP is in left message and summary message.

\begin{center}
\shadowbox{
\begin{minipage}{6.3in}
{\bf \large Weighted PTP Message-Update Rule}\\
\begin{eqnarray}
\label{equ:wptpvtoc}
\lambda_{v\rightarrow c}(t_{v\rightarrow c})
&:=&
\sum_{
\langle t_{b\rightarrow v} \rangle_{b\in C(v)\setminus\{c\}}
}
\omega_v\left(t_{v\rightarrow
c}\left|\bigcap_{b\in {C(v)\setminus\{c\}}}t_{b\rightarrow v}\right.\right)\prod_{b\in
C(v)\setminus\{c\}}\rho^{\rm norm}_{b\rightarrow v}(t_{b\rightarrow v})\\
\rho_{c\rightarrow v}(t_{c\rightarrow v})
&:=&
\sum_{
\langle t_{u\rightarrow c} \rangle_{u\in V(c)\setminus\{v\}}
}
\left[t_{c\rightarrow v}={\tt F}_c\left(
\prod\limits_{u\in V(c)\setminus \{v\}}
t_{u\rightarrow c}
\right)
\right]
\prod_{u\in
V(c)\setminus\{v\}}\lambda^{\rm norm}_{u\rightarrow c}(t_{u\rightarrow c})\\
\label{equ:ptp_gamma_summary} \mu_v(t_v) & := &\sum_{ \langle
t_{c\rightarrow v} \rangle_{c\in C(v)} }
\omega_v\left(t_v\left|\bigcap_{c\in C(v)}t_{c\rightarrow
v}\right.\right)\prod_{c\in C(v)}\rho^{\rm norm}_{c\rightarrow
v}(t_{c\rightarrow v}),
\end{eqnarray}
\\
and the normalized messages are defined as
\begin{eqnarray}
\lambda_{v\rightarrow c}^{\rm norm}(t_{v\rightarrow c})
&:= &
 \lambda_{v\rightarrow c}(t_{v\rightarrow c})
/\sum\limits_{t\in \left(\chi^*\right)^{\{v\}}}
\lambda_{v\rightarrow c}(t)\\
\rho_{c\rightarrow v}^{\rm norm}(t_{c\rightarrow v})
&:= &
 \rho_{c\rightarrow v}(t_{c\rightarrow v})
/\sum\limits_{t\in \left(\chi^*\right)^{\{v\}}}
\rho_{c\rightarrow v}(t)\\
\mu_{v}^{\rm norm}(t_{v})
&:= &
 \mu_{v}(t_{v})
/\sum\limits_{t\in \left(\chi^*\right)^{\{v\}}} \mu_{v}(t).
\end{eqnarray}
\end{minipage}
}
\end{center}

It is easily seen that weighted PTP is a family of algorithms,
parametrized by a collection of obedience conditionals,
$\{\omega_v:v\in V\}$, each for a coordinate. The fact that conditional
distribution $\omega_v(a|b)$ generalizes indicator function $[a=b]$
immediately implies that weighted PTP generalizes PTP, as stated in
the following lemma.

\begin{lem}
If $\omega_v(a|b):=[a=b]$ for all $v\in V$, then weighted PTP is PTP.
\end{lem}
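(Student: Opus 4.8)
The plan is to prove this as a direct definitional specialization: once $\omega_v(a|b)$ is replaced by $[a=b]$, the weighted PTP update rule collapses, term by term, onto the PTP update rule. Before substituting, I would first check that $[a=b]$ is a legitimate parameter choice, i.e., that it is an obedience conditional on $\left(\chi^*\right)^{\{v\}}$ as demanded by the weighted PTP formulation. This is immediate and was already remarked upon just before the lemma: for any $a\in\left(\chi^*\right)^{\{v\}}$ we have $a\neq\emptyset_v$, so $[a=\emptyset_v]=0$, verifying the first defining condition; and for $a,b\in\left(\chi^*\right)^{\{v\}}$ with $a\not\subseteq b$ we necessarily have $a\neq b$, so $[a=b]=0$, verifying the second. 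Hence weighted PTP is actually instantiable at this parameter, and what remains is to match expressions.

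Next I would compare the three message types in turn. For the left message, substituting $\omega_v(a|b)=[a=b]$ into (\ref{equ:wptpvtoc}) turns the weighting factor into the indicator $\left[\,t_{v\rightarrow c}=\bigcap_{b\in C(v)\setminus\{c\}}t_{b\rightarrow v}\,\right]$, which is exactly the conditional factor appearing in the PTP left-message equation (\ref{equ:ptpvtoc}); the two unnormalized expressions are then identical. The same substitution into the weighted summary-message equation (\ref{equ:ptp_gamma_summary}) produces the indicator $\left[\,t_v=\bigcap_{c\in C(v)}t_{c\rightarrow v}\,\right]$, reproducing the PTP summary equation (\ref{equ:ptp_summary}) verbatim. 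The right message requires nothing to check, since, as the paper notes, the only difference between the two rules lies in the left and summary messages, so $\rho_{c\rightarrow v}$ is defined identically in both. Finally, because the unnormalized messages coincide and the normalization formulas for $\lambda^{\rm norm}$, $\rho^{\rm norm}$, and $\mu^{\rm norm}$ are stated identically in both rules, the normalized messages coincide as well, closing the argument.

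I do not expect any genuine obstacle here: the lemma is essentially a tautology once the well-definedness check is dispatched, reflecting the fact that the indicator $[a=b]$ recovers precisely the \emph{functional} dependency (intersection for the left and summary messages) that defines PTP out of the more permissive \emph{probabilistic} dependency allowed in weighted PTP. The only point I would take care to state explicitly, rather than gloss over, is that $[a=b]$ qualifies as an obedience conditional, since otherwise the specialization would not be a valid member of the weighted PTP family and the claim would be vacuous.
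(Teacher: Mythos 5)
Your proposal is correct and matches the paper's (implicit) argument: the paper offers no written proof, merely observing just before the lemma that $\omega_v(a|b)$ generalizes the indicator $[a=b]$ so the claim is immediate, and your term-by-term substitution into (\ref{equ:wptpvtoc}) and (\ref{equ:ptp_gamma_summary}) is exactly that verification spelled out. The additional check that $[a=b]$ is an obedience conditional is a sensible, if routine, piece of diligence that the paper also notes in passing.
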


\subsection{Weighted PTP Generalizes Weighted SP}
\label{subsec:w_ptp_gen_w_sp}

Now we will show that the weighted SP developed for $k$-SAT problems
\cite{SP:newlook} is a special case of weighted PTP. That is, for
$k$-SAT problems, when setting functions $\{\omega_v:v\in V\}$ in
weighted PTP to a particular form, weighted SP, or
$\mbox{SP}^*(\gamma)$ is resulted.

For a $k$-SAT problem, let function $\omega_v(a|b)$ for every $v\in V$
in weighted PTP be defined via a single real number $\gamma\in [0, 1]$ as
follows.

\begin{equation}
\label{eq:omega4ksat}
\omega_v(a|b):=\left\{
\begin{array}{lll}
\gamma, & {\rm if} ~a=b={\bf 01} \\
1-\gamma,& {\rm if} ~a\subset b={\bf 01} \\
1, & {\rm if} ~a=b\neq {\bf 01} \\
0, & {\rm otherwise}
\end{array}
\right.
\end{equation}

\begin{lem}
\label{lem:ptp_sat_msg} Let $\{\omega_v:v\in V\}$ in $k$-SAT be
defined as in (\ref{eq:omega4ksat}). The
message-update rule of weighted PTP is then:
\begin{eqnarray}
\label{equ:ptp_sat_l_0}
    \lambda_{v\rightarrow c}({\bf 0}) &:=& \prod\limits_{b\in C(v)\setminus \{c\}}
(\rho^{\rm norm}_{b\rightarrow v}({\bf 0})+\rho^{\rm
norm}_{b\rightarrow v}({\bf 01}))-\gamma\prod\limits_{b\in
C(v)\setminus\{c\}}\rho^{\rm norm}_{b\rightarrow v}({\bf 01})\\
\label{equ:ptp_sat_l_1}
    \lambda_{v\rightarrow c}({\bf 1}) &:=& \prod\limits_{b\in C(v)\setminus \{c\}}
(\rho^{\rm norm}_{b\rightarrow v}({\bf 1})+\rho^{\rm
norm}_{b\rightarrow v}({\bf 01}))-\gamma\prod\limits_{b\in
C(v)\setminus\{c\}}\rho^{\rm norm}_{b\rightarrow v}({\bf 01})\\
\label{equ:ptp_sat_l_*}
    \lambda_{v\rightarrow c}({\bf 01}) &:=& \gamma\prod\limits_{b\in C(v)\setminus\{c\}}
\rho^{\rm norm}_{b\rightarrow v}({\bf 01})\\
\label{equ:ptp_sat_r_0}
    \rho_{c\rightarrow v}({\bf 0}) &:=& [L_{v,c}=0]\cdot\prod\limits_{u\in V(c)\setminus
\{v\}:L_{u,c}=1}\lambda^{\rm norm}_{u\rightarrow c}({\bf
0})\cdot\prod\limits_{u\in
V(c)\setminus\{v\}:L_{u,c}=0}\lambda^{\rm norm}_{u\rightarrow c}({\bf 1})\\
\label{equ:ptp_sat_r_1}
    \rho_{c\rightarrow v}({\bf 1}) &:=& [L_{v,c}=1]\cdot\prod\limits_{u\in V(c)\setminus
\{v\}:L_{u,c}=1}\lambda^{\rm norm}_{u\rightarrow c}({\bf
0})\cdot\prod\limits_{u\in
V(c)\setminus\{v\}:L_{u,c}=0}\lambda^{\rm norm}_{u\rightarrow c}({\bf 1})\\
\label{equ:ptp_sat_r_*}
    \rho_{c\rightarrow v}({\bf 01}) &:=&
1 -\prod\limits_{{u\in V(c)\setminus
\{v\}:}\atop{L_{u,c}=1}}\lambda^{\rm norm}_{u\rightarrow c}({\bf
0})\cdot\prod\limits_{{u\in V(c)\setminus
\{v\}:}\atop{L_{u,c}=0}}\lambda^{\rm norm}_{u\rightarrow c}({\bf 1})\\
\label{equ:ptp_sat_s_0} \mu_v({\bf 0}) &:=& \prod\limits_{c\in
C(v)}(\rho^{\rm norm}_{c\rightarrow v}({\bf 0})+\rho^{\rm
norm}_{c\rightarrow v}({\bf 01}))-\gamma\prod\limits_{c\in
C(v)}\rho^{\rm norm}_{c\rightarrow v}({\bf
01})\\
\label{equ:ptp_sat_s_1} \mu_v({\bf 1}) &:=& \prod\limits_{c\in
C(v)}(\rho^{\rm norm}_{c\rightarrow v}({\bf 1})+\rho^{\rm
norm}_{c\rightarrow v}({\bf 01}))-\gamma\prod\limits_{c\in
C(v)}\rho^{\rm norm}_{c\rightarrow v}({\bf
01})\\
\label{equ:ptp_sat_s_*} \mu_v({\bf 01}) &:=&
\gamma\prod\limits_{c\in C(v)}\rho^{\rm norm}_{c\rightarrow v}({\bf
01}).
\end{eqnarray}
\end{lem}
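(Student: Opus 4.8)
The plan is to substitute the $k$-SAT obedience conditional (\ref{eq:omega4ksat}) directly into the general weighted PTP update equations (\ref{equ:wptpvtoc})--(\ref{equ:ptp_gamma_summary}) and to evaluate the resulting sums using the fact that for $k$-SAT the extended alphabet is only the three-element set $\{{\bf 0},{\bf 1},{\bf 01}\}$. I would handle the three message types in turn, observing first that the summary update (\ref{equ:ptp_gamma_summary}) has exactly the same form as the left update (\ref{equ:wptpvtoc}) with the index set $C(v)\setminus\{c\}$ replaced by $C(v)$; hence once (\ref{equ:ptp_sat_l_0})--(\ref{equ:ptp_sat_l_*}) are established, the summary equations (\ref{equ:ptp_sat_s_0})--(\ref{equ:ptp_sat_s_*}) follow verbatim, and no separate argument is needed.

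For the left messages the key step is to organize the sum in (\ref{equ:wptpvtoc}) according to the value of the command $\bigcap_{b\in C(v)\setminus\{c\}}t_{b\rightarrow v}$, since $\omega_v$ depends on the incoming configuration only through this intersection. To obtain (\ref{equ:ptp_sat_l_0}) I would note that $\omega_v({\bf 0}\,|\,\cdot)$ is nonzero only when the command equals ${\bf 0}$ (weight $1$, from the case $a=b\neq{\bf 01}$) or ${\bf 01}$ (weight $1-\gamma$, from the case $a\subset b={\bf 01}$). The command equals ${\bf 0}$ precisely when every upstream token lies in $\{{\bf 0},{\bf 01}\}$ but not all equal ${\bf 01}$, an event of probability $\prod_{b\in C(v)\setminus\{c\}}(\rho^{\rm norm}_{b\rightarrow v}({\bf 0})+\rho^{\rm norm}_{b\rightarrow v}({\bf 01}))-\prod_{b\in C(v)\setminus\{c\}}\rho^{\rm norm}_{b\rightarrow v}({\bf 01})$ by complementing the all-${\bf 01}$ configuration; the command equals ${\bf 01}$ with probability $\prod_{b\in C(v)\setminus\{c\}}\rho^{\rm norm}_{b\rightarrow v}({\bf 01})$. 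Weighting and summing collapses to (\ref{equ:ptp_sat_l_0}); the case $\lambda_{v\rightarrow c}({\bf 1})$ is symmetric, and $\lambda_{v\rightarrow c}({\bf 01})$ is immediate since $\omega_v({\bf 01}\,|\,\cdot)\neq 0$ forces the command to be ${\bf 01}$, contributing weight $\gamma$.

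For the right messages the real content is combinatorial and concerns the forced-token function ${\tt F}_c$ for $k$-SAT. I would first establish that, because the unique assignment excluded by $\Gamma_c=\chi^{V(c)}\setminus\{l^c\}$ sets every coordinate to its negated preferred value, the preferred value $L_{v,c}$ always lies in ${\tt F}_c(\cdot)$, whereas $\bar L_{v,c}$ lies in ${\tt F}_c(\cdot)$ if and only if at least one upstream token $t_{u\rightarrow c}$ contains $L_{u,c}$. Consequently the forced token is either the singleton $\{L_{v,c}\}$ or the full set ${\bf 01}$, and never $\{\bar L_{v,c}\}$ nor $\emptyset$. It equals $\{L_{v,c}\}$ exactly when every upstream token is the singleton $\{\bar L_{u,c}\}$, which is the token ${\bf 0}$ when $L_{u,c}=1$ and ${\bf 1}$ when $L_{u,c}=0$; this yields the product appearing in (\ref{equ:ptp_sat_r_0}) and (\ref{equ:ptp_sat_r_1}), with the indicators $[L_{v,c}=0]$ and $[L_{v,c}=1]$ selecting which singleton carries the mass. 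Since the normalized left messages sum to one and the forced token takes only these two values, (\ref{equ:ptp_sat_r_*}) follows as the complementary probability $1$ minus that same product.

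I expect the right-message characterization to be the main obstacle: the left and summary evaluations are mechanical once the command is split off, but pinning down ${\tt F}_c$ for $k$-SAT---in particular the asymmetry that makes $L_{v,c}$ \emph{unconditionally} forced while $\bar L_{v,c}$ is forced only when some neighbor ``frees'' the clause---requires careful use of the definition of ${\tt F}_c$ together with $\Gamma_c=\chi^{V(c)}\setminus\{l^c\}$, and is where the argument is most delicate.
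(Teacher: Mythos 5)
Your proposal is correct and follows essentially the same route as the paper, which simply asserts that the equations are obtained by direct substitution into (\ref{equ:wptpvtoc})--(\ref{equ:ptp_gamma_summary}) and elaborates only the complement computation for $\rho_{c\rightarrow v}({\bf 01})$ — exactly the step you identify via the observation that ${\tt F}_c$ can only output $\{L_{v,c}\}$ or ${\bf 01}$. Your write-up is in fact more detailed than the paper's one-line proof, and all of your case analyses (the command splitting for the left and summary messages, and the characterization of the forced token from $\Gamma_c=\chi^{V(c)}\setminus\{l^c\}$) are accurate.
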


\begin{proof}
These update equations can be immediately obtained from weighted
PTP message update equations (\ref{equ:wptpvtoc}) to
(\ref{equ:ptp_gamma_summary}), where (\ref{equ:ptp_sat_r_*}) follows from
\begin{eqnarray*}
\rho_{c\rightarrow v}({\bf 01}) &=&
\prod\limits_{u\in V(c)\setminus\{v\}}
(\lambda^{\rm norm}_{u\rightarrow c}({\bf 0})+\lambda^{\rm
norm}_{u\rightarrow c}({\bf 1})+\lambda^{\rm norm}_{u\rightarrow
c}({\bf 01}))-\prod\limits_{{u\in V(c)\setminus
\{v\}:}\atop{L_{u,c}=1}}\lambda^{\rm norm}_{u\rightarrow c}({\bf
0})\cdot\prod\limits_{{u\in V(c)\setminus
\{v\}:}\atop{L_{u,c}=0}}\lambda^{\rm norm}_{u\rightarrow c}({\bf 1})\\
& = &
1-\prod\limits_{{u\in V(c)\setminus
\{v\}:}\atop{L_{u,c}=1}}\lambda^{\rm norm}_{u\rightarrow c}({\bf
0})\cdot\prod\limits_{{u\in V(c)\setminus
\{v\}:}\atop{L_{u,c}=0}}\lambda^{\rm norm}_{u\rightarrow c}({\bf 1})
\end{eqnarray*}.
\end{proof}

%
%

\begin{thm}
\label{thm:ptp_sp_equiv_ksat} Let $\{\omega_v:v\in V\}$ in a $k$-SAT
problem be defined as in (\ref{eq:omega4ksat}). Denote by
$(\Pi^{{\tt s}~{\rm norm}}_{v\rightarrow c}, \Pi^{{\tt u}~{\rm
norm}}_{v\rightarrow c}, \Pi^{*~{\rm norm}}_{v\rightarrow c})$ the
normalized version of SP message $(\Pi^{\tt s}_{v\rightarrow c},
\Pi^{\tt u}_{v\rightarrow c}, \Pi^*_{v\rightarrow c})$, namely that
$\Pi^{{\tt s}~{\rm norm}}_{v\rightarrow c}=\Pi^{\tt s}_{v\rightarrow
c}/\left(\Pi^{\tt s}_{v\rightarrow c}+\Pi^{\tt u}_{v\rightarrow
c}+\Pi^*_{v\rightarrow c}\right)$, $\Pi^{{\tt u}~{\rm
norm}}_{v\rightarrow c}=\Pi^{\tt u}_{v\rightarrow c}/\left(\Pi^{\tt
s}_{v\rightarrow c}+\Pi^{\tt u}_{v\rightarrow c}+\Pi^*_{v\rightarrow
c}\right)$, and $\Pi^{*~{\rm norm}}_{v\rightarrow c}=\Pi^{\tt
s}_{v\rightarrow c}/\left(\Pi^{\tt s}_{v\rightarrow c}+\Pi^{\tt
u}_{v\rightarrow c}+\Pi^*_{v\rightarrow c}\right)$. Then the
correspondence between $\mbox{SP}^*(\gamma)$ message-update rule and
weighted PTP message-update rule is
\begin{eqnarray}
\label{equ:pi_s_eq} \Pi^{{\tt s}~{\rm norm}}_{v\rightarrow c}
&\leftrightarrow& [L_{v,c}=0]\cdot\lambda^{\rm norm}_{v\rightarrow
c}({\bf 0})+[L_{v,c}=1]\cdot\lambda^{\rm norm}_{v\rightarrow c}({\bf 1})\\
\label{equ:pi_u_eq}\Pi^{{\tt u}~{\rm norm}}_{v\rightarrow c}
&\leftrightarrow& [L_{v,c}=0]\cdot\lambda^{\rm norm}_{v\rightarrow
c}({\bf 1})+[L_{v,c}=1]\cdot\lambda^{\rm norm}_{v\rightarrow c}({\bf 0})\\
\label{equ:pi_*_eq}\Pi^{*~{\rm norm}}_{v\rightarrow c}
&\leftrightarrow& \lambda^{\rm norm}_{v\rightarrow
c}({\bf 01})\\
\label{equ:eta_eq}\eta_{c\rightarrow v} &\leftrightarrow& \rho^{\rm
norm}_{c\rightarrow v}({\bf
0})+\rho^{\rm norm}_{c\rightarrow v}({\bf 1})\\
\label{equ:eta_0_eq}\zeta_v^0 &\leftrightarrow& \mu_v({\bf 0})\\
\label{equ:eta_1_eq}\zeta_v^1 &\leftrightarrow& \mu_v({\bf 1})\\
\label{equ:eta_*_eq}\zeta_v^* &\leftrightarrow& \mu_v({\bf 01}).
\end{eqnarray}
\end{thm}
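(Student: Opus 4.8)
The plan is to establish the stated correspondence as a term-by-term identity between the weighted PTP update equations for $k$-SAT (Lemma~\ref{lem:ptp_sat_msg}) and the $\mbox{SP}^*(\gamma)$ equations (\ref{equ:spiam_1})--(\ref{equ:spiam_3}), (\ref{equ:spai}), (\ref{eq:sum_mssg_1})--(\ref{eq:sum_mssg_*}), and then to lift this single-step identity to all iterations by induction. The dictionary is the one dictated by the case split in (\ref{equ:pi_s_eq})--(\ref{equ:pi_u_eq}): the singleton token equal to the \emph{preferred} value $L_{v,c}$ plays the role of the ``${\tt s}$'' symbol, the singleton token equal to the negated preferred value $\bar{L}_{v,c}$ plays the role of ``${\tt u}$'', and the doubleton token ${\bf 01}$ plays the role of the joker ``$*$''. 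The invariant carried across iterations is the right-message correspondence (\ref{equ:eta_eq}), seeded by the assumed common initialization of $\{\eta_{c\rightarrow v}\}$.

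First I would record two observations that eliminate all normalization bookkeeping. Because (\ref{equ:ptp_sat_r_0})--(\ref{equ:ptp_sat_r_*}) force $\rho_{c\rightarrow v}({\bf 0})+\rho_{c\rightarrow v}({\bf 1})+\rho_{c\rightarrow v}({\bf 01})=1$ identically (the Iverson bracket leaves exactly one of the first two terms, and the third is one minus it), the right messages are already normalized, so $\rho^{\rm norm}_{c\rightarrow v}=\rho_{c\rightarrow v}$. Together with the target (\ref{equ:eta_eq}) this yields the scalar substitution $1-\eta_{c\rightarrow v}\leftrightarrow\rho^{\rm norm}_{c\rightarrow v}({\bf 01})$, which I would feed into every product. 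Second, the same Iverson bracket makes $\rho_{b\rightarrow v}$ vanish on the singleton token for the non-preferred value, so for each neighbor $b$ exactly one of the two-term factors $\rho^{\rm norm}_{b\rightarrow v}({\bf 0})+\rho^{\rm norm}_{b\rightarrow v}({\bf 01})$ and $\rho^{\rm norm}_{b\rightarrow v}({\bf 1})+\rho^{\rm norm}_{b\rightarrow v}({\bf 01})$ collapses to $1$ and the other to $1-\eta_{b\rightarrow v}$.

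I would then verify the three families of equations in order. For the left messages, fix $(v\text{-}c)$ and partition $C(v)\setminus\{c\}$ into $C^{\tt s}_c(v)$ and $C^{\tt u}_c(v)$; substituting the collapsed factors into (\ref{equ:ptp_sat_l_0})--(\ref{equ:ptp_sat_l_*}) reproduces the \emph{un-normalized} messages $\Pi^{\tt s}_{v\rightarrow c}$, $\Pi^{\tt u}_{v\rightarrow c}$, $\Pi^*_{v\rightarrow c}$ of (\ref{equ:spiam_2}), (\ref{equ:spiam_1}), (\ref{equ:spiam_3}) \emph{exactly}, not merely up to a common scale. Because the three un-normalized messages agree, so does their common normalizer, giving (\ref{equ:pi_s_eq})--(\ref{equ:pi_*_eq}). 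The right-message identity (\ref{equ:eta_eq}) then follows since the product $\prod_u\Pi^{{\tt u}~{\rm norm}}_{u\rightarrow c}$ in (\ref{equ:spai}) becomes, factor by factor under (\ref{equ:pi_u_eq}), the product $\prod_{u:L_{u,c}=1}\lambda^{\rm norm}_{u\rightarrow c}({\bf 0})\prod_{u:L_{u,c}=0}\lambda^{\rm norm}_{u\rightarrow c}({\bf 1})$, which is precisely $\rho_{c\rightarrow v}({\bf 0})+\rho_{c\rightarrow v}({\bf 1})$ by (\ref{equ:ptp_sat_r_0})--(\ref{equ:ptp_sat_r_1}). Finally the summary identities (\ref{equ:eta_0_eq})--(\ref{equ:eta_*_eq}) follow from the identical collapse with $C(v)$ partitioned into $C^0(v)$ and $C^1(v)$, matching (\ref{equ:ptp_sat_s_0})--(\ref{equ:ptp_sat_s_*}) against (\ref{eq:sum_mssg_0}), (\ref{eq:sum_mssg_1}), (\ref{eq:sum_mssg_*}); here both sides are un-normalized, consistent with SP keeping $(\zeta_v^1,\zeta_v^0,\zeta_v^*)$ unnormalized until the end.

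Since one full update step (left, then right, then summary, under the flooding schedule) maps the dictionary to itself, the theorem follows by induction on the iteration index, with base case the common initialization of $\{\eta_{c\rightarrow v}\}$. I expect the only real friction to be notational: keeping the two preferred-value case analyses (on $L_{v,c}$ for the message being computed, and on $L_{v,b}$ for each incoming neighbor) aligned so that the correct factor collapses to $1$ versus $1-\eta_{b\rightarrow v}$, and confirming that the ${\tt s}/{\tt u}$ labels swap their token assignment as $L_{v,c}$ toggles between $0$ and $1$. Conceptually no step is deep; all the content is the algebraic collapse enabled by the two normalization observations above.
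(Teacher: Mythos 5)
Your proposal is correct and follows essentially the same route as the paper's proof: the two normalization observations you isolate (that the right messages sum to $1$ identically, and that $\rho_{b\rightarrow v}$ vanishes on the non-preferred singleton so the two-term factors collapse to $1$ and $1-\eta_{b\rightarrow v}$) are exactly the paper's auxiliary lemma, the exact (not merely up-to-scale) match of the un-normalized left messages is the paper's argument for why the normalizers coincide, and your induction over the flooding schedule is the paper's two-way implication between the left and right correspondences phrased differently.
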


Prior to proving the theorem, we will introduce some notations and a
simple lemma which will be useful in the proof. For any neighboring
variable vertex $x_v$ and constraint vertex $\Gamma_c$, we will
denote by ${\bf L_{v,c}}$ the singleton token containing the single
elementary assignment that assigns coordinate $v$ the edge label
$L_{v,c}$. Similarly, we will denote by
 ${\bf \bar{L}_{v,c}}$ the singleton token
containing the single elementary assignment that assigns coordinate
$v$ the negated edge label $\bar{L}_{v,c}$. With these notations,
the following lemma immediately follows from Lemma
\ref{lem:ptp_sat_msg}.

\begin{lem} For any $(v-c)$ pair in a $k$-SAT problem, the right message $\rho^{\rm norm}_{c\rightarrow v}$ satisfies:
\begin{eqnarray}
\label{eq:simple_lem_a}
\rho_{c\rightarrow v}^{\rm norm}({\bf L_{v,c}})+ \rho_{c\rightarrow v}^{\rm norm}({\bf 01})& = & 1\\
\label{eq:simple_lem_b}
\rho_{c\rightarrow v}^{\rm norm}({\bf \bar{L}_{v,c}})+ \rho_{c\rightarrow v}^{\rm norm}({\bf 01})& = & \rho_{c\rightarrow v}^{\rm norm}({\bf 01}).
\end{eqnarray}
\end{lem}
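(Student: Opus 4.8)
The plan is to read everything off Lemma~\ref{lem:ptp_sat_msg} and reduce the claim to a one-line observation about the normalization constant, followed by a two-case unfolding of the abbreviations ${\bf L_{v,c}}$ and ${\bf \bar{L}_{v,c}}$. First I would introduce the shorthand
\[
P := \prod_{u\in V(c)\setminus\{v\}:L_{u,c}=1}\lambda^{\rm norm}_{u\rightarrow c}({\bf 0})\cdot\prod_{u\in V(c)\setminus\{v\}:L_{u,c}=0}\lambda^{\rm norm}_{u\rightarrow c}({\bf 1})
\]
for the common product appearing in (\ref{equ:ptp_sat_r_0})--(\ref{equ:ptp_sat_r_*}), so that the three un-normalized right messages read $\rho_{c\rightarrow v}({\bf 0})=[L_{v,c}=0]\,P$, $\rho_{c\rightarrow v}({\bf 1})=[L_{v,c}=1]\,P$, and $\rho_{c\rightarrow v}({\bf 01})=1-P$.

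The key step is to evaluate the normalization constant $Z:=\sum_{t\in(\chi^*)^{\{v\}}}\rho_{c\rightarrow v}(t)$. Since $L_{v,c}\in\{0,1\}$, exactly one of the indicators $[L_{v,c}=0]$ and $[L_{v,c}=1]$ equals $1$, so that $\rho_{c\rightarrow v}({\bf 0})+\rho_{c\rightarrow v}({\bf 1})=P$, whence $Z=P+(1-P)=1$. Thus the right messages are already normalized, i.e. $\rho^{\rm norm}_{c\rightarrow v}=\rho_{c\rightarrow v}$ on every token. This is the only substantive observation in the argument; once it is in hand, both identities are immediate.

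I would then split on the value of $L_{v,c}$ to resolve the abbreviations. When $L_{v,c}=0$ we have ${\bf L_{v,c}}={\bf 0}$ and ${\bf \bar{L}_{v,c}}={\bf 1}$, so $\rho^{\rm norm}_{c\rightarrow v}({\bf L_{v,c}})=P$ and $\rho^{\rm norm}_{c\rightarrow v}({\bf \bar{L}_{v,c}})=[1=0]\,P=0$; when $L_{v,c}=1$ the roles of ${\bf 0}$ and ${\bf 1}$ swap but the values are identical, again giving $\rho^{\rm norm}_{c\rightarrow v}({\bf L_{v,c}})=P$ and $\rho^{\rm norm}_{c\rightarrow v}({\bf \bar{L}_{v,c}})=0$. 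In both cases $\rho^{\rm norm}_{c\rightarrow v}({\bf 01})=1-P$. Substituting into the two claims yields $\rho^{\rm norm}_{c\rightarrow v}({\bf L_{v,c}})+\rho^{\rm norm}_{c\rightarrow v}({\bf 01})=P+(1-P)=1$, which is (\ref{eq:simple_lem_a}), and $\rho^{\rm norm}_{c\rightarrow v}({\bf \bar{L}_{v,c}})+\rho^{\rm norm}_{c\rightarrow v}({\bf 01})=0+(1-P)=\rho^{\rm norm}_{c\rightarrow v}({\bf 01})$, which is (\ref{eq:simple_lem_b}).

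I expect essentially no obstacle here: the content is entirely carried by the identity $Z=1$ and by correctly matching ${\bf L_{v,c}},{\bf \bar{L}_{v,c}}$ against ${\bf 0},{\bf 1}$. The one point worth stressing is the interpretation of (\ref{eq:simple_lem_b}): it merely asserts $\rho^{\rm norm}_{c\rightarrow v}({\bf \bar{L}_{v,c}})=0$, i.e. a $k$-SAT constraint never forces its destination variable to take the negated preferred value as a singleton token. This is consistent with the forced-token semantics, since $\Gamma_c$ pins $v$ only when every other variable fails the clause, in which case $v$ is forced to its preferred value $L_{v,c}$ and never to $\bar{L}_{v,c}$.
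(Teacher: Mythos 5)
Your proposal is correct and is exactly the computation the paper has in mind: the paper states this lemma as following "immediately" from Lemma \ref{lem:ptp_sat_msg}, and the substance of that immediacy is precisely your two observations, namely that the un-normalized right messages sum to $P+(1-P)=1$ (so $\rho^{\rm norm}_{c\rightarrow v}=\rho_{c\rightarrow v}$, a fact the paper also verifies explicitly inside the proof of Theorem \ref{thm:ptp_sp_equiv_ksat}) and that the indicator $[L_{v,c}=\bar{L}_{v,c}]$ forces $\rho_{c\rightarrow v}({\bf \bar{L}_{v,c}})=0$. No gaps; this matches the paper's route.
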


Now we are ready to prove Theorem \ref{thm:ptp_sp_equiv_ksat}.

\begin{proof}
We will refer to the
 message correspondence in Equations (\ref{equ:pi_s_eq}) to
(\ref{equ:pi_*_eq}) as the ``left correspondence'',  the correspondence in
(\ref{equ:eta_eq}) as the ``right correspondence'', and the correspondence in
Equations
(\ref{equ:eta_0_eq}) to (\ref{equ:eta_*_eq}) as the ``summary
correspondence''.

We will prove the theorem by first showing that if the left
correspondence holds, then the right correspondence holds, and
conversely that if the right correspondence holds, then the left
correspondence holds. This should prove that correspondence between
$\mbox{SP}^*(\gamma)$ and weighted PTP in their passed messages.  We
will then complete the proof by showing the summary correspondence.

First suppose that the left correspondence holds, namely that $\Pi^{{\tt
s}~{\rm norm}}_{v\rightarrow c}= [L_{v,c}=0]\cdot\lambda^{\rm
norm}_{v\rightarrow c}({\bf 0})+[L_{v,c}=1]\cdot\lambda^{\rm
norm}_{v\rightarrow c}({\bf 1}),$ $\Pi^{{\tt u}~{\rm
norm}}_{v\rightarrow c} = [L_{v,c}=0]\cdot\lambda^{\rm
norm}_{v\rightarrow c}({\bf 1})+[L_{v,c}=1]\cdot\lambda^{\rm
norm}_{v\rightarrow c}({\bf 0}),$ and $\Pi^{*~{\rm
norm}}_{v\rightarrow c} = \lambda^{\rm norm}_{v\rightarrow c}({\bf
01}).$

In each iteration, by Lemma \ref{lem:ptp_sat_msg}
and the fact $[L_{v,c}=1]+[L_{v,c}=0]=1$ for every $(v-c)$ pair,
the right messages satisfy
\begin{eqnarray*}
\rho_{c\rightarrow v}({\bf 0})+\rho_{c\rightarrow v}({\bf 1}) +
\rho_{c\rightarrow v}({\bf 0}1)
& & =[L_{v,c}=0]\cdot\prod\limits_{u\in
V(c)\setminus\{v\}:L_{u,c}=1}\lambda^{\rm norm}_{u\rightarrow
c}({\bf 0})\cdot\prod\limits_{u\in
V(c)\setminus\{v\}:L_{u,c}=0}\lambda^{\rm norm}_{u\rightarrow c}({\bf 1})\\
& & + [L_{v,c}=1]\cdot\prod\limits_{u\in
V(c)\setminus\{v\}:L_{u,c}=1}\lambda^{\rm norm}_{u\rightarrow
c}({\bf 0})\cdot\prod_{u\in
V(c)\setminus\{v\}:L_{u,c}=0}\lambda^{\rm norm}_{u\rightarrow c}({\bf 1})\\
& & + 1 -\prod\limits_{u\in V(c)\setminus\{v\}:L_{u,c}=1}\lambda^{\rm
norm}_{u\rightarrow c}({\bf 0})\cdot\prod\limits_{u\in
V(c)\setminus\{v\}:L_{u,c}=0}\lambda^{\rm norm}_{u\rightarrow c}({\bf 1})\\
&&=1.
\end{eqnarray*}
That is, each right message $\rho_{c\rightarrow v}$ is already
normalized, or $\rho_{c\rightarrow v}= \rho^{\rm norm}_{c\rightarrow
v}.$ Then
\begin{eqnarray*}
\rho^{\rm norm}_{c\rightarrow v}({\bf 0}) +
\rho^{\rm norm}_{c\rightarrow v}({\bf 1}) & = &
\rho_{c\rightarrow v}({\bf 0}) + \rho_{c\rightarrow v}({\bf 1})\\
&=& [L_{v,c}=0]\cdot\prod_{u\in
V(c)\setminus\{v\}:L_{u,c}=1}\lambda^{\rm norm}_{u\rightarrow
c}({\bf 0})\cdot\prod_{u\in
V(c)\setminus\{v\}:L_{u,c}=0}\lambda^{\rm norm}_{u\rightarrow c}({\bf 1})\\
&& + [L_{v,c}=1]\cdot\prod_{u\in
V(c)\setminus\{v\}:L_{u,c}=1}\lambda^{\rm norm}_{u\rightarrow
c}({\bf 0})\cdot\prod_{u\in
V(c)\setminus\{v\}:L_{u,c}=0}\lambda^{\rm norm}_{u\rightarrow c}({\bf 1})\\
& = & \prod_{u\in V(c)\setminus\{v\}:L_{u,c}=1}\lambda^{\rm
norm}_{u\rightarrow c}({\bf 0})\cdot\prod_{u\in
V(c)\setminus\{v\}:L_{u,c}=0}\lambda^{\rm norm}_{u\rightarrow
c}({\bf
1})\\
& = & \prod_{u\in
V(c)\setminus\{v\}:L_{u,c}=1}([L_{u,c}=1]\cdot\lambda^{\rm
norm}_{u\rightarrow c}({\bf
0})+[L_{u,c}=0]\cdot\lambda^{\rm norm}_{u\rightarrow c}({\bf 1}))\\
&& \cdot\prod_{u\in
V(c)\setminus\{v\}:L_{u,c}=0}([L_{u,c}=1]\cdot\lambda^{\rm
norm}_{u\rightarrow c}({\bf
0})+[L_{u,c}=0]\cdot\lambda^{\rm norm}_{u\rightarrow c}({\bf 1}))\\
& = & \prod_{u\in V(c)\setminus\{v\}}([L_{u,c}=1]\cdot\lambda^{\rm
norm}_{u\rightarrow c}({\bf 0})+[L_{u,c}=0]\cdot\lambda^{\rm
norm}_{u\rightarrow c}({\bf 1}))\\
& \stackrel{(a)}{=} &\prod_{u\in V(c)\setminus\{v\}} \Pi^{{\tt
u}~{\rm norm}}_{u\rightarrow c}\\
& \stackrel{(b)}{=} & \prod_{u\in V(c)\setminus \{v\}}
\frac{\Pi^{\tt u}_{u\rightarrow c}}{\Pi^{\tt u}_{u\rightarrow
c}+\Pi^{\tt s}_{u\rightarrow c}+\Pi^{*}_{u\rightarrow c}}\\
& = &
\eta_{c\rightarrow v},
\end{eqnarray*}
where equality $(a)$ is  due to the assumed left correspondence, and
equality $(b)$ follows from the definition of $\Pi^{{\tt u}~{\rm
norm}}_{v\rightarrow c}$. Thus we have shown that if the left
correspondence holds, then the right correspondence holds.

Now suppose that the right correspondence holds, namely that
$\eta_{c\rightarrow v} = \rho^{\rm norm}_{c\rightarrow v}({\bf
0})+\rho^{\rm norm}_{c\rightarrow v}({\bf 1})$ for every $(v-c)$
pair. Following the PTP message-update equations (\ref{equ:ptp_sat_l_0})
to (\ref{equ:ptp_sat_l_*}), we have
\begin{eqnarray*}
&& [L_{v,c}=0]\cdot\lambda_{v\rightarrow
c}({\bf 0})+[L_{v,c}=1]\cdot\lambda_{v\rightarrow c}({\bf 1})\\
&=& [L_{v,c}=0]\cdot\left(\prod_{b\in C(v)\setminus\{c\}}(\rho^{\rm
norm}_{b\rightarrow v}({\bf 0})+\rho^{\rm norm}_{b\rightarrow
v}({\bf 01}))-\gamma\prod_{b\in C(v)\setminus\{c\}}\rho^{\rm
norm}_{b\rightarrow
v}({\bf 01})\right)\\
&& +[L_{v,c}=1]\cdot\left(\prod_{b\in C(v)\setminus\{c\}}(\rho^{\rm
norm}_{b\rightarrow v}({\bf 1})+\rho^{\rm norm}_{b\rightarrow
v}({\bf 01}))-\gamma\prod_{b\in C(v)\setminus\{c\}}\rho^{\rm
norm}_{b\rightarrow
v}({\bf 01})\right)\\
&=& [L_{v,c}=0]\cdot\prod_{b\in C(v)\setminus\{c\}}(\rho^{\rm
norm}_{b\rightarrow v}({\bf 0})+\rho^{\rm norm}_{b\rightarrow
v}({\bf 01}))+[L_{v,c}=1]\cdot\prod_{b\in
C(v)\setminus\{c\}}(\rho^{\rm norm}_{b\rightarrow v}({\bf
1})+\rho^{\rm norm}_{b\rightarrow
v}({\bf 01}))\\
&& - \gamma \prod_{b\in C(v)\setminus\{c\}}\rho^{\rm
norm}_{b\rightarrow
v}({\bf 01})\\
&\stackrel{(\ref{eq:simple_lem_b})}{=}& [L_{v,c}=0]\cdot\prod_{b\in C_{c}^{\tt s}(v)}(\rho^{\rm
norm}_{b\rightarrow v}({\bf 0})+\rho^{\rm norm}_{b\rightarrow
v}({\bf 01}))\cdot\prod_{b\in C_{c}^{\tt
u}(v)}\rho^{\rm norm}_{b\rightarrow v}({\bf 01})\\
&& + [L_{v,c}=1]\cdot\prod_{b\in C_c^{\tt s}(v)}(\rho^{\rm
norm}_{b\rightarrow v}({\bf 1})+\rho^{\rm norm}_{b\rightarrow
v}({\bf 01}))\cdot\prod_{b\in C_c^{\tt u}(v)}\rho^{\rm
norm}_{b\rightarrow v}({\bf 01})-\gamma\prod_{b\in
C(v)\setminus\{c\}}\rho^{\rm norm}_{b\rightarrow v}({\bf 01})\\
&\stackrel{(\ref{eq:simple_lem_a})}{=}& [L_{v,c}=0]\cdot\prod_{b\in C_c^{\tt u}(v)}\rho^{\rm
norm}_{b\rightarrow v}({\bf 01}) + [L_{v,c}=1]\cdot\prod_{b\in
C_c^{\tt u}(v)}\rho^{\rm norm}_{b\rightarrow v}({\bf 01}) -
\gamma\prod_{b\in
C(v)\setminus\{c\}}\rho^{\rm norm}_{b\rightarrow v}({\bf 01})\\
&=& \prod_{b\in C_c^{\tt u}(v)}\rho^{\rm norm}_{b\rightarrow v}({\bf
01}) -
\gamma\prod_{b\in C(v)\setminus\{c\}}\rho^{\rm norm}_{b\rightarrow v}({\bf 01})\\
&=& \prod_{b\in C_c^{\tt u}(v)}\rho^{\rm norm}_{b\rightarrow v}({\bf
01})\cdot\left(1-\gamma\prod_{b\in C_c^{\tt s}(v)}\rho^{\rm
norm}_{b\rightarrow
v}({\bf 01})\right)\\
&=& \prod_{b\in C_c^{\tt u}(v)}(1-\rho^{\rm norm}_{b\rightarrow
v}({\bf 0})-\rho^{\rm norm}_{b\rightarrow v}({\bf
1}))\cdot\left(1-\gamma\prod_{b\in C_c^{\tt s}(v)}(1-\rho^{\rm
norm}_{b\rightarrow v}({\bf 0})-\rho^{\rm norm}_{b\rightarrow
v}({\bf 1}))\right)\\
&\stackrel{(c)}{=}& \prod_{b\in C_c^{\tt u}(v)}(1-\eta_{b\rightarrow v})\cdot\left(1-\gamma\prod_{b\in C_c^{\tt s}(v)}(1-\eta_{b\rightarrow v})\right)\\
&=&\Pi^{\tt s}_{v\rightarrow c},
\end{eqnarray*}
where equality (c) above is due to the assumed right correspondence. We will
denote this result by (A).

Following very similar procedures, it can be shown that
\begin{eqnarray*}
&& [L_{v,c}=0]\cdot\lambda_{v\rightarrow
c}({\bf 1})+[L_{v,c}=1]\cdot\lambda_{v\rightarrow c}({\bf 0})\\
&=& \prod_{b\in C_c^{\tt s}(v)}(1-\rho^{\rm norm}_{b\rightarrow
v}({\bf 0})-\rho^{\rm norm}_{b\rightarrow v}({\bf
1}))\cdot\left(1-\gamma\prod_{b\in C_c^{\tt u}(v)}(1-\rho^{\rm
norm}_{b\rightarrow v}({\bf 0})-\rho^{\rm norm}_{b\rightarrow
v}({\bf 1}))\right)\\
&=& \Pi^{\tt u}_{v\rightarrow c}
\end{eqnarray*}
We will denote this result by (B).

Similarly,
\begin{eqnarray*}
\lambda_{v\rightarrow c}({\bf 01}) &=& \gamma\prod_{b\in C_c^{\tt
s}(v)}(1-\rho^{\rm norm}_{b\rightarrow v}({\bf 0})-\rho^{\rm
norm}_{b\rightarrow v}({\bf 1}))\cdot\prod_{b\in C_c^{\tt
u}(v)}(1-\rho^{\rm norm}_{b\rightarrow v}({\bf 0})-\rho^{\rm
norm}_{b\rightarrow v}({\bf 1}))\\
&=&  \Pi^*_{v \rightarrow c}.
\end{eqnarray*}
We will denote this result by (C).

Combining results (A), (B) and (C), we have
\[
\lambda_{v\rightarrow c}({\bf 0})
+
\lambda_{v\rightarrow c}({\bf 1})
+
\lambda_{v\rightarrow c}({\bf 01})
= \Pi_{v\rightarrow c}^{\tt u}+ \Pi_{v\rightarrow c}^{\tt s}
+ \Pi_{v\rightarrow c}^{*}.
\]
That is, the scaling constant for normalizing
$(\lambda_{v\rightarrow c}({\bf 0}), \lambda_{v\rightarrow c}({\bf 1}),
\lambda_{v\rightarrow c}({\bf 01}))$ and that for
normalizing
$(\Pi_{v\rightarrow c}^{\tt u}, \Pi_{v\rightarrow c}^{\tt s},
\Pi_{v\rightarrow c}^{*})$ are identical. Then results (A), (B) and
(C) respectively translate to
\begin{eqnarray*}
[L_{v,c}=1]\cdot\lambda^{\rm norm}_{v\rightarrow c}({\bf
1})+[L_{v,c}=0]\cdot\lambda^{\rm norm}_{v\rightarrow c}({\bf 0}) & =
& \Pi^{{\tt s}~{\rm norm}}_{v\rightarrow c}\\
{[L_{v,c}=0]\cdot\lambda^{\rm norm}_{v\rightarrow c}({\bf
1})+[L_{v,c}=1]\cdot\lambda^{\rm norm}_{v\rightarrow c}({\bf 0})} &
=& \Pi^{{\tt u}~{\rm norm}}_{v\rightarrow c}\\
\lambda^{\rm norm}_{v\rightarrow c}({\bf 01})&=&\Pi^{*~{\rm
norm}}_{v\rightarrow c}.
\end{eqnarray*}

At this point we have established the correspondence between the
passed messages in weighted PTP and those in weighted SP. We now prove the
summary correspondence.

Starting from Lemma \ref{lem:ptp_sat_msg}, we have
\begin{eqnarray*}
\mu_v({\bf 0}) & = &
\prod_{c\in C(v)} \left(\rho^{\rm norm}_{c\rightarrow v}({\bf 0})+
\rho^{\rm norm}_{c\rightarrow v}({\bf 01})\right)
-\gamma
\prod_{c\in C(v)}\rho^{\rm norm}_{c\rightarrow v}({\bf 01})\\
& = &
\prod_{c\in C^{\rm 1}(v)} \left(\rho^{\rm norm}_{c\rightarrow v}({\bf 0})+
\rho^{\rm norm}_{c\rightarrow v}({\bf 01})\right)
\prod_{c\in C^{\rm 0}(v)} \left(\rho^{\rm norm}_{c\rightarrow v}({\bf 0})+
\rho^{\rm norm}_{c\rightarrow v}({\bf 01})\right)
-\gamma
\prod_{c\in C(v)}\rho^{\rm norm}_{c\rightarrow v}({\bf 01})\\
&
\stackrel{(\ref{eq:simple_lem_a}),(\ref{eq:simple_lem_b})}{=} &
\prod_{c\in C^{\rm 1}(v)} \rho^{\rm norm}_{c\rightarrow v}({\bf 01})
-\gamma
\prod_{c\in C(v)}\rho^{\rm norm}_{c\rightarrow v}({\bf 01})\\
& = &
\left(1-\gamma \prod_{c\in C^{\rm 0}(v)}\rho^{\rm norm}_{c\rightarrow v}({\bf 01})\right)\prod_{c\in C^{\rm 1}(v)} \rho^{\rm norm}_{c\rightarrow v}({\bf 01})\\
&=&
\left(1-\gamma \prod_{c\in C^{\rm 0}(v)}
\left(1-\rho^{\rm norm}_{c\rightarrow v}({\bf 0})-\rho^{\rm norm}_{c\rightarrow
v}({\bf 1})\right)
\right)
\prod_{c\in C^{\rm 1}(v)}\left(
1- \rho^{\rm norm}_{c\rightarrow v}({\bf 0})- \rho^{\rm norm}_{c\rightarrow v}({\bf 1})
\right)\\
& \stackrel{(d)}{=} &
\left(1-\gamma \prod_{c\in C^{\rm 0}(v)}
\left(1-\eta_{c\rightarrow v}\right)
\right)
\prod_{c\in C^{\rm 1}(v)}\left(
1- \eta_{c\rightarrow v})
\right)\\
& = &
\zeta^{\rm 0}_{v}
\end{eqnarray*}
where $(d)$ above is due to the right correspondence that we just
proved.

Symmetrically, it can be shown that
\begin{eqnarray*}
\mu_v({\bf 1})
&=&
\left(1-\gamma \prod_{c\in C^{\rm 1}(v)}
\left(1-\rho^{\rm norm}_{c\rightarrow v}({\bf 0})-\rho^{\rm norm}_{c\rightarrow
v}({\bf 1})\right)
\right)
\prod_{c\in C^{\rm 0}(v)}\left(
1- \rho^{\rm norm}_{c\rightarrow v}({\bf 0})- \rho^{\rm norm}_{c\rightarrow v}({\bf 1})
\right)\\
& = &
\left(1-\gamma \prod_{c\in C^{\rm 1}(v)}
\left(1-\eta_{c\rightarrow v}\right)
\right)
\prod_{c\in C^{\rm 0}(v)}\left(
1- \eta_{c\rightarrow v})
\right)\\
& = &
\zeta^{\rm 1}_{v}.\\
\end{eqnarray*}

Finally, it is straight-forward to see
\begin{eqnarray*}
\mu_v({\bf 01}) & = &
\gamma\prod_{c\in C^{\rm 0}(v)}\left(1-\rho^{\rm
norm}_{c\rightarrow v}({\bf 0})-\rho^{\rm norm}_{c\rightarrow
v}({\bf 1})\right)
\prod_{c\in C^{1}(v)}\left(1-\rho^{\rm norm}_{c\rightarrow
v}({\bf 0})-\rho^{\rm norm}_{c\rightarrow v}({\bf 1})\right)\\
& = &
\gamma\prod_{c\in C^{\rm 0}(v)}\left(1-\eta_{c\rightarrow v}\right)
\prod_{c\in C^{1}(v)}\left(1-\eta_{c\rightarrow v}\right)\\
& = &
\zeta_{v}^{*}.
\\
\end{eqnarray*}
This proves the summary correspondence and completes the proof.
\end{proof}

This theorem asserts that weighted SP developed for $k$-SAT problems is
an instance of weighted PTP that we propose in this paper, or alternatively
phrased, weighted PTP generalizes weighted SP from the context of $k$-SAT problems to arbitrary CSPs with arbitrary
variable alphabets.  When specifying parameter $\gamma$ to be $1$, this
result immediately implies that non-weighted SP is non-weighted PTP for $k$-SAT
problems.

Additionally, we note that in the correspondence between the summary
messages of weighted PTP and weighted SP in the above theorem, it is
clear that symbols $0, 1,$ and $*$ in weighted SP (or SP)
corresponds to tokens (sets) ${\bf 0}$, ${\bf 1}$ and ${\bf 01}$
respectively. In addition, if we use notation ${\bf
L_{v,c}}$, we may re-write the correspondence between the left
messages of weighted SP and those of weighted PTP in the above
theorem as
\begin{eqnarray*}
\Pi_{v\rightarrow c}^{\tt s}
&\leftrightarrow &
\lambda_{v\rightarrow c}({\bf L_{v, c}})\\
\Pi_{v\rightarrow c}^{\tt u}
&\leftrightarrow &
\lambda_{v\rightarrow c}({\bf \bar{L}_{v,c}})\\
\Pi_{v\rightarrow c}^{\rm *}
&\leftrightarrow &
\lambda_{v\rightarrow c}({\bf 01})
\end{eqnarray*}
That is, symbols ``$\tt s$'' and ``$\tt u$'' in SP respectively
correspond to singleton set ${\bf L_{v,c}}$ and ${\bf
\bar{L}_{v,c}}$. These observations suggest that, although blurred
by the addition of single symbol $*$ to the variable alphabet, the
true alphabet used as the support of SP messages is the set of all
tokens associated with the variable, or equivalently, the power set
of the original alphabet with the empty set removed.

At this point, questions may naturally arise pertaining to
what PTP and weighted PTP do towards the goal of solving a CSP. Although
rigorous question 
this question remains largely open at this point, we present some preliminary results in Appendix B.  From Appendix B, intuitively one may view PTP or weighted PTP as essentially updating a {\em random rectangle} whose sides are independently distributed random variables; as PTP iterates, it drives some side of the random rectangle to being deterministically biased towards a singleton that contains the solution of the CSP. The reader is referred to Appendix B for more detailed exposition.

\section{The Reduction of SP from BP}
\label{sec:bp}

At this point, we have identified SP with an equivalent but
 probabilistically interpretable
algorithmic procedure, PTP, and generalized weighted SP from the
special case of $k$-SAT and binary problems to arbitrary CSPs, in
terms of weighted PTP. Now we are in the position to discuss the
reduction of SP from BP, where we will refer to SP exclusively as
PTP,  and weighted SP exclusively as weighted PTP.

As is well known, the derivation of the BP algorithm is based on a
well-defined factoring function, or seen from a probabilistic
perspective, a Markov random field (MRF).  Thus, whether PTP or
weighted PTP may be reduced from BP boils down to whether there is an
MRF formulation on which the derived BP algorithm coincides with PTP
or weighted PTP. In \cite{SP:newlook}, an MRF is
constructed for $k$-SAT problem, on which BP reduces to what we now
call weighted PTP. In \cite{rtu:sp_isit_06}, similar results are
shown using a different MRF formalism, where (generalized) states
are introduced and 
the MRF is represented by
a Forney graph or normal realization\cite{Forney:normalGraph}. Although in some sense, the
normally realized MRF formalism of \cite{rtu:sp_isit_06} is
equivalent to the MRF of \cite{SP:newlook}, the Forney-graph
formalism in \cite{rtu:sp_isit_06} makes the development cleaner and
more transparent, and the explicit introduction of states provides a
better correspondence with the weighted PTP messages.

In this section, we first generalize the MRF formalism, in the style of 
\cite{SP:newlook} or \cite{rtu:sp_isit_06}, to arbitrary CSPs, and derive
the corresponding BP algorithm. We then investigate whether the derived 
BP algorithm may be reduced to 
PTP or weighted PTP. We will begin this investigation with the special case of $k$-SAT problems, and then proceed to the $3$-COL
problems and to general CSPs. Re-developing the results of \cite{SP:newlook}
and \cite{rtu:sp_isit_06} for $k$-SAT problems, we show that the BP algorithm on the normally realized MRF is readily reducible to weighted PTP as long as the BP messages are 
initialized to satisfying certain condition. We note that this condition, when
satisfied in the first BP iteration, will necessarily be satisfied in later
iterations in $k$-SAT problems. 
Identifying the important role of this condition, we call this
condition the {\em state-decoupling condition}. However, as we proceed to show, in $3$-COL problems, it is impossible for the state-decoupling condition 
to hold true non-trivially
across all BP iterations. Nevertheless, if one manually 
manipulate the BP messages to impose this condition in every iteration, 
which results in a modified BP message-update rule referred to as {\em state-decoupled BP} or SDBP in short,  then the (SD)BP messages will 
still reduce to PTP. This on one hand justifies the role of the state-decoupling 
condition in BP-to-PTP reduction, and on the other hand suggests that 
for general CSPs, PTP (or SP) is not a special case of the BP algorithm. We then proceed further by investigating whether the state-decoupling condition is sufficient for BP to reduce to PTP or weighted PTP for general CSPs. To that end, we show that yet another ``local compatibility'' 
condition concerning the structure of the CSP (in terms of the interaction between neighboring constraints) is required for SDBP to reduce to PTP or weighted PTP.

\subsection{Normally Realized Markov Random Field}

Given a CSP  represented by factor graph $G$, we now define its
corresponding {\em normally realized Markov random field}
$\tilde{G}$ using a Forney graph representation
\cite{Forney:normalGraph}. We note that random variables involved in
the probability mass function (PMF) represented by $\tilde{G}$ are
no longer those associated with factor graph (or equivalently MRF)
$G$, but rather a new set of random variables, each distributed over
the set of {\em tokens} associated with a coordinate. Additionally,
as the central component of the Forney graph, another set of random
variables, typically called {\em generalized states} or simply {\em
states}, are also included.

Specifically, as a graph, $\tilde{G}$ can be constructed by adding a
``half-edge'' to each variable vertex of $G$. As a factor graph,
$\tilde{G}$ uses a different notation: edges and half edges are
interpreted as ``variables'' and vertices are interpreted as local
functions; a variable is an argument of the function if and only if
the corresponding edge or half edge is incident on the corresponding
vertex. We now define each variable and local function in
$\tilde{G}$.
\begin{itemize}
\item Each local function (or vertex)
in $\tilde{G}$ corresponding to variable vertex
$x_v$ in $G$ will be denoted by $g_v(\cdot)$, and referred to as a {\em left
function}.
\item Each local function (or vertex) in $\tilde{G}$ corresponding to
function vertex $\Gamma_c$ will be denoted by $f_c(\cdot)$, and
referred to as a {\em right function}.
\item The half edge incident on $g_v$ represents variable $y_v$, referred to
as a {\em side}, taking values from $\left(\chi^*\right)^{\{v\}}$.
\item The edge connecting left function $g_v$ and right function $f_c$ represents variable $s_{v,c}$, referred to as a {\em state}, taking values from
$\left(\chi^*\right)^{\{v\}}\times\left(\chi^*\right)^{\{v\}}$. We will also
write state $s_{v,c}$ as pair $\left(s^L_{v,c}, s^R_{v,c}\right)$ of
{\em left state} $s^L_{v,c}$ and {\em right state} $s^R_{v,c}$.
\item Left function $g_v$ for $v\in V$ is defined as
\begin{equation}
\label{equ:leftfunction}
g_v(y_v,s_{v,C(v)}):=\omega_v\left(y_v\left|\bigcap_{c\in
C(v)}s^R_{v,c}\right.\right)\cdot\prod_{c\in C(v)}[s^L_{v,c}=y_v],
\end{equation}
where $s_{v, C(v)}$ is the short-hand notation for $\langle s_{v,c}\rangle_{c\in C(v)}$ and $\omega_v$ is an obedience conditional on
 $\left(\chi^*\right)^{\{v\}}$.

\item Right function $f_c$ for each $c\in C$ is defined as
\begin{equation}
\label{equ:rightfunction} f_c(s_{V(c),c}):=\prod_{v\in
V(c)}[s^R_{v,c}={\tt F}_c(s^L_{V(c)\setminus\{v\},c})],
\end{equation}
where $s_{V(c), c}$ is the short-hand notation for $\langle s_{v,c}\rangle_{v\in V(c)}$.
\item The global function represented by $\tilde{G}$ is
\begin{equation}
\label{equ:globalfunction} F(y_V,s_{V,C}):=\prod_{v\in
V}g_v(y_v,s_{v,C(v)})\cdot\prod_{c\in C}f_c(s_{V(c),c}),
\end{equation}
where $s_{V,C}$ is the short-hand notation for $\{s_{v,c}: \forall (v-c)\}$.
\end{itemize}

It is clear that upon normalization, function $F$ may represent a PMF and
the factorization of $F$ encoded by $\tilde{G}$ realizes an MRF.  An example of  such normally realized MRF, corresponding to the toy 3-SAT
problem in Figure  \ref{fig:ksat_fg},
 is given in Fig. \ref{fig:ksat_forneyG}.

Using the ``intention-command'' analogy, 
one may view that for any $v$, 
both $y_v$ and each left state $s_{v,c}^L$ stores the intention of variable
$x_v$, and 
that for any given $c$, each right state $s_{v,c}^R$ stores the command
of constraint $\Gamma_c$ sent to 
variable $v$. The intention of variable $x_v$ depends on 
the intersection of all incoming commands probabilistically via the 
obedience conditional $\omega_v$. The command of $\Gamma_c$ sent to each 
variable $x_v$ need to equal the forced token by
the rectangle formed by the intentions from all other neighboring variables.

We say that a configuration of $(y_V, s_{V, C})$ is
{\em valid} under $F$ if it is in the support of function $F$ (namely, if it
gives rise to a non-zero value of function $F$). Further, rectangle
$y_V$ is said to be {\em valid} under $F$
if there exists a configuration of $s_{V,C}$
such that $(y_V, s_{V,C})$ is valid under $F$. Then it immediately follows that
the PMF represented by MRF $\tilde{G}$, upon marginalizing over states $s_{V,C}$, characterizes
the set of all valid rectangles under $F$ (via the support of the marginal of
$F$ on $y_V$). We now give an intuitive explanation of the MRF defining the
distribution of rectangle $y_V$.

\begin{figure}[htb]
\begin{center}
  \scalebox{0.45}{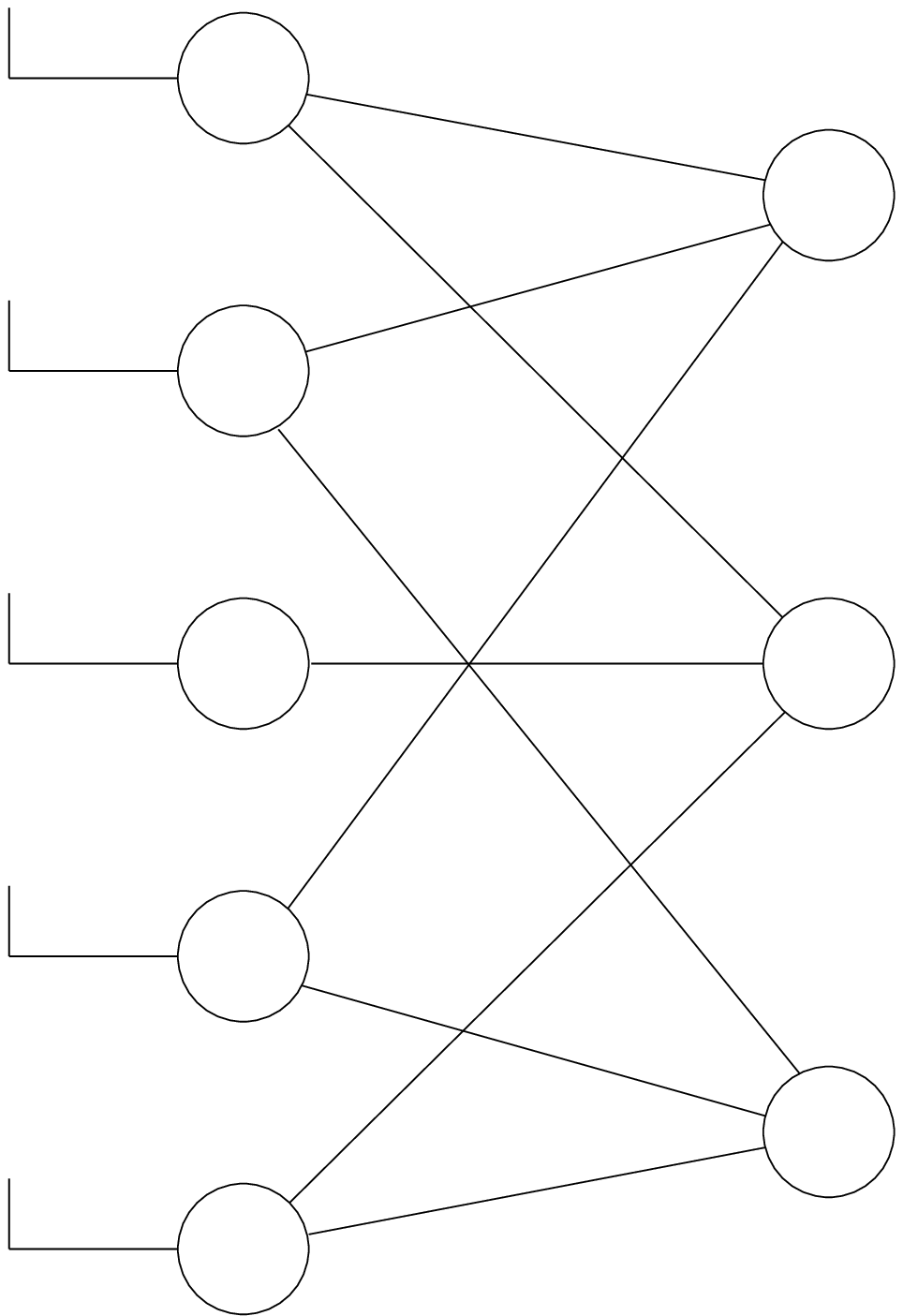}
  \caption{The Forney graph representing the normal realization of the
toy problem in Figure \ref{fig:ksat_fg}.}
  \label{fig:ksat_forneyG}
\end{center}
\end{figure}

A simple property of such MRFs is given in the following lemma, which
immediately follows from the definition of the left functions.

\begin{lem}
\label{lem:s_y_relation} If configuration $(y_V,s_{V,C})$ is  valid
under $F$, then it holds for every $(v-c)$ that
\[
s_{v,c}^L=y_v\subseteq s_{v,c}^R.
\]
\end{lem}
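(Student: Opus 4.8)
The plan is to read off the claim directly from the product structure of the global function $F$ together with the defining properties of an obedience conditional. First I would unpack what validity means: by definition $(y_V, s_{V,C})$ is valid under $F$ precisely when $F(y_V, s_{V,C}) \neq 0$. Since $F$ factors as a product over all left functions $g_v$ and right functions $f_c$ (equation \ref{equ:globalfunction}), and a single vanishing factor would force the whole product to vanish, validity requires \emph{every} factor to be nonzero. In particular $g_v(y_v, s_{v,C(v)}) \neq 0$ for each $v \in V$; only the left functions are needed for this lemma.

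Next I would fix an arbitrary pair $(v-c)$ and examine the two factors of $g_v$ as defined in \ref{equ:leftfunction}. The second factor $\prod_{c' \in C(v)}[s^L_{v,c'} = y_v]$ is a product of Iverson brackets, so it is nonzero only when every bracket equals $1$; this yields $s^L_{v,c'} = y_v$ for all $c' \in C(v)$, and in particular $s^L_{v,c} = y_v$, which is the equality half of the claim. For the containment half, the first factor $\omega_v\!\left(y_v \left| \bigcap_{c' \in C(v)} s^R_{v,c'}\right.\right)$ must likewise be nonzero. Here I would invoke the definition of an obedience conditional, which stipulates that $h(a|b) = 0$ whenever $b = \emptyset_v$ or $a \not\subseteq b$. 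Nonvanishing therefore forces both $\bigcap_{c' \in C(v)} s^R_{v,c'} \neq \emptyset_v$ and $y_v \subseteq \bigcap_{c' \in C(v)} s^R_{v,c'}$; since an intersection is contained in each of its members, $y_v \subseteq s^R_{v,c}$. Chaining the two halves gives $s^L_{v,c} = y_v \subseteq s^R_{v,c}$ for the chosen pair, and as the pair was arbitrary the conclusion holds for every $(v-c)$.

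I do not anticipate a genuine obstacle: the lemma is an immediate consequence of the definitions, exactly as the excerpt indicates. The only point requiring care is to remember that the obedience conditional carries \emph{two} vanishing conditions — it rules out a null command $b = \emptyset_v$ as well as a disobedient intention $a \not\subseteq b$ — so that nonvanishing simultaneously guarantees the intersection is a legitimate (nonempty) token and that $y_v$ obeys it. Everything else is bookkeeping on the product structure of $F$.
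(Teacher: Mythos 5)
Your proof is correct and follows exactly the route the paper intends: the paper offers no explicit proof, stating only that the lemma ``immediately follows from the definition of the left functions,'' and your argument is precisely the unpacking of that remark --- nonvanishing of each $g_v$ forces the Iverson brackets to give $s^L_{v,c}=y_v$ and the obedience conditional to give $y_v\subseteq\bigcap_{c\in C(v)}s^R_{v,c}\subseteq s^R_{v,c}$.
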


Now we consider applying the BP message-update rule on the
Forney graph $\tilde{G}$ we just defined, where we will use $\rho_{c\rightarrow
v}$  (referred to as a right message) and
$\lambda_{v\rightarrow c}$ (referred to as a left message) to
denote the message passed
from a right function $f_c$ to a left function $g_v$ and the message
passed from left function $g_v$ to right function $f_c$
respectively, and use $\mu_v$ to denote the summary message at
 variable $y_v$. We note that both right message
$\rho_{c\rightarrow v}$ and
left message $\lambda_{v\rightarrow c}$ are functions on the state space
$\left(\chi^*\right)^{\{v\}}\times\left(\chi^*\right)^{\{v\}}$.

\begin{lem}\label{lem:MRF_BP}
The BP message-update rule on Forney graph
$\tilde{G}$  is:
\begin{eqnarray}
\label{equ:bp_l_mssg}\lambda_{v\rightarrow c}(s^L_{v,c},s^R_{v,c})
&:=&
\sum_{s^R_{v,C(v)\setminus\{c\}}}\omega_v\left(s^L_{v,c}\left|\bigcap_{b\in
C(v)}s^R_{v,b}\right.\right)\prod_{b\in
C(v)\setminus\{c\}}\rho_{b\rightarrow v}(s^L_{v,c},s^R_{v,b})\\
\label{equ:bp_r_mssg}\rho_{c\rightarrow v}(s^L_{v,c},s^R_{v,c}) &:=&
\sum_{s^L_{V(c)\setminus\{v\},c}}\left[s^R_{v,c}={\tt
F}_c(s^L_{V(c)\setminus\{v\},c})\right]\prod_{u\in
V(c)\setminus\{v\}}\lambda_{u\rightarrow c}(s^L_{u,c},{\tt
F}_c(s^L_{V(c)\setminus \{u\},c}))\\
\label{equ:bp_s_mssg}\mu_v(y_v) &:=&
\sum_{s^R_{v,C(v)}}\omega_v\left(y_v\left|\bigcap_{c\in
C(v)}s^R_{v,c}\right.\right)\prod_{c\in C(v)}\rho_{c\rightarrow
v}(y_v,s^R_{v,c}).
\end{eqnarray}
\end{lem}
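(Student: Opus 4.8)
The plan is to instantiate the generic sum--product (BP) message-update rule on the Forney graph $\tilde G$, recalling that in a normal realization every edge is a variable and every vertex is a local function, so that the message emitted by a vertex along one of its incident edges is the product of the local function with the incoming messages on all \emph{other} incident edges, summed over every edge-variable except the output one. For the left function $g_v$ the incident edges are the half-edge carrying $y_v$ (which bears no incoming message) together with the edges carrying $s_{v,b}$, $b\in C(v)$; for the right function $f_c$ the incident edges carry $s_{u,c}$, $u\in V(c)$; and the summary message at $x_v$ is the marginal read out at the dangling half-edge $y_v$. The entire proof then reduces to substituting the explicit forms (\ref{equ:leftfunction}) and (\ref{equ:rightfunction}) and collapsing the sums against the Kronecker-delta factors those functions contain.

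For the left message $\lambda_{v\rightarrow c}$, emitted by $g_v$ along the edge $s_{v,c}$, I would write the sum--product expression summing $g_v(y_v,s_{v,C(v)})\prod_{b\in C(v)\setminus\{c\}}\rho_{b\rightarrow v}(s_{v,b})$ over $y_v$ and over $s_{v,b}$ for $b\neq c$. The factor $\prod_{b\in C(v)}[s^L_{v,b}=y_v]$ in (\ref{equ:leftfunction}) forces every left state, in particular $s^L_{v,c}$, to equal $y_v$; using this to eliminate the sum over $y_v$ (setting $y_v=s^L_{v,c}$) and to fix each remaining $s^L_{v,b}=s^L_{v,c}$ leaves only a sum over the right states $s^R_{v,b}$, $b\neq c$, which is exactly (\ref{equ:bp_l_mssg}). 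The summary message (\ref{equ:bp_s_mssg}) follows by the identical collapse: one forms $g_v$ times the incoming $\rho_{c\rightarrow v}$ on all edges $s_{v,c}$, sums over all the states, and the delta factors again pin each $s^L_{v,c}=y_v$ and leave a sum over the right states only.

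The delicate step is the right message $\rho_{c\rightarrow v}$ emitted by $f_c$ along $s_{v,c}$, because $f_c$ in (\ref{equ:rightfunction}) carries \emph{one} delta factor $[s^R_{w,c}={\tt F}_c(s^L_{V(c)\setminus\{w\},c})]$ for \emph{every} $w\in V(c)$, and these must be disentangled simultaneously. Writing out the sum--product over the edges $s_{u,c}$, $u\neq v$, I would first use the factor indexed by $w=v$ to retain the constraint $s^R_{v,c}={\tt F}_c(s^L_{V(c)\setminus\{v\},c})$ that ties the outgoing right state to the incoming left states. Each remaining factor, indexed by some $u\neq v$, then pins $s^R_{u,c}={\tt F}_c(s^L_{V(c)\setminus\{u\},c})$, which lets me carry out the sum over $s^R_{u,c}$ trivially, at the price of substituting the forced token ${\tt F}_c(s^L_{V(c)\setminus\{u\},c})$ into the second argument of $\lambda_{u\rightarrow c}$. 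What survives is a sum over the left states $s^L_{V(c)\setminus\{v\},c}$ only, reproducing (\ref{equ:bp_r_mssg}).

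The main obstacle is precisely the bookkeeping in this last step: since $V(c)\setminus\{u\}$ contains $v$ for every $u\neq v$, each forced token ${\tt F}_c(s^L_{V(c)\setminus\{u\},c})$ depends on $s^L_{v,c}$, so the outgoing message genuinely remains a function of both $s^L_{v,c}$ and $s^R_{v,c}$, and the $|V(c)|$ delta constraints are mutually coupled through the shared left states. I would therefore take care to verify that the constraint for $w=v$ and the substitutions for $w\neq v$ are jointly consistent with the validity structure of $F$ (cf. Lemma~\ref{lem:s_y_relation}), so that no admissible configuration is lost and no spurious term is introduced when the right-state sums are collapsed.
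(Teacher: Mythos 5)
Your proposal is correct and follows essentially the same route as the paper: instantiate the sum--product rule on $\tilde G$, keep the $w=v$ indicator in $f_c$ as the output constraint, and use the remaining indicator factors (together with distributing sums over products) to collapse the sums over $y_v$, the left states $s^L_{v,b}$, and the right states $s^R_{u,c}$, exactly as in the paper's computation via identities (\ref{eq:sum_of_indicator}) and (\ref{eq:sum_prod}). The extra consistency check against Lemma~\ref{lem:s_y_relation} that you flag as a worry is not actually needed---the indicator collapses are purely mechanical and lose nothing---but including it does no harm.
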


Before proving this lemma, it is useful to note the following elementary
results.
\begin{lem}
\begin{enumerate}
\item For any function $\phi$,
\begin{equation}
\sum_{y}\phi(x, y)[y=z]=\phi(x, z).
\label{eq:sum_of_indicator}
\end{equation}

\item
For any collection of functions $\phi_1, \phi_2, \ldots, \phi_m$,
\begin{equation}
\sum_{x_1, x_2, \ldots, x_n}\prod_{i=1}^n\phi_i(x_i) =
\prod_{i=1}^n \sum_{x_i}\phi_i(x_i).
\label{eq:sum_prod}
\end{equation}
\end{enumerate}
\end{lem}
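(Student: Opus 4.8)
The plan is to treat the two identities separately, since each is elementary and follows directly from basic definitions; both involve only finite sums of arbitrary functions, so no convergence issues arise and the arguments are purely algebraic.

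For part~1, I would invoke the defining property of the Iverson bracket, namely that $[y=z]$ evaluates to $1$ when $y=z$ and to $0$ otherwise. Consequently, in the sum $\sum_{y}\phi(x,y)[y=z]$ every term with $y\neq z$ is annihilated by the factor $[y=z]=0$, and the single surviving term is the one with $y=z$, which contributes $\phi(x,z)\cdot 1$. This immediately gives $\sum_{y}\phi(x,y)[y=z]=\phi(x,z)$; this is often called the \emph{sifting property} of the indicator.

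For part~2, the identity is an instance of the distributive law, and I would prove it cleanly by induction on the number of factors. The base case $n=1$ is trivial. For the inductive step I would peel off the last factor, writing
\[
\prod_{i=1}^{n}\sum_{x_i}\phi_i(x_i)
=\left(\prod_{i=1}^{n-1}\sum_{x_i}\phi_i(x_i)\right)\sum_{x_n}\phi_n(x_n),
\]
apply the inductive hypothesis to the first factor, and then use distributivity of multiplication over the finite sum in $x_n$, together with the fact that $\phi_n(x_n)$ does not depend on $x_1,\ldots,x_{n-1}$. Collecting terms produces the single multi-index sum $\sum_{x_1,\ldots,x_n}\prod_{i=1}^{n}\phi_i(x_i)$, as required.

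Neither part presents any real obstacle; the only point that deserves explicit care is that in part~2 each $\phi_i$ depends on its own summation variable alone, which is precisely what licenses the separation of the nested sum into a product of sums. (I also read the product bound $m$ and the sum bound $n$ in the statement as the same integer.)
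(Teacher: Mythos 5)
Your proof is correct. The paper states this lemma without proof, treating both identities as elementary; your arguments — the sifting property of the Iverson bracket for part~1 and induction with distributivity for part~2 — are exactly the standard justifications the paper implicitly relies on, and your reading of $m$ and $n$ as the same integer matches the (typo-ridden) statement.
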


We now prove Lemma \ref{lem:MRF_BP}.

\begin{proof}
{\small 
\begin{eqnarray*}
\lambda_{v\rightarrow c}(s^L_{v,c},s^R_{v,c})
&=&
\sum_{y_v}\sum_{s_{v,C(v)\setminus\{c\}}}\omega_v\left(y_v\left|\bigcap_{b\in
C(v)}s^R_{v,b}\right.\right)\prod_{b\in
C(v)}[s^L_{v,b}=y_v]\prod_{b\in
C(v)\setminus\{c\}}\rho_{b\rightarrow v}(s^L_{v,b},s^R_{v,b})\\
&=&
\sum_{y_v}[s^L_{v,c}=y_v]\sum_{s^R_{v,C(v)\setminus\{c\}}}\omega_v\left(y_v\left|\bigcap_{b\in
C(v)}s^R_{v,b}\right.\right)\sum_{s^L_{v,C(v)\setminus\{c\}}}\prod_{b\in
C(v)\setminus\{c\}}\left(\rho_{b\rightarrow
v}(s^L_{v,b},s^R_{v,b})\cdot[s^L_{v,b}=y_v]\right)\\
&\stackrel{(\ref{eq:sum_prod})}{=}&
\sum_{y_v}[s^L_{v,c}=y_v]\sum_{s^R_{v,C(v)\setminus\{c\}}}\omega_v\left(y_v\left|\bigcap_{b\in
C(v)}s^R_{v,b}\right.\right)\prod_{b\in
C(v)\setminus\{c\}}\sum_{s^L_{v,b}}\left(\rho_{b\rightarrow
v}(s^L_{v,b},s^R_{v,b})\cdot[s^L_{v,b}=y_v]\right)\\
&\stackrel{(\ref{eq:sum_of_indicator})}{=}&
\sum_{y_v}[s^L_{v,c}=y_v]\sum_{s^R_{v,C(v)\setminus\{c\}}}\omega_v\left(y_v\left|\bigcap_{b\in
C(v)}s^R_{v,b}\right.\right)\prod_{b\in
C(v)\setminus\{c\}}\rho_{b\rightarrow v}(y_v,s^R_{v,b})\\
&=&
\sum_{s^R_{v,C(v)\setminus\{c\}}}\omega_v\left(s^L_{v,c}\left|\bigcap_{b\in
C(v)}s^R_{v,b}\right.\right)\prod_{b\in
C(v)\setminus\{c\}}\rho_{b\rightarrow v}(s^L_{v,c},s^R_{v,b}).
\end{eqnarray*}
}
{\small
\begin{eqnarray*}
\rho_{c\rightarrow v}(s^L_{v,c},s^R_{v,c}) &=&
\sum_{s_{V(c)\setminus\{v\},c}}\prod_{u\in V(c)}[s^R_{u,c}={\tt
F}_c(s^L_{V(c)\setminus\{u\},c})]\prod_{u\in
V(c)\setminus\{v\}}\lambda_{u\rightarrow c}(s^L_{u,c},s^R_{u,c})\\
&=& \sum_{s^L_{V(c)\setminus\{v\},c}}[s^R_{v,c}={\tt
F}_c(s^L_{V(c)\setminus\{v\},c})]\sum_{s^R_{V(c)\setminus\{v\},c}}\prod_{u\in
V(c)\setminus\{v\}}\left([s^R_{u,c}={\tt
F}_c(s^L_{V(c)\setminus\{u\},c})]\cdot \lambda_{u\rightarrow
c}(s^L_{u,c},s^R_{u,c})\right)\\
&\stackrel{(\ref{eq:sum_prod})}{=}& \sum_{s^L_{V(c)\setminus\{v\},c}}[s^R_{v,c}={\tt
F}_c(s^L_{V(c)\setminus\{v\},c})]\prod_{u\in
V(c)\setminus\{v\}}\sum_{s^R_{u,c}}\left([s^R_{u,c}={\tt
F}_c(s^L_{V(c)\setminus\{u\},c})]\cdot\lambda_{u\rightarrow
c}(s^L_{u,c},s^R_{u,c})\right)\\
&\stackrel{(\ref{eq:sum_of_indicator})}{=}& \sum_{s^L_{V(c)\setminus\{v\},c}}[s^R_{v,c}={\tt
F}_c(s^L_{V(c)\setminus\{v\},c})]\prod_{u\in
V(c)\setminus\{v\}}\lambda_{u\rightarrow c}(s^L_{u,c},{\tt
F}_c(s^L_{V(c)\setminus\{u\},c})).
\end{eqnarray*}
}
{\small 
\begin{eqnarray*}
\mu_v(y_v) &=& \sum_{s_{v,C(v)}}\omega_v\left(y_v\left|\bigcap_{c\in
C(v)}s^R_{v,c}\right.\right)\prod_{c\in
C(v)}\left[s^L_{v,c}=y_v\right]\prod_{c\in C(v)}\rho_{c\rightarrow
v}(s^L_{v,c},s^R_{v,c})\\
&=& \sum_{s^R_{v,C(v)}}\omega_v\left(y_v\left|\bigcap_{c\in
C(v)}s^R_{v,c}\right.\right)\sum_{s^L_{v,C(v)}}\prod_{c\in
C(v)}\left([s^L_{v,c}=y_v]\cdot\rho_{c\rightarrow
v}(s^L_{v,c},s^R_{v,c})\right)\\
&\stackrel{(\ref{eq:sum_prod})}{=}&
\sum_{s^R_{v,C(v)}}\omega_v\left(y_v\left|\bigcap_{c\in
C(v)}s^R_{v,c}\right.\right)\prod_{c\in
C(v)}\sum_{s^L_{v,c}}\left([s^L_{v,c}=y_v]\cdot\rho_{c\rightarrow
v}(s^L_{v,c},s^R_{v,c})\right)\\
&\stackrel{(\ref{eq:sum_of_indicator})}{=}&
\sum_{s^R_{v,C(v)}}\omega_v\left(y_v\left|\bigcap_{c\in
C(v)}s^R_{v,c}\right.\right)\prod_{c\in C(v)}\rho_{c\rightarrow
v}(y_v,s^R_{v,c}).
\end{eqnarray*}
}
\end{proof}

\subsection{Weighted PTP as BP for $k$-SAT}

Now we show that for $k$-SAT problems, weighted PTP {\em is} an instance
of BP when the parametrization of weighted PTP is consistent with
the parametrization of the normally realized MRF from which BP is derived.

We begin with introducing a simplification of notations. For any $(v-c)$ and
edge label $L_{v,c}$, we will write
${\bf L_{v,c}}$ as ${\bf L}$, and
${\bf\bar{L}_{v,c}}$ as ${\bf{\bar L}}$. This
suppression of the dependency of ${\bf L_{v,c}}$
and ${\bf\bar{L}_{v,c}}$ on their subscripts
should not result in any ambiguity, when the context clearly
indicates the subscript $(v,c)$ or the edge to which the edge label
$L_{v,c}$ refers.  Additionally, for any $v\in V$, we will write
${\bf 01}_v$ as $*$. Thus, each left or right state will take
configurations from set $\{{\bf L}, {\bf \bar{L}}, *\}$, where the
interpretation of ${\bf L}$ and ${\bf \bar{L}}$ depends on the edge
with which the state is associated. For any given configuration of a
state $(s^L_{v,c}, s^R_{v,c})$, we will suppress the comma between
the left-state configuration and the right-state configuration. For
example, state configurations $({\bf L}, *)$, $({\bf \bar{L}}, *)$,
$(*, *)$  and $({\bf L}, {\bf L})$ will be written respectively as
${\bf L}*, {\bf \bar{L}}*, **$ and ${\bf LL}$.

\begin{lem}
\label{lem:PMF} Let $F$ be defined via (\ref{equ:leftfunction}),
(\ref{equ:rightfunction}) and ($\ref{equ:globalfunction}$), where
each weighting function $\omega_ v$ is defined in
(\ref{eq:omega4ksat}). If $(y_V, s_{V, C})$ is valid under $F$, then
\begin{enumerate}
\item
for every $(v-c)$, it holds that
$s^R_{v,c}\neq {\bf\bar{L}}$, $s_{v, c}\neq {\bf \bar{L}L}$ and that
$s_{v, c}\neq *{\bf L}$, and
\item  $F(y_V, s_{V, C})=\gamma^{n_{*|*}(y_V, s_{V, C})}
\cdot (1-\gamma)^{n_{\cdot|*}(y_V, s_{V, C})}$, where $n_{*|*}(y_V,
s_{V, C})$ and $n_{\cdot|*}(y_V, s_{V, C})$ are respectively the
cardinalities of set $\{v\in V: y_v=\bigcap_{c\in
C(v)}s^R_{v,c}=*\}$ and set $\{v\in V: y_v\subset\bigcap_{c\in
C(v)}s^R_{v,c}=*\}$.
\end{enumerate}
\end{lem}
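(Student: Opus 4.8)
The plan is to exploit that a valid configuration must make \emph{every} factor of $F$ nonzero, and then to pin down exactly which state configurations survive and what value each surviving $\omega_v$ factor takes. First I would record the immediate consequence of validity: since $f_c$ and the indicator part $\prod_{c\in C(v)}[s^L_{v,c}=y_v]$ of $g_v$ are products of Iverson brackets, validity forces $f_c\equiv 1$ and $s^L_{v,c}=y_v$ for every $(v-c)$, so that
\[
F(y_V,s_{V,C})=\prod_{v\in V}\omega_v\Big(y_v\,\Big|\,\bigcap_{c\in C(v)}s^R_{v,c}\Big).
\]
Together with Lemma~\ref{lem:s_y_relation}, which gives $s^L_{v,c}=y_v\subseteq s^R_{v,c}$, this reduces the whole problem to understanding the possible right states $s^R_{v,c}$ and the intersection $b_v:=\bigcap_{c\in C(v)}s^R_{v,c}$.

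The key structural step, and the place I expect to do the real work, is to compute the forced-token function ${\tt F}_c$ for $k$-SAT explicitly. Using $\Gamma_c=\chi^{V(c)}\setminus\{l^c\}$, with the single excluded assignment $l^c$ being the all-negated-preferred one, I would show directly from the definition of ${\tt F}_c$ that the preferred value $L_{v,c}$ always lies in the forced token, whereas $\bar{L}_{v,c}$ lies in it if and only if some other $u\in V(c)\setminus\{v\}$ sends a left state containing $L_{u,c}$. Hence ${\tt F}_c$ equals ${\bf L}$ (exactly when every other incoming left state equals ${\bf \bar{L}}$) or $*$, and in particular is never empty and never equals ${\bf \bar{L}}$. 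Since $s^R_{v,c}={\tt F}_c(\cdot)$ by the right function, this yields $s^R_{v,c}\neq{\bf \bar{L}}$, the first assertion of part~1. The remaining two assertions then follow cheaply: $s_{v,c}={\bf \bar{L}L}$ or $s_{v,c}=*{\bf L}$ would each require $s^L_{v,c}\subseteq s^R_{v,c}={\bf L}$ with $s^L_{v,c}\in\{{\bf \bar{L}},*\}$, contradicting the containment from Lemma~\ref{lem:s_y_relation}.

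For part~2 I would evaluate each factor $\omega_v(y_v\mid b_v)$ by a short case analysis on $b_v$. Because each $s^R_{v,c}\in\{{\bf L}_{v,c},*\}$ and validity forbids $b_v=\emptyset_v$ (else $\omega_v$ would vanish), the singleton right states entering $b_v$ must all agree, so $b_v$ is either $*$ or a single singleton ${\bf r}$. When $b_v$ is a singleton, the containment $y_v\subseteq b_v$ forces $y_v=b_v\neq{\bf 01}$ and $\omega_v$ returns $1$; when $b_v=*$, definition~(\ref{eq:omega4ksat}) returns $\gamma$ if $y_v=*$ and $1-\gamma$ if $y_v\subset *$. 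Recognizing that the first of these is exactly the membership condition defining $n_{*|*}$ and the second is exactly that defining $n_{\cdot|*}$, the product over $v$ collapses to $\gamma^{n_{*|*}}(1-\gamma)^{n_{\cdot|*}}$, as claimed. The main obstacle is purely the forced-token computation in the second paragraph; everything downstream is bookkeeping driven by the subset relation of Lemma~\ref{lem:s_y_relation} and the three-line definition of $\omega_v$.
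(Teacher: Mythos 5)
Your proposal is correct and follows essentially the same route as the paper's proof: validity kills all the Iverson-bracket factors, Lemma~\ref{lem:s_y_relation} rules out ${\bf \bar{L}L}$ and $*{\bf L}$, and the product collapses to the $\omega_v$ factors evaluated via (\ref{eq:omega4ksat}). You simply make explicit two steps the paper leaves as "directly following from the definitions" — the computation showing ${\tt F}_c$ only ever outputs ${\bf L}$ or $*$ in $k$-SAT, and the case analysis on $\bigcap_{c}s^R_{v,c}$ — and both are carried out correctly.
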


\begin{proof}
For part 1, first we observe
that $s^R_{v,c}\neq {\bf\bar{L}}$, directly following from the
definition of the right function (\ref{equ:rightfunction}).
Then by Lemma \ref{lem:s_y_relation}, it
is easy to see that $s_{v,c}\neq {\bf\bar{L}}{\bf L}$ and that
$s_{v,c}\neq *{\bf L}$.

For part 2, we may proceed as follows.
\begin{eqnarray*}
F(y_V,s_{V,C}) &=& \prod_{v\in V}g_v(y_v,s_{v,C(v)})\cdot\prod_{c\in
C}f_c(s_{V(c),c})\\
&\stackrel{(\ref{equ:leftfunction}),(\ref{equ:rightfunction})}{=}&
\prod_{v\in V}\left(\omega_v\left(y_v\left|\bigcap_{c\in
C(v)}s^R_{v,c}\right.\right)\cdot\prod_{c\in
C(v)}[s^L_{v,c}=y_v]\right)\cdot\prod_{c\in C}\prod_{v\in
V(c)}[s^R_{v,c}={\tt F}_c(s^L_{V(c)\setminus\{v\},c})]\\
&\stackrel{(a)}{=}& \prod_{v\in
V}\omega_v\left(y_v\left|\bigcap_{c\in
C(v)}s^R_{v,c}\right.\right)\\
&\stackrel{(b)}{=}&
\gamma^{n_{*|*}(y_V,s_{V,C})}\cdot (1-\gamma)^{n_{\cdot|*}(y_V,s_{V,C})},
\end{eqnarray*}
where equality $(a)$ is due to the fact that $(y_V,s_{V,C})$ is valid
under $F$, and equality $(b)$
follows from the definition of the weighting function
$\omega\left(y_v\left|\bigcap_{c\in C(v)}s^R_{v,c}\right.\right)$
in (\ref{eq:omega4ksat}).
\end{proof}

The second part of this lemma, as a slight digression, suggests that
the PMF under this MRF model is identical to that of
\cite{SP:newlook}, since an equivalent result is shown for the MRF
in \cite{SP:newlook}. We note that the MRF in \cite{SP:newlook}
serves as a combinatorial framework for the study of $k$-SAT
problems, which leads to further insights of SP for $k$-SAT problems
(the reader is referred to \cite{SP:newlook} for additional
results). To a certain extent, one may expect that the
normally realized MRF presented here may serve similar purposes for
general CSPs.

 The first part of this lemma suggests that although each state
takes on values from $\{{\bf L}, {\bf \bar{L}}, *\}\times \{{\bf L},
{\bf \bar{L}}, *\}$, there are in fact only four possible state
configurations that contribute to defining a valid
rectangle. When applying the BP message-update rule on the Forney
graph representation of a $k$-SAT problem, this
implies that messages
$\lambda_{v\rightarrow c}$, $\rho_{c\rightarrow v}$ and $\mu_v$
are all supported by $\{{\bf LL, \bar{L}*, L*, **}\}$.

The BP message-update rule is given in Lemma \ref{lem:bp_mssg}, which directly
follows from equations (\ref{equ:bp_l_mssg}) to
(\ref{equ:bp_s_mssg}).

\begin{lem}
\label{lem:bp_mssg} The BP message-update rule applied on
Forney graph $\tilde{G}$ of a $k$-SAT problem gives rise to:

\begin{eqnarray}
\label{equ:ksat_l_LL}\lambda_{v\rightarrow c}({\bf LL}) &:=&
\prod_{b\in C_{c}^{\tt u}(v)}\rho_{b\rightarrow v}({\bf \bar{L}}*)
\prod_{b\in C_{c}^{\tt s}(v)}(\rho_{b\rightarrow v}({\bf
LL})+ \rho_{b\rightarrow v}({\bf L}*)) \\
\label{equ:ksat_l_L_bar_*}\lambda_{v\rightarrow c}({\bf \bar{L}}*)
&:=& \prod_{b\in C_{c}^{\tt s}(v)}\rho_{b\rightarrow v}
({\bf\bar{L}}*)\left(\prod_{b\in C_{c}^{\tt
u}(v)}\left(\rho_{b\rightarrow v}({\bf L}*)+ \rho_{b\rightarrow
v}({\bf LL})\right)-\gamma\prod_{b\in
C_{c}^{\tt u}(v)}\rho_{b\rightarrow v}({\bf L}*)\right) \\
\label{equ:ksat_l_L*}\lambda_{v\rightarrow c}({\bf L}*) &:=&
\prod_{b\in C_{c}^{\tt u}(v)}\rho_{b\rightarrow
v}({\bf\bar{L}}*)\left(\prod_{b\in C_{c}^{\tt
s}(v)}(\rho_{b\rightarrow v}({\bf L}*)+\rho_{b\rightarrow v}({\bf
LL}))-\gamma\prod_{b\in C_{c}^{\tt s}(v)}\rho_{b\rightarrow v}({\bf
L}*)
\right) \\
\label{equ:ksat_l_**}\lambda_{v\rightarrow c}(**) &:=&
\gamma\prod_{b\in C_{c}^{\tt u}(v)\cup C_{c}^{\tt
s}(v)}\rho_{b\rightarrow v}(**)\\
\label{equ:ksat_r_LL}\rho_{c\rightarrow v}({\bf LL}) &:=&
\prod_{u\in V(c) \setminus
\{v\}}\lambda_{u\rightarrow c}({\bf \bar{L}}*)\\
\label{equ:ksat_r_L_bar_*}\rho_{c\rightarrow v} ({\bf \bar{L}}*)
&:=& \prod_{u\in V(c) \setminus \{v\}} (\lambda_{u\rightarrow
c}({\bf L}*)+\lambda_{u\rightarrow c}(**)
+\lambda_{u\rightarrow c}({\bf \bar{L}}*))\nonumber\\
&&+\sum_{u\in V(c)\setminus \{v\}} (\lambda_{u\rightarrow c}({\bf
LL})-\lambda_{u\rightarrow c}({\bf L}*) - \lambda_{u\rightarrow
c}(**))\prod_{w\in V(c)\setminus \{u,
v\}}\lambda_{w\rightarrow c}({\bf \bar{L}}*)\nonumber\\
&& -\prod_{u\in V(c)\setminus \{v\}} \lambda_{u\rightarrow c}({\bf
\bar{L}}*)\\
\label{equ:ksat_r_L*}\rho_{c\rightarrow v}({\bf L}*) &:=&
\prod_{u\in V(c) \setminus \{v\}} (\lambda_{u\rightarrow c} ({\bf
L}*)+\lambda_{u\rightarrow c}(**)+\lambda_{u\rightarrow c}({\bf
\bar{L}}*)) -\prod_{u\in V(c)\setminus \{v\}}\lambda_{u\rightarrow
c}
({\bf \bar{L}}*)\\
\label{equ:ksat_r_**}\rho_{c\rightarrow v}(**) &:=& \prod_{u\in V(c)
\setminus \{v\}} (\lambda_{u\rightarrow c}({\bf
L}*)+\lambda_{u\rightarrow c}(**)+\lambda_{u\rightarrow c}({\bf
\bar{L}}*)) -\prod_{u\in V(c)\setminus \{v\}}
\lambda_{u\rightarrow c}({\bf \bar{L}}*)\\
\label{equ:ksat_s_0}\mu_v({\bf 0}) &:=& \prod_{c\in
C^1(v)}\rho_{c\rightarrow v}({\bf \bar{L}}*)\left(\prod_{c\in
C^0(v)}(\rho_{c\rightarrow v}({\bf LL})+\rho_{c\rightarrow v}({\bf
L}*))-\gamma\prod_{c\in
C^0(v)}\rho_{c\rightarrow v}({\bf L}*)\right)\\
\label{equ:ksat_s_1}\mu_v({\bf 1}) &:=& \prod_{c\in
C^0(v)}\rho_{c\rightarrow v}({\bf \bar{L}}*)\left(\prod_{c\in
C^1(v)}(\rho_{c\rightarrow v}({\bf LL})+\rho_{c\rightarrow v}({\bf
L}*))-\gamma\prod_{c\in
C^1(v)}\rho_{c\rightarrow v}({\bf L}*)\right)\\
\label{equ:ksat_s_*}\mu_v(*) &:=& \gamma\prod_{c\in
C(v)}\rho_{c\rightarrow v}(**).
\end{eqnarray}
\end{lem}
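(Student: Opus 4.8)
The plan is to specialize the general BP update equations (\ref{equ:bp_l_mssg})--(\ref{equ:bp_s_mssg}) to the $k$-SAT weighting function (\ref{eq:omega4ksat}) and to the $k$-SAT constraint, and then to verify each of the eleven claimed equations by direct computation. The indispensable preliminary is Lemma \ref{lem:PMF}: its first part confines every state to the four configurations ${\bf LL}, {\bf \bar{L}}*, {\bf L}*, **$, so each message is supported on this four-element set, and Lemma \ref{lem:s_y_relation} tells us that a right state ${\bf L}$ forces the left state to be ${\bf L}$ whereas a right state $*$ permits any of ${\bf L}, {\bf \bar{L}}, *$. Throughout I would use only the elementary identity (\ref{eq:sum_prod}) to turn sums of products into products of sums once the summand has factored.

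First I would dispose of the left messages (\ref{equ:ksat_l_LL})--(\ref{equ:ksat_l_**}) and, by the same argument, the summary messages (\ref{equ:ksat_s_0})--(\ref{equ:ksat_s_*}). In (\ref{equ:bp_l_mssg}) the common left state $s^L_{v,c}=y_v$ sits in the first argument of every incoming $\rho_{b\rightarrow v}$; the key point is that this one fixed singleton is read by a neighbouring edge $b$ as ${\bf L}$ or as ${\bf \bar{L}}$ according to whether $L_{v,b}$ matches the value carried by $y_v$. This splits $C(v)\setminus\{c\}$ into the edges seeing a matching singleton and those seeing the opposite one, which is precisely the partition $C_c^{\tt s}(v)$ versus $C_c^{\tt u}(v)$ (the two roles interchanging between the $y_v={\bf L}$ and $y_v={\bf \bar{L}}$ messages), and $C^0(v)$ versus $C^1(v)$ for the summaries. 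After factoring by (\ref{eq:sum_prod}), every edge reading ${\bf \bar{L}}$ collapses to the single term $\rho_{b\rightarrow v}({\bf \bar{L}}*)$, while every edge reading ${\bf L}$ sums to $\rho_{b\rightarrow v}({\bf LL})+\rho_{b\rightarrow v}({\bf L}*)$. The $\gamma$-weighted subtractions come entirely from $\omega_v(y_v\,|\,\bigcap_b s^R_{v,b})$: when $s^R_{v,c}={\bf L}$ the intersection is pinned to ${\bf L}$ and $\omega_v=1$ (hence no subtraction, as in (\ref{equ:ksat_l_LL})); when $s^R_{v,c}=*$ the intersection is $*$ exactly on the single event that all summed right states equal $*$, where $\omega_v=1-\gamma$ rather than $1$, and a one-line inclusion--exclusion on that event produces the displayed difference of products. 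The fully degenerate case $y_v=*$ forces every right state to $*$ and uses $\omega_v(*|*)=\gamma$, giving the clean single products (\ref{equ:ksat_l_**}) and (\ref{equ:ksat_s_*}).

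Next I would treat the right messages (\ref{equ:ksat_r_LL})--(\ref{equ:ksat_r_**}). The first step is to read off from the definition of ${\tt F}_c$ that, for a $k$-SAT clause, the forced token ${\tt F}_c(\prod_u t_u)$ equals ${\bf L}$ exactly when every incoming intention $t_u$ equals ${\bf \bar{L}}$, and equals $*$ otherwise. For $\rho_{c\rightarrow v}({\bf LL})$ the outer indicator $[{\bf L}={\tt F}_c(\cdots)]$ forces all neighbouring left states to ${\bf \bar{L}}$; as $v$'s own intention ${\bf L}$ is then the unique non-${\bf \bar{L}}$ intention seen by each neighbour, the back-forced token ${\tt F}_c(s^L_{V(c)\setminus\{u\},c})$ in the second argument of every $\lambda_{u\rightarrow c}$ is $*$, and (\ref{equ:ksat_r_LL}) follows at once. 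For $\rho_{c\rightarrow v}({\bf L}*)$ and $\rho_{c\rightarrow v}(**)$ the left state of $v$ is itself non-${\bf \bar{L}}$, so it single-handedly forces every back-token to $*$; the admissible configurations are then exactly those not uniformly equal to ${\bf \bar{L}}$, and subtracting the one uniform-${\bf \bar{L}}$ term from the full product over the three left states yields the common expression in (\ref{equ:ksat_r_L*}) and (\ref{equ:ksat_r_**}).

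The hard part will be $\rho_{c\rightarrow v}({\bf \bar{L}}*)$ in (\ref{equ:ksat_r_L_bar_*}), where $v$'s intention is itself ${\bf \bar{L}}$ and so no longer breaks the all-${\bf \bar{L}}$ pattern; consequently the second argument ${\tt F}_c(s^L_{V(c)\setminus\{u\},c})$ fed back to each $\lambda_{u\rightarrow c}$ couples the neighbouring left states and the summand does not factor. I would resolve this by stratifying the admissible configurations by the number of neighbours $u\neq v$ whose intention differs from ${\bf \bar{L}}$: if two or more differ, every back-token is $*$; if exactly one neighbour $u$ differs, that neighbour alone receives the back-token ${\bf L}$ (so only $s^L_{u,c}={\bf L}$ survives, contributing $\lambda_{u\rightarrow c}({\bf LL})$, while every other $w$ contributes $\lambda_{w\rightarrow c}({\bf \bar{L}}*)$); the uniformly-${\bf \bar{L}}$ stratum is barred by the outer indicator. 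Writing the full product $\prod_{u\neq v}(\lambda_{u\rightarrow c}({\bf L}*)+\lambda_{u\rightarrow c}(**)+\lambda_{u\rightarrow c}({\bf \bar{L}}*))$ as the sum over these strata, then correcting the exactly-one stratum from $\lambda_{u\rightarrow c}({\bf L}*)+\lambda_{u\rightarrow c}(**)$ to $\lambda_{u\rightarrow c}({\bf LL})$ and deleting the barred uniform stratum, reproduces exactly the three-term formula (\ref{equ:ksat_r_L_bar_*}). This stratification is the only genuinely delicate bookkeeping; everything else is a mechanical application of (\ref{eq:sum_prod}) and (\ref{eq:omega4ksat}).
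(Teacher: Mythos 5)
Your proposal is correct and is precisely the computation the paper itself asserts but omits: the paper introduces this lemma with the single remark that it "directly follows from equations (\ref{equ:bp_l_mssg}) to (\ref{equ:bp_s_mssg})", and your specialization via Lemma \ref{lem:PMF}, the reading of a fixed singleton left state as ${\bf L}$ or ${\bf \bar L}$ per edge (yielding the $C_c^{\tt s}(v)$/$C_c^{\tt u}(v)$ split), the $\gamma$ and $1-\gamma$ corrections on the all-$*$ event, and the stratification by the number of non-${\bf \bar L}$ intentions for (\ref{equ:ksat_r_L_bar_*}) supply exactly the missing bookkeeping. All eleven equations check out under your argument.
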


Now we are ready to investigate how these BP messages may reduced to (weighted)
PTP messages. It turns out that the following condition has a special role
to play in this reduction.

\begin{equation}
\label{eq:reduce_ksat_initial_condition}
 \rho_{c\rightarrow v}({\bf
L*})=\rho_{c\rightarrow v}({\bf {\bar L}*})=\rho_{c\rightarrow
v}({\bf **})
\end{equation}

\begin{prop}
\label{prop:initial_condition_ksat}
In $k$-SAT problems, if the BP messages are initialized to satisfy
condition (\ref{eq:reduce_ksat_initial_condition}), then this condition is satisfied in every BP iteration.
\end{prop}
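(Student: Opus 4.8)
The plan is to argue by induction on the BP iteration index, maintaining the invariant (\ref{eq:reduce_ksat_initial_condition}) on \emph{all} right messages. The base case is the assumed initialization. For the inductive step I would suppose that, at the start of some iteration, every right message satisfies $\rho_{c\rightarrow v}({\bf L}*)=\rho_{c\rightarrow v}({\bf \bar{L}}*)=\rho_{c\rightarrow v}(**)$, then update first the left messages and then the right messages according to Lemma \ref{lem:bp_mssg}, and finally show that the freshly computed right messages again satisfy (\ref{eq:reduce_ksat_initial_condition}).

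The first of the two required equalities comes for free: the update equations (\ref{equ:ksat_r_L*}) and (\ref{equ:ksat_r_**}) for $\rho_{c\rightarrow v}({\bf L}*)$ and $\rho_{c\rightarrow v}(**)$ are textually identical, so $\rho_{c\rightarrow v}({\bf L}*)=\rho_{c\rightarrow v}(**)$ holds after any right-message update, independently of the induction hypothesis and indeed of the initialization. It therefore remains only to establish $\rho_{c\rightarrow v}({\bf \bar{L}}*)=\rho_{c\rightarrow v}({\bf L}*)$, and this is where the hypothesis enters.

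The crux is to establish the auxiliary identity
\[
\lambda_{u\rightarrow c}({\bf LL})=\lambda_{u\rightarrow c}({\bf L}*)+\lambda_{u\rightarrow c}(**)
\]
for every $(u-c)$. I would prove it by substituting the induction hypothesis into the left-message updates: writing $r_b:=\rho_{b\rightarrow u}(**)=\rho_{b\rightarrow u}({\bf L}*)=\rho_{b\rightarrow u}({\bf \bar{L}}*)$ and $\ell_b:=\rho_{b\rightarrow u}({\bf LL})$, equations (\ref{equ:ksat_l_LL}), (\ref{equ:ksat_l_L*}) and (\ref{equ:ksat_l_**}) collapse to $\lambda_{u\rightarrow c}({\bf LL})=\prod_{b\in C^{\tt u}_c(u)}r_b\prod_{b\in C^{\tt s}_c(u)}(\ell_b+r_b)$, $\lambda_{u\rightarrow c}({\bf L}*)=\prod_{b\in C^{\tt u}_c(u)}r_b\bigl(\prod_{b\in C^{\tt s}_c(u)}(\ell_b+r_b)-\gamma\prod_{b\in C^{\tt s}_c(u)}r_b\bigr)$, and $\lambda_{u\rightarrow c}(**)=\gamma\prod_{b\in C(u)\setminus\{c\}}r_b$; adding the last two, the two $\gamma$-terms cancel (using $C^{\tt u}_c(u)\cup C^{\tt s}_c(u)=C(u)\setminus\{c\}$) and the identity follows.

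Finally I would subtract the right-message update (\ref{equ:ksat_r_L*}) from (\ref{equ:ksat_r_L_bar_*}). The two expressions share the same leading product and the same trailing $-\prod_{u}\lambda_{u\rightarrow c}({\bf \bar{L}}*)$ term, so their difference reduces to the single correction sum
\[
\rho_{c\rightarrow v}({\bf \bar{L}}*)-\rho_{c\rightarrow v}({\bf L}*)=\sum_{u\in V(c)\setminus\{v\}}\bigl(\lambda_{u\rightarrow c}({\bf LL})-\lambda_{u\rightarrow c}({\bf L}*)-\lambda_{u\rightarrow c}(**)\bigr)\prod_{w\in V(c)\setminus\{u,v\}}\lambda_{w\rightarrow c}({\bf \bar{L}}*).
\]
By the auxiliary identity each summand vanishes, so $\rho_{c\rightarrow v}({\bf \bar{L}}*)=\rho_{c\rightarrow v}({\bf L}*)$; combined with the free equality above, all three right messages coincide, closing the induction. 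The only genuine obstacle — the step I expect to be the crux — is recognizing that the factor $\lambda_{u\rightarrow c}({\bf LL})-\lambda_{u\rightarrow c}({\bf L}*)-\lambda_{u\rightarrow c}(**)$ appearing inside (\ref{equ:ksat_r_L_bar_*}) is precisely the quantity that the hypothesis forces to zero; once this is seen, everything else is routine bookkeeping, and no assumption on $\gamma$ is needed.
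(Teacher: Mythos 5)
Your proposal is correct and follows essentially the same route as the paper's own proof: the free equality $\rho_{c\rightarrow v}({\bf L}*)=\rho_{c\rightarrow v}(**)$ from the identical update equations, the auxiliary identity $\lambda_{u\rightarrow c}({\bf LL})=\lambda_{u\rightarrow c}({\bf L}*)+\lambda_{u\rightarrow c}(**)$ obtained by cancelling the two $\gamma$-terms under the hypothesis, and the observation that this identity kills the correction sum in (\ref{equ:ksat_r_L_bar_*}). The only difference is that you make the induction over iterations explicit, whereas the paper compresses this into showing the condition propagates from initialization to the first iteration; the substance is the same.
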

\begin{proof}
We only need to show that if (\ref{eq:reduce_ksat_initial_condition})
is satisfied during initialization, then it is satisfied in the first iteration
after initialization. -- In fact, noting that $\rho_{c\rightarrow v}({\bf{L*}})=\rho_{c\rightarrow v}(**)$ necessarily holds in each BP iteration
due to  ({\ref{equ:ksat_r_L*}}) and ({\ref{equ:ksat_r_**}}), we only need
to prove that
$\rho_{c\rightarrow v}({\bf
\bar{L}}*)=\rho_{c\rightarrow v}({\bf L}*)$ holds in the first
iteration provided BP messages are initialized to satisfy
(\ref{eq:reduce_ksat_initial_condition}).

Under this initialization condition, we have,
in the first BP iteration after,
\begin{eqnarray*}
\lambda_{v\rightarrow c}({\bf L}*)+\lambda_{v\rightarrow c}(**) &=& \prod_{b\in C_{c}^{\tt
u}(v)}\rho_{b\rightarrow v} ({\bf\bar{L}}*) \times
\left(\prod_{b\in C_{c}^{\tt s}(v)}\left(\rho_{b\rightarrow v}({\bf L}*)+ \rho_{b\rightarrow
v}({\bf LL})\right)-\gamma\prod_{b\in
C_{c}^{\tt s}(v)}\rho_{b\rightarrow v}({\bf L}*)\right)\\
&& +\gamma\prod_{b\in C_{c}^{\tt u}(v)\cup C_{c}^{\tt
s}(v)}\rho_{b\rightarrow v}(**)\\
&=& \prod_{b\in C_{c}^{\tt u}(v)}\rho_{b\rightarrow v}
({\bf\bar{L}}*)\prod_{b\in C_{c}^{\tt s}(v)}\left(\rho_{b\rightarrow v}({\bf L}*)+ \rho_{b\rightarrow
v}({\bf LL})\right)\\
&&-\gamma\prod_{b\in C_{c}^{\tt u}(v)}\rho_{b\rightarrow
v} ({\bf\bar{L}}*)\prod_{b\in C_{c}^{\tt s}(v)}\rho_{b\rightarrow v}({\bf L}*)+\gamma\prod_{b\in C_{c}^{\tt
u}(v)\cup C_{c}^{\tt
s}(v)}\rho_{b\rightarrow v}(**)\\
&\stackrel{(a)}{=}& \prod_{b\in C_{c}^{\tt u}(v)}\rho_{b\rightarrow v} ({\bf\bar{L}}*)\prod_{b\in C_{c}^{\tt
s}(v)}\left(\rho_{b\rightarrow v}({\bf L}*)+ \rho_{b\rightarrow
v}({\bf LL})\right)\\
&=& \lambda_{v\rightarrow c}({\bf LL}),
\end{eqnarray*}
where equality $(a)$ is due to the initialization
condition (\ref{eq:reduce_ksat_initial_condition}).

Then in the subsequent update of the right messages, we have

\begin{eqnarray*}
\rho_{c\rightarrow v} ({\bf \bar{L}}*) &=& \prod_{u\in
V(c) \setminus \{v\}} (\lambda_{u\rightarrow c}({\bf
L}*)+\lambda_{u\rightarrow c}(**)
+\lambda_{u\rightarrow c}({\bf \bar{L}}*))\\
&&+\sum_{u\in V(c)\setminus \{v\}} (\lambda_{u\rightarrow c}({\bf LL})-\lambda_{u\rightarrow c}({\bf L}*) - \lambda_{u\rightarrow c}(**))\prod_{w\in V(c)\setminus
\{u,v\}}\lambda_{w\rightarrow c}({\bf \bar{L}}*)\\
&& -\prod_{u\in V(c)\setminus \{v\}} \lambda_{u\rightarrow c}({\bf \bar{L}}*)\\
&\stackrel{(b)}{=}& \prod_{u\in V(c) \setminus \{v\}} (\lambda_{u\rightarrow c}({\bf L}*)+\lambda_{u\rightarrow c}(**) +\lambda_{u\rightarrow
c}({\bf \bar{L}}*))-\prod_{u\in V(c)\setminus
\{v\}} \lambda_{u\rightarrow c}({\bf \bar{L}}*)\\
&=& \rho_{c\rightarrow v} ({\bf L}*),
\end{eqnarray*}
where equality $(b)$ is due to  the above result $\lambda_{v\rightarrow c}({\bf LL})=\lambda_{v\rightarrow c}({\bf L}*)+\lambda_{v\rightarrow c}(**)$.

\end{proof}

\begin{thm}
\label{thm:sp_as_bp_ksat} In a $k$-SAT problem,
suppose that the following two conditions are
imposed in the BP messages.
\begin{enumerate}
\item For every $(v-c)$, the BP messages are
initialized such that (\ref{eq:reduce_ksat_initial_condition}) is satisfied.
\item In each BP iteration, $\lambda_{v\rightarrow c}$ is scaled to
$\lambda_{v\rightarrow c}^{\rm norm}$
such that
$\lambda^{\rm
norm}_{v\rightarrow c}({\bf L*})+\lambda^{\rm norm}_{v\rightarrow
c}({\bf \bar{L}*})+\lambda^{\rm norm}_{v\rightarrow c}(**)=1$, before it
is passed along the edge; that is,
$\lambda^{\rm norm}_{v\rightarrow
c}(s^L_{v,c},s^R_{v,c}):=\frac{1}{\sum_{s^L_{v,c}}\lambda_{v\rightarrow
c}(s^L_{v,c}, *)}\cdot \lambda_{v\rightarrow c}(s^L_{v,c},s^R_{v,c})$ for
every $(s^L_{v,c},s^R_{v,c})$  in the support of
$\lambda_{v\rightarrow c}$ and the right messages are updated based on
the normalized left messages, namely,
\begin{eqnarray}
\label{equ:ksat_r_LL_n}\rho_{c\rightarrow v}({\bf LL}) &:=&
\prod_{u\in V(c) \setminus
\{v\}}\lambda^{\rm norm}_{u\rightarrow c}({\bf \bar{L}}*)\\
\label{equ:ksat_r_L_bar_*_n}\rho_{c\rightarrow v} ({\bf \bar{L}}*)
&:=& \prod_{u\in V(c) \setminus \{v\}} (\lambda^{\rm
norm}_{u\rightarrow c}({\bf L}*)+\lambda^{\rm norm}_{u\rightarrow
c}(**)
+\lambda^{\rm norm}_{u\rightarrow c}({\bf \bar{L}}*))\nonumber\\
&&+\sum_{u\in V(c)\setminus \{v\}} (\lambda^{\rm norm}_{u\rightarrow
c}({\bf LL})-\lambda^{\rm norm}_{u\rightarrow c}({\bf L}*) -
\lambda^{\rm norm}_{u\rightarrow c}(**))\prod_{w\in V(c)\setminus
\{u,
v\}}\lambda^{\rm norm}_{w\rightarrow c}({\bf \bar{L}}*)\nonumber\\
&& -\prod_{u\in V(c)\setminus \{v\}} \lambda^{\rm
norm}_{u\rightarrow c}({\bf
\bar{L}}*)\\
\label{equ:ksat_r_L*_n}\rho_{c\rightarrow v}({\bf L}*) &:=&
\!\!\!\!\!\!\prod_{u\in V(c) \setminus \{v\}} (\lambda^{\rm
norm}_{u\rightarrow c} ({\bf L}*)+\lambda^{\rm norm}_{u\rightarrow
c}(**)+\lambda^{\rm norm}_{u\rightarrow c}({\bf \bar{L}}*))
-\prod_{u\in V(c)\setminus \{v\}}\lambda^{\rm norm}_{u\rightarrow c}
({\bf \bar{L}}*)\\
\label{equ:ksat_r_**_n}\rho_{c\rightarrow v}(**) &:=&
\!\!\!\!\!\!\prod_{u\in V(c) \setminus \{v\}} (\lambda^{\rm
norm}_{u\rightarrow c}({\bf L}*)+\lambda^{\rm norm}_{u\rightarrow
c}(**)+\lambda^{\rm norm}_{u\rightarrow c}({\bf \bar{L}}*))
-\prod_{u\in V(c)\setminus \{v\}} \lambda^{\rm norm}_{u\rightarrow
c}({\bf \bar{L}}*).
\end{eqnarray}
\end{enumerate}
Then the correspondence between BP messages and weighted PTP
messages is
\begin{eqnarray}
\label{equ:bp_l_L*_ptp}\lambda^{\rm norm(BP)}_{v\rightarrow c}({\bf
L*}) &\leftrightarrow& [L_{v,c}=0]\cdot\lambda^{\rm
norm(PTP)}_{v\rightarrow c}({\bf 0}) +
[L_{v,c}=1]\cdot\lambda^{\rm norm(PTP)}_{v\rightarrow c}({\bf 1})\\
\label{equ:bp_l_L_bar_*_ptp}\lambda^{\rm norm(BP)}_{v\rightarrow
c}({\bf \bar{L}*}) &\leftrightarrow& [L_{v,c}=0]\cdot\lambda^{\rm
norm(PTP)}_{v\rightarrow c}({\bf 1}) +
[L_{v,c}=1]\cdot\lambda^{\rm norm(PTP)}_{v\rightarrow c}({\bf 0})\\
\label{equ:bp_l_**_ptp}\lambda^{\rm norm(BP)}_{v\rightarrow c}(**)
&\leftrightarrow& \lambda^{\rm norm(PTP)}_{v\rightarrow c}(*)\\
\label{equ:bp_r_L*_ptp}\rho^{(\rm BP)}_{c\rightarrow v}({\bf L}*)
&\leftrightarrow& \rho^{\rm norm(PTP)}_{c\rightarrow v}(*)\\
\label{equ:bp_r_LL_ptp}\rho^{(\rm BP)}_{c\rightarrow v}({\bf LL})
&\leftrightarrow& \rho^{\rm norm(PTP)}_{c\rightarrow v}({\bf 0}) +
\rho^{\rm norm(PTP)}_{c\rightarrow v}({\bf 1})\\
\label{equ:bp_s_0_ptp}\mu^{(\rm BP)}_v({\bf 0}) &\leftrightarrow&
\mu^{(\rm PTP)}_v({\bf 0})\\
\label{equ:bp_s_1_ptp}\mu^{(\rm BP)}_v({\bf 1}) &\leftrightarrow&
\mu^{(\rm PTP)}_v({\bf 1})\\
\label{equ:bp_s_01_ptp}\mu^{(\rm BP)}_v(*) &\leftrightarrow&
\mu^{(\rm PTP)}_v(*).
\end{eqnarray}
\end{thm}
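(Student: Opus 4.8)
The plan is to mirror the bidirectional structure of the proof of Theorem~\ref{thm:ptp_sp_equiv_ksat}: I will show that (i) if the \emph{right} correspondence (\ref{equ:bp_r_L*_ptp})--(\ref{equ:bp_r_LL_ptp}) holds then the \emph{left} correspondence (\ref{equ:bp_l_L*_ptp})--(\ref{equ:bp_l_**_ptp}) holds after a left-message update, and conversely that (ii) the left correspondence propagates to the right correspondence after a right-message update. Together with a consistent initialization these two implications establish the passed-message correspondence at every iteration by induction, after which the summary correspondence (\ref{equ:bp_s_0_ptp})--(\ref{equ:bp_s_01_ptp}) follows by a single substitution. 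The whole argument rests on Proposition~\ref{prop:initial_condition_ksat}, which I invoke at the outset so that the state-decoupling identity $\rho_{c\rightarrow v}({\bf L}*)=\rho_{c\rightarrow v}({\bf \bar{L}}*)=\rho_{c\rightarrow v}(**)$ may be used freely in every iteration, together with the identity $\lambda_{v\rightarrow c}({\bf LL})=\lambda_{v\rightarrow c}({\bf L}*)+\lambda_{v\rightarrow c}(**)$ that was derived inside the proof of that proposition.

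For implication (ii) I would substitute the normalized left messages into the reduced right-update equations (\ref{equ:ksat_r_LL_n})--(\ref{equ:ksat_r_**_n}). The key simplification is that, because $\lambda^{\rm norm}_{u\rightarrow c}({\bf LL})-\lambda^{\rm norm}_{u\rightarrow c}({\bf L}*)-\lambda^{\rm norm}_{u\rightarrow c}(**)=0$, the middle summation in (\ref{equ:ksat_r_L_bar_*_n}) vanishes, and since the normalized left messages sum to one over $\{{\bf L}*,{\bf \bar{L}}*,**\}$ the leading product collapses to $1$. This leaves $\rho_{c\rightarrow v}({\bf LL})=\prod_{u}\lambda^{\rm norm}_{u\rightarrow c}({\bf \bar{L}}*)$ and $\rho_{c\rightarrow v}({\bf L}*)=\rho_{c\rightarrow v}({\bf \bar{L}}*)=\rho_{c\rightarrow v}(**)=1-\prod_{u}\lambda^{\rm norm}_{u\rightarrow c}({\bf \bar{L}}*)$. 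Identifying $\lambda^{\rm norm}_{u\rightarrow c}({\bf \bar{L}}*)$ with $\lambda^{\rm norm}_{u\rightarrow c}({\bf \bar{L}_{u,c}})$ via the assumed left correspondence, these two products match exactly the PTP right-update equations (\ref{equ:ptp_sat_r_0})--(\ref{equ:ptp_sat_r_*}), yielding (\ref{equ:bp_r_LL_ptp}) and (\ref{equ:bp_r_L*_ptp}) after noting $\rho^{\rm norm}_{c\rightarrow v}({\bf 0})+\rho^{\rm norm}_{c\rightarrow v}({\bf 1})=\rho^{\rm norm}_{c\rightarrow v}({\bf L_{v,c}})$ by (\ref{eq:simple_lem_b}).

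For implication (i) I would substitute the right correspondence into the BP left-update equations (\ref{equ:ksat_l_LL})--(\ref{equ:ksat_l_**}), again using state-decoupling to replace $\rho_{b\rightarrow v}({\bf \bar{L}}*)$, $\rho_{b\rightarrow v}({\bf L}*)$ and $\rho_{b\rightarrow v}(**)$ by their common value (identified with $\rho^{\rm norm}_{b\rightarrow v}({\bf 01})$) and $\rho_{b\rightarrow v}({\bf LL})$ by $\rho^{\rm norm}_{b\rightarrow v}({\bf L_{v,b}})$. Splitting the products over $C^{\tt s}_c(v)$ and $C^{\tt u}_c(v)$ and applying (\ref{eq:simple_lem_a})--(\ref{eq:simple_lem_b}) exactly as in the proof of Theorem~\ref{thm:ptp_sp_equiv_ksat} should turn the right-hand sides of (\ref{equ:ksat_l_L*}), (\ref{equ:ksat_l_L_bar_*}) and (\ref{equ:ksat_l_**}) into the right-hand sides of the PTP equations (\ref{equ:ptp_sat_l_0}), (\ref{equ:ptp_sat_l_1}) and (\ref{equ:ptp_sat_l_*}) for the tokens ${\bf L_{v,c}}$, ${\bf \bar{L}_{v,c}}$ and ${\bf 01}$ respectively. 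Since these three quantities are precisely those over which condition~2 normalizes the BP left message, and $\{{\bf L_{v,c}},{\bf \bar{L}_{v,c}}\}=\{{\bf 0},{\bf 1}\}$, the normalization constants on the two sides agree and the normalized correspondence (\ref{equ:bp_l_L*_ptp})--(\ref{equ:bp_l_**_ptp}) follows. The summary correspondence is then obtained by inserting the established right correspondence into (\ref{equ:ksat_s_0})--(\ref{equ:ksat_s_*}) and matching with (\ref{equ:ptp_sat_s_0})--(\ref{equ:ptp_sat_s_*}).

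I expect the main obstacle to be the careful bookkeeping of the \emph{normalization}, specifically the fact that condition~2 normalizes the BP left message only over the three states with right-component $*$ while excluding the ${\bf LL}$ state. Arguing that this exclusion is harmless --- and that it is exactly the decoupling identity $\lambda_{v\rightarrow c}({\bf LL})=\lambda_{v\rightarrow c}({\bf L}*)+\lambda_{v\rightarrow c}(**)$ that lets the ${\bf LL}$-dependent term in the right update drop out --- is the delicate point. This is also where the necessity of the state-decoupling condition for the reduction becomes visible, foreshadowing the contrast with the $3$-COL case discussed later.
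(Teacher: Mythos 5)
Your proposal is correct and follows essentially the same route as the paper's own proof: a bidirectional induction between the left and right correspondences anchored by Proposition~\ref{prop:initial_condition_ksat}, the identity $\lambda_{v\rightarrow c}({\bf LL})=\lambda_{v\rightarrow c}({\bf L}*)+\lambda_{v\rightarrow c}(**)$, the relations (\ref{eq:simple_lem_a})--(\ref{eq:simple_lem_b}), and a final check that the normalization constants on the two sides coincide before deducing the summary correspondence. The only (cosmetic) difference is that you simplify the BP updates down to the PTP form whereas the paper expands the PTP updates up to the BP form; the algebraic identities invoked are identical.
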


\begin{proof}

Note that based on Proposition \ref{prop:initial_condition_ksat}, condition $\rho^{(\rm
BP)}_{c\rightarrow v}({\bf L}*)=\rho^{(\rm BP)}_{c\rightarrow
v}({\bf \bar{L}}*)=\rho^{(\rm BP)}_{c\rightarrow v}(**)$ holds in
every BP iteration.
From the proof of Proposition \ref{prop:initial_condition_ksat}, it also holds
in every BP iteration that
\begin{equation}
\label{eq:B}
\lambda^{{\rm norm (BP)}}_{v\rightarrow c}({\bf L*})+\lambda^{\rm norm
(BP)}_{v\rightarrow c}(**)=\lambda^{\rm norm (BP)}_{v\rightarrow c}({\bf
LL}).
\end{equation}

Now we will prove this theorem by first proving that
the ``left correspondence''
((\ref{equ:bp_l_L*_ptp}) to (\ref{equ:bp_l_**_ptp}))
implies the ``right correspondence''
((\ref{equ:bp_r_L*_ptp}) and (\ref{equ:bp_r_LL_ptp}))
and conversely that the ``right correspondence'' implies
the ``left correspondence'', whereby proving the correspondence in the passed
messages. We then prove the summary correspondence
((\ref{equ:bp_s_0_ptp}) to (\ref{equ:bp_s_01_ptp})).

First suppose that left correspondence holds, namely that
$\lambda^{\rm norm(BP)}_{v\rightarrow c}({\bf
L*})=[L_{v,c}=0]\cdot\lambda^{\rm norm(PTP)}_{v\rightarrow c}({\bf
0}) + [L_{v,c}=1]\cdot\lambda^{\rm norm(PTP)}_{v\rightarrow c}({\bf
1})$, $\lambda^{\rm norm(BP)}_{v\rightarrow c}({\bf
\bar{L}*})=[L_{v,c}=0]\cdot\lambda^{\rm norm(PTP)}_{v\rightarrow
c}({\bf 1}) + [L_{v,c}=1]\cdot\lambda^{\rm norm(PTP)}_{v\rightarrow
c}({\bf 0})$, and $\lambda^{\rm norm(BP)}_{v\rightarrow c}(**)=
\lambda^{\rm norm(PTP)}_{v\rightarrow c}(*)$. Following PTP
message-updating equations (\ref{equ:ptp_sat_r_0}) to
(\ref{equ:ptp_sat_r_*}), we have

\begin{eqnarray*}
\rho^{\rm norm(PTP)}_{c\rightarrow v}({\bf 0}) &+& \rho^{\rm
norm(PTP)}_{c\rightarrow v}({\bf 1}) \stackrel{(a)}{=} \rho^{\rm
(PTP)}_{c\rightarrow v}({\bf 0}) + \rho^{\rm (PTP)}_{c\rightarrow
v}({\bf 1})\\
&=& [L_{v,c}=0]\cdot\prod\limits_{u\in V(c)\setminus
\{v\}:L_{u,c}=1}\lambda^{\rm norm(PTP)}_{u\rightarrow c}({\bf
0})\prod\limits_{u\in V(c)\setminus\{v\}:L_{u,c}=0}\lambda^{\rm
norm(PTP)}_{u\rightarrow c}({\bf 1})\\
&& +[L_{v,c}=1]\cdot\prod\limits_{u\in V(c)\setminus
\{v\}:L_{u,c}=1}\lambda^{\rm norm(PTP)}_{u\rightarrow c}({\bf
0})\prod\limits_{u\in V(c)\setminus\{v\}:L_{u,c}=0}
\lambda^{\rm norm(PTP)}_{u\rightarrow c}({\bf 1})\\
&=& \prod\limits_{u\in V(c)\setminus \{v\}:L_{u,c}=1}\lambda^{\rm
norm(PTP)}_{u\rightarrow c}({\bf 0})\prod\limits_{u\in
V(c)\setminus\{v\}:L_{u,c}=0} \lambda^{\rm norm(PTP)}_{u\rightarrow
c}({\bf 1})\\
&=& \prod\limits_{u\in V(c)\setminus \{v\}:L_{u,c}=1}\left(
[L_{u,c}=0]\cdot\lambda^{\rm norm(PTP)}_{u\rightarrow c}({\bf
1})+[L_{u,c}=1]\cdot\lambda^{\rm norm(PTP)}_{u\rightarrow c}({\bf
0})\right)\\
&& \times\prod\limits_{u\in V(c)\setminus \{v\}:L_{u,c}=0}\left(
[L_{u,c}=0]\cdot\lambda^{\rm norm(PTP)}_{u\rightarrow c}({\bf
1})+[L_{u,c}=1]\cdot\lambda^{\rm norm(PTP)}_{u\rightarrow c}({\bf
0})\right)\\
&=& \prod\limits_{u\in V(c)\setminus \{v\}}\left(
[L_{u,c}=0]\cdot\lambda^{\rm norm(PTP)}_{u\rightarrow c}({\bf
1})+[L_{u,c}=1]\cdot\lambda^{\rm norm(PTP)}_{u\rightarrow c}({\bf
0})\right)\\
&\stackrel{(b)}{=}& \prod\limits_{u\in V(c)\setminus
\{v\}}\lambda^{\rm norm(BP)}_{u\rightarrow c}({\bf \bar{L}*})\\
&=& \rho^{(\rm BP)}_{c\rightarrow v}({\bf LL})
\end{eqnarray*}
where equality $(a)$ is due to the fact that $\rho^{\rm
norm(PTP)}_{c\rightarrow v}=\rho^{(\rm PTP)}_{c\rightarrow v}$ as is
shown in the proof of Theorem \ref{thm:ptp_sp_equiv_ksat}, equality
$(b)$ is due to the assumed left correspondence.

Similarly, we have
\begin{eqnarray*}
\rho^{\rm norm(PTP)}_{c\rightarrow v}(*) &=& \rho^{\rm
(PTP)}_{c\rightarrow v}(*)\\
&=& 1-\prod\limits_{u\in V(c)\setminus
\{v\}:{L_{u,c}=1}}\lambda^{\rm norm(PTP)}_{u\rightarrow c}({\bf
0})\prod\limits_{u\in V(c)\setminus
\{v\}:{L_{u,c}=0}}\lambda^{\rm norm(PTP)}_{u\rightarrow c}({\bf 1})\\
&=& 1-\prod\limits_{u\in V(c)\setminus \{v\}}\lambda^{\rm
norm (BP)}_{u\rightarrow c}({\bf \bar{L}*})\\
&\stackrel{(c)}{=}& \prod\limits_{u\in V(c)\setminus \{v\}}
(\lambda^{\rm norm(BP)}_{u\rightarrow c}({\bf L*})+\lambda^{\rm
norm(BP)}_{u\rightarrow c}({\bf \bar{L}*})+\lambda^{\rm
norm(BP)}_{u\rightarrow c}(**))- \prod\limits_{u\in V(c)\setminus
\{v\}}\lambda^{\rm norm(BP)}_{u\rightarrow c}({\bf \bar{L}*})\\
&=& \rho^{(\rm BP)}_{c\rightarrow v}({\bf L*}),
\end{eqnarray*}
where equality $(c)$ is due to the fact that $\lambda^{\rm
norm(BP)}_{u\rightarrow c}({\bf L*})+\lambda^{\rm
norm(BP)}_{u\rightarrow c}({\bf \bar{L}*})+\lambda^{\rm
norm(BP)}_{u\rightarrow c}(**)=1$.

Thus we proved that if the left correspondence holds, then the right
correspondence holds.

Now suppose that the right correspondence holds, namely that $\rho^{(\rm
BP)}_{c\rightarrow v}({\bf L}*)=\rho^{\rm norm(PTP)}_{c\rightarrow
v}(*)$, and $\rho^{(\rm BP)}_{c\rightarrow v}({\bf LL})=
\rho^{\rm norm(PTP)}_{c\rightarrow v}({\bf 0}) + \rho^{\rm
norm(PTP)}_{c\rightarrow v}({\bf 1})$. We then have
\begin{eqnarray*}
\rho^{(\rm BP)}_{c\rightarrow v}({\bf L}*)+\rho^{(\rm
BP)}_{c\rightarrow v}({\bf LL}) &=& \rho^{\rm
norm(PTP)}_{c\rightarrow v}(*)+ \rho^{\rm
norm(PTP)}_{c\rightarrow v}({\bf 0}) + \rho^{\rm
norm(PTP)}_{c\rightarrow v}({\bf 1})\\
&=& 1.
\end{eqnarray*}

Following PTP message-update equations (\ref{equ:ptp_sat_l_0}) to
(\ref{equ:ptp_sat_l_*}), we have

\begin{eqnarray*}
&&[L_{v,c}=0]\cdot\lambda^{\rm (PTP)}_{v\rightarrow c}({\bf 0})
+ [L_{v,c}=1]\cdot\lambda^{\rm (PTP)}_{v\rightarrow c}({\bf 1})\\
&=& [L_{v,c}=0]\cdot\left(\prod\limits_{b\in C(v)\setminus \{c\}}
(\rho^{\rm norm(PTP)}_{b\rightarrow v}({\bf 0})+\rho^{\rm
norm(PTP)}_{b\rightarrow v}(*))-\gamma\prod\limits_{b\in
C(v)\setminus\{c\}}\rho^{\rm norm(PTP)}_{b\rightarrow v}(*)\right)\\
&& +[L_{v,c}=1]\cdot\left(\prod\limits_{b\in C(v)\setminus \{c\}}
(\rho^{\rm norm(PTP)}_{b\rightarrow v}({\bf 1})+\rho^{\rm
norm(PTP)}_{b\rightarrow v}(*))-\gamma\prod\limits_{b\in
C(v)\setminus\{c\}}\rho^{\rm norm(PTP)}_{b\rightarrow v}(*)\right)\\
&=&[L_{v,c}=0]\cdot\prod\limits_{b\in C(v)\setminus \{c\}}
(\rho^{\rm norm(PTP)}_{b\rightarrow v}({\bf 0})+\rho^{\rm
norm(PTP)}_{b\rightarrow v}(*))\\
&& +[L_{v,c}=1]\cdot\prod\limits_{b\in C(v)\setminus \{c\}}
(\rho^{\rm norm(PTP)}_{b\rightarrow v}({\bf 1})+\rho^{\rm
norm(PTP)}_{b\rightarrow v}(*))-\gamma\prod\limits_{b\in
C(v)\setminus\{c\}}\rho^{\rm
norm(PTP)}_{b\rightarrow v}(*)\\
&\stackrel{(\ref{eq:simple_lem_b})}{=}& [L_{v,c}=0]\cdot\prod_{b\in
C_{c}^{\tt s}(v)}(\rho^{\rm norm(PTP)}_{b\rightarrow v}({\bf
0})+\rho^{\rm norm(PTP)}_{b\rightarrow v}(*))\cdot\prod_{b\in
C_{c}^{\tt
u}(v)}\rho^{\rm norm(PTP)}_{b\rightarrow v}(*)\\
&& + [L_{v,c}=1]\cdot\prod_{b\in C_c^{\tt s}(v)}(\rho^{\rm
norm(PTP)}_{b\rightarrow v}({\bf 1})+\rho^{\rm
norm(PTP)}_{b\rightarrow v}(*))\cdot\prod_{b\in C_c^{\tt
u}(v)}\rho^{\rm norm(PTP)}_{b\rightarrow v}(*)\\
&& -\gamma\prod_{b\in
C(v)\setminus\{c\}}\rho^{\rm norm(PTP)}_{b\rightarrow v}(*)\\
&\stackrel{(\ref{eq:simple_lem_a})}{=}& [L_{v,c}=0]\cdot\prod_{b\in
C_c^{\tt u}(v)}\rho^{\rm norm(PTP)}_{b\rightarrow v}(*) +
[L_{v,c}=1]\cdot\prod_{b\in C_c^{\tt u}(v)}\rho^{\rm
norm(PTP)}_{b\rightarrow v}(*) - \gamma\prod_{b\in
C(v)\setminus\{c\}}\rho^{\rm norm(PTP)}_{b\rightarrow v}(*)\\
&=& \prod_{b\in C_c^{\tt u}(v)}\rho^{\rm norm(PTP)}_{b\rightarrow
v}(*) -
\gamma\prod_{b\in C(v)\setminus\{c\}}\rho^{\rm norm(PTP)}_{b\rightarrow v}(*)\\
&=& \prod_{b\in C_c^{\tt u}(v)}\rho^{\rm norm(PTP)}_{b\rightarrow
v}(*)\left(1-\gamma\prod_{b\in C_c^{\tt s}(v)}\rho^{\rm
norm(PTP)}_{b\rightarrow
v}(*)\right)\\
&\stackrel{(d)}{=}& \prod_{b\in C_c^{\tt u}(v)}\rho^{\rm
(BP)}_{b\rightarrow v}({\bf L*})\left(1-\gamma\prod_{b\in
C_c^{\tt s}(v)}\rho^{\rm (BP)}_{b\rightarrow v}({\bf L*})\right)\\
&\stackrel{(e)}{=}&\prod_{b\in C_c^{\tt u}(v)}\rho^{\rm
(BP)}_{b\rightarrow v}({\bf L*})\left(\prod_{b\in C_c^{\tt
s}(v)}(\rho^{\rm (BP)}_{b\rightarrow v}({\bf L*})+\rho^{\rm
(BP)}_{b\rightarrow v}({\bf LL}))-\gamma\prod_{b\in
C_c^{\tt s}(v)}\rho^{\rm (BP)}_{b\rightarrow v}({\bf L*})\right)\\
&\stackrel{(f)}{=}&\prod_{b\in C_c^{\tt u}(v)}\rho^{\rm
(BP)}_{b\rightarrow v}({\bf \bar{L}*})\left(\prod_{b\in C_c^{\tt
s}(v)}(\rho^{\rm (BP)}_{b\rightarrow v}({\bf L*})+\rho^{\rm
(BP)}_{b\rightarrow v}({\bf LL}))-\gamma\prod_{b\in
C_c^{\tt s}(v)}\rho^{\rm (BP)}_{b\rightarrow v}({\bf L*})\right)\\
&=& \lambda^{\rm (BP)}_{v\rightarrow c}({\bf L*})
\end{eqnarray*}
where equality $(d)$ is due to the assumed right correspondence,
equality $(e)$ is due to the fact that $\rho^{(\rm
BP)}_{c\rightarrow v}({\bf L}*)+\rho^{(\rm BP)}_{c\rightarrow
v}({\bf LL})=1$, and equality $(f)$ is due to that the condition
$\rho^{\rm (BP)}_{b\rightarrow v}({\bf L*})=\rho^{\rm
(BP)}_{b\rightarrow v}({\bf \bar{L}*})$ is satisfied in every iteration.
We will denote this result
by $(A)$.

Similarly, we have
\begin{eqnarray*}
&&[L_{v,c}=0]\cdot\lambda^{\rm (PTP)}_{v\rightarrow c}({\bf 1})
+ [L_{v,c}=1]\cdot\lambda^{\rm (PTP)}_{v\rightarrow c}({\bf 0})\\
&=& [L_{v,c}=0]\cdot\left(\prod\limits_{b\in C(v)\setminus \{c\}}
(\rho^{\rm norm(PTP)}_{b\rightarrow v}({\bf 1})+\rho^{\rm
norm(PTP)}_{b\rightarrow v}(*))-\gamma\prod\limits_{b\in
C(v)\setminus\{c\}}\rho^{\rm norm(PTP)}_{b\rightarrow v}(*)\right)\\
&& +[L_{v,c}=1]\cdot\left(\prod\limits_{b\in C(v)\setminus \{c\}}
(\rho^{\rm norm(PTP)}_{b\rightarrow v}({\bf 0})+\rho^{\rm
norm(PTP)}_{b\rightarrow v}(*))-\gamma\prod\limits_{b\in
C(v)\setminus\{c\}}\rho^{\rm norm(PTP)}_{b\rightarrow v}(*)\right)\\
&=& \lambda^{\rm (BP)}_{v\rightarrow c}({\bf \bar{L}*}).
\end{eqnarray*}
We will denote this result by $(B)$.

Finally, we have
\begin{eqnarray*}
\lambda^{\rm(PTP)}_{v\rightarrow c}(*) &=&
\gamma\prod\limits_{b\in C(v)\setminus\{c\}} \rho^{\rm
norm(PTP)}_{b\rightarrow v}(*)\\
&=& \gamma\prod\limits_{b\in C(v)\setminus\{c\}} \rho^{\rm
(BP)}_{b\rightarrow v}(**)\\
&=& \lambda^{\rm (BP)}_{v\rightarrow c}(**).
\end{eqnarray*}
We will denote this result by $(C)$.

Combining results of $(A)$, $(B)$ and $(C)$, we have
\[
\lambda^{\rm (PTP)}_{v\rightarrow c}({\bf 0})+\lambda^{\rm
(PTP)}_{v\rightarrow c}({\bf 1})+\lambda^{\rm (PTP)}_{v\rightarrow
c}(*)=\lambda^{\rm (BP)}_{v\rightarrow c}({\bf
L*})+\lambda^{\rm (BP)}_{v\rightarrow c}({\bf
\bar{L}*})+\lambda^{\rm (BP)}_{v\rightarrow c}(**).
\]
That is, the scaling constant for normalizing $(\lambda^{\rm
(PTP)}_{v\rightarrow c}({\bf 0}),\lambda^{\rm (PTP)}_{v\rightarrow
c}({\bf 1}),\lambda^{\rm (PTP)}_{v\rightarrow c}(*))$ and
that for normalizing $(\lambda^{\rm (BP)}_{v\rightarrow c}({\bf
L*}),\lambda^{\rm (BP)}_{v\rightarrow c}({\bf
\bar{L}*}),\lambda^{\rm (BP)}_{v\rightarrow c}(**))$ are identical.
Therefore, result $(A)$, $(B)$ and $(C)$ respectively translate to
\begin{eqnarray*}
{[L_{v,c}=0]\cdot\lambda^{\rm norm(PTP)}_{v\rightarrow c}({\bf 0}) +
[L_{v,c}=1]\cdot\lambda^{\rm norm(PTP)}_{v\rightarrow c}({\bf 1})}
&=& \lambda^{\rm norm(BP)}_{v\rightarrow c}({\bf L*})\\
{[L_{v,c}=0]\cdot\lambda^{\rm norm(PTP)}_{v\rightarrow c}({\bf 1}) +
[L_{v,c}=1]\cdot\lambda^{\rm norm (PTP)}_{v\rightarrow c}({\bf 0})}
&=& \lambda^{\rm norm(BP)}_{v\rightarrow c}({\bf \bar{L}*})\\
\lambda^{\rm norm(PTP)}_{v\rightarrow c}(*) &=& \lambda^{\rm
norm( BP)}_{v\rightarrow c}(**).
\end{eqnarray*}

At this point we have proved the correspondence between the passed
messages in BP and those in weighted PTP.

We now prove the summary correspondence. Following the PTP message-update equations (\ref{equ:ptp_sat_s_0}) to
(\ref{equ:ptp_sat_s_*}), we have
\begin{eqnarray*}
\mu^{(\rm PTP)}_v({\bf 0}) & = & \prod_{c\in C(v)} \left(\rho^{\rm
norm(PTP)}_{c\rightarrow v}({\bf 0})+ \rho^{\rm
norm(PTP)}_{c\rightarrow v}(*)\right) -\gamma
\prod_{c\in C(v)}\rho^{\rm norm(PTP)}_{c\rightarrow v}(*)\\
& = & \prod_{c\in C^{\rm 1}(v)} \left(\rho^{\rm
norm(PTP)}_{c\rightarrow v}({\bf 0})+ \rho^{\rm
norm(PTP)}_{c\rightarrow v}(*)\right) \prod_{c\in C^{\rm
0}(v)} \left(\rho^{\rm norm(PTP)}_{c\rightarrow v}({\bf 0})+
\rho^{\rm norm(PTP)}_{c\rightarrow v}(*)\right)\\
&& -\gamma \prod_{c\in C(v)}\rho^{\rm norm(PTP)}_{c\rightarrow
v}(*)\\
&\stackrel{(\ref{eq:simple_lem_a}),(\ref{eq:simple_lem_b})}{=} &
\prod_{c\in C^{\rm 1}(v)} \rho^{\rm norm(PTP)}_{c\rightarrow v}(*) -\gamma\prod_{c\in C(v)}\rho^{\rm norm(PTP)}_{c\rightarrow
v}(*)\\
& = &\prod_{c\in C^{\rm 1}(v)}\rho^{\rm norm(PTP)}_{c\rightarrow
v}(*)\left(1-\gamma \prod_{c\in C^{\rm 0}(v)}\rho^{\rm
norm(PTP)}_{c\rightarrow v}(*)\right)\\
&=&\prod_{c\in C^{\rm 1}(v)}\rho^{\rm(BP)}_{c\rightarrow v}({\bf
\bar{L}*})\left(\prod_{c\in C^{\rm 0}(v)}(\rho^{\rm
(BP)}_{c\rightarrow v}({\bf LL})+\rho^{\rm (BP)}_{c\rightarrow
v}({\bf L*})-\gamma \prod_{c\in C^{\rm 0}(v)}\rho^{\rm
(BP)}_{c\rightarrow
v}({\bf L*})\right)\\
&=& \mu^{(\rm BP)}_v({\bf 0}).
\end{eqnarray*}

Following a similar procedure, we have
\begin{eqnarray*}
\mu_v^{\rm (PTP)}({\bf 1}) &=& \prod\limits_{c\in C(v)}(\rho^{\rm
norm(PTP)}_{c\rightarrow v}({\bf 1})+\rho^{\rm
norm(PTP)}_{c\rightarrow v}(*))-\gamma\prod\limits_{c\in
C(v)}\rho^{\rm norm(PTP)}_{c\rightarrow v}(*)\\
&=&\prod_{c\in C^{\rm 0}(v)}\rho^{\rm(BP)}_{c\rightarrow v}({\bf
\bar{L}*})\left(\prod_{c\in C^{\rm 1}(v)}(\rho^{\rm
(BP)}_{c\rightarrow v}({\bf LL})+\rho^{\rm (BP)}_{c\rightarrow
v}({\bf L*})-\gamma \prod_{c\in C^{\rm 1}(v)}\rho^{\rm
(BP)}_{c\rightarrow
v}({\bf L*})\right)\\
&=& \mu^{(\rm BP)}_v({\bf 1}).
\end{eqnarray*}

Finally, we have
\begin{eqnarray*}
\mu_v^{\rm (PTP)}(*) &=& \gamma\prod\limits_{c\in
C(v)}\rho^{\rm
norm(PTP)}_{c\rightarrow v}(*)\\
&=& \gamma\prod\limits_{c\in C(v)}\rho^{\rm
(BP)}_{c\rightarrow v}(**)\\
&=& \mu^{(\rm BP)}_v(*),
\end{eqnarray*}
which proves the summary correspondence.
\end{proof}

\subsection{State-Decoupled BP}

In this subsection, we will consider reducing PTP from BP
for $3$-COL problems, where we only focus on the non-weighted version of PTP,
namely that each weighting function $\omega_v$ is defined as
\begin{equation}
\label{equ:sd_bp_3col_weight}
\omega_v(a|b):=[a=b].
\end{equation}
This gives the form of BP messages in the form specified in the following lemma, easily obtainable from BP update equations
(\ref{equ:bp_l_mssg}) to (\ref{equ:bp_s_mssg}).

\begin{lem}
\label{lem:BP_3COL}
The BP message-update rule for 3-COL problems is as follow:

\begin{eqnarray}
\label{equ:bp_3col_l_i_ij}\lambda_{v\rightarrow c}({\bf i},{\bf ij})
&:=& \prod_{b\in C(v)\setminus\{c\}}(\rho_{b\rightarrow v}({\bf
i},{\bf ij})+\rho_{b\rightarrow v}({\bf i},{\bf
ik})+\rho_{b\rightarrow
v}({\bf i},{\bf ijk}))\nonumber\\
&& - \prod_{b\in C(v)\setminus\{c\}}(\rho_{b\rightarrow v}({\bf
i},{\bf ij})+\rho_{b\rightarrow v}({\bf i},{\bf ijk}))\\
\label{equ:bp_3col_l_i_ijk}\lambda_{v\rightarrow c}({\bf i},{\bf
ijk}) &:=& \prod_{b\in C(v)\setminus\{c\}}(\rho_{b\rightarrow
v}({\bf i},{\bf ij})+\rho_{b\rightarrow v}({\bf i},{\bf
ik})+\rho_{b\rightarrow
v}({\bf i},{\bf ijk}))\nonumber\\
&& - \prod_{b\in C(v)\setminus\{c\}}(\rho_{b\rightarrow v}({\bf
i},{\bf ij})+\rho_{b\rightarrow v}({\bf i},{\bf ijk}))\nonumber\\
&& - \prod_{b\in C(v)\setminus\{c\}}(\rho_{b\rightarrow v}({\bf
i},{\bf ik})+\rho_{b\rightarrow v}({\bf i},{\bf ijk}))+\prod_{b\in
C(v)\setminus\{c\}}\rho_{b\rightarrow v}({\bf i},{\bf ijk})\\
\label{equ:bp_3col_l_ij_ij}\lambda_{v\rightarrow c}({\bf ij},{\bf
ij}) &:=& \prod_{b\in C(v)\setminus\{c\}}(\rho_{b\rightarrow v}({\bf
ij},{\bf
ij})+\rho_{b\rightarrow v}({\bf ij},{\bf ijk}))\\
\label{equ:bp_3col_l_ij_ijk}\lambda_{v\rightarrow c}({\bf ij},{\bf
ijk}) &:=& \prod_{b\in C(v)\setminus\{c\}}(\rho_{b\rightarrow
v}({\bf ij},{\bf ij})+\rho_{b\rightarrow v}({\bf ij},{\bf
ijk}))-\prod_{b\in
C(v)\setminus\{c\}}\rho_{b\rightarrow v}({\bf ij},{\bf ijk})\\
\label{equ:bp_3col_l_ijk_ijk}\lambda_{v\rightarrow c}({\bf ijk},{\bf
ijk}) &:=& \prod_{b\in
C(v)\setminus\{c\}}\rho_{b\rightarrow v}({\bf ijk},{\bf ijk})
\end{eqnarray}
\begin{eqnarray}
\label{equ:bp_3col_r_i_ij}\rho_{c\rightarrow v}({\bf i},{\bf ij})
&:=& \lambda_{V(c)\setminus\{v\}\rightarrow c}({\bf k},{\bf jk})\\
\label{equ:bp_3col_r_i_ijk}\rho_{c\rightarrow v}({\bf i},{\bf ijk})
&:=& \lambda_{V(c)\setminus\{v\}\rightarrow c}({\bf jk},{\bf jk})\\
\label{equ:bp_3col_r_ij_ij}\rho_{c\rightarrow v}({\bf ij},{\bf ij})
&:=& \lambda_{V(c)\setminus\{v\}\rightarrow c}({\bf k},{\bf ijk})\\
\label{equ:bp_3col_r_ij_ijk}\rho_{c\rightarrow v}({\bf ij},{\bf
ijk}) &:=& \lambda_{V(c)\setminus\{v\}\rightarrow c}({\bf ij},{\bf
ijk})+ \lambda_{V(c)\setminus\{v\}\rightarrow c}({\bf jk},{\bf
ijk})+ \lambda_{V(c)\setminus\{v\}\rightarrow c}({\bf ik},{\bf
ijk})\nonumber\\
&& +\lambda_{V(c)\setminus\{v\}\rightarrow c}({\bf ijk},{\bf ijk})\\
\label{equ:bp_3col_r_ijk_ijk}\rho_{c\rightarrow v}({\bf ijk},{\bf
ijk}) &:=& \lambda_{V(c)\setminus\{v\}\rightarrow c}({\bf ij},{\bf
ijk})+ \lambda_{V(c)\setminus\{v\}\rightarrow c}({\bf jk},{\bf
ijk})+ \lambda_{V(c)\setminus\{v\}\rightarrow c}({\bf ik},{\bf
ijk})\nonumber\\
&& +\lambda_{V(c)\setminus\{v\}\rightarrow c}({\bf ijk},{\bf ijk})
\end{eqnarray}
\begin{eqnarray}
\label{equ:bp_3col_s_i}\mu_v({\bf i}) &:=& \prod_{c\in
C(v)}(\rho_{c\rightarrow v}({\bf i},{\bf ij})+\rho_{c\rightarrow
v}({\bf i},{\bf ik})+\rho_{c\rightarrow
v}({\bf i},{\bf ijk}))\nonumber\\
&& -\prod_{c\in C(v)}(\rho_{c\rightarrow v}({\bf i},{\bf ij})
+\rho_{c\rightarrow v}({\bf i},{\bf ijk}))\nonumber\\
&& -\prod_{c\in C(v)}(\rho_{c\rightarrow v}({\bf i},{\bf ik})
+\rho_{c\rightarrow v}({\bf i},{\bf ijk}))+\prod_{c\in
C(v)}\rho_{c\rightarrow v}({\bf i},{\bf ijk})\\
\label{equ:bp_3col_s_ij}\mu_v({\bf ij}) &:=& \prod_{c\in
C(v)}(\rho_{c\rightarrow v}({\bf ij},{\bf ij})+\rho_{c\rightarrow
v}({\bf ij},{\bf ijk}))-\prod_{c\in C(v)}\rho_{c\rightarrow v}({\bf
ij},{\bf
ijk})\\
\label{equ:bp_3col_s_ijk}\mu_v({\bf ijk}) &:=& \prod_{c\in
C(v)}\rho_{c\rightarrow v}({\bf ijk},{\bf ijk}).
\end{eqnarray}

\end{lem}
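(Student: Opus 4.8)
The plan is to specialize the general BP update rule of Lemma~\ref{lem:MRF_BP} (equations (\ref{equ:bp_l_mssg})--(\ref{equ:bp_s_mssg})) to the $3$-COL structure under the non-weighted weighting function $\omega_v(a|b)=[a=b]$ of (\ref{equ:sd_bp_3col_weight}). The first step is to make the forced-token function explicit. Since every $3$-COL constraint $\Gamma_c$ has degree $2$, say $V(c)=\{u,v\}$, the function ${\tt F}_c$ takes a single token argument, and from $\Gamma_c=\chi^{\{u,v\}}\setminus\{(1_u,1_v),(2_u,2_v),(3_u,3_v)\}$ one reads off directly that ${\tt F}_c({\bf i})={\bf jk}$, ${\tt F}_c({\bf ij})={\bf ijk}$ and ${\tt F}_c({\bf ijk})={\bf ijk}$. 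In particular every right state $s^R_{v,c}$, being an image of ${\tt F}_c$, is never a singleton: it is a pair ${\bf jk}$ (exactly when the opposite left state is a singleton ${\bf i}$) or ${\bf ijk}$. Combining this with Lemma~\ref{lem:s_y_relation}, which gives $s^L_{v,c}=y_v\subseteq s^R_{v,c}$ for every valid configuration, leaves exactly the five state types $({\bf i},{\bf ij})$, $({\bf ij},{\bf ij})$, $({\bf i},{\bf ijk})$, $({\bf ij},{\bf ijk})$, $({\bf ijk},{\bf ijk})$ in the support of $\lambda_{v\rightarrow c}$ and $\rho_{c\rightarrow v}$; these are precisely the states appearing in (\ref{equ:bp_3col_l_i_ij})--(\ref{equ:bp_3col_l_ijk_ijk}).

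For the right messages (\ref{equ:bp_3col_r_i_ij})--(\ref{equ:bp_3col_r_ijk_ijk}) the derivation is a direct substitution into (\ref{equ:bp_r_mssg}). Because $V(c)\setminus\{v\}=\{u\}$ is a singleton, the product collapses to the single factor $\lambda_{u\rightarrow c}(s^L_{u,c},{\tt F}_c(s^L_{v,c}))$; the indicator $[s^R_{v,c}={\tt F}_c(s^L_{u,c})]$ together with the support restriction determines the admissible $s^L_{u,c}$ (e.g.\ $s^R_{v,c}={\bf ij}$ forces $s^L_{u,c}={\bf k}$), while ${\tt F}_c(s^L_{v,c})$ supplies the right-state argument of the surviving $\lambda$. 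Reading off each of the five target right states then yields the stated equations; the only summations over several admissible $s^L_{u,c}$ arise in the $({\bf ij},{\bf ijk})$ and $({\bf ijk},{\bf ijk})$ cases, where ${\tt F}_c(s^L_{v,c})={\bf ijk}$ admits every pair and ${\bf ijk}$ as the opposite left state.

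The computational heart, and the step most prone to error, is the left and summary messages (\ref{equ:bp_3col_l_i_ij})--(\ref{equ:bp_3col_l_ijk_ijk}) and (\ref{equ:bp_3col_s_i})--(\ref{equ:bp_3col_s_ijk}). Here $\omega_v(a|b)=[a=b]$ turns the weighting factor in (\ref{equ:bp_l_mssg}) into the indicator $\left[s^L_{v,c}=\bigcap_{b\in C(v)}s^R_{v,b}\right]$, which forces the target left state to equal the intersection of the fixed right state $s^R_{v,c}$ with all summed-over incoming right states $s^R_{v,b}$, $b\ne c$. Since all these $\rho_{b\rightarrow v}$ carry the common left state $s^L_{v,c}$, each $s^R_{v,b}$ is a pair or ${\bf ijk}$ containing $s^L_{v,c}$. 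I would then translate ``the intersection equals the target'' into a color-exclusion count and apply inclusion--exclusion over which factors drop a given color. For instance, for the target $({\bf i},{\bf ijk})$ the intersection equals ${\bf i}$ iff at least one $s^R_{v,b}$ excludes $j$ (equals ${\bf ik}$) \emph{and} at least one excludes $k$ (equals ${\bf ij}$), and the two-set inclusion--exclusion produces exactly the four-term product expression of (\ref{equ:bp_3col_l_i_ijk}); the one-color targets ($({\bf i},{\bf ij})$ and $({\bf ij},{\bf ijk})$) give a two-term difference, and the full-intersection targets give a single product.

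The main obstacle is organizing this inclusion--exclusion consistently across all five left states, carefully using that right states are never singletons and that the common incoming left state restricts the admissible $s^R_{v,b}$ to exactly the pairs/full containing it. Once this bookkeeping is fixed, the summary equations (\ref{equ:bp_3col_s_i})--(\ref{equ:bp_3col_s_ijk}) follow from (\ref{equ:bp_s_mssg}) by the identical argument, the only change being that the product runs over all of $C(v)$ rather than over $C(v)\setminus\{c\}$ and the left-state argument of each $\rho$ is $y_v$.
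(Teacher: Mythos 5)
Your proposal is correct and is exactly the derivation the paper intends (the paper gives no written proof, asserting the lemma is "easily obtainable" from (\ref{equ:bp_l_mssg})--(\ref{equ:bp_s_mssg})): you correctly compute ${\tt F}_c({\bf i})={\bf jk}$, ${\tt F}_c({\bf ij})={\tt F}_c({\bf ijk})={\bf ijk}$, identify the five supported state types via Lemma~\ref{lem:s_y_relation}, collapse the degree-$2$ right-message sum, and handle the left/summary messages by the inclusion--exclusion over which incoming right states exclude a given color. No gaps.
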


Before we begin to consider the BP-to-PTP reduction for $3$-COL
problems, it is helpful to take a closer look at the BP-to-PTP
reduction mechanism for $k$-SAT problems.

In Theorem \ref{thm:sp_as_bp_ksat}, one may notice the two conditions
governing the BP-to-PTP reduction for $k$-SAT problems, namely, the
initialization condition and the
normalization condition.
It is
arguable that the normalization condition imposed on the BP messages,
although serving to simplify the form
of BP messages and possibly to alter the interpretation of the
messages, does not have a critical impact on the message-passing
dynamics. This is because the normalization condition merely
involves a scaling operation, without which BP messages and PTP
messages for $k$-SAT would still be equivalent up to a scaling
factor. On the other hand, the initialization condition in Theorem
\ref{thm:sp_as_bp_ksat} plays an important role on the
message-passing dynamics. In essence, the initialization condition
assures that any right message depends only on the right state it
involves. Using the ``intention-command'' analogy, in which one
views each right state as storing the ``command'' sent from a constraint and
each left state as storing the ``intention'' of
a variable, this condition simply restricts that the
{\em distribution} of the command sent
to any variable does {\em not}
depend on the intention of the variable.
It is remarkable that this
interpretation of the initialization condition in Theorem
\ref{thm:sp_as_bp_ksat} (or (\ref{eq:reduce_ksat_initial_condition}))
is  consistent with the
PTP message-passing rule, in which any right message (i.e., outgoing
distribution of command) sent to a variable
is independent of (or, not a function of,)
the incoming intention from that variable. This is however not the case for
the right messages of BP in general.

We are then motivated to formalize
this condition for general CSPs as what we call the
``state-decoupling'' condition and impose it on the right messages of BP, so
as to achieve a consistency with PTP. It is intuitively sensible
that such a consistency is needed in the reduction of PTP
from BP.

\begin{defn}[State-Decoupling Condition] For an arbitrary CSP and at any
given iteration, the BP messages based on the MRF formalism defined
by (\ref{equ:leftfunction}), (\ref{equ:rightfunction}), and
(\ref{equ:globalfunction}) are said to satisfy the state-decoupling
condition if for every $(v-c)$, the
right message $\rho_{c\rightarrow v}(s_{v,c})$ is only a function of
the right state $s_{v,c}^R$, namely, if for any fixed $s^R_{v,c}\in
\left(\chi^*\right)^{\{v\}}$ and any $s^L_{v,c}\subset s^R_{v,c}$,
$\rho_{c\rightarrow v}(s^L_{v,c}, s^R_{v,c})=\rho_{c\rightarrow
v}(s^R_{v,c}, s^R_{v,c})$.
\end{defn}

It is clear that the initialization condition for BP-to-PTP
reduction for $k$-SAT in Theorem \ref{thm:sp_as_bp_ksat} is
equivalent to this condition, where we note that the condition in
Theorem \ref{thm:sp_as_bp_ksat} only puts restrictions on the right
messages with right state equal to $*$, since for the remaining case
with right state equal to ${\bf L}$ this condition is trivially
satisfied.

It is interesting to observe, as shown in Proposition \ref{prop:initial_condition_ksat}, that
for $k$-SAT problems, as long as
the state-decoupling condition is imposed in the initialization of the BP
messages, the condition is preserved in every iteration. This serves as the
basis for BP to reduce to PTP as shown in Theorem \ref{thm:sp_as_bp_ksat}
and its proof.
For $3$-COL problems, however, the corresponding result to Proposition
\ref{prop:initial_condition_ksat} does not hold.

\begin{lem}
For $3$-COL problems, if the state-decoupling condition holds for BP messages
both in iteration $l$ and in iteration $l+1$, then the right message
in iteration $l$ must satisfy for every $(v-c)$
\[
\rho_{c\rightarrow v}(s^L, s^R)=0
\]
as long as right state $s^R \neq {\bf 123}$.
\end{lem}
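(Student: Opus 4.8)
The plan is to turn the two decoupling hypotheses into algebraic identities among the iteration-$l$ right messages and then read off the conclusion from non-negativity. I would carry this out in three stages.

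First I would pin down the admissible state configurations. By Lemma~\ref{lem:s_y_relation} every valid state has $s^L\subseteq s^R$, and for a $3$-COL constraint the forced token ${\tt F}_c$ sends a singleton ${\bf i}$ to the pair ${\bf jk}$ and sends every pair or the full token to ${\bf 123}$; hence every \emph{right} state is either a pair ${\bf ij}$ or ${\bf 123}$, never a singleton. The statement to prove is exactly that the messages carried on pair-valued right states vanish at iteration $l$. Since state-decoupling at iteration $l$ makes $\rho_{c\rightarrow v}(s^L,s^R)$ depend only on $s^R$, I would abbreviate the common iteration-$l$ values by $p_{b\rightarrow v}({\bf ij})$ (right state the pair ${\bf ij}$) and $q_{b\rightarrow v}$ (right state ${\bf 123}$); the goal becomes $p_{c\rightarrow v}(\pi)=0$ for every $(v-c)$ and every pair $\pi$.

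Second I would bring in decoupling at iteration $l+1$. Writing $c=\{v,w\}$ with $w=V(c)\setminus\{v\}$, the right-update equations (\ref{equ:bp_3col_r_i_ij}), (\ref{equ:bp_3col_r_i_ijk}) and (\ref{equ:bp_3col_r_ij_ij}) show that decoupling of $\rho_{c\rightarrow v}$ on a pair right state at iteration $l+1$ is equivalent to an equality of the form $\lambda^{(l+1)}_{w\rightarrow c}(\cdot,\cdot)$ being constant over its three relevant right-state arguments. I would substitute the left-update equations (\ref{equ:bp_3col_l_i_ij}) and (\ref{equ:bp_3col_l_i_ijk}), evaluated on the already-decoupled iteration-$l$ right messages $p_{b\rightarrow w},q_{b\rightarrow w}$, into these equalities. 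After the shared product terms cancel, each equality collapses to a product identity
\[
\prod_{b\in C(w)\setminus\{c\}} ( p_{b\rightarrow w}(\pi) + q_{b\rightarrow w} )
= \prod_{b\in C(w)\setminus\{c\}} q_{b\rightarrow w} ,
\]
and letting the distinguished color range over all three choices produces this identity for each of the three pairs $\pi$.

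Third I would extract $p=0$ from non-negativity. Every factor on the left dominates the matching factor on the right, yet the two products agree; when $w$ has degree $2$ the product is a single factor and $p_{b\rightarrow w}(\pi)=0$ is immediate, while when the $q_{b\rightarrow w}$ are all strictly positive, termwise domination forces each factor tight, i.e. $p_{b\rightarrow w}(\pi)=0$. Because $|C(v)|\ge 2$, every target pair $(v-c)$ occurs inside such a product for a suitably excluded neighbour, so the pair-messages vanish everywhere, which is the claim. The main obstacle is the degenerate regime in which two or more of the $q_{b\rightarrow w}$ vanish: then both sides of the identity are $0$ and it carries no information. I would first absorb the case of a \emph{single} vanishing $q$ by varying the excluded neighbour $c$ over all of $C(w)$ and combining the leave-one-out identities; the residual case of $\ge 2$ vanishing $q$'s is precisely the ``trivial'' regime in which the right messages place no mass on ${\bf 123}$ and the propagated messages degenerate to non-informative ones, and this is the part that must be argued separately, consistent with the paper's later remark that the condition can hold across iterations only in such degenerate situations.
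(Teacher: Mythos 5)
Your proposal follows essentially the same route as the paper's own proof: decoupling at iteration $l+1$ forces the two expansions of $\rho^{(l+1)}_{c\rightarrow v}$ on a pair right state and on the full right state to agree, the shared products cancel, and what remains is exactly the identity $\prod_{b}\left(p_{b}(\pi)+q_{b}\right)=\prod_{b}q_{b}$ over $b\in C(w)\setminus\{c\}$, from which non-negativity gives the vanishing of the pair-valued right messages. If anything you are more careful than the paper at the final step: the paper simply asserts that non-negativity forces $\rho^{(l)}_{b\rightarrow w}({\bf k},{\bf ik})=0$, silently passing over the degenerate regime $\prod_{b}q_{b}=0$ that you explicitly isolate as the only place where the product identity is uninformative.
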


\begin{proof}
In 3-COL problems, the state-decoupling condition can be expressed
as
\begin{eqnarray*}
\label{equ:3col_sd_ij}\rho_{c\rightarrow v}({\bf i},{\bf ij}) &=&
\rho_{c\rightarrow v}({\bf ij},{\bf ij})\\
\label{equ:3col_sd_ijk}\rho_{c\rightarrow v}({\bf i},{\bf ijk}) &=&
\rho_{c\rightarrow v}({\bf ij},{\bf ijk}) = \rho_{c\rightarrow
v}({\bf ijk},{\bf ijk}).
\end{eqnarray*}

Note that we only need to prove the Lemma for $s^R$ being a pair of assignments,since when $s^R$ is a singleton, all right messages equal $0$ by the construction of the MRF and Lemma \ref{lem:BP_3COL} describing the BP message-update rule for $3$-COL.

In iteration $l+1$, following 3-COL message-update equations
(\ref{equ:bp_3col_l_i_ij}) to (\ref{equ:bp_3col_r_ijk_ijk}) and using
a superscript to denote the iteration number, we have

\begin{eqnarray}
\rho^{(l+1)}_{c\rightarrow v}({\bf i},{\bf ij}) &=&
\lambda^{(l+1)}_{V(c)\setminus\{v\}\rightarrow c}({\bf k},{\bf
jk}) \nonumber \\
&=& \prod_{b\in
C(V(c)\setminus\{v\})\setminus\{c\}}\left(\rho^{(l)}_{b\rightarrow
V(c)\setminus\{v\}}({\bf k},{\bf jk})+\rho^{(l)}_{b\rightarrow
V(c)\setminus\{v\}}({\bf k},{\bf
ik})+\rho^{(l)}_{b\rightarrow V(c)\setminus\{v\}}({\bf k},{\bf ijk})\right)\nonumber \\
\label{eq:AA}
&& - \prod_{b\in
C(V(c)\setminus\{v\})\setminus\{c\}}\left(\rho^{(l)}_{b\rightarrow
V(c)\setminus\{v\}}({\bf k},{\bf jk})+\rho^{(l)}_{b\rightarrow
V(c)\setminus\{v\}}({\bf k},{\bf
ijk})\right),\\
\rho^{(l+1)}_{c\rightarrow v}({\bf ij},{\bf ij}) &=&
\lambda^{(l+1)}_{V(c)\setminus\{v\}\rightarrow c}({\bf k},{\bf
ijk})\nonumber \\
&=& \prod_{b\in
C(V(c)\setminus\{v\})\setminus\{c\}}\left(\rho^{(l)}_{b\rightarrow
V(c)\setminus\{v\}}({\bf k},{\bf jk})+\rho^{(l)}_{b\rightarrow
V(c)\setminus\{v\}}({\bf k},{\bf
ik})+\rho^{(l)}_{b\rightarrow V(c)\setminus\{v\}}({\bf k},{\bf ijk})\right)\nonumber \\
&& - \prod_{b\in
C(V(c)\setminus\{v\})\setminus\{c\}}\left(\rho^{(l)}_{b\rightarrow
V(c)\setminus\{v\}}({\bf k},{\bf ik})+\rho^{(l)}_{b\rightarrow
V(c)\setminus\{v\}}({\bf k},{\bf
ijk})\right)\nonumber \\
&& - \prod_{b\in
C(V(c)\setminus\{v\})\setminus\{c\}}\left(\rho^{(l)}_{b\rightarrow
V(c)\setminus\{v\}}({\bf k},{\bf jk})+\rho^{(l)}_{b\rightarrow
V(c)\setminus\{v\}}({\bf k},{\bf ijk})\right)\nonumber \\
\label{eq:BB}
&& + \prod_{b\in
C(V(c)\setminus\{v\})\setminus\{c\}}\rho^{(l)}_{b\rightarrow
V(c)\setminus\{v\}}({\bf k},{\bf ijk}).
\end{eqnarray}

Now suppose that the state-decoupling condition as expressed above can be satisfied both in iteration $l$ and
in iteration $l+1$. Then we may equate the right-hand sides of (\ref{eq:AA})
and (\ref{eq:BB}), namely,

\begin{eqnarray*}
&&\prod_{b\in
C(V(c)\setminus\{v\})\setminus\{c\}}\left(\rho^{(l)}_{b\rightarrow
V(c)\setminus\{v\}}({\bf k},{\bf jk})+\rho^{(l)}_{b\rightarrow
V(c)\setminus\{v\}}({\bf k},{\bf ik})+\rho^{(l)}_{b\rightarrow
V(c)\setminus\{v\}}({\bf k},{\bf ijk})\right)\\
&& - \prod_{b\in
C(V(c)\setminus\{v\})\setminus\{c\}}\left(\rho^{(l)}_{b\rightarrow
V(c)\setminus\{v\}}({\bf k},{\bf jk})+\rho^{(l)}_{b\rightarrow
V(c)\setminus\{v\}}({\bf k},{\bf
ijk})\right)\\
&=& \prod_{b\in
C(V(c)\setminus\{v\})\setminus\{c\}}\left(\rho^{(l)}_{b\rightarrow
V(c)\setminus\{v\}}({\bf k},{\bf jk})+\rho^{(l)}_{b\rightarrow
V(c)\setminus\{v\}}({\bf k},{\bf ik})+\rho^{(l)}_{b\rightarrow
V(c)\setminus\{v\}}({\bf k},{\bf ijk})\right)\\
&& - \prod_{b\in
C(V(c)\setminus\{v\})\setminus\{c\}}\left(\rho^{(l)}_{b\rightarrow
V(c)\setminus\{v\}}({\bf k},{\bf ik})+\rho^{(l)}_{b\rightarrow
V(c)\setminus\{v\}}({\bf k},{\bf
ijk})\right)\\
&& - \prod_{b\in
C(V(c)\setminus\{v\})\setminus\{c\}}\left(\rho^{(l)}_{b\rightarrow
V(c)\setminus\{v\}}({\bf k},{\bf jk})+\rho^{(l)}_{b\rightarrow
V(c)\setminus\{v\}}({\bf k},{\bf ijk})\right) + \prod_{b\in
C(V(c)\setminus\{v\})\setminus\{c\}}\rho^{(l)}_{b\rightarrow
V(c)\setminus\{v\}}({\bf
k},{\bf ijk}),
\end{eqnarray*}
which implies
\begin{eqnarray*}
&&\prod_{b\in
C(V(c)\setminus\{v\})\setminus\{c\}}\left(\rho^{(l)}_{b\rightarrow
V(c)\setminus\{v\}}({\bf k},{\bf jk})+\rho^{(l)}_{b\rightarrow
V(c)\setminus\{v\}}({\bf k},{\bf
ijk})\right)\\
&=& \prod_{b\in
C(V(c)\setminus\{v\})\setminus\{c\}}\left(\rho^{(l)}_{b\rightarrow
V(c)\setminus\{v\}}({\bf k},{\bf ik})+\rho^{(l)}_{b\rightarrow
V(c)\setminus\{v\}}({\bf k},{\bf ijk})\right)\\
&& + \prod_{b\in
C(V(c)\setminus\{v\})\setminus\{c\}}\left(\rho^{(l)}_{b\rightarrow
V(c)\setminus\{v\}}({\bf k},{\bf jk})+\rho^{(l)}_{b\rightarrow
V(c)\setminus\{v\}}({\bf k},{\bf ijk})\right) - \prod_{b\in
C(V(c)\setminus\{v\})\setminus\{c\}}\rho^{(l)}_{b\rightarrow
V(c)\setminus\{v\}}({\bf
k},{\bf ijk}).
\end{eqnarray*}
Since every right message must be non-negative, when the state-decoupling
condition is satisfied in iteration $l$, the only way to make the above
equality hold is the case where
\[\rho^{(l)}_{b\rightarrow V(c)\setminus\{v\}}({\bf
k},{\bf ik})=0.
\]

Under the state-decoupling condition, this also means $\rho_{b\rightarrow V(c)
\setminus \{v\}}^{(l)}({\bf ik}, {\bf ik})=0$. Thus we establish this lemma.

\end{proof}

This lemma suggests that when the BP messages satisfy the state-decoupling
condition in two consecutive iterations, then the right messages must take a
trivial form --- equal to $[s^R={\bf 123}]$ up to scale, and contain no
information.

At this point, one is left with either the option of concluding that
PTP (or SP) is {\em not} an instance of BP for $3$-COL problems (and
hence for general CSPs) or the option of doubting the usefulness of
the state-decoupling condition in BP-to-SP reduction. In the remainder of this
subsection, we will clear this doubt and assert the usefulness of the state-decoupling condition by showing that when the state-decoupling condition is {\em manually} imposed on the BP messages in each iteration, BP still reduces to PTP
for $3$-COL problems. That will allow us to conclude that PTP (or SP) is not
a special case of BP.

To force the state-decoupling condition to be satisfied in each BP iteration,
now we modify the  message-passing rule of BP on the Forney graph
representation of general CSPs, and introduce a ``new''
message-passing procedure which we refer to as the {\em
state-decoupled BP} or SDBP.
We note that introducing this ``new'' message-passing procedure is solely
for the purpose of verifying the usefulness of the state-decoupling
condition and hopefully arriving at a unified reduction mechanism
for PTP to reduce from BP (or more precisely from SDBP). Beyond this
purpose, we have no intention to justify the introduction of SDBP.

Identical to BP at local function vertices, SDBP differs from BP in
that messages passed from the right functions need an additional
processing (so that the state-decoupling condition is satisfied)
before they are passed to the left functions. In SDBP, there are
three kinds of messages: {\em right message} $\rho_{c\rightarrow v}$
is computed at right function $f_c$ to pass along the edge to $g_v$;
{\em state-decoupled right message} $\rho_{c\rightarrow v}^*$ is
computed at the edge connecting $f_c$ and $g_v$, which satisfies the
state-decoupling condition, computed only based on the right message
$\rho_{c\rightarrow v}$ on the same edge and to be passed to left
function $g_v$; {\em left message} $\lambda_{v\rightarrow c}$ is
computed at the left function $g_v$ to pass along the edge
connecting to $f_c$. The precise definition of SDBP message-update
rule is given next.

\begin{defn}
The SDBP message-update rule is defined as follows.
\begin{eqnarray}
\label{equ:sd_bp_leftmessage} \lambda_{v\rightarrow
c}(s^L_{v,c},s^R_{v,c}) &:=&
\sum_{s^R_{v,C(v)\setminus\{c\}}}\omega_v\left(s^L_{v,c}\left|\bigcap_{b\in
C(v)}s^R_{v,b}\right.\right)\cdot\prod_{b\in
C(v)\setminus\{c\}}\rho^*_{b\rightarrow
v}(s^R_{v,b})\\
\label{equ:sd_bp_rightmessage}\rho^*_{c\rightarrow v}(s^R_{v,c})
&:=& \delta\cdot\rho_{c\rightarrow
v}(s^R_{v,c},s^R_{v,c})\\
\label{equ:sd_bp_r_mssg}\rho_{c\rightarrow v}(s^L_{v,c},s^R_{v,c})
&:=&
\!\!\!\!\!\!\sum_{s^L_{V(c)\setminus\{v\},c}}\left[s^R_{v,c}={\tt
F}_c(s^L_{V(c)\setminus\{v\},c})\right]\prod_{u\in
V(c)\setminus\{v\}}\lambda_{u\rightarrow c}(s^L_{u,c},{\tt
F}_c(s^L_{V(c)\setminus \{u\},c}))\\
\label{equ:sd_bp_summary}\mu_v(y_v) &:=&
\sum_{s^R_{v,C(v)}}\omega_v\left(y_v\left|\bigcap_{c\in
C(v)}s^R_{v,c}\right.\right)\prod_{c\in C(v)}\rho^*_{c\rightarrow
v}(s^R_{v,c})
\end{eqnarray}
where $\delta=1/\sum_{s^R_{v,c}\in \left(
\chi^*\right)^{\{v\}}}\rho_{c\rightarrow v}(s^R_{v,c},s^R_{v,c})$.
\end{defn}

Comparing  this definition with the BP message-update rule in Lemma \ref{lem:MRF_BP}, the following remarks are in order.
First, the expression of right messages $\rho$ in terms of left
messages $\lambda$ is identical to that in BP. Second, each
state-decoupled message $\rho^*_{c\rightarrow v}$ may be regarded as
a function of $(s^L_{v,c}, s^R_{v,c})$ but the value of the function
only depends the $s^R_{v,c}$ component, namely that the
(state-decoupled) right message satisfies the state-decoupling
condition.
Furthermore, the expression of $\lambda$ in terms of
$\rho^*$  is precisely the same as the expression of $\lambda$ in
terms of $\rho$ in BP\footnote{Although it is possible to formulate
SDBP in more compact form by, for example, suppressing $\rho$ and
expressing the message-update rule only using $\rho^*$ and
$\lambda$, we feel the current way of formulating SDBP makes it
easier to compare SDBP with BP.}.

Following this definition, the next lemma summarizes the SDBP message-update rule for $3$-COL problems.

\begin{lem}
Let $\{\omega_v:v\in V\}$ in 3-COL problems be defined as in
(\ref{equ:sd_bp_3col_weight}). The SDBP message-update rule is then
:

\begin{eqnarray}
\label{equ:sd_bp_3col_l_i_ij}\lambda_{v\rightarrow c}({\bf i},{\bf
ij}) &:=& \prod_{b\in C(v)\setminus\{c\}}\left(\rho^*_{b\rightarrow
v}({\bf ij})+\rho^*_{b\rightarrow v}({\bf ik})+\rho^*_{b\rightarrow
v}({\bf
ijk})\right)\nonumber\\
&&-\prod_{b\in C(v)\setminus\{c\}}\left(\rho^*_{b\rightarrow v}({\bf
ij})+\rho^*_{b\rightarrow v}({\bf
ijk})\right)\\
\label{equ:sd_bp_3col_l_i_ijk}\lambda_{v\rightarrow c}({\bf i},{\bf
ijk}) &:=& \prod_{b\in C(v)\setminus\{c\}}\left(\rho^*_{b\rightarrow
v}({\bf ij})+\rho^*_{b\rightarrow v}({\bf ik})+\rho^*_{b\rightarrow
v}({\bf ijk})\right) -\prod_{b\in
C(v)\setminus\{c\}}\left(\rho^*_{b\rightarrow v}({\bf
ij})+\rho^*_{b\rightarrow v}({\bf
ijk})\right)\nonumber\\
&& -\prod_{b\in C(v)\setminus\{c\}}\left(\rho^*_{b\rightarrow
v}({\bf ik})+\rho^*_{b\rightarrow v}({\bf ijk})\right)+\prod_{b\in
C(v)\setminus\{c\}}\rho^*_{b\rightarrow
v}({\bf ijk})\\
\label{equ:sd_bp_3col_l_ij_ij}\lambda_{v\rightarrow c}({\bf ij},{\bf
ij}) &:=& \prod_{b\in C(v)\setminus\{c\}}\left(\rho^*_{b\rightarrow
v}({\bf ij})+\rho^*_{b\rightarrow v}({\bf
ijk})\right)\\
\label{equ:sd_bp_3col_l_ij_ijk}\lambda_{v\rightarrow c}({\bf
ij},{\bf ijk}) &:=& \prod_{b\in
C(v)\setminus\{c\}}\left(\rho^*_{b\rightarrow v}({\bf
ij})+\rho^*_{b\rightarrow v}({\bf ijk})\right)-\prod_{b\in
C(v)\setminus\{c\}}\rho^*_{b\rightarrow v}({\bf ijk})\\
\label{equ:sd_bp_3col_l_ijk_ijk}\lambda_{v\rightarrow c}({\bf
ijk},{\bf ijk}) &:=& \prod_{b\in
C(v)\setminus\{c\}}\rho^*_{b\rightarrow v}({\bf ijk})
\end{eqnarray}
\begin{eqnarray}
\label{equ:sd_bp_3col_r_*_ij_ij}\rho^*_{c\rightarrow v}({\bf
ij}) &:=& \delta\cdot\lambda_{V(c)\setminus\{v\}\rightarrow c}({\bf k},{\bf ijk})\\
\label{equ:sd_bp_3col_r_*_ijk_ijk}\rho^*_{c\rightarrow v}({\bf ijk})
&:=& \delta\cdot\left(\lambda_{V(c)\setminus\{v\}\rightarrow c}({\bf
ij},{\bf ijk})+\lambda_{V(c)\setminus\{v\}\rightarrow c}({\bf
ik},{\bf ijk}) +\lambda_{V(c)\setminus\{v\}\rightarrow c}({\bf
jk},{\bf
ijk})\right.\nonumber\\
&& \left.+\lambda_{V(c)\setminus\{v\}\rightarrow c}({\bf ijk},{\bf
ijk})\right)
\end{eqnarray}
\begin{eqnarray}
\label{equ:sd_bp_3col_s_i}\mu_v({\bf i}) &:=& \prod_{c\in
C(v)}(\rho^*_{c\rightarrow v}({\bf ij})+\rho^*_{c\rightarrow v}({\bf
ik})+\rho^*_{c\rightarrow v}({\bf ijk})) -\prod_{c\in
C(v)}(\rho^*_{c\rightarrow v}({\bf ij})
+\rho^*_{c\rightarrow v}({\bf ijk}))\nonumber\\
&& -\prod_{c\in C(v)}(\rho^*_{c\rightarrow v}({\bf ik})
+\rho^*_{c\rightarrow v}({\bf ijk}))+\prod_{c\in
C(v)}\rho^*_{c\rightarrow v}({\bf ijk})\\
\label{equ:sd_bp_3col_s_ij}\mu_v({\bf ij}) &:=& \prod_{c\in
C(v)}(\rho^*_{c\rightarrow v}({\bf ij})+\rho^*_{c\rightarrow v}({\bf
ijk}))-\prod_{c\in C(v)}\rho^*_{c\rightarrow v}({\bf
ijk})\\
\label{equ:sd_bp_3col_s_ijk}\mu_v({\bf ijk}) &:=& \prod_{c\in
C(v)}\rho^*_{c\rightarrow v}({\bf ijk}),
\end{eqnarray}
where $\delta$ is such that
\[
\rho^*_{c\rightarrow v}({\bf ijk})+
\sum\limits_{{\bf ij}}\rho^*_{c\rightarrow v}({\bf ij})=1.
\]
\end{lem}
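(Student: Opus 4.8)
The plan is to derive each stated equation directly from the general SDBP message-update rule (\ref{equ:sd_bp_leftmessage})--(\ref{equ:sd_bp_summary}) by specializing to the non-weighted choice $\omega_v(a|b)=[a=b]$ and exploiting two structural features of $3$-COL: that every constraint vertex has degree $2$, and that the forced-token map ${\tt F}_c$ takes a very simple form. First I would record the action of ${\tt F}_c$ for a degree-$2$ constraint $c=\{u,v\}$: a singleton ${\bf k}$ is mapped to the complementary pair ${\bf ij}$, while any pair ${\bf jk}$ and the full set ${\bf ijk}$ are both mapped to ${\bf ijk}$, since any two colors of the neighbor are compatible with all three colors of $v$. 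An immediate consequence, already noted in the preceding development, is that every right state produced by ${\tt F}_c$ is either a pair or the full set; hence $\rho_{c\to v}$ and $\rho^*_{c\to v}$ are supported only on right states ${\bf ij}$ and ${\bf ijk}$, and this is the support I would use throughout.

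For the left and summary messages I would substitute $\omega_v(a|b)=[a=b]$, which turns (\ref{equ:sd_bp_leftmessage}) and (\ref{equ:sd_bp_summary}) into sums over incoming right-state configurations whose intersection equals the prescribed left state $s^L_{v,c}$ (respectively the prescribed token $y_v$), each incoming state weighted by the appropriate $\rho^*$. Evaluating these sums is an inclusion--exclusion over which colors survive the intersection, and the key subtlety is that in a left message the fixed destination right state $s^R_{v,c}$ already contributes to the intersection, whereas in a summary message every state is summed. Thus $\lambda_{v\to c}({\bf ij},{\bf ij})$ reduces to a bare product, since the destination state ${\bf ij}$ by itself forces the exclusion of $k$, while $\lambda_{v\to c}({\bf ij},{\bf ijk})$ and $\mu_v({\bf ij})$ each require one subtraction to guarantee that at least one summed state excludes the remaining color; these account for (\ref{equ:sd_bp_3col_l_ij_ij}), (\ref{equ:sd_bp_3col_l_ij_ijk}), (\ref{equ:sd_bp_3col_l_ijk_ijk}) and (\ref{equ:sd_bp_3col_s_ij}), (\ref{equ:sd_bp_3col_s_ijk}). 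The singleton cases are where the real bookkeeping lives: forcing the intersection to equal ${\bf i}$ demands that every surviving state contain $i$ while both $j$ and $k$ are excluded, so I would subtract the configurations that fail to drop $j$ and those that fail to drop $k$ and add back those that drop neither. When the destination state is ${\bf ij}$ only one exclusion remains live, giving the three-term expression (\ref{equ:sd_bp_3col_l_i_ij}); when it is ${\bf ijk}$, or in the summary message where no state is fixed, both exclusions are active, giving the four-term alternating sums (\ref{equ:sd_bp_3col_l_i_ijk}) and (\ref{equ:sd_bp_3col_s_i}).

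For the right messages I would use the degree-$2$ property to collapse (\ref{equ:sd_bp_r_mssg}): the product over $V(c)\setminus\{v\}$ has the single term $u=V(c)\setminus\{v\}$, and the argument ${\tt F}_c(s^L_{V(c)\setminus\{u\},c})$ becomes ${\tt F}_c(s^L_{v,c})$. Applying the state-decoupling operation (\ref{equ:sd_bp_rightmessage}), which sets $s^L_{v,c}=s^R_{v,c}$, I would then enumerate the values of $s^L_{u,c}$ whose forced token matches the target right state: ${\tt F}_c(s^L_{u,c})={\bf ij}$ holds only for $s^L_{u,c}={\bf k}$, yielding (\ref{equ:sd_bp_3col_r_*_ij_ij}), whereas ${\tt F}_c(s^L_{u,c})={\bf ijk}$ holds for all four non-singleton left states, yielding the four-term sum (\ref{equ:sd_bp_3col_r_*_ijk_ijk}); in both cases the $\lambda_{u\to c}$ term carries right state ${\tt F}_c(s^R_{v,c})={\bf ijk}$. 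Finally I would read off $\delta$ from the stated normalization $\rho^*_{c\to v}({\bf ijk})+\sum_{\bf ij}\rho^*_{c\to v}({\bf ij})=1$. The main obstacle I anticipate is purely combinatorial rather than conceptual: keeping the inclusion--exclusion signs straight and correctly tracking how much of the intersection the fixed destination state $s^R_{v,c}$ already pins down, since an off-by-one in which color is excluded would silently corrupt the singleton left-message equations (\ref{equ:sd_bp_3col_l_i_ij})--(\ref{equ:sd_bp_3col_l_i_ijk}).
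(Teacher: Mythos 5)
Your proposal is correct and follows the same route the paper intends: the lemma is stated as an immediate specialization of the general SDBP update rule (\ref{equ:sd_bp_leftmessage})--(\ref{equ:sd_bp_summary}) to $3$-COL with $\omega_v(a|b)=[a=b]$, and your derivation — restricting right states to the forceable tokens $\{{\bf ij},{\bf ik},{\bf jk},{\bf ijk}\}$, handling the left and summary messages by inclusion--exclusion over which colors survive the intersection (with the destination right state already pinning down part of it in the left-message case), and collapsing the right-message sum using the degree-$2$ structure and the facts ${\tt F}_c({\bf k})={\bf ij}$, ${\tt F}_c({\bf jk})={\tt F}_c({\bf ijk})={\bf ijk}$ — supplies exactly the computation the paper leaves implicit.
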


It is now possible to establish a correspondence between PTP and
SDBP messages for $3$-COL problems.

\begin{thm}
For 3-COL problems, the correspondence between PTP and SDBP
message-update rules is
\begin{eqnarray}
\label{equ:ptp_sd_bp_l_i}\lambda^{(\rm PTP)}_{v\rightarrow c}({\bf
i})
&\leftrightarrow& \lambda^{(\rm SDBP)}_{v\rightarrow c}({\bf i},{\bf ijk})\\
\label{equ:ptp_sd_bp_l_ij}\lambda^{(\rm PTP)}_{v\rightarrow c}({\bf
ij})
&\leftrightarrow& \lambda^{(\rm SDBP)}_{v\rightarrow c}({\bf ij},{\bf ijk})\\
\label{equ:ptp_sd_bp_l_ijk}\lambda^{(\rm PTP)}_{v\rightarrow c}({\bf
ijk})
&\leftrightarrow& \lambda^{(\rm SDBP)}_{v\rightarrow c}({\bf ijk},{\bf ijk})\\
\label{equ:ptp_sd_bp_r_ij}\rho^{\rm norm (PTP)}_{c\rightarrow
v}({\bf ij})
&\leftrightarrow& \rho^{*(\rm SDBP)}_{c\rightarrow v}({\bf ij})\\
\label{equ:ptp_sd_bp_r_ijk}\rho^{\rm norm (PTP)}_{c\rightarrow
v}({\bf ijk}) &\leftrightarrow& \rho^{* (\rm SDBP)}_{c\rightarrow
v}({\bf ijk})\\
\label{equ:ptp_sd_bp_s_i}\mu^{(\rm {PTP})}_v({\bf i})
&\leftrightarrow&
\mu^{(\rm SDBP)}_v({\bf i})\\
\label{equ:ptp_sd_bp_s_ij}\mu^{(\rm {PTP})}_v({\bf ij})
&\leftrightarrow&
\mu^{(\rm SDBP)}_v({\bf ij})\\
\label{equ:ptp_sd_bp_s_ijk}\mu^{(\rm {PTP})}_v({\bf ijk})
&\leftrightarrow& \mu^{(\rm SDBP)}_v({\bf ijk}).
\end{eqnarray}
\end{thm}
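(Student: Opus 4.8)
The plan is to follow the same iteration-by-iteration strategy used in the proofs of Theorem \ref{thm:ptp_sp_equiv_ksat} and Theorem \ref{thm:sp_as_bp_ksat}: I will show that the ``left correspondence'' (\ref{equ:ptp_sd_bp_l_i})--(\ref{equ:ptp_sd_bp_l_ijk}) implies the ``right correspondence'' (\ref{equ:ptp_sd_bp_r_ij})--(\ref{equ:ptp_sd_bp_r_ijk}), and conversely that the right correspondence implies the left correspondence; together these establish the correspondence of passed messages inductively across iterations. I will then dispatch the ``summary correspondence'' (\ref{equ:ptp_sd_bp_s_i})--(\ref{equ:ptp_sd_bp_s_ijk}) separately. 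Throughout, I identify the PTP subscript $c$ with the edge $\{u,v\}$ and write $u$ for the neighbor $V(c)\setminus\{v\}$, exactly as in the proof of Theorem \ref{thm:PTPisSP_3col}.

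The direction ``right implies left'' is essentially by inspection. Assuming $\rho^{\rm norm(PTP)}_{c\rightarrow v}({\bf ij})=\rho^{*(\rm SDBP)}_{c\rightarrow v}({\bf ij})$ and $\rho^{\rm norm(PTP)}_{c\rightarrow v}({\bf ijk})=\rho^{*(\rm SDBP)}_{c\rightarrow v}({\bf ijk})$, I will observe that the PTP left-message equations (\ref{equ:ptp_3col_l_i})--(\ref{equ:ptp_3col_l_ijk}) and the SDBP left-message equations (\ref{equ:sd_bp_3col_l_i_ijk}), (\ref{equ:sd_bp_3col_l_ij_ijk}) and (\ref{equ:sd_bp_3col_l_ijk_ijk}) are \emph{identical} in parametric form once $\rho^{\rm norm(PTP)}$ is substituted for $\rho^{*}$. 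Hence the three SDBP left messages carrying right state ${\bf ijk}$ coincide with the corresponding PTP left messages, which is precisely the left correspondence. It is worth noting that only the SDBP left messages with $s^R_{v,c}={\bf ijk}$ enter the correspondence: since each $3$-COL constraint has degree $2$, the forced-token map ${\tt F}_c$ sends a singleton left state to a pair and every non-singleton left state to ${\bf ijk}$, so the state-decoupled right messages (\ref{equ:sd_bp_3col_r_*_ij_ij})--(\ref{equ:sd_bp_3col_r_*_ijk_ijk}) draw only on left messages with right state ${\bf ijk}$.

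The direction ``left implies right'' is where the only genuine work lies, and the obstacle is reconciling the two normalization conventions. In PTP the right messages are built from the \emph{already normalized} left messages $\lambda^{\rm norm(PTP)}$, whereas in SDBP the state-decoupled right messages are built from the \emph{un-normalized} left messages and then rescaled by $\delta$. My plan is first to verify, exactly as in the proof of Theorem \ref{thm:ptp_sp_equiv_ksat}, that the PTP right message is automatically normalized, i.e. $\rho^{(\rm PTP)}_{c\rightarrow v}=\rho^{\rm norm(PTP)}_{c\rightarrow v}$, because summing (\ref{equ:ptp_3col_r_ij}) and (\ref{equ:ptp_3col_r_ijk}) over the three pairs and the triple telescopes to $\sum_{t}\lambda^{\rm norm(PTP)}_{u\rightarrow c}(t)=1$. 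The crux is then the identity $\delta=1/Z^{(\rm PTP)}$, where $Z^{(\rm PTP)}:=\sum_{t\in(\chi^*)^{\{u\}}}\lambda^{(\rm PTP)}_{u\rightarrow c}(t)$ is the PTP normalization constant. This holds because, under the assumed left correspondence, the seven SDBP left messages with right state ${\bf ijk}$ are in bijection with the seven PTP tokens (three singletons, three pairs, one triple); summing the numerators of $\rho^{*}_{c\rightarrow v}({\bf ij})$ over the three pairs and the numerator of $\rho^{*}_{c\rightarrow v}({\bf ijk})$ recovers exactly $Z^{(\rm PTP)}$, so the defining normalization $\rho^{*}_{c\rightarrow v}({\bf ijk})+\sum_{{\bf ij}}\rho^{*}_{c\rightarrow v}({\bf ij})=1$ forces $\delta=1/Z^{(\rm PTP)}$. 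With this identity in hand, $\rho^{*(\rm SDBP)}_{c\rightarrow v}({\bf ij})=\delta\,\lambda_{u\rightarrow c}({\bf k},{\bf ijk})=\lambda^{(\rm PTP)}_{u\rightarrow c}({\bf k})/Z^{(\rm PTP)}=\lambda^{\rm norm(PTP)}_{u\rightarrow c}({\bf k})=\rho^{\rm norm(PTP)}_{c\rightarrow v}({\bf ij})$, and similarly for ${\bf ijk}$, giving the right correspondence.

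Finally, the summary correspondence follows by the same substitution: the PTP summary equations (\ref{equ:ptp_3col_s_i})--(\ref{equ:ptp_3col_s_ijk}) and the SDBP summary equations (\ref{equ:sd_bp_3col_s_i})--(\ref{equ:sd_bp_3col_s_ijk}) are identical in form, so $\rho^{\rm norm(PTP)}\leftrightarrow\rho^{*(\rm SDBP)}$ immediately yields $\mu^{(\rm PTP)}_v\leftrightarrow\mu^{(\rm SDBP)}_v$ on all three token types. I expect the normalization-matching step ($\delta=1/Z^{(\rm PTP)}$) to be the main obstacle, since every other step reduces to a direct term-by-term comparison of the already-derived update equations.
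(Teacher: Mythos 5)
Your proposal is correct and follows essentially the same route as the paper's own proof: establish the two directions of the passed-message correspondence (right-implies-left by direct substitution, left-implies-right by matching normalization constants) and then dispatch the summary correspondence by substitution. Your explicit computation $\delta=1/Z^{(\rm PTP)}$ is just a spelled-out version of the paper's observation that $\alpha\beta=\delta$ because both $\rho^{*(\rm SDBP)}_{c\rightarrow v}$ and $\rho^{\rm norm(PTP)}_{c\rightarrow v}$ are normalized scalings of the same quantity.
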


\begin{proof}
We will first prove that if the ``right correspondence'' (namely
that (\ref{equ:ptp_sd_bp_r_ij}) and (\ref{equ:ptp_sd_bp_r_ijk})) holds, then
the ``left correspondence'' (namely that
(\ref{equ:ptp_sd_bp_l_i}) to (\ref{equ:ptp_sd_bp_l_ijk})) holds.

Suppose that the right correspondence holds (where the symbol $\leftrightarrow$
in (\ref{equ:ptp_sd_bp_r_ij}) and (\ref{equ:ptp_sd_bp_r_ijk}) is understood
as equality). Then

\begin{eqnarray*}
\lambda^{(\rm SDBP)}_{v\rightarrow c}({\bf i},{\bf ijk}) &=&
\prod_{b\in C(v)\setminus\{c\}}\left(\rho^{*(\rm
SDBP)}_{b\rightarrow v}({\bf ij})+\rho^{*(\rm SDBP)}_{b\rightarrow
v}({\bf ik})+\rho^{*(\rm SDBP)}_{b\rightarrow v}({\bf
ijk})\right)\\
&& -\prod_{b\in C(v)\setminus\{c\}}\left(\rho^{*(\rm
SDBP)}_{b\rightarrow v}({\bf ij})+\rho^{*(\rm SDBP)}_{b\rightarrow
v}({\bf
ijk})\right)\\
&& -\prod_{b\in C(v)\setminus\{c\}}\left(\rho^{*(\rm
SDBP)}_{b\rightarrow v}({\bf ik})+\rho^{*(\rm SDBP)}_{b\rightarrow
v}({\bf ijk})\right)+\prod_{b\in C(v)\setminus\{c\}}\rho^{*(\rm
SDBP)}_{b\rightarrow
v}({\bf ijk})\\
&=& \prod_{b\in C(v)\setminus\{c\}}\left(\rho^{\rm norm
(PTP)}_{b\rightarrow v}({\bf ij})+\rho^{\rm norm
(PTP)}_{b\rightarrow v}({\bf ik})+\rho^{\rm norm
(PTP)}_{b\rightarrow v}({\bf ijk})\right)\\
&& -\prod_{b\in C(v)\setminus\{c\}}\left(\rho^{\rm norm
(PTP)}_{b\rightarrow
v}({\bf ij})+\rho^{\rm norm (PTP)}_{b\rightarrow v}({\bf ijk})\right)\\
&& -\prod_{b\in C(v)\setminus\{c\}}\left(\rho^{\rm norm
(PTP)}_{b\rightarrow v}({\bf ik})+\rho^{\rm norm
(PTP)}_{b\rightarrow v}({\bf ijk})\right)+\prod_{b\in
C(v)\setminus\{c\}}\rho^{\rm
norm (PTP)}_{b\rightarrow v}({\bf ijk})\\
&=& \lambda^{(\rm PTP)}_{v\rightarrow c}({\bf i}).
\end{eqnarray*}

Similarly, we can prove that $\lambda^{(\rm SDBP)}_{v\rightarrow
c}({\bf ij},{\bf ijk})=\lambda^{(\rm PTP)}_{v\rightarrow c}({\bf
ij})$ and $\lambda^{(\rm SDBP)}_{v\rightarrow c}({\bf ijk},{\bf
ijk})=\lambda^{(\rm PTP)}_{v\rightarrow c}({\bf ijk})$. It then follows that
the left correspondence holds.

Now we prove that if the left correspondence holds, then the right
correspondence holds.  Suppose that the left correspondence holds,
then we have
\begin{eqnarray*}
\rho^{\rm norm (PTP)}_{c\rightarrow v}({\bf ij}) &=&
\alpha\cdot\rho^{(\rm PTP)}_{c\rightarrow v}({\bf ij})\\
&=& \alpha\cdot\lambda^{\rm norm
(PTP)}_{V(c)\setminus\{v\}\rightarrow c}({\bf
k})\\
&=& \alpha\left(\beta\cdot\lambda^{(\rm
PTP)}_{V(c)\setminus\{v\}\rightarrow c}({\bf
k})\right)\\
&=& \alpha\beta\cdot\lambda^{(\rm
SDBP)}_{V(c)\setminus\{v\}\rightarrow c}({\bf k},{\bf ijk})
\end{eqnarray*}
where $\alpha=1/\sum_{t\in(\chi^*)^v}\rho^{(\rm PTP)}_{c\rightarrow
v}(t)$ and
$\beta=1/\sum_{t\in(\chi^*)^{V(c)\setminus\{v\}}}\lambda^{(\rm
PTP)}_{V(c)\setminus\{v\}\rightarrow c}(t)$. We also have
\begin{eqnarray*}
\rho^{*({\rm SDBP})}_{c\rightarrow v}({\bf ij})
&=& \delta\cdot\lambda^{(\rm SDBP)}_{V(c)\setminus\{v\}\rightarrow
c}({\bf k},{\bf ijk}).
\end{eqnarray*}

Since both $\rho^{*({\rm SDBP})}_{c\rightarrow v}$ and $\rho^{\rm
norm(PTP)}_{c\rightarrow v}$ are normalized, it must hold that
$\alpha\beta=\delta$. This indicates that $\rho^{\rm
norm(PTP)}_{c\rightarrow v}({\bf ij})=\rho^{*({\rm
SDBP})}_{c\rightarrow v}({\bf ij})$. Following a similar procedure, one 
can show that
$\rho^{\rm norm(PTP)}_{c\rightarrow v}({\bf ijk})=\rho^{*({\rm
SDBP})}_{c\rightarrow v}({\bf ijk})$. This implies that the right
correspondence holds.

At this point, we have established the
correspondence between passed messages in PTP and those in SDBP.

Now we will prove the summary correspondence (namely, that
(\ref{equ:ptp_sd_bp_s_i}) to (\ref{equ:ptp_sd_bp_s_ijk})).

\begin{eqnarray*}
\mu_v^{(\rm SDBP)}({\bf i}) &=& \prod_{c\in C(v)}(\rho^{*({\rm
SDBP})}_{c\rightarrow v}({\bf ij})+\rho^{*({\rm
SDBP})}_{c\rightarrow v}({\bf ik})+\rho^{*({\rm
SDBP})}_{c\rightarrow
v}({\bf ijk}))\\
&& -\prod_{c\in C(v)}(\rho^{*({\rm SDBP})}_{c\rightarrow v}({\bf
ij}) +\rho^{*({\rm
SDBP})}_{c\rightarrow v}({\bf ijk}))\\
&& -\prod_{c\in C(v)}(\rho^{*({\rm SDBP})}_{c\rightarrow v}({\bf
ik}) +\rho^{*({\rm SDBP})}_{c\rightarrow v}({\bf ijk}))+\prod_{c\in
C(v)}\rho^{*({\rm
SDBP})}_{c\rightarrow v}({\bf ijk})\\
&=& \prod_{c\in C(v)}(\rho^{\rm norm(PTP)}_{c\rightarrow v}({\bf
ij})+\rho^{\rm norm(PTP)}_{c\rightarrow v}({\bf ik})+\rho^{\rm
norm(PTP)}_{c\rightarrow v}({\bf ijk})\\
&& -\prod_{c\in C(v)}(\rho^{\rm norm(PTP)}_{c\rightarrow v}({\bf
ij})
+\rho^{\rm norm(PTP)}_{c\rightarrow v}({\bf ijk}))\\
&& -\prod_{c\in C(v)}(\rho^{\rm norm(PTP)}_{c\rightarrow v}({\bf
ik}) +\rho^{\rm norm(PTP)}_{c\rightarrow v}({\bf ijk}))+\prod_{c\in
C(v)}\rho^{\rm norm(PTP)}_{c\rightarrow v}({\bf ijk})\\
&=& \mu^{(\rm PTP)}_v({\bf i}).
\end{eqnarray*}

Similarly, we can prove that $\mu_v^{(\rm SDBP)}({\bf ij})=\mu^{(\rm
PTP)}_v({\bf ij})$ and  $\mu_v^{(\rm SDBP)}({\bf ijk})=\mu^{(\rm
PTP)}_v({\bf ijk})$. This proves the summary correspondence.

\end{proof}

At this end, it should be convincing that the state-decoupling condition
is an important ingredient in the reduction of BP to PTP. It is worth noting that in the case of $k$-SAT problems, this condition can be imposed simply by the
initialization of BP messages. However in the case
of $3$-COL problems, one needs to manually impose this condition at each iteration, namely, carrying out SDBP instead of BP, so as to arrive at an equivalence to PTP messages.  This extra complexity involved in $3$-COL problems then suggests that for $3$-COL problems, PTP and hence SP are not a special case of BP. Thus at this end, one may conclude that SP is not BP for general CSPs.

Now it remains to investigate, for general CSPs,
whether the state-decoupling condition is sufficient for PTP or weighted PTP to reduce from BP, or equivalently {\em whether}
and {\em when}
PTP and weighted PTP are SDBP.

\subsection{The Reduction of Weighted PTP from SDBP for General CSPs}

Up to this point, we see that the state-decoupling condition critically governs the reduction of BP to PTP (or weighted PTP) for $k$-SAT problems and
$3$-COL problems. In this subsection, we will however show that the
state-decoupling condition is not sufficient for BP (more precisely SDBP)
to reduce to PTP and that an additional condition is needed in the general context.

\begin{defn}[Forceable Token]
For any $(v-c)$, we say that a token
$t_v\in \left(\chi^*\right)^{\{v\}}$ is {\em forceable} by $\Gamma_c$
if there exists a rectangle $\prod\limits_{u\in V(c)\setminus \{v\}}
t_u$ on $V(c)\setminus \{v\}$ such that
${\tt F}_c\left(
\prod\limits_{u\in V(c)\setminus \{v\}}t_u
\right)=t_v$.
\end{defn}

We will denote by ${\cal F}_c(v)$ the set of all tokens on $v$ that
are forceable by $\Gamma_c$. Let ${\cal A}_c(v):=\bigcup_{t\in {\cal
F}_c(v)}t$. Since ${\cal A}_c(v)={\tt F}_c \left( \prod_{u\in
V(c)\setminus \{v\}} \left(\chi^*\right)^{\{u\}} \right)$, it
follows that ${\cal A}_c(v)$ is always forceable.  In fact, it is
easy to see that
 ${\cal A}_c(v)$ is the ``largest'' forceable token on $v$
by $\Gamma_c$ --- in the sense of containing all other forceable
tokens as its subsets
---
due to the monotonicity of ${\tt F}_c(\cdot)$.

In $k$-SAT problems, for any $(v-c)$, it is easy to see that
${\cal F}_c(v)=\{*, {\bf L}\}$, and ${\cal A}_c(v)= *.$
In $3$-COL problems, for any $(v-c)$, it is easy to see that
$
{\cal F}_c(v)=\{{\bf 123}, {\bf 12}, {\bf 23}, {\bf 13} \}
$,
and
$
{\cal A}_c(v)={\bf 123}.
$


For any $(c-v)$, let ${\cal A}_{\sim c}(v)$ be defined by
\[
{\cal A}_{\sim c}(v):= \bigcap_{b\in C(v)\setminus\{c\}} {\cal
A}_b(v).
\]

\begin{defn} [Locally Compatible Constraint]
A constraint $\Gamma_c$ is said to be locally compatible if for any $v\in V(c)$, any forceable token $t_v\in {\cal F}_c(v)$, any rectangle $t'\in {\tt F}_c^{-1}
\left(t_v\right)$ on $V(c)\setminus \{v\}$
(where ${\tt F}^{-1}_c\left(t_v\right)$ is the set of
all rectangles $y_{V(c)\setminus \{v\}}$ on $V(c)\setminus \{v\}$
such that ${\tt F}_c(y_{V(c)\setminus \{v\}})=t_v$)
and any $u\in V(c)\setminus \{v\}$, it holds that
\[
{\cal A}_{\sim c}(u) \subseteq {\tt F}_c\left(t_v \times
t'_{:V(c)\setminus \{u,v\}}\right).
\]
\end{defn}

We note that the local compatibility of a constraint $\Gamma_c$ as defined
above is not simply a property of $\Gamma_c$ itself.
It also relies on the structure
of all constraints that are distance-2 away from $\Gamma_c$ in
the factor graph.

\begin{thm}
\label{thm:reduce_general} Let the set of obedience conditionals
$\{\omega_v:v\in V\}$ be given, where each $v\in V$ corresponds to a
coordinate of a CSP. Let both the MRF of the CSP (that specified via
(\ref{equ:leftfunction}), (\ref{equ:rightfunction}) and
(\ref{equ:globalfunction})) and the weighted PTP for the CSP be both
parametrized by $\{\omega_v:v\in V\}$. Then if every constraint of
the CSP is locally compatible, the SDBP derived from the MRF is
equivalent to the weighted PTP, where the correspondence is
\[
\rho^{\rm norm (PTP)}_{c\rightarrow v} \leftrightarrow \rho^{* {\rm (SDBP)}}_{c\rightarrow v}.
\]
Conversely, if such an equivalence holds for every 
choice of $\{\omega_v:v\in V\}$, then every constraint of the CSP
must be locally compatible.
\end{thm}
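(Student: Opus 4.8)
The plan is to prove the two implications separately, treating the claimed equivalence as an identity between the SDBP and weighted-PTP update maps under the identification of the state-decoupled right message $\rho^{*({\rm SDBP})}_{c\rightarrow v}(s^R_{v,c})$ with the PTP token-distribution $\rho^{\rm norm(PTP)}_{c\rightarrow v}(t_{c\rightarrow v})$, reading the right state $s^R_{v,c}$ as the PTP token $t_{c\rightarrow v}$. For the forward (``if locally compatible'') direction I would argue, exactly as in the proofs of Theorems \ref{thm:ptp_sp_equiv_ksat} and \ref{thm:sp_as_bp_ksat}, that the right correspondence at one iteration reproduces the right correspondence at the next. Since the summary rule (\ref{equ:sd_bp_summary}) is literally the weighted-PTP summary rule (\ref{equ:ptp_gamma_summary}) once $\rho^*\leftrightarrow\rho^{\rm norm(PTP)}$ and $s^R_{v,c}\leftrightarrow t_{c\rightarrow v}$ are identified, the summary correspondence then follows with no extra hypothesis.

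The heart of the forward direction is a single lemma: under local compatibility, the two-argument SDBP left message $\lambda_{u\rightarrow c}(s^L_{u,c},s^R_{u,c})$ becomes independent of its right-state argument $s^R_{u,c}$ at exactly the values at which it is evaluated inside the right-message update (\ref{equ:sd_bp_r_mssg}), and there it coincides with the un-normalized weighted-PTP left message (\ref{equ:wptpvtoc}). To see this I would first note that each incoming right message $\rho^*_{b\rightarrow u}$ is supported only on tokens forceable by $\Gamma_b$, so the conditioning argument of $\omega_u$ in (\ref{equ:sd_bp_leftmessage}) satisfies $\bigcap_{b\in C(u)\setminus\{c\}} s^R_{u,b}\subseteq {\cal A}_{\sim c}(u)$. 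Consequently $\lambda_{u\rightarrow c}(s^L_{u,c},s^R_{u,c})$ depends on $s^R_{u,c}$ only through the intersection $s^R_{u,c}\cap\bigcap_{b\neq c}s^R_{u,b}$, and this intersection is insensitive to $s^R_{u,c}$ as soon as ${\cal A}_{\sim c}(u)\subseteq s^R_{u,c}$. Now inside (\ref{equ:sd_bp_r_mssg}), after setting both arguments of $\rho_{c\rightarrow v}$ equal to $s^R_{v,c}$ as dictated by (\ref{equ:sd_bp_rightmessage}), the second argument of the $u$-factor is precisely ${\tt F}_c\!\left(s^R_{v,c}\times t'_{:V(c)\setminus\{u,v\}}\right)$ for a rectangle $t'=\langle s^L_{w,c}\rangle_{w\in V(c)\setminus\{v\}}$ lying in ${\tt F}_c^{-1}(s^R_{v,c})$ --- which is exactly the object the local-compatibility inequality ${\cal A}_{\sim c}(u)\subseteq {\tt F}_c\!\left(t_v\times t'_{:V(c)\setminus\{u,v\}}\right)$ bounds below. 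Thus local compatibility guarantees ${\cal A}_{\sim c}(u)\subseteq s^R_{u,c}$ at every evaluation point, the left message collapses to its PTP form, and substituting into (\ref{equ:sd_bp_r_mssg}) turns it term-by-term into the weighted-PTP right-message rule, with the scalar $\delta$ supplying the PTP normalization.

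For the converse I would argue by contraposition: assuming some constraint $\Gamma_c$ is not locally compatible, I exhibit one choice of $\{\omega_v:v\in V\}$ and one configuration of incoming messages at which the SDBP and PTP right messages disagree. Failure of local compatibility furnishes $v\in V(c)$, a forceable $t_v\in{\cal F}_c(v)$, a preimage rectangle $t'\in{\tt F}_c^{-1}(t_v)$, a neighbour $u\in V(c)\setminus\{v\}$, and an element $r\in{\cal A}_{\sim c}(u)$ with $r\notin {\tt F}_c\!\left(t_v\times t'_{:V(c)\setminus\{u,v\}}\right)$. Since $r\in{\cal A}_b(u)$ for every $b\in C(u)\setminus\{c\}$, I can choose, for each such $b$, the incoming right message $\rho^*_{b\rightarrow u}$ to be concentrated on the forceable token ${\cal A}_b(u)$ (which contains $r$), so that $r\in\bigcap_{b\neq c}s^R_{u,b}$ while $r$ is absent from the forced token that SDBP feeds into the second slot of $\lambda_{u\rightarrow c}$. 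I then pick $\omega_u$ to be an obedience conditional whose value genuinely changes when $r$ is removed from its conditioning set, so that the SDBP left message $\lambda_{u\rightarrow c}$ differs from the $s^R$-independent PTP left message at this configuration; tracing the discrepancy through (\ref{equ:sd_bp_r_mssg}) and (\ref{equ:sd_bp_rightmessage}) yields $\rho^{*({\rm SDBP})}_{c\rightarrow v}\neq\rho^{\rm norm(PTP)}_{c\rightarrow v}$, contradicting the assumed equivalence for that $\{\omega_v\}$.

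The main obstacle I anticipate is the converse: I must make the counterexample robust, ensuring that the single violating term carries nonzero weight in the sum (\ref{equ:sd_bp_r_mssg}) and that the resulting mismatch in $\lambda_{u\rightarrow c}$ neither cancels against other terms nor is erased by the normalizations $\delta$ and the PTP scaling. This will require choosing the supporting messages to be (near-)deterministic so that the right-message sums reduce essentially to the offending configuration, and choosing $\omega_u$ so that its sensitivity to $r$ cannot be absorbed into a global rescaling. By contrast, the forward direction's only delicate point --- verifying that the rectangle $t'$ appearing in the SDBP sum genuinely ranges over ${\tt F}_c^{-1}(s^R_{v,c})$, so that the local-compatibility hypothesis applies verbatim --- is comparatively routine once the support restriction to forceable tokens is in hand.
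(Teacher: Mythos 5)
Your proposal is correct, and its sufficiency direction is in substance the same argument as the paper's: both hinge on the two facts that (i) state-decoupled right messages are supported on forceable tokens, so the conditioning set $\bigcap_{b\in C(u)\setminus\{c\}}s^R_{u,b}$ is contained in ${\cal A}_{\sim c}(u)$, and (ii) local compatibility forces ${\cal A}_{\sim c}(u)\subseteq {\tt F}_c\left(s^R_{v,c}\times t'_{:V(c)\setminus\{u,v\}}\right)$ at exactly the points where the left message is evaluated inside (\ref{equ:sd_bp_r_mssg}), so that the $\omega_u$ conditioning argument collapses to the PTP one. The packaging differs: you isolate this as a lemma stating that $\lambda_{u\rightarrow c}(s^L_{u,c},s^R_{u,c})$ is insensitive to its right-state slot at the relevant evaluation points, whereas the paper substitutes the left-message update into the right-message update on both sides (equations (\ref{eq:SDBP_right2right}) and (\ref{eq:PTP_right2right})) and observes that the two composite right-to-right maps differ only in the argument of $\omega_u$; it then settles \emph{both} directions simultaneously by a chain of two-way implications terminating in the local-compatibility condition, using that ${\cal A}_b(u)$ is itself forceable and is the largest forceable token to pass from "for all forceable $s^R_{u,b}$" to the single containment with ${\cal A}_{\sim c}(u)$. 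Your converse-by-contraposition is a more explicit (and arguably more careful) rendering of what the paper's biconditional chain leaves implicit: the paper's first two $\Leftrightarrow$ steps tacitly assume that a pointwise discrepancy in the $\omega_u$ argument can be converted into a discrepancy of the summed, normalized right messages for \emph{some} choice of $\{\omega_v\}$ and some attainable incoming messages, which is precisely the witness construction (near-deterministic supporting messages concentrated on ${\cal A}_b(u)$, plus an $\omega_u$ such as $[a=b]$ that separates the two conditioning sets) that you flag as the delicate point. So the two proofs buy the same theorem; yours makes the necessity half constructive at the cost of having to verify non-cancellation through the sum and the normalizations, while the paper's equivalence chain is more compact but glosses over exactly that verification.
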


Alternatively phrased, Theorem \ref{thm:reduce_general} suggests
that if the state-decoupling condition is satisfied in every
iteration of BP, the local compatibility condition on all
constraints is the necessary and sufficient condition for weighted
PTP to reduce from BP. --- We note that Theorem \ref{thm:reduce_general}
only refers to the equivalence of right messages. It is however straight-forward to verify (as seen in earlier proofs of equivalent results in this paper)
that right equivalence implies the summary equivalence.

This theorem answers the question {\em when} SP is SDBP
in a general setting.

\begin{proof}

Following the message-update rule of SDBP,
\begin{eqnarray}
\rho^{*{\rm (SDBP)}}_{c\rightarrow v}(s^R_{v,c})
&\propto&\sum_{s^L_{V(c)\setminus\{v\},c}} \left(
\left[s^R_{v,c}={\tt
F}_c\left(s^L_{V(c)\setminus\{v\},c}\right)\right] \prod_{u\in
V(c)\setminus\{v\}} \lambda^{{\rm (SDBP)}}_{u\rightarrow
c}\left(s^L_{u,c}, {\tt F}_c\left( s^L_{V(c)\setminus \{u\}, c}
\right)\right)
\right)\nonumber \\
& = &\sum_{s^L_{V(c)\setminus\{v\},c}} \left( \left[s^R_{v,c}={\tt
F}_c\left(s^L_{V(c)\setminus\{v\},c}\right)\right] \prod_{u\in
V(c)\setminus\{v\}} \sum_{s^R_{u, C(u)\setminus \{c\}}}
\right. \nonumber \\
& & \omega_u\left( s^L_{u,c} \left| \left( \bigcap_{b\in
C(u)\setminus \{c\}} s^R_{u,b} \right)\cap {\tt
F}_c\left(s^L_{V(c)\setminus \{u,v\}, c} \times s^R_{v,c}\right)
\right. \right)
\nonumber \\
& & \left.\cdot \prod_{b\in C(u)\setminus \{c\}} \rho^{*({\rm SDBP})}_{b\rightarrow u}(s^R_{u, b})\right)
\label{eq:SDBP_right2right}
\end{eqnarray}

Similarly following the message-update rule of weighted  PTP, we have
\begin{eqnarray}
\rho^{\rm {norm (PTP)}}_{c\rightarrow v}(t_{c\rightarrow v})&\propto
& \sum_{t_{V(c)\setminus\{v\}\rightarrow c}} \left(
\left[t_{c\rightarrow v} ={\tt F}_c(t_{V(c)\setminus\{v\}\rightarrow
c}) \right] \prod_{u\in V(c)\setminus\{v\}}
\sum_{t_{C(u)\setminus\{c\}\rightarrow u}} \right.
\nonumber \\
& & \left. \omega_u \left( t_{u\rightarrow c} \left| \bigcap_{b\in
C(u)\setminus\{c\}}t_{b\rightarrow u} \right. \right)
\cdot\left(\prod_{b\in C(u)\setminus\{c\}}\rho^{{\rm norm
(PTP)}}_{b\rightarrow u}(t_{b\rightarrow u})\right) \right).
\label{eq:PTP_right2right}
\end{eqnarray}

Identifying
every right state  $s^R_{v,c}$
in (\ref{eq:SDBP_right2right})
with token $t_{c\rightarrow v}$ in (\ref{eq:PTP_right2right}) and
every left state $s^L_{v,c}$
(\ref{eq:SDBP_right2right})
 with token $t_{v\rightarrow c}$ in (\ref{eq:PTP_right2right}), the only difference between
(\ref{eq:SDBP_right2right}) and  (\ref{eq:PTP_right2right}) is the
argument of function $\omega_u$. (We note that since both
$\rho^{*{\rm (SDBP)}}_{c\rightarrow v}$ and $\rho^{\rm
norm(PTP)}_{c\rightarrow v}$ are normalized, the scaling constant in
(\ref{eq:SDBP_right2right}) and  (\ref{eq:PTP_right2right}) are
necessarily the same.) We now prove the sufficiency and necessity of
the local compatibility condition for the equivalence between
$\rho_{c\rightarrow v}^{\rm norm (PTP)}$ and $\rho^{* {\rm
(SDBP)}}_{c\rightarrow v}$ via the following chain of two-way
implications.

\begin{eqnarray*}
& & \rho^{* {\rm (SDBP)}}_{c\rightarrow v} \leftrightarrow \rho^{\rm norm (PTP)}_{c\rightarrow v}, \forall v\in V(c)\\
& \Leftrightarrow & \omega_u\left( s^L_{u,c} \left| \left(
\bigcap_{b\in C(u)\setminus \{c\}} s^R_{u,b} \right)\cap {\tt
F}_c\left(s^L_{V(c)\setminus \{u,v\}, c} \times s^R_{v,c}\right)
\right. \right) = \omega_u\left( s^L_{u,c} \left| \bigcap_{b\in
C(u)\setminus \{c\}} s^R_{u,b} \right.
\right)\\
& & \forall v\in V(c) ~\mbox{and every}~ \left(s^R_{v, c},
s^L_{V(c)\setminus \{v\},c}\right) ~\mbox{in the support of }
~\left[s^R_{v,c}={\tt
F}_c\left(s^L_{V(c)\setminus\{v\},c}\right)\right], \\
& & \forall u\in V(c)\setminus \{v\} ~\mbox{and every choice of}
~|C(u)\setminus \{c\}| ~\mbox{tokens on}~ \{u\},
\left\{s^R_{u,b}:b\in C(u)\setminus \{c\} \right\},\\
& &  ~\mbox{with each}~ s^R_{u,b}
~\mbox{in the support of} ~\rho^{({\rm PTP})}_{b\rightarrow u}.\\
& \Leftrightarrow &
\left(\bigcap_{b\in C(u)\setminus \{c\}} s^R_{u,b}\right)
\cap {\tt F}_c\left(s^L_{V(c)\setminus \{u,v\}, c} \times s^R_{v,c}\right)
=
\bigcap_{b\in C(u)\setminus \{c\}} s^R_{u,b}
\\
& & \forall v\in V(c) ~\mbox{and every}~ \left(s^R_{v, c},
s^L_{V(c)\setminus \{v\},c}\right) ~\mbox{such that}~ s^R_{v,c}\in
{\cal F}_c(v) ~{\rm and} ~s^L_{V(c)\setminus \{v\},c}\in
{\tt F}^{-1}_c(s^R_{v,c}), \\
& & \forall u\in V(c)\setminus \{v\} ~\mbox{and every choice of}
~|C(u)\setminus \{c\}| ~\mbox{tokens on}~ \{u\},
\left\{s^R_{u,b}:b\in C(u)\setminus \{c\} \right\},\\
& &  ~\mbox{with each}~ s^R_{u,b}\in {\cal F}_b(u).
\end{eqnarray*}
\begin{eqnarray*}
& \Leftrightarrow &
\bigcap_{b\in C(u)\setminus \{c\}} s^R_{u,b}
\subseteq {\tt F}_c\left(s^L_{V(c)\setminus \{u,v\}, c} \times s^R_{v,c}\right)
\\
& & \forall v\in V(c) ~\mbox{and every}~ \left(s^R_{v, c},
s^L_{V(c)\setminus \{v\},c}\right) ~\mbox{such that}~ s^R_{v,c}\in
{\cal F}_c(v) ~{\rm and} ~s^L_{V(c)\setminus \{v\},c}\in
{\tt F}^{-1}_c(s^R_{v,c}), \\
& & \forall u\in V(c)\setminus \{v\} ~\mbox{and every choice of}
~|C(u)\setminus \{c\}| ~\mbox{tokens on}~ \{u\},
\left\{s^R_{u,b}:b\in C(u)\setminus \{c\} \right\},\\
& &  ~\mbox{with each}~ s^R_{u,b}\in {\cal F}_b(u).\\
& \Leftrightarrow &
\bigcap_{b\in C(u)\setminus \{c\}} {\cal A}_b(u)
\subseteq {\tt F}_c\left(s^L_{V(c)\setminus \{u,v\}, c} \times s^R_{v,c}\right)
\\
& & \forall v\in V(c) ~\mbox{and every}~ \left(s^R_{v, c},
s^L_{V(c)\setminus \{v\},c}\right) ~\mbox{such that}~ s^R_{v,c}\in
{\cal F}_c(v) ~{\rm and} ~s^L_{V(c)\setminus \{v\},c}\in
{\tt F}^{-1}_c(s^R_{v,c}), \\
& & \mbox{and every} ~u\in V(c)\setminus \{v\}.
\end{eqnarray*}
\begin{eqnarray*}
& \Leftrightarrow &
{\cal A}_{\sim c}(u)
\subseteq {\tt F}_c\left(s^L_{V(c)\setminus \{u,v\}, c} \times s^R_{v,c}\right),
\\
& & \forall v\in V(c) ~\mbox{and every}~ \left(s^R_{v, c},
s^L_{V(c)\setminus \{v\},c}\right) ~\mbox{such that}~ s^R_{v,c}\in
{\cal F}_c(v) ~{\rm and} ~s^L_{V(c)\setminus \{v\},c}\in
{\tt F}^{-1}_c(s^R_{v,c}), \\
& & \mbox{and every} ~u\in V(c)\setminus \{v\}.\\
& \Leftrightarrow &
 \mbox{Constraint} ~\Gamma_c ~\mbox{is locally compatible.}
\end{eqnarray*}

Thus
\begin{eqnarray*}
& & \rho^{\rm norm (PTP)}_{c\rightarrow v} \leftrightarrow \rho^{* {\rm (SDBP)}}_{c\rightarrow v}, ~\mbox{for every} ~(x_v, \Gamma_c)\in E(G)
\\
&\Leftrightarrow&
\mbox{Every constraint} ~\Gamma_c ~\mbox{is locally compatible.}
\end{eqnarray*}
\end{proof}

Now it is easy to verify that for both $k$-SAT and $3$-COL problems, the
fact that PTP or weighted PTP can be reduced from BP with state-decoupling
condition imposed is due to the fact that every constraint is locally
compatible.

For $k$-SAT problems, as noted earlier,
${\cal F}_{c}(v)=\{{\bf L}, *\}$. If we pick $t_v$ to be either token from
${\cal F}_c(v)$, then for any $t'\in {\tt F}_c^{-1}(t_v)$ and
any $u\in V(c)\setminus \{v\}$, it can be verified that
${\tt F}_c\left(t'_{:V(c)\setminus \{u,v\}} \times t_v\right)=*$.
This makes ${\cal A}_{\sim c}(u)
\subseteq
{\tt F}_c\left(t'_{:V(c)\setminus \{u,v\}} \times t_v\right)$
always satisfied, independent of the factor graph structure of the problem instance.

For $3$-COL problems, as noted earlier, we see
${\cal F}_c(v)=\{{\bf 123}, {\bf 12}, {\bf 23}, {\bf 13} \}$.
Suppose that $u$ is the only other coordinate (except $v$) that is involved in
constraint $\Gamma_c$.
If we pick
$t_v$ to be any token from ${\cal F}_c(v)$, then ${\tt F}_c^u\left(t_v\right)=
{\bf 123}$. This again makes ${\cal A}_{\sim c}(u)
\subseteq
{\tt F}^u_c\left(t_v\right)$
always satisfied, independent of the factor graph structure of the problem instance.

That is, in both $k$-SAT and $3$-COL problems, the structure of each
local constraint {\em alone} guarantees the local compatibility
condition satisfied by every constraint, irrespective of how a
constraint interacts with other constraints (that are distant 2
apart) as is generally required in the local compatibility
condition. We generalize this fact in the following corollary --- 
immediately following Theorem \ref{thm:reduce_general} ---  which
provides a sufficient condition for SDBP to reduce to PTP without
relying on the interaction of neighboring constraints.  For CSPs
constructed with generic local constraint by random factor graph
structure, the corollary may turn out to be useful.

\begin{cor}
Let both the MRF of the CSP (specified
via (\ref{equ:leftfunction}), (\ref{equ:rightfunction}) and (\ref{equ:globalfunction})) and the weighted PTP for the CSP be
parametrized by the same $\{\omega_v:v\in V\}$. Suppose that every constraint
$\Gamma_c$ is such that
for any $v\in V(c)$, any forceable token $t_v\in {\cal F}_c(v)$,
any rectangle $t'\in {\tt F}_c^{-1}
\left(t_v\right)$ on $V(c)\setminus \{v\}$, and any $u\in V(c)\setminus \{v\}$, it holds that
\[
{\tt F}_c\left(t_v \times t'_{:V(c)\setminus
\{u,v\}}\right)=\left(\chi^*\right)^{\{v\}}.
\]

Then SDBP derived from the MRF is equivalent to
weighted PTP, where the correspondence is
\[
\rho^{\rm norm (PTP)}_{c\rightarrow v} \leftrightarrow \rho^{* {\rm (SDBP)}}_{c\rightarrow v}.
\]
\end{cor}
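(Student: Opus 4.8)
The plan is to deduce the corollary directly from Theorem \ref{thm:reduce_general} by verifying that its hypothesis forces every constraint of the CSP to be locally compatible. Once local compatibility of all constraints has been established, the desired equivalence $\rho^{\rm norm (PTP)}_{c\rightarrow v} \leftrightarrow \rho^{* {\rm (SDBP)}}_{c\rightarrow v}$ follows immediately from the sufficiency direction of Theorem \ref{thm:reduce_general}. Thus no fresh message-level computation is needed; the entire content is to reduce the corollary's condition to the local compatibility condition.

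The key observation is a lattice-theoretic triviality about the top element of the token set. The forced token ${\tt F}_c\left(t_v \times t'_{:V(c)\setminus \{u,v\}}\right)$ is a token on $u$, hence an element of $\left(\chi^*\right)^{\{u\}}$, and every such token is by definition a subset of the full alphabet $\chi$. The corollary hypothesizes that this forced token equals the top token (the token corresponding to all of $\chi$; the displayed ``$=\left(\chi^*\right)^{\{v\}}$'' is to be read in this sense, with the coordinate being $u$). Since ${\cal A}_{\sim c}(u) = \bigcap_{b\in C(u)\setminus\{c\}} {\cal A}_b(u)$ is itself a token on $u$, it is automatically a subset of this top token.

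Concretely, I would fix an arbitrary constraint $\Gamma_c$, an arbitrary $v\in V(c)$, an arbitrary forceable token $t_v\in {\cal F}_c(v)$, an arbitrary preimage rectangle $t'\in {\tt F}_c^{-1}(t_v)$ on $V(c)\setminus\{v\}$, and an arbitrary $u\in V(c)\setminus\{v\}$. The corollary's hypothesis gives ${\tt F}_c\left(t_v \times t'_{:V(c)\setminus \{u,v\}}\right)=\chi$ (the maximal token on $u$), while ${\cal A}_{\sim c}(u)\subseteq \chi$ holds trivially. Hence
\[
{\cal A}_{\sim c}(u) \subseteq {\tt F}_c\left(t_v \times t'_{:V(c)\setminus \{u,v\}}\right).
\]
As the choices of $v, t_v, t', u$ were arbitrary, this is precisely the statement that $\Gamma_c$ is locally compatible, and since $\Gamma_c$ was arbitrary, every constraint is locally compatible. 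Invoking Theorem \ref{thm:reduce_general} then yields the equivalence of SDBP and weighted PTP.

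I do not anticipate any genuine obstacle: the corollary is a strict weakening of the hypothesis of Theorem \ref{thm:reduce_general}, obtained by replacing the data-dependent containment ${\cal A}_{\sim c}(u) \subseteq {\tt F}_c(\cdots)$ with the stronger, purely local equality ${\tt F}_c(\cdots) = \chi$. The only point demanding care is notational --- reconciling that the forced token lives in $\left(\chi^*\right)^{\{u\}}$ rather than $\left(\chi^*\right)^{\{v\}}$ as the displayed formula writes, and reading ``$=\left(\chi^*\right)^{\{u\}}$'' as ``equals the top token $\chi$ on $u$'' rather than as an equality between a single token and the whole token set. Once the types are matched, the argument is a one-line top-element remark followed by the invocation of the theorem.
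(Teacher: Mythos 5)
Your proposal is correct and matches the paper's intent exactly: the paper presents this corollary as ``immediately following'' Theorem \ref{thm:reduce_general}, precisely because the hypothesized equality forces the forced token to be the maximal token on $u$, which trivially contains ${\cal A}_{\sim c}(u)$ and hence yields local compatibility of every constraint. Your note that the displayed ``$=\left(\chi^*\right)^{\{v\}}$'' should be read as the top token on the coordinate $u$ is also a correct reading of a typographical slip in the statement.
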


For completeness, we conclude this
section by constructing an example of CSP
in which the local compatibility condition is not satisfied by every
constraint.

\begin{figure}
\centerline{ \scalebox{0.6}{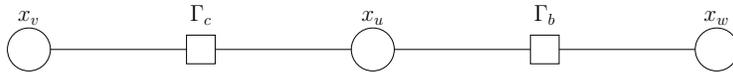} }
\caption{\label{fig:uncompatible} A portion of a factor graph $G$.}
\end{figure}

Suppose that $\Gamma_c$ and $\Gamma_b$ are two of the constraints
defining a CSP, and the factor graph representing the CSP locally
obeys the structure shown in Figure \ref{fig:uncompatible}. Suppose
that each variable of the CSP has alphabet $\chi=\{0, 1, 2\}$ and
that $\Gamma_c$ is defined as 
$\Gamma_c:=\{ (0_v, 0_u), (0_v, 1_u), (1_v, 2_u), (2_v, 2_u)\}$.
Suppose that $\Gamma_b$ is defined as
$
\Gamma_b:=\{ (0_u, 0_w), (1_u, 1_w), (2_u, 1_w) \}.
$
Note that ${\cal F}_c(v)=\{
{\bf 0}_v, {\bf 12}_v, {\bf 012}_v \}$, and it is easy to verify that
$
{\cal A}_{\sim c}(u)={\cal A}_b(u) ={\tt F}_b\left( {\bf 012}_w \right)={\bf 012}_u.
$
Now if we pick $t_v={\bf 0}_v$, then we have
${\cal A}_{\sim c}(u)\not\subseteq {\tt F}_c(t_v)={\bf 01}_u$. Thus constraint $\Gamma_c$ is not locally compatible, and following
Theorem \ref{thm:reduce_general}, PTP or weighted PTP can not be
reduced from SDBP for this CSP.

With this example, we see that it is not always the case that SDBP is SP.

\section{Concluding Remarks}
\label{sec:conclude}

In this paper, we study the question whether SP algorithms (non-weighted and weighted) are special cases
of BP for general constraint satisfaction problems.

The first contribution of this paper is a simple formulation of SP
algorithms for general CSPs as the weighted PTP algorithm. An
advantage of this formulation is that it has a probabilistically
interpretable update rule which allows SP algorithms to be developed
for arbitrary CSPs.

The second and main contribution of this paper is the answer to the
titular question in the most general context. We show that in
general, SP algorithms can not be reduced from the BP algorithm
derived from the MRF formalism in the style of \cite{SP:newlook} and
\cite{rtu:sp_isit_06}.  Such a reduction is only possible for
certain special cases where the notions of state-decoupling
condition and local compatibility condition are both satisfied.

It is worth noting that our answer to whether SP is BP is only
restricted to the MRF formalism in the style of \cite{SP:newlook} or
\cite{rtu:sp_isit_06}. Although this restriction is not completely
satisfactory, it appears to us that such an MRF formalism is the
most natural in light of the natural correspondence between the
states in the MRF and the SP messages (namely that left states
correspond to the ``intentions'' of variables and right states
correspond to the ``commands'' of the constraints). An additional
and perhaps even stronger justification of this MRF is its
combinatorial descriptive power as is elaborated in
\cite{SP:newlook} for $k$-SAT problems, which --- using the terminology
of this paper ---  captures the connectivity of the solution in the
space of all ``rectangles''. In fact, we conjecture that further
investigation of this perspective may provide useful insights into
the algorithm design for solving hard instances of CSPs, whether or
not SP or BP is considered as the choice of algorithms. \footnote{In
\cite{SP:newlook}, under the MRF formalism, Gibbs sampling-based
approach has also been presented as an algorithm for solving random
$k$-SAT problems.}

Further we note that the BP algorithm has been understood as a
special case of Generalized Belief Propagation (GBP) \cite{GBP}. In
that perspective, BP may be derived from iterative minimization of
the Bethe-approximation of the notion of free energy \cite{GBP}. The framework
of GBP allows a variety of ways  (unified under the notion of
``region graphs'') to approximate the free energy whereby leading to
a much richer family of BP-like algorithms. Given the results of
this paper, one may not want to exclude the possibility that certain
choice of free-energy approximation allows the corresponding GBP to
reduce to SP algorithms for general CSPs. Research along that
direction may still be of interest.

As the final remark, however, the authors of this paper would like to raise a philosophical question, in light of the simplicity in the (weighted) PTP formulation of SP and, in contrast, the complexity involved in reducing BP to SP: Should we attempt to seek a complicated explanation for a simple algorithm? Does the simplicity of SP (understood in terms of weighted PTP) imply a more natural, simpler but very different underlying graphical model --- beyond MRF --- that may
better explain SP?

\appendix

We now present some
results concerning the dynamics of SP, based on the formulation of  PTP and weighted PTP. 
These results, although rather elementary, should help provide intuitions
regarding what PTP is doing in solving a CSP.
We will start with the deterministic precursor of PTP, DTP.

\subsection{On the Dynamics of DTP}

We will refer to a subgraph $H$ of factor graph $G$ as a {\em
factor-subgraph} of $G$ if for every constraint vertex $\Gamma_c$ in
$H$, all neighboring variable vertices of $\Gamma_c$ in $G$ are also
in $H$. It is apparent that factor-subgraph $H$ is a factor graph
representing a CSP involving precisely a subset of the constraints
in $G$. We will denote by $C[H]$ the index set of all constraint
vertices in $H$, by $V[H]$  the index set of all variable vertices
in $H$, and by $\Gamma_H$ the set of all assignments on $V[H]$ that
satisfy every constraint $\Gamma_c$, $c\in C[H]$.

If factor-subgraph $H$ is a tree, it is also referred to as a {\em
factor tree} of $G$. For any factor tree $T$ of $G$, we will denote
by $L[T]$ the index set of all leaf vertices of $T$. Since we have
assumed that factor graph $G$ contains no degree-$1$ constraint
vertices, it is necessary that the leaf vertices of any factor tree
$T$ of $G$ are all variable vertices, i.e., that $L[T]$ contains no
index of any constraint vertex.

Suppose that $T$ is a factor tree of factor graph $G$, $U\subset V[T]$,
and $v\in V[T]\setminus U$. For any rectangle $t_U$ on $U$, define
\[
{\tt F}_{T}^{U\rightarrow v}(t_U):=
\left(\left(t_U\times \left(\chi^*\right)^{V[T]\setminus U}\right)\cap \Gamma_T\right)_{:\{v\}}.
\]

It is easy to see that function ${\tt F}_{T}^{U\rightarrow v}(\cdot)$
reduces to ${\tt F}_c^v(\cdot)$ introduced earlier, when $T$ contains a single factor
and $U$ is $V(c)\setminus \{v\}$.

Given a factor tree $T$ of $G$ and two vertices in $T$ indexed by
$a$ and $b$ respectively, we will introduce another notation of
message index,  ${a\stackrel{T}{\longrightarrow} b}$, which indexes
the message sent by the vertex with index $a$ along its only edge
that is on the path (in $T$) leading  to the vertex with index $b$.
For example, suppose that in factor tree $T$, constraint vertex
$\Gamma_c$ has a neighbor of $x_u$ and is on the path from $x_u$ to
$x_v$ in $T$, then message index $u\stackrel{T}{\longrightarrow} v$
is equivalent to $u\rightarrow c$, and
$t_{u\stackrel{T}{\longrightarrow} v}$ is equivalent to
$t_{u\rightarrow c}$.

A factor tree $T$ of $G$ will be referred to as
a $(v, l)$-tree of $G$ if the variable
vertex $x_v$ is in $T$, every leaf vertex in $T$ is distance $2l$
from vertex $x_v$, and
all vertices in $G$ that have distance to $x_v$
no larger than $2l$ are contained
in $T$. It is clear that given $G$, $v\in V$ and a positive integer $l$,
if a $(v, l)$-tree of $G$ exists, it is unique. We therefore denote it
by $T^l_{v}$.

Given $T_v^l$ of factor graph $G$, factor tree $T_{v-c}^l$ of $G$ is
the subgraph of $T_v^l$ induced by vertex $x_v$ and all vertices of
$T_v^l$ whose paths to $x_v$ (in $T_v^l$) traverse through vertex
$\Gamma_c$. On the other hand, factor tree $T_{v\not- c}^l$ is the
subgraph of $T_v^l$ induced by vertex $x_v$ and all vertices of
$T_v^l$ whose paths to $x_v$ (in $T_v^l$) {\em do not} traverse
through vertex $\Gamma_c$.

In what follows, we will use superscript $(l)$ on a message to
refer to the message in the $l^{\rm th}$ iteration.

\begin{prop}
\label{prop:DTP_on_tree_c2v}
Suppose that $l\ge 1$ and that
factor tree $T_v^l$ of factor graph $G$ exists. Then in iteration $l$
of DTP,
\[
t_{c\rightarrow v}^{(l)}= {\tt F}_{T^l_{v-c}}^{L[T_{v-c}^l]\rightarrow v}
\left(
\prod\limits_{u\in L[T_{v-c}^l]} t^{(1)}_{u\stackrel{T_{v-c}^l}{\longrightarrow}v}
\right).
\]

\end{prop}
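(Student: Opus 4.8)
The plan is to induct on the iteration index $l$, with the engine of the proof being two ``composition'' identities for the tree-forcing function ${\tt F}_T^{U\rightarrow v}(\cdot)$ that encode how satisfiability on a factor tree factorizes along the tree. For the base case $l=1$, observe that $T_{v-c}^1$ consists of $x_v$, the single constraint $\Gamma_c$, and the remaining neighbors $V(c)\setminus\{v\}$ of $\Gamma_c$, which are precisely its leaves, so $L[T_{v-c}^1]=V(c)\setminus\{v\}$. For each such leaf $u$ the index $u\stackrel{T_{v-c}^1}{\longrightarrow}v$ is by definition $u\rightarrow c$; and since $T_{v-c}^1$ contains the single factor $\Gamma_c$ with $U=V(c)\setminus\{v\}$, the function ${\tt F}_{T_{v-c}^1}^{L[T_{v-c}^1]\rightarrow v}$ collapses to ${\tt F}_c$, as noted immediately after the definition of ${\tt F}_T^{U\rightarrow v}$. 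Thus the asserted identity is nothing but the DTP update rule (\ref{equ:tokenctov}) at the first iteration.

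For the inductive step, suppose the statement holds at iteration $l$ for every neighboring pair. Unwinding one iteration of DTP via (\ref{equ:tokenvtoc}) and (\ref{equ:tokenctov}), I would write
\[
t_{c\rightarrow v}^{(l+1)}={\tt F}_c\Bigl(\prod_{u\in V(c)\setminus\{v\}}t_{u\rightarrow c}^{(l+1)}\Bigr),\qquad t_{u\rightarrow c}^{(l+1)}=\bigcap_{b\in C(u)\setminus\{c\}}t_{b\rightarrow u}^{(l)}.
\]
Because $T_v^{l+1}$ exists, the ball of radius $2l$ about any $u\in V(c)\setminus\{v\}$ lies inside it and is therefore a tree, so $T_u^l$ (and each $T_{u-b}^l$) exists and the induction hypothesis applies to every message $t_{b\rightarrow u}^{(l)}$, rewriting it as a tree-forcing of the iteration-$1$ leaf messages on $T_{u-b}^l$.

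The crux is then to glue these per-branch forcings back into a single forcing on $T_{v-c}^{l+1}$, and this is where I expect the real work to lie. I would isolate two lemmas about ${\tt F}_T^{U\rightarrow v}$, each proved directly from the definition ${\tt F}_T^{U\rightarrow v}(t_U)=\bigl((t_U\times(\chi^*)^{V[T]\setminus U})\cap\Gamma_T\bigr)_{:\{v\}}$. The first, \emph{variable-splitting}, states that when $x_u$ has in $T$ the constraint-branches $T_{u-b}^l$ (which are exactly the branches of $T_{u\not- c}^l$), then $\bigcap_{b}{\tt F}_{T_{u-b}^l}^{L[T_{u-b}^l]\rightarrow u}(\cdot)={\tt F}_{T_{u\not- c}^l}^{L[T_{u\not- c}^l]\rightarrow u}(\cdot)$; applying it converts the intersection defining $t_{u\rightarrow c}^{(l+1)}$ into a single forcing over the subtree $T_{u\not- c}^l$ hanging below $u$. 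The second, \emph{constraint-peeling}, states that feeding the per-$u$ forced tokens through ${\tt F}_c$ reproduces the forcing over the whole subtree, i.e. ${\tt F}_c\bigl(\prod_u {\tt F}_{T_{u\not- c}^l}^{L[T_{u\not- c}^l]\rightarrow u}(\cdot)\bigr)={\tt F}_{T_{v-c}^{l+1}}^{L[T_{v-c}^{l+1}]\rightarrow v}(\cdot)$. Both identities hold precisely because distinct branches of a factor tree share only their common root variable, so completions of a partial assignment may be chosen independently on each branch; the monotonicity of ${\tt F}_c$ from Lemma \ref{lem:forcedToken_monotonic} and the single-factor reduction of ${\tt F}_T$ will be the routine ingredients. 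The main obstacle is stating and verifying this branch-independence cleanly, since it is the set-theoretic (Boolean-semiring) analogue of the exactness of sum-product on trees and must be argued at the level of assignments and their projections rather than by invoking a generic distributive law.

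Finally, I would discharge the bookkeeping that the reindexed leaf messages agree: for a leaf $w$ of $T_{u-b}^l$ the unique paths give $w\stackrel{T_{u-b}^l}{\longrightarrow}u=w\stackrel{T_{u\not- c}^l}{\longrightarrow}u=w\stackrel{T_{v-c}^{l+1}}{\longrightarrow}v$, so the iteration-$1$ messages supplied by the induction hypothesis are exactly the leaf messages demanded on $T_{v-c}^{l+1}$; and the disjoint union $\bigsqcup_{u}L[T_{u\not- c}^l]=L[T_{v-c}^{l+1}]$ (again the tree property) assembles the product over all leaves. Chaining the two lemmas then yields $t_{c\rightarrow v}^{(l+1)}={\tt F}_{T_{v-c}^{l+1}}^{L[T_{v-c}^{l+1}]\rightarrow v}\bigl(\prod_{w\in L[T_{v-c}^{l+1}]}t^{(1)}_{w\stackrel{T_{v-c}^{l+1}}{\longrightarrow}v}\bigr)$, completing the induction.
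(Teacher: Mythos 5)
Your proposal follows the same induction on $l$ as the paper's proof, and the two composition identities you isolate (variable-splitting, giving $\bigcap_{b}{\tt F}_{T_{u-b}^l}^{L[T_{u-b}^l]\rightarrow u}(\cdot)={\tt F}_{T_{u\not- c}^l}^{L[T_{u\not- c}^l]\rightarrow u}(\cdot)$, and constraint-peeling, giving ${\tt F}_c\bigl(\prod_u {\tt F}_{T_{u\not- c}^l}^{L[T_{u\not- c}^l]\rightarrow u}(\cdot)\bigr)={\tt F}_{T_{v-c}^{l+1}}^{L[T_{v-c}^{l+1}]\rightarrow v}(\cdot)$) are precisely the two equalities the paper's inductive step asserts without comment. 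Your version is, if anything, more explicit about where the branch-independence of the factor tree is actually being used.
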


\begin{proof}
We will prove this result by induction on $l$.

For the base case, we have
\begin{eqnarray*}
t_{c\rightarrow v}^{(1)}&=&{\tt F}_c^v
\left(
\prod\limits_{u\in V(c)\setminus \{v\}} t_{u\rightarrow c}^{(1)}
\right)\\
&=& {\tt F}_{T^1_{v-c}}^{L[T_{v-c}^1]\rightarrow v}
\left(
\prod\limits_{u\in L[T_{v-c}^1]} t^{(1)}_{u\stackrel{T_{v-c}^1}{\longrightarrow}v}
\right).
\end{eqnarray*}

As the inductive hypothesis, suppose that the result of this proposition holds
for a given iteration number $l\ge 1$. This implies specifically
that for every $u\in V(c)\setminus \{v\}$ and every $b\in C(u)\setminus \{c\}$,
\begin{eqnarray*}
t_{b\rightarrow u}^{(l)}
& = &
{\tt F}_{T_{u-b}^l}^{L[T_{u-b}^l]\rightarrow u}
\left(
\prod\limits_{w\in L[T_{u-b}^l]}
t^{(1)}_{w \stackrel{T_{u-b}^l}{\longrightarrow} u
}
\right).
\end{eqnarray*}
Then
\begin{eqnarray*}
t_{u\rightarrow c}^{(l+1)} & = &
\bigcap_{b\in C(u)\setminus \{c\}}
t_{b\rightarrow u}^{(l)}\\
& = &
\bigcap_{b\in C(u)\setminus \{c\}}
{\tt F}_{T_{u-b}^l}^{L[T_{u-b}^l]\rightarrow u}
\left(
\prod\limits_{w\in L[T_{u-b}^l]}
t^{(1)}_{w \stackrel{T_{u-b}^l}{\longrightarrow} u
}
\right)\\
& = & {\tt F}_{T^{l}_{u\not - c}}^ {L[T^{l}_{u\not - c}]\rightarrow
u} \left( \prod\limits_{w\in L[T^{l}_{u\not - c}] }
t^{(1)}_{w\stackrel{T^{l}_{u\not - c}}{\longrightarrow} u} \right).
\end{eqnarray*}

Finally,
\begin{eqnarray*}
t_{c\rightarrow v}^{(l+1)} & = &
{\tt F}_c^v
\left(
\prod\limits_{u\in V(c)\setminus \{v\}}
t_{u\rightarrow c}^{(l+1)}
\right)\\
& = &
{\tt F}_c^v
\left(
\prod\limits_{u\in V(c)\setminus \{v\}}
{\tt F}_{T^{l}_{u\not - c}}^
{L[T^{l}_{u\not - c}]\rightarrow u}
\left(
\prod\limits_{w\in
L[T^{l}_{u\not - c}]
}
t^{(1)}_{w\stackrel{T^{l}_{u\not - c}}{\longrightarrow} u}
\right)
\right)\\
& = &
{\tt F}_{T^{l+1}_{v-c}}^{L[T_{v-c}^{l+1}]\rightarrow v}
\left(
\prod\limits_{w\in L[T_{v-c}^{l+1}]} t^{(1)}_{w\stackrel{T_{v-c}^{l+1}}
{\longrightarrow}v}
\right).
\end{eqnarray*}

This completes the proof.
\end{proof}

Translating this results to summary tokens, the following result can be obtained immediately.

\begin{cor}
\label{cor:DTP_on_tree_summary}
Suppose that $l \ge 1$ and that
factor tree $T_v^l$ of factor graph $G$ exists. Then in iteration $l$ of DTP,
\[
t_{v}^{(l)}= {\tt F}_{T^l_{v}}^{L[T_{v}^l]\rightarrow v}
\left(
\prod\limits_{u\in L[T_{v}^l]} t^{(1)}_{u\stackrel{T_{v}^l}{\longrightarrow}v}
\right).
\]
\end{cor}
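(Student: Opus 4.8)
The plan is to derive the corollary directly from Proposition \ref{prop:DTP_on_tree_c2v} together with the summary-token update rule, with essentially no new computation. First I would specialize the summary-token rule (\ref{equ:sum_token}) to iteration $l$, writing $t_v^{(l)} = \bigcap_{c\in C(v)} t_{c\rightarrow v}^{(l)}$, and then substitute the closed form for each branch message supplied by Proposition \ref{prop:DTP_on_tree_c2v}, obtaining
\[
t_v^{(l)} = \bigcap_{c\in C(v)} {\tt F}_{T_{v-c}^l}^{L[T_{v-c}^l]\rightarrow v}\left(\prod_{u\in L[T_{v-c}^l]} t^{(1)}_{u\stackrel{T_{v-c}^l}{\longrightarrow}v}\right).
\]
The remaining task is to show that this intersection collapses to the single forced token via the full tree $T_v^l$.

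Second, I would establish the key \emph{gluing} identity: for a factor tree that is the union of several subtrees meeting only at a common variable vertex $x_v$, the token forced at $v$ by the whole tree equals the intersection of the tokens forced at $v$ by the individual subtrees. Here the subtrees are exactly $\{T_{v-c}^l:c\in C(v)\}$, whose union is $T_v^l$ and whose pairwise intersection is the single vertex $x_v$. Consequently their leaf sets partition $L[T_v^l]$ and the satisfying set $\Gamma_{T_v^l}$ factors as the per-branch satisfying sets glued along the shared coordinate $v$. The identity then amounts to the claim that an elementary assignment $r_v$ on $\{v\}$ lies in the projection of $\left(\left(\prod_{u\in L[T_v^l]} t_u\right)\times(\chi^*)^{V[T_v^l]\setminus L[T_v^l]}\right)\cap\Gamma_{T_v^l}$ onto $\{v\}$ if and only if $r_v$ lies in each per-branch projection.

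Third, I would justify this equivalence by the variable-disjointness of the branches: given $r_v$ feasible in every branch, each branch supplies a satisfying extension with value $r_v$ at $v$, and since the branches share no coordinate other than $v$ these extensions can be concatenated into a single satisfying assignment of $T_v^l$; the converse (restriction) is immediate. This is precisely the same gluing step already carried out in the inductive argument of Proposition \ref{prop:DTP_on_tree_c2v}, where the branches $T_{u-b}^l$ meeting at $u$ were fused into $T_{u\not-c}^l$ --- the only change is that the index now ranges over all of $C(v)$ rather than $C(v)\setminus\{c\}$, so the fused tree is the entire $T_v^l$.

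The main obstacle is the gluing identity itself, since projection does not commute with intersection in general; the argument works only because the branches are coordinate-disjoint away from $v$, which lets the feasibility witnesses be chosen independently on each branch. Once this combinatorial fact is in hand, the corollary follows by a single substitution and requires no further work, which is why it is stated as an immediate consequence.
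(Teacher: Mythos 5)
Your proposal is correct and matches the paper's (implicit) argument: the paper simply asserts the corollary follows immediately from Proposition \ref{prop:DTP_on_tree_c2v}, and your expansion --- intersecting the branch messages $t^{(l)}_{c\rightarrow v}$ over all $c\in C(v)$ and fusing the coordinate-disjoint subtrees $T^l_{v-c}$ into $T^l_v$ --- is exactly the same gluing step the paper already uses, without further justification, in the inductive step of that proposition (where $\bigcap_{b\in C(u)\setminus\{c\}}{\tt F}_{T^l_{u-b}}$ collapses to ${\tt F}_{T^l_{u\not-c}}$). Your explicit identification of why the gluing is valid (branches share only the coordinate $v$, so feasibility witnesses concatenate) is a welcome detail the paper omits.
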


The implication of this result is that on factor graph with sufficiently
large girth, DTP is in fact very well-behaved: the summary token at any
variable $x_v$ in iteration $l$ depends precisely on the initial
tokens passed by variables that are $2l$ away from $x_v$. Specifically, one may view those tokens form a rectangle on $L[T_v^l]$, and the
summary token at $x_v$ in iteration $l$ is precisely the set of
all assignments on $\{v\}$
that can make $\Gamma_{T^l_v}$ satisfied, given the assignment on
$L[T_v^l]$ is from that rectangle.

Now we develop some results of DTP that require no ``local cycle-freeness'' in
the factor graph.

\begin{lem}
\label{lem:DTP_cycle_1}

At every $v\in V$ and for any $l$,
\[t_v^{(l)}= \bigcap_{c\in C(v)}t_{v\rightarrow c}^{(l+1)}.\]
\end{lem}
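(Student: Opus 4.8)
The plan is to expand the right-hand side $\bigcap_{c\in C(v)} t_{v\rightarrow c}^{(l+1)}$ using the DTP left-message update rule (\ref{equ:tokenvtoc}) and then collapse the resulting double intersection to the single intersection defining the summary token (\ref{equ:sum_token}). First I would fix the timing convention: under the flooding schedule, the left messages of iteration $l+1$ are computed from the right messages produced at the end of iteration $l$, so that
\[
t_{v\rightarrow c}^{(l+1)} = \bigcap_{b\in C(v)\setminus\{c\}} t_{b\rightarrow v}^{(l)}.
\]
Substituting this into the right-hand side gives
\[
\bigcap_{c\in C(v)} t_{v\rightarrow c}^{(l+1)} = \bigcap_{c\in C(v)}\;\bigcap_{b\in C(v)\setminus\{c\}} t_{b\rightarrow v}^{(l)},
\]
which reduces the lemma to the purely set-theoretic claim that this double intersection equals $\bigcap_{b\in C(v)} t_{b\rightarrow v}^{(l)}$, the latter being exactly $t_v^{(l)}$ by (\ref{equ:sum_token}).

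The set-theoretic identity I would establish is: for a finite family $\{A_b : b\in C(v)\}$ with $|C(v)|\ge 2$,
\[
\bigcap_{c\in C(v)}\;\bigcap_{b\in C(v)\setminus\{c\}} A_b = \bigcap_{b\in C(v)} A_b.
\]
The inclusion $\supseteq$ is immediate since each inner intersection omits only a single index. For $\subseteq$, I would argue pointwise: an element lying in the left-hand side lies in $A_b$ for every pair $(c,b)$ with $b\ne c$; fixing an arbitrary target index $b_0$, the assumption $|C(v)|\ge 2$ guarantees some $c_0\ne b_0$ in $C(v)$, and membership for the pair $(c_0,b_0)$ forces the element into $A_{b_0}$. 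Since $b_0$ was arbitrary, the element lies in $\bigcap_b A_b$. Applying this with $A_b = t_{b\rightarrow v}^{(l)}$ then completes the argument.

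Note that this lemma requires no cycle-freeness hypothesis on $G$; it is a direct consequence of the update rules and holds for arbitrary $l$. The only genuine ingredient beyond bookkeeping is the standing assumption $|C(v)|\ge 2$ made in Section \ref{sec:csp}, without which the inner intersection over $C(v)\setminus\{c\}$ could drop a factor that the outer intersection never recovers. Accordingly, the main thing to be careful about is the message-timing bookkeeping --- that the iteration-$(l+1)$ left messages are assembled from exactly the iteration-$l$ right messages that also assemble the iteration-$l$ summary token --- rather than any analytic difficulty; once the superscripts are aligned, the result is just the two-line intersection identity above.
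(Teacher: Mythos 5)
Your proposal is correct and is essentially the paper's own argument: the paper proves the two inclusions pointwise using exactly the same timing convention ($t_{v\rightarrow c}^{(l+1)}=\bigcap_{b\in C(v)\setminus\{c\}}t_{b\rightarrow v}^{(l)}$) and the same collapse of the double intersection, with the standing assumption $|C(v)|\ge 2$ doing the work in the $\subseteq$ direction just as you identify. Your version merely isolates the set-theoretic identity explicitly, which is a harmless (and slightly cleaner) repackaging of the same proof.
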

\begin{proof}
Suppose that $x_v\in t_v^{(l)}$. Then $x_v\in t_{c\rightarrow v}^{(l)}$ for
every $c\in C(v)$, by the definition of summary messages.  It follows that
 $x_v\in t_{v\rightarrow c}^{(l+1)}$ for every $c\in C(v)$. Then
$x_v\in \bigcap_{c\in C(v)}t^{(l+1)}_{v\rightarrow c}$. This shows that
$t^{(l)}_v\subseteq \bigcap_{c\in C(v)}t^{(l+1)}_{v\rightarrow c}$.

On the other hand, suppose that $x_v\in \bigcap_{c\in C(v)}t^{(l+1)}_{v\rightarrow c}$. Then $x_v\in t^{(l+1)}_{v\rightarrow c}=
\bigcap_{b\in C(v)\setminus \{c\}}t^{(l)}_{b\rightarrow v}$, for every $c\in C(v)$.  It follows that $x_v\in t^{(l)}_{b\rightarrow v}$ for
every $b\in C(v)$, giving rise to that
$x_v\in \bigcap_{b\in C(v)}t^{(l)}_{b\rightarrow v}=t_v^{(l)}.$  Thus
$\bigcap_{c\in C(v)}t_{v\rightarrow c}^{(l+1)} \subseteq t_v^{(l)}.$

Therefore $t_v^{(l)}= \bigcap_{c\in C(v)}t_{v\rightarrow c}^{(l+1)}$.
\end{proof}

\begin{lem}
\label{lem:DTP_cycle_2}
Suppose that $\hat{x}_V$ is a satisfying assignment on $V$,
 namely that $\hat{x}_V$ satisfies (\ref{equ:csp}). If
$\hat{x}_V\in
\prod\limits_{v\in V}\bigcap\limits_{c\in C(v)} t_{v\rightarrow c}^{(l)}$
in some iteration $l$,
then $\hat{x}_V\in \prod\limits_{v\in V}t_v^{(l)}$.
\end{lem}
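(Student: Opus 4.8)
The plan is to reduce the claim to a coordinate-wise statement and verify it one constraint at a time. Recall that, under the containedness convention of the paper, the hypothesis $\hat{x}_V\in \prod_{v\in V}\bigcap_{c\in C(v)} t_{v\rightarrow c}^{(l)}$ means precisely that $\hat{x}_v\in t_{v\rightarrow c}^{(l)}$ for every $(v-c)$, while the desired conclusion $\hat{x}_V\in \prod_{v\in V} t_v^{(l)}$ means $\hat{x}_v\in t_v^{(l)}$ for every $v\in V$. Since $t_v^{(l)}=\bigcap_{c\in C(v)} t_{c\rightarrow v}^{(l)}$ by (\ref{equ:sum_token}), it suffices to fix an arbitrary $(v-c)$ and show that $\hat{x}_v\in t_{c\rightarrow v}^{(l)}$.

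First I would unwind the right-message rule. By (\ref{equ:tokenctov}), $t_{c\rightarrow v}^{(l)}={\tt F}_c\big(\prod_{u\in V(c)\setminus\{v\}} t_{u\rightarrow c}^{(l)}\big)$, where --- under the flooding schedule --- the left messages $t_{u\rightarrow c}^{(l)}$ carry the same iteration index as appears in the hypothesis. For each $u\in V(c)\setminus\{v\}$ we have $c\in C(u)$, so the hypothesis yields $\hat{x}_u\in t_{u\rightarrow c}^{(l)}$; equivalently, the restriction $\hat{x}_{V(c)\setminus\{v\}}$ is contained in the rectangle $\prod_{u\in V(c)\setminus\{v\}} t_{u\rightarrow c}^{(l)}$.

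The crux is then the defining property of the forced token. Because $\hat{x}_V$ is a satisfying assignment it satisfies $\Gamma_c$, i.e.\ $\hat{x}_{V:V(c)}\in\Gamma_c$; writing this as the concatenation $(\hat{x}_v,\hat{x}_{V(c)\setminus\{v\}})$, we see that the elementary assignment $\hat{x}_v$, when concatenated with an assignment on $V(c)\setminus\{v\}$ contained in $\prod_{u\in V(c)\setminus\{v\}} t_{u\rightarrow c}^{(l)}$, makes $\Gamma_c$ satisfied. By the set-membership characterization of ${\tt F}_c$ this is exactly the statement that $\hat{x}_v\in {\tt F}_c\big(\prod_{u\in V(c)\setminus\{v\}} t_{u\rightarrow c}^{(l)}\big)=t_{c\rightarrow v}^{(l)}$. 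As $(v-c)$ was arbitrary, intersecting over all $c\in C(v)$ gives $\hat{x}_v\in t_v^{(l)}$ for every $v$, which is the conclusion.

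I do not expect a genuine obstacle here: the argument is a direct set-theoretic unwinding of the rectangle-containment relation, the summary rule (\ref{equ:sum_token}), the right-message rule (\ref{equ:tokenctov}), and the defining property of ${\tt F}_c$. The only points requiring care are bookkeeping ones --- correctly reading the rectangle-containment notation coordinate-by-coordinate, and confirming that the flooding schedule makes the left messages feeding $t_{c\rightarrow v}^{(l)}$ the same iteration-$l$ messages constrained by the hypothesis. As a byproduct the argument also shows each forced token $t_{c\rightarrow v}^{(l)}$ contains $\hat{x}_v$ and is hence nonempty, so the summary tokens in iteration $l$ are well defined, but this plays no role beyond confirming consistency.
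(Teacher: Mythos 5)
Your proof is correct and follows essentially the same route as the paper's: reduce to showing $\hat{x}_v\in t_{c\rightarrow v}^{(l)}$ for each $(v-c)$, use the hypothesis to place $\hat{x}_{V(c)\setminus\{v\}}$ inside the rectangle of incoming left messages, and use the satisfying-assignment property to conclude membership in the forced token. The only cosmetic difference is that you invoke the set-membership characterization of ${\tt F}_c$ directly, whereas the paper first applies ${\tt F}_c$ to the singleton rectangle $\{\hat{x}_{V(c)\setminus\{v\}}\}$ and then cites the monotonicity of ${\tt F}_c$ (Lemma~\ref{lem:forcedToken_monotonic}); these are interchangeable.
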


\begin{proof}
The fact that $\hat{x}_V \in \prod\limits_{v\in
V}\bigcap\limits_{c\in C(v)} t_{v\rightarrow c}^{(l)}$ implies that
for every $v\in V$ and $c\in C(v)$, $\hat{x}_{V:\{v\}} \in
\bigcap\limits_{c\in C(v)} t_{v\rightarrow c}^{(l)} \subseteq
t_{v\rightarrow c}^{(l)}$, and hence via the ``monotonicity'' of
function ${\tt F}_c$, ${\tt F}_c\left(\{\hat{x}_{V:V(c)\setminus
\{v\}}\}\right) \subseteq {\tt F}_c\left(\prod\limits_{u\in
V(c)\setminus \{v\}} t_{u\rightarrow c}^{(l)} \right) =
t_{c\rightarrow v}^{(l)}.$ Incorporating that $\hat{x}_V$ is a
satisfying assignment, we see that $\hat{x}_{V:\{v\}} \in {\tt
F}_c\left(\{\hat{x}_{V:V(c)\setminus \{v\}}\}\right) \subseteq
 t_{c\rightarrow v}^{(l)}$, for every $v\in V$ and $c\in C(v)$. Thus
$\hat{x}_{V:\{v\}}\in \bigcap\limits_{c\in C(v)}
t_{c\rightarrow v}^{(l)} = t_v^{(l)}.$ It then follows that
$\hat{x}_V\in \prod\limits_{v\in V}t^{(l)}_{v}$.
\end{proof}

\begin{prop}
\label{prop:DTP_on_cycle}
Suppose that $\hat{x}_V$ is a satisfying assignment and that
the initialization of DTP is such that
$\hat{x}_{V:\{v\}} \in t^{(1)}_{v\rightarrow c}$ for every $v\in V$ and
$c\in C(v)$. Then in any iteration $l$, the rectangle
$\prod\limits_{v\in V}t^{(l)}_v$ formed by the summary tokens contains
$\hat{x}_V$.
\end{prop}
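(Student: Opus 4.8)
The plan is to reduce the statement to Lemma \ref{lem:DTP_cycle_2} by showing that the satisfying assignment $\hat{x}_V$ stays contained, coordinate by coordinate, in every left message throughout the run of DTP. Concretely, I would prove by induction on the iteration index $l$ the persistence claim that $\hat{x}_{V:\{v\}}\in t^{(l)}_{v\rightarrow c}$ for every $(v-c)$. The base case $l=1$ is exactly the initialization hypothesis. Once this claim is in hand, for any fixed $l$ it gives $\hat{x}_{V:\{v\}}\in\bigcap_{c\in C(v)}t^{(l)}_{v\rightarrow c}$ for every $v$, that is, $\hat{x}_V\in\prod_{v\in V}\bigcap_{c\in C(v)}t^{(l)}_{v\rightarrow c}$, whereupon Lemma \ref{lem:DTP_cycle_2} immediately yields $\hat{x}_V\in\prod_{v\in V}t^{(l)}_v$, the desired conclusion.

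For the inductive step I would first propagate the containment through the constraint vertices. Assuming $\hat{x}_{V:\{u\}}\in t^{(l)}_{u\rightarrow c}$ for every $(u-c)$, the rectangle $\prod_{u\in V(c)\setminus\{v\}}t^{(l)}_{u\rightarrow c}$ contains the assignment $\hat{x}_{V:V(c)\setminus\{v\}}$, so by the monotonicity of ${\tt F}_c$ (Lemma \ref{lem:forcedToken_monotonic}) together with the constraint-to-variable update (\ref{equ:tokenctov}) one obtains ${\tt F}_c\left(\{\hat{x}_{V:V(c)\setminus\{v\}}\}\right)\subseteq t^{(l)}_{c\rightarrow v}$. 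Because $\hat{x}_V$ is a satisfying assignment we have $\hat{x}_{V:V(c)}\in\Gamma_c$, and by the very definition of the forced token this places $\hat{x}_{V:\{v\}}$ in ${\tt F}_c\left(\{\hat{x}_{V:V(c)\setminus\{v\}}\}\right)$, hence in $t^{(l)}_{c\rightarrow v}$, for every $(v-c)$. Feeding this into the variable-to-constraint update (\ref{equ:tokenvtoc}), the token $t^{(l+1)}_{v\rightarrow c}=\bigcap_{b\in C(v)\setminus\{c\}}t^{(l)}_{b\rightarrow v}$ is an intersection of tokens each containing $\hat{x}_{V:\{v\}}$, so $\hat{x}_{V:\{v\}}\in t^{(l+1)}_{v\rightarrow c}$, closing the induction.

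I expect the argument to be essentially routine; the only points that require care are the bookkeeping of the flooding schedule (ensuring that the constraint messages $t^{(l)}_{c\rightarrow v}$ used in the step are the ones computed from the same-iteration left messages, while $t^{(l+1)}_{v\rightarrow c}$ reads the right messages of iteration $l$) and the conversion of the global membership $\hat{x}_{V:V(c)}\in\Gamma_c$ into the single-coordinate membership $\hat{x}_{V:\{v\}}\in{\tt F}_c\left(\{\hat{x}_{V:V(c)\setminus\{v\}}\}\right)$ through the definition of ${\tt F}_c$. I note that Lemma \ref{lem:DTP_cycle_1} is not strictly needed for this route: the conclusion follows directly from the persistence claim and Lemma \ref{lem:DTP_cycle_2}. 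In fact, the intermediate fact $\hat{x}_{V:\{v\}}\in t^{(l)}_{c\rightarrow v}$ that I establish is precisely the content embedded in the proof of Lemma \ref{lem:DTP_cycle_2}, so one could alternatively combine the persistence of the left messages with a direct appeal to $t^{(l)}_v=\bigcap_{c\in C(v)}t^{(l)}_{c\rightarrow v}$ from (\ref{equ:sum_token}) to reach the same end without invoking Lemma \ref{lem:DTP_cycle_2} at all.
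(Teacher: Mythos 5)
Your proof is correct. It reaches the same conclusion as the paper but carries the induction on a different object: you make the invariant ``$\hat{x}_{V:\{v\}}\in t^{(l)}_{v\rightarrow c}$ for every $(v-c)$'' (containment in the left messages), whereas the paper's induction hypothesis is $\hat{x}_V\in\prod_{v\in V}t^{(l)}_v$ (containment in the summary tokens), with the inductive step running summary$(l)\rightarrow$ left$(l+1)$ via Lemma \ref{lem:DTP_cycle_1} and left$(l+1)\rightarrow$ summary$(l+1)$ via Lemma \ref{lem:DTP_cycle_2}. Your route dispenses with Lemma \ref{lem:DTP_cycle_1} entirely, which is a genuine simplification in the logical dependencies: that lemma asserts a set \emph{equality} $t_v^{(l)}=\bigcap_{c\in C(v)}t^{(l+1)}_{v\rightarrow c}$, of which only one inclusion is ever needed for this proposition, and your left-message invariant sidesteps it by propagating containment forward through the constraint update (monotonicity of ${\tt F}_c$ plus $\hat{x}_{V:V(c)}\in\Gamma_c$) and the intersection update directly. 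The trade-off is that this forward propagation is precisely the argument already packaged inside the proof of Lemma \ref{lem:DTP_cycle_2}, so you end up re-deriving that lemma's content within your inductive step; the paper instead treats both lemmas as black boxes and gets a three-line proof. Your closing observation --- that one can finish either by invoking Lemma \ref{lem:DTP_cycle_2} once or by reading $t_v^{(l)}=\bigcap_{c\in C(v)}t^{(l)}_{c\rightarrow v}$ off (\ref{equ:sum_token}) directly from the intermediate fact $\hat{x}_{V:\{v\}}\in t^{(l)}_{c\rightarrow v}$ --- is also accurate, and your handling of the flooding-schedule indexing (same-iteration right messages feeding next-iteration left messages) matches the paper's conventions.
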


\begin{proof}
At iteration 1, the fact that $\hat{x}_{V:\{v\}} \in
t^{(1)}_{v\rightarrow c}$ for every $v\in V$ and $c\in C(v)$ implies
that $\hat{x}_V\in\prod_{v\in V}\bigcap_{c\in
C(v)}t^{(1)}_{v\rightarrow c}$. Followed by Lemma \ref{lem:DTP_cycle_2},
we have
$\hat{x}_V\in\prod_{v\in V}t^{(1)}_{v}$.

As the inductive hypothesis, suppose we have $\hat{x}_{V}\in\prod_{v\in
V}t^{(l)}_{v} $ at iteration $l$. At iteration $l+1$, followed by
Lemma \ref{lem:DTP_cycle_1}, we have $\hat{x}_{V}\in\prod_{v\in V}\bigcap_{c\in
C(v)}t^{(l+1)}_{v\rightarrow c}$. Then by Lemma \ref{lem:DTP_cycle_2},
$\hat{x}_{V}\in\prod_{v\in V}t^{(t+1)}_v$.

Therefore, this proposition is proved by induction.
\end{proof}

At this end, we have shown that if DTP is initialized to
``containing'' a satisfying assignment, then this assignment is
contained in the rectangle formed by the summary tokens in all
iterations. That is, the solution of the CSP will never get ``lost''
during DTP iteration provided that it is contained in the initial
rectangle. This result (Proposition \ref{prop:DTP_on_cycle}) and
Corollary \ref{cor:DTP_on_tree_summary} presented earlier will
become useful when we discuss the dynamics of PTP.

\subsection{On the Dynamics of PTP and Weighted PTP}

We now turn our attention to (non-weighted) PTP.

Denote by $G_v^l$ the factor-subgraph of $G$ which contains all
factors whose messages have propagated to variable $x_v$ by the end
of PTP iteration $l$. That is, $G_v^l$ is the factor-subgraph of $G$
that contains variable vertex $x_v$ and all vertices whose distances
to $x_v$ are no larger than $2l$. It is apparent that if $G_v^l$ is
a tree, then it is the $(v,l)$ factor tree $T_v^l$.

Let $l^*$ be the smallest $l$ such that at least for one $v\in V$,
$T_v^l$ does not exist. Denote $m_v(l):=\left| \left( \Gamma_{G_v^l}
\right)_{:\{v\}}\right|$. That is, $m_v(l)$ is the number of
assignments of variable $x_v$ that can make all constraints in
$G_v^l$ satisfied. Clearly, $m_v(l)$ is a non-increasing function of
$l$.

We will first restrict the CSP to a ``single-solution CSP'', i.e.,
having exactly one satisfying assignment. We will denote this assignment
on $V$ by $\hat{x}_V$.

Let $\hat{l}$ be the smallest $l$ for which
$\min\limits_{v} m_v(l)=1$.  It is worth noting that such $\hat{l}$ exists since
the CSP has precisely one solution. Let $\hat{v}$ satisfy $m_{\hat{v}}(\hat{l})= 1$.

\begin{prop}
\label{prop:PTP_on_tree}
Let factor graph $G$ represent a single-solution CSP.
Suppose that the initialization of PTP is such that every left message
$\lambda_{v\rightarrow c}^{(1)}(t)$ is strictly positive for every
$t\in \left(\chi^*\right)^{\{v\}}$.
If $\hat{l}<l^*$, then

\[
\mu_{\hat{v}}^{{\rm norm}~ (\hat{l})}(t)=[t=\{\hat{x}_{V:\{\hat{v}\}}\}].
\]
\end{prop}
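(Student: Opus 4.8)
The plan is to transfer the deterministic tree analysis of Appendix~A to the random setting of PTP and then to use the single-solution hypothesis to pin down the support of the summary message. First I would observe that, since $\hat l < l^*$, the tree $T_{\hat v}^l$ exists for every $l\le \hat l$, so the distance-$2\hat l$ neighbourhood $G_{\hat v}^{\hat l}$ of $\hat v$ coincides with the factor tree $T_{\hat v}^{\hat l}$. Because PTP passes random tokens using exactly the DTP functional update rules (this is the defining feature of PTP), Corollary~\ref{cor:DTP_on_tree_summary} applies to each realization of the random tokens: in iteration $\hat l$ the random summary token at $\hat v$ equals
\[
t_{\hat v}^{(\hat l)} = {\tt F}_{T_{\hat v}^{\hat l}}^{L[T_{\hat v}^{\hat l}]\rightarrow \hat v}\Bigl(\prod_{u\in L[T_{\hat v}^{\hat l}]} t^{(1)}_{u\stackrel{T_{\hat v}^{\hat l}}{\longrightarrow}\hat v}\Bigr),
\]
a deterministic function of the independent, strictly positively distributed leaf tokens.

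Next I would identify the projected solution set of the subtree. By definition $m_{\hat v}(\hat l)=|(\Gamma_{G_{\hat v}^{\hat l}})_{:\{\hat v\}}|=1$, and since $\hat x_V$ satisfies every constraint of the CSP, its restriction to $V[T_{\hat v}^{\hat l}]$ lies in $\Gamma_{T_{\hat v}^{\hat l}}$; hence $\hat x_{V:\{\hat v\}}\in(\Gamma_{T_{\hat v}^{\hat l}})_{:\{\hat v\}}$ and therefore $(\Gamma_{T_{\hat v}^{\hat l}})_{:\{\hat v\}}=\{\hat x_{V:\{\hat v\}}\}$. For any leaf configuration the displayed forced token is, by the definition of ${\tt F}_T^{U\rightarrow v}$, the projection onto $\{\hat v\}$ of a subset of $\Gamma_{T_{\hat v}^{\hat l}}$, so $t_{\hat v}^{(\hat l)}\subseteq\{\hat x_{V:\{\hat v\}}\}$ always. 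Thus every realization of the summary token is either $\emptyset$ or the singleton $\{\hat x_{V:\{\hat v\}}\}$, which shows that the only well-defined (non-empty) token in the support of $\mu_{\hat v}^{(\hat l)}$ is $\{\hat x_{V:\{\hat v\}}\}$.

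It then remains to verify that this singleton actually occurs, so that the conditioning (normalization) defining $\mu_{\hat v}^{{\rm norm}\,(\hat l)}$ is non-vacuous; this is exactly where the strict positivity of the initial messages is used. Since each $\lambda^{(1)}_{u\rightarrow c}(t)>0$ for every token $t$, there is positive probability that every leaf token $t^{(1)}_{u\stackrel{T}{\rightarrow}\hat v}$ contains the correct value $\hat x_{V:\{u\}}$; for such a configuration $\hat x_{V:V[T_{\hat v}^{\hat l}]}$ is an assignment contained in the leaf rectangle that satisfies $\Gamma_{T_{\hat v}^{\hat l}}$, so that $\hat x_{V:\{\hat v\}}\in t_{\hat v}^{(\hat l)}$ and the token is non-empty (this is the tree analogue of the monotonicity/containment argument used in Lemma~\ref{lem:forcedToken_monotonic} and Lemma~\ref{lem:DTP_cycle_2}). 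Consequently $\mu_{\hat v}^{(\hat l)}(\{\hat x_{V:\{\hat v\}}\})>0$ while $\mu_{\hat v}^{(\hat l)}(t)=0$ for all other $t$, and normalizing yields $\mu_{\hat v}^{{\rm norm}\,(\hat l)}(t)=[t=\{\hat x_{V:\{\hat v\}}\}]$.

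The main obstacle I anticipate is making rigorous the first step, namely that the local normalizations $\rho^{\rm norm}$ and $\lambda^{\rm norm}$ inserted at every vertex of the PTP recursion do not disturb the conclusion. The clean way around this is to note that normalization is a strictly positive rescaling and hence leaves the support of each message unchanged, so the set of attainable non-empty summary tokens is governed purely by the DTP functional recursion of Corollary~\ref{cor:DTP_on_tree_summary}; the distributional details beyond the support are irrelevant once the support already forces a point mass. I would take care to check that every intermediate normalizer is positive, which again follows from the positivity hypothesis propagating up the tree, so that all the conditional distributions involved are genuinely well-defined.
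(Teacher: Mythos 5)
Your proposal is correct and follows essentially the same route as the paper's own proof: invoke Corollary \ref{cor:DTP_on_tree_summary} on the factor tree $T_{\hat v}^{\hat l}$ (which exists since $\hat l<l^*$), use $m_{\hat v}(\hat l)=1$ to conclude every attainable summary token is either $\emptyset$ or $\{\hat x_{V:\{\hat v\}}\}$, and use strict positivity of the initial messages plus the conditioning/normalization to discard $\emptyset$. Your explicit verification that the singleton is actually attained (via the all-correct leaf configuration surviving every intermediate normalization) is a slightly more careful rendering of a point the paper treats tersely, but it is the same argument.
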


\begin{proof}
This result relies on Corollary \ref{cor:DTP_on_tree_summary}.

First, $\hat{l}<l^*$ implies that $(\hat{v}, \hat{l})$ factor tree
$T_{\hat{v}}^{\hat{l}}$ exists. Then by Corollary
\ref{cor:DTP_on_tree_summary}, if DTP is initialized such that the
tokens sent from the leaves of $T_{\hat{v}}^{\hat{l}}$ form
 $ \prod\limits_{u\in L[T^{\hat{l}}_{\hat{v}}]}t^{(1)}_{u\stackrel{T^{\hat{l}}_{\hat{v}}}{\longrightarrow}\hat{v}}$, then
the summary token at $v$ in the $\hat{l}^{\rm th}$ iteration is
${\tt F}_{T^{\hat{l}}_{\hat{v}}}^{L[T_{\hat{v}}^{\hat{l}}]\rightarrow \hat{v}}
\left(
\prod\limits_{u\in L[T^{\hat{l}}_{\hat{v}}]}t^{(1)}_{u\stackrel{T^{\hat{l}}_{\hat{v}}}{\longrightarrow}\hat{v}}
\right)$.

Since $\hat{v}$ satisfies $m_{\hat{v}}(\hat{l})= 1$, it is necessary that
${\tt F}_{T^{\hat{l}}_{\hat{v}}}^{L[T_{\hat{v}}^{\hat{l}}]\rightarrow \hat{v}}
\left(
\prod\limits_{u\in L[T^{\hat{l}}_{\hat{v}}]}t^{(1)}_{u\stackrel{T^{\hat{l}}_{\hat{v}}}{\longrightarrow}\hat{v}}
\right)$ is either token $\{\hat{x}_{V:\{v\}}\}$ or $\emptyset$, which depends
on the rectangle initialized.

Now PTP on $T_{\hat{v}}^{\hat{l}}$, with respect to $x_{\hat{v}}$,
may be understood as initializing a {\em random} rectangle on
$L[T_{\hat{v}}^{\hat{l}}]$ (the distribution of which is characterized by
the product of the initial messages), transforming the random rectangle
to random token on $\hat{v}$ via a functional mapping
${\tt F}_{T^{\hat{l}}_{\hat{v}}}^{L[T_{\hat{v}}^{\hat{l}}]\rightarrow \hat{v}}
\left(\cdot
\right)$, and conditioning on the resulting token being valid (non-empty set).
The fact that initial messages of PTP are strictly positive assures that
every rectangle on $L[T_{\hat{v}}^{\hat{l}}]$ has non-zero
probability during initialization. After conditioning on the resulting token being valid, the token $\emptyset$ is removed from the allowed realization of
the resulting token and thus
the resulting token equals $\{\hat{x}_{V:\{v\}}\}$ with probability 1. This completes the
proof.
\end{proof}

This result and its proof can be easily extended to a somewhat
larger family of CSPs each containing multiple solutions, as shown in the next proposition.

\begin{prop}
\label{prop:PTP_on_tree_2}
Suppose that in the CSP, there exist a coordinate $\hat{v}\in V$  and
an assignment $\hat{x}_{\hat{v}}\in \left(\chi^*\right)^{\{v\}}$
such that every satisfying configuration $\tilde{x}_V \in \Gamma$
satisfies $\tilde{x}_{V:\{v\}}=\hat{x}_v$. If for some integer $\hat{l}$,
$T_{\hat{v}}^{\hat{l}}$ exists and $m_{\hat{v}}(\hat{l})=1$, then
\[
\mu_{\hat{v}}^{{\rm norm}~ (\hat{l})}(t)=[t=\{\hat{x}_{\hat{v}}\}].
\]
\end{prop}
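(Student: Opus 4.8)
The plan is to mirror the proof of Proposition \ref{prop:PTP_on_tree} almost verbatim, since the present statement differs only in that the existence of the factor tree $T_{\hat v}^{\hat l}$ and the collapse $m_{\hat v}(\hat l)=1$ are now taken as hypotheses rather than deduced from $\hat l<l^*$ and the single-solution assumption. First I would record the two structural facts that make the reduction to DTP possible: because $T_{\hat v}^{\hat l}$ exists, the factor-subgraph $G_{\hat v}^{\hat l}$ coincides with this tree, so all incoming messages reaching $x_{\hat v}$ by iteration $\hat l$ depend only on the tokens initialized at the leaves $L[T_{\hat v}^{\hat l}]$ and are mutually independent (the tree carries no cycles, so the independence assumption built into PTP holds exactly). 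This is precisely the situation to which Corollary \ref{cor:DTP_on_tree_summary} applies.

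Next I would pin down the forced token. By Corollary \ref{cor:DTP_on_tree_summary}, the DTP summary token at $\hat v$ in iteration $\hat l$ equals ${\tt F}_{T_{\hat v}^{\hat l}}^{L[T_{\hat v}^{\hat l}]\rightarrow \hat v}$ evaluated at the rectangle of initial leaf tokens. Taking the largest leaf rectangle (full alphabet $\chi$ on each leaf) yields exactly the projection $\left(\Gamma_{G_{\hat v}^{\hat l}}\right)_{:\{\hat v\}}$, whose cardinality is $m_{\hat v}(\hat l)=1$ by hypothesis. The one remaining point --- and the only place the new hypothesis enters --- is to verify that this unique value is $\hat x_{\hat v}$ and not some other element of $\chi$. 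For this I would invoke that the CSP admits a satisfying assignment $\tilde x_V\in\Gamma$: its restriction to $V[G_{\hat v}^{\hat l}]$ satisfies every constraint of $G_{\hat v}^{\hat l}$, so $\tilde x_{V:\{\hat v\}}=\hat x_{\hat v}$ lies in $\left(\Gamma_{G_{\hat v}^{\hat l}}\right)_{:\{\hat v\}}$, forcing this singleton to be $\{\hat x_{\hat v}\}$. Then by the monotonicity of the forcing function (Lemma \ref{lem:forcedToken_monotonic}), every leaf rectangle forces a subset of $\{\hat x_{\hat v}\}$, i.e.\ either $\{\hat x_{\hat v}\}$ or $\emptyset$.

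Finally I would run the probabilistic argument exactly as in Proposition \ref{prop:PTP_on_tree}. PTP on $T_{\hat v}^{\hat l}$ may be read as drawing a random rectangle on $L[T_{\hat v}^{\hat l}]$ (with distribution the product of the initial left messages), pushing it through the deterministic map ${\tt F}_{T_{\hat v}^{\hat l}}^{L[T_{\hat v}^{\hat l}]\rightarrow \hat v}(\cdot)$, and conditioning the resulting token on being non-empty. The assumed strict positivity of every $\lambda_{v\rightarrow c}^{(1)}(t)$ guarantees that each leaf rectangle, in particular the full rectangle that forces $\{\hat x_{\hat v}\}$, occurs with positive probability; hence the conditioning event is non-null, and after conditioning the only surviving outcome is $\{\hat x_{\hat v}\}$. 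Therefore $\mu_{\hat v}^{{\rm norm}~(\hat l)}(t)=[t=\{\hat x_{\hat v}\}]$.

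I expect the main obstacle to be the bookkeeping of the middle step above: confirming that the singleton forced by $m_{\hat v}(\hat l)=1$ is genuinely $\hat x_{\hat v}$. In the single-solution case of Proposition \ref{prop:PTP_on_tree} this identification was automatic, whereas here it hinges on the hypothesis that all global solutions agree at $\hat v$ together with the (implicit) nonemptiness of $\Gamma$; everything else is a direct transcription of the earlier proof.
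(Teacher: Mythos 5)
Your proof is correct and is exactly the argument the paper intends: the paper skips the proof of this proposition, remarking only that it is ``similar to that for Proposition~\ref{prop:PTP_on_tree}'' and rests on Corollary~\ref{cor:DTP_on_tree_summary}, and your reconstruction faithfully mirrors that earlier proof. Your added care in pinning the forced singleton down to $\hat{x}_{\hat{v}}$ (via the nonemptiness of $\Gamma$ and monotonicity of the forcing map) and in carrying over the strict-positivity initialization hypothesis, which the statement of the proposition omits, fills genuine gaps in the paper's terse treatment rather than departing from its approach.
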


The proof is similar to that for proposition \ref{prop:PTP_on_tree}, which
essentially relies on Corollary \ref{cor:DTP_on_tree_summary} and that the local tree rooted at $\hat{v}$ is large enough. Skipping the proof, we note that
Proposition \ref{prop:PTP_on_tree} may be viewed as a special case of
Proposition \ref{prop:PTP_on_tree_2}.

Based on the results above, we provide some remarks concerning the dynamics
of PTP and argue intuitively how it solves a CSP.

\begin{enumerate}
\item Similar to what was argued in the proof of Proposition \ref{prop:PTP_on_tree},
 the key insight regarding what PTP is doing is that PTP updates a {\em random}
rectangle whose sides are distributed independently.

At the initialization stage, PTP defines a random rectangle on $V$,
where the sides of the random rectangles are treated as independent random variables.  In every iteration, PTP maps this random rectangle to a new random rectangle in the following steps.
\begin{enumerate}
\item Apply a functional mapping defined by the right-message
update rule and the left-message update rule.
\item
Eliminate  the resulting empty rectangles (via conditioning on that each
side of the resulting random rectangle is not the empty set and
re-normalization).
\item Take the marginal distribution of the resulting random
rectangle on each side variable, and treat all sides as being independent
random variables. This defines a new random rectangle.
\end{enumerate}
PTP iterates over these steps to continuously update the random rectangle.

\item For single-solution CSPs, based on Proposition \ref{prop:PTP_on_tree},
if the girth of the graph is large enough,
at least one side of the new rectangle, after some iterations,
becomes deterministic,
namely the singleton set containing the correct assignment for that variable. This would allow the decimation procedure to fix this variable to the correct
assignment and reduce the problem.
Similar results hold for CSPs having more than one solutions but in which all solutions share a single assignment on some coordinate. By Proposition \ref{prop:PTP_on_tree_2}, in this case, when the local tree rooted
at that variable is sufficiently large, PTP will find that variable and its
correct assignment.
Of course, the condition of Proposition \ref{prop:PTP_on_tree}
and that of Proposition \ref{prop:PTP_on_tree_2}, namely that there is a sufficiently large local tree rooted at a variable and that the variable
only has one correct assignment, may not
hold in reality. As a consequence, no side of the random rectangle is
 deterministically a singleton. In that case, the
decimation procedure must deal with this ambiguity --- resulted from non-ideal
factor graph structure and the complexity of the solution space --- and
make a good guess to fix a variable.

\item \label{arg:PTP_tree} Proposition \ref{prop:PTP_on_tree} and Proposition \ref{cor:DTP_on_tree_summary} also suggest that when the graph has large girth (and when the solutions share one common assignment on some coordinate),
as PTP iterates, the rectangles containing no solutions will be gradually removed from the sample space of the random rectangle.

\item \label{arg:PTP_cycle} Proposition \ref{prop:DTP_on_cycle} implies that regardless of
cycle structure of the graph, all solution-containing rectangles will be
kept (possibly in a form of combining  each other) over PTP iterations.

\item Combining \ref{arg:PTP_tree}) and \ref{arg:PTP_cycle}) above, one may
view  each PTP iteration as performing a ``filtering'' operation on the
distribution of the random rectangle. As the distribution of the
random rectangle evolves,   the probability mass moves gradually
to one biased to some solution-containing rectangles. When the graph has large
girth and some coordinate is in a ``favorable'' position (in a sense combining
its location in the graph and its role in the solution space),
the summary message at this coordinate may become more deterministically biased to a singleton token, making decimation possible.
\end{enumerate}

Finally, we briefly remark on weighted PTP.

Similar to PTP, weighted PTP also updates a random rectangle.
However, instead of
using a functional mapping, in step a) of the above procedure, it uses
a conditional distribution. By examining the form of the
obedience conditionals, it is intuitive that comparing with PTP, weighted
PTP shifts the distribution of each side of the
random rectangle more towards
``smaller''  tokens on each coordinate. (Here $t_v$ is said to be smaller than
$t'_v$ if $t_v\subset t'_v$.) This provides the algorithm better
opportunity to lead to some side of the random rectangle
more deterministically biased to a singleton.




\bibliographystyle{IEEEbib}




\end{document}